\newtheorem{example}{Example}
\newtheorem{theorem}{Theorem}
\newtheorem{proposition}[theorem]{Proposition}
\newtheorem{lemma}[theorem]{Lemma}
\newtheorem{claim}{Claim}
\newtheorem{remark}{Remark}
\newif\ifappendix
\newif\ifshort
\newcommand{\nb}[1]{\hbox to 0pt{\textcolor{red}{\bf!}}%
\marginpar[\parbox{20mm}{\scriptsize\textcolor{red}{\raggedleft #1}}]%
{\parbox{12mm}{\scriptsize\textcolor{red}{\raggedright #1}}}}
\newcommand{\avec}[1]{\boldsymbol{#1}}
\newcommand{\extN}{\mathbb{N}_\infty}
\newcommand{\p}{\varphi}
\newcommand{\eset}{\emptyset}
\newcommand{\Int}{\ensuremath{\mathcal{I}}\xspace}
\newcommand{\I}{\ensuremath{\Imc}}
\newcommand{\lp}{{\langle}}
\newcommand{\rp}{{\rangle}}
\newcommand{\red}{^{\ast}}
\newcommand{\sttr}[2]{\ensuremath{{\pi}_{#1}(#2)}\xspace}
\newcommand{\md}{\ensuremath{\mathit{md}}}
\newcommand{\rpath}{\mathsf{rp}}
\newcommand{\rpathp}{\overline{\rpath}}
\newcommand{\rpathw}{\mathsf{rw}}
\newcommand{\sig}[1]{\ensuremath{\Sigma_{#1}}\xspace}
\newcommand{\sub}[1]{\ensuremath{\mathsf{sub}(#1)}\xspace}
\newcommand{\Var}{\ensuremath{\mathsf{Var}}\xspace}
\newcommand{\con}[1]{\ensuremath{\mathsf{con}(#1)}\xspace}
\newcommand{\assign}{\ensuremath{\sigma}\xspace}
\newcommand{\tvalue}[3]{\ensuremath{{#1}^{\Mmf, {#2}}_{#3}}\xspace}
\newcommand{\Next}{{\ensuremath{\raisebox{0.25ex}{\text{\scriptsize{$\bigcirc$}}}}}}
\newcommand{\monodic}{\ensuremath{\mathop{\ooalign{$\Box$ \cr \kern0.57ex \raisebox{0.2ex}{\scalebox{0.55}{$1$}}}\rule{0pt}{1.5ex} \kern-0.7ex}}\xspace}
\newcommand{\Boxuni}{\ensuremath{\mathop{\ooalign{$\Box$ \cr \kern0.57ex \raisebox{0.2ex}{\scalebox{0.55}{$u$}}}\rule{0pt}{1.5ex} \kern-0.7ex}}\xspace}
\DeclareRobustCommand{\zerodic}{\ensuremath{\mathop{\ooalign{$\Box$ \cr \kern0.57ex \raisebox{0.2ex}{\scalebox{0.55}{$0$}}}\rule{0pt}{1.5ex} \kern-0.7ex}}\xspace}
\newcommand{\NC}{\ensuremath{{\sf N_C}}\xspace}
\newcommand{\NI}{\ensuremath{{\sf N_I}}\xspace}
\newcommand{\NR}{\ensuremath{{\sf N_R}}\xspace}
\newcommand{\NPr}{\ensuremath{\sf{N}_{\sf{P}}\xspace}}
\newcommand{\DL}{\ensuremath{\mathcal{DL}}\xspace}
\newcommand{\ALC}{\ensuremath{\smash{\mathcal{ALC}}}\xspace}
\newcommand{\ALCO}{\ensuremath{\smash{\mathcal{ALCO}}}\xspace}
\newcommand{\ALCOu}{\ensuremath{\smash{\mathcal{ALCO}_{\!u}}}\xspace}
\newcommand{\ALCOud}{\ensuremath{\smash{\mathcal{ALCO}_{\!u}^{\defdes}}}\xspace}
\newcommand{\ALCOd}{\ensuremath{\smash{\mathcal{ALCO}^{\defdes}}}\xspace}
\newcommand{\ELO}{\ensuremath{\smash{\mathcal{ELO}}}\xspace}
\newcommand{\ELOu}{\ensuremath{\smash{\mathcal{ELO}_{\!u}}}\xspace}
\newcommand{\ELOd}{\ensuremath{\smash{\mathcal{ELO}^{\defdes}}}\xspace}
\newcommand{\ELOud}{\ensuremath{\smash{\mathcal{ELO}_{\!u}^{\defdes}}}\xspace}
\newcommand{\quasimod}{\ensuremath{\Qmf}\xspace}
\newcommand{\runs}{\ensuremath{\Rmf}\xspace}
\newcommand{\funcand}{\ensuremath{\boldsymbol{q}}\xspace}
\newcommand{\settp}{\ensuremath{\boldsymbol{T}}\xspace}
\newcommand{\contp}{\ensuremath{{\boldsymbol{t}}}\xspace}
\newcommand{\ML}{\ensuremath{\smash{\mathcal{ML}}}\xspace}
\newcommand{\QMLdl}{\ensuremath{\smash{\mathcal{QML}^{\defdes}_{\lambda}}}\xspace}
\newcommand{\MLALCO}{\ensuremath{\smash{\ML^{n}_{\mathcal{ALCO}}}}\xspace}
\newcommand{\MLALCud}{\ensuremath{\smash{\ML^{n}_{\mathcal{ALC}_{\!u}^{\defdes}}}}\xspace}
\newcommand{\MLALCOd}{\ensuremath{\smash{\ML^{n}_{\mathcal{ALCO}^{\defdes}}}}\xspace}
\newcommand{\MLALCOu}{\ensuremath{\smash{\ML^{n}_{\ALCOu}}}\xspace}
\newcommand{\MLALCOud}{\ensuremath{\smash{\ML^{n}_{\ALCOud}}}\xspace}
\newcommand{\K}{\ensuremath{\mathbf{K}}}
\newcommand{\GL}{\ensuremath{\mathbf{GL}}}
\newcommand{\Grz}{\ensuremath{\mathbf{Grz}}}
\newcommand{\Kfn}{\ensuremath{\K\!\boldsymbol{f}^{\ast n}}}
\newcommand{\Sfive}{\ensuremath{\mathbf{S5}}}
\newcommand{\TL}{\ensuremath{\smash{\mathcal{TL}}}\xspace}
\newcommand{\TLALCO}{\ensuremath{\smash{\mathcal{TL}_{\mathcal{ALCO}}}}\xspace}
\newcommand{\TLnALCO}{\ensuremath{\smash{\mathcal{TL}^{\textstyle\circ}_{\mathcal{ALCO}}}}\xspace}
\newcommand{\TLALCOd}{\ensuremath{\smash{\mathcal{TL}_{\mathcal{ALCO}^{\defdes}}}}\xspace}
\newcommand{\TLALCOud}{\ensuremath{\smash{\mathcal{TL}_{\mathcal{ALCO}_{\!u}^{\defdes}}}}\xspace}
\newcommand{\TLALCOu}{\ensuremath{\smash{\mathcal{TL}_{\mathcal{ALCO}_{\!u}}}}\xspace}
\newcommand{\TLnALCOu}{\ensuremath{\smash{\mathcal{TL}^{\textstyle\circ}_{\mathcal{ALCO}_{\!u}}}}\xspace}
\newcommand{\LTL}{\ensuremath{\mathbf{LTL}}}
\newcommand{\LTLfo}{\ensuremath{\LTLf^{\textstyle\circ}}}
\newcommand{\LTLo}{\ensuremath{\LTL^{\textstyle\circ}}}
\newcommand{\LTLd}{\ensuremath{\LTL^\Diamond}}
\newcommand{\LTLf}{\ensuremath{\LTL\!{\text{\ensuremath{\boldsymbol{f}}}}}}
\newcommand{\LTLfd}{\ensuremath{\LTLf^\Diamond}}
\newcommand{\LTLALCOud}{\ensuremath{\smash{\LTL_{\mathcal{ALCO}_{\!u}^{\defdes}}}}\xspace}
\newcommand{\LTLfALCOud}{\ensuremath{\smash{\LTLf_{\!\mathcal{ALCO}_{\!u}^{\defdes}}}}\xspace}
\newcommand{\LTLfdALCOu}{\ensuremath{\smash{\LTLf^{\Diamond}_{\!\ALCOu}}}\xspace}
\newcommand{\Ex}{\ensuremath{\mathsf{Ex}}\xspace}
\newcommand{\exrel}{\ensuremath{\downarrow\Ex}\xspace}
\newcommand{\defdes}{\ensuremath{\smash{\iota}}\xspace}
\newcommand{\PTime}{\textsc{PTime}}
\newcommand{\ExpTime}{\textsc{ExpTime}}
\newcommand{\NExpTime}{\textsc{NExpTime}}
\newcommand{\Fmf}{\ensuremath{\mathfrak{F}}\xspace}
\newcommand{\Mmf}{\ensuremath{\mathfrak{M}}\xspace}
\newcommand{\Qmf}{\ensuremath{\mathfrak{Q}}\xspace}
\newcommand{\Rmf}{\ensuremath{\mathfrak{R}}\xspace}
\newcommand{\Tmf}{\ensuremath{\mathfrak{T}}\xspace}
\newcommand{\Cmc}{\ensuremath{\mathcal{C}}\xspace}
\newcommand{\Imc}{\ensuremath{\mathcal{I}}\xspace}
\newcommand{\Omc}{\ensuremath{\mathcal{O}}\xspace}
\newcommand{\Smc}{\ensuremath{\mathcal{S}}\xspace}
\newcommand{\Nbl}{\ensuremath{\mathbb{N}}\xspace}
\title{Non-Rigid Designators in Modal and Temporal Free Description Logics}
\author{%
Alessandro Artale$^1$\and
Roman Kontchakov$^2$\and
Andrea Mazzullo$^{1,3}$\and
Frank Wolter$^4$ \\
\affiliations
$^1$Free University of Bozen-Bolzano\\
$^2$Birkbeck, University of London\\
$^3$University of Trento\\
$^4$University of Liverpool \\
\emails
artale@inf.unibz.it,
roman@dcs.bbk.ac.uk,
andrea.mazzullo@unitn.it,
wolter@liverpool.ac.uk
}
\begin{document}

\maketitle

\begin{abstract}
Definite descriptions, such as
`the General Chair of KR 2024,\!'
are a semantically transparent device for object identification in knowledge representation. In first-order modal logic, definite descriptions have been widely investigated for their non-rigidity, which allows them to designate different objects (or  none at all) at different states. %In this contribution, 
We propose expressive modal description logics with non-rigid definite descriptions and names, and investigate decidability and complexity of the satisfiability problem. We first systematically link satisfiability for the one-variable fragment of first-order modal logic with counting to our modal description logics. Then, we prove a promising \NExpTime{}-completeness result
%upper bound
for concept satisfiability for the fundamental epistemic multi-agent logic $\Sfive^{n}$ 
and its neighbours,
and show that some expressive logics that are undecidable with constant domain become decidable (but Ackermann-hard) with expanding domains. Finally, we conduct a fine-grained analysis of decidability of temporal logics.

%%in, e.g., epistemic or  temporal settings.
%In this direction, we introduce \emph{epistemic} and \emph{temporal} extensions of standard description logics, with nominals and the universal role, additionally equipped with definite descriptions constructors.
%Regarding names and descriptions, in these languages we allow for: possible \emph{lack of denotation}, ensured by \emph{partial} models, coming from free logic semantics as a generalisation of the classical ones; and \emph{non-rigid designation} features, obtained by assigning to terms distinct values across states, as opposed to the standard rigidity condition on individual expressions.
%%from first-order modal logic.
%In the absence of the rigid designator assumption, we show that the satisfiability problem for epistemic free description logics is $\NExpTime$-complete, while satisfiability for temporal free description logics
%%with non-rigid individual names or definite descriptions
%over linear time structures
%is undecidable.
\end{abstract}

%%%
%%% Keywords. The author(s) should pick words that accurately describe
%%% the work being presented. Separate the keywords with commas.
%\begin{keywords}
%Epistemic and temporal description logics,
%Definite descriptions,
%Non-rigid designators
%\end{keywords}

%\date{}

%%%%%%%%%%%%%%%%%%%%%%%%%%%%%%%%%%%%%%%%%%%%%%%%%%%%%%%%%%%%%%%%%%%%%%

\maketitle

%%%%%%%%%%%%%%%%%%%%%%%%%%%%%%%%%%%%%%%%%%%%%%%%%%%%%%%%%%%%%%%%%%%%%%

% !TEX root =  rda_arxiv.tex

%{\color{red}{
%M: To fix
%
%\begin{itemize}
%\item Common knowledge operator: $\mathcal{ML}^{nC}_{\mathcal{ALCO}_{u}^\iota}$ and ${\bf S5}^{nC}_{\mathcal{ALCO}_{u}^\iota}$ (to be mentioned only).
%\item ${\bf K}^{\ast}$?
%\item \sout{Multi-modal languages: $\mathcal{ML}^{n}_{\mathcal{ALCO}_{u}^\iota}$ and ${\bf S5}^{n}_{\mathcal{ALCO}_{u}^\iota}$}.
%\item \sout{undecidability of $({\bf S5}^{m}_{u})_{\mathcal{ALCO}_{u}^\iota}$ formula satisfiability $\leadsto$ undecidability of ${\bf S5}^{m}_{\mathcal{ALCO}_{u}^\iota}$ entailment under global ontology? (ontology := finite set of CIs)}
%\item \sout{partial $\LTLELO$ entailment under global ontology without the RDA?}
%%consequence of $\LTLBDNELO$ global CI under a finite set of $\LTLBDNELO$ global CIs without RDA? (what about partial semantics?)
%\item \sout{$\mathcal{T}\mathcal{L}^{\Until}_{\DL}$, $\mathcal{T}\mathcal{L}^{\Box\Diamond\nxt}_{\DL}$ $\leadsto$ $\mathcal{T}\mathcal{L}_{\DL}$ + $\LTLALCOud$, $\LTLELO$}
%%$\LTLBDNELO$
%\item \sout{$\ALCOud$-$\TL$ without temporalised concepts + $\ALCOud$-$\LTLf$ formula satisfiability problem without the RDA}
%\end{itemize}
%
%}}

\section{Introduction}
\label{sec:intro}

%\nb{M: todo discuss content}
\emph{Definite descriptions},
like
`the General Chair of KR 2024,\!'
%`the KR Conference held in Southeast Asia'/
%`the nearest major galaxy to the Milky Way'/
%`the smallest planet in the Solar System'/
%`the brightest celestial object'
%%`the first celestial object visible at night' (or `the evening star'),
are expressions of the form `the unique $x$ such that $\p$.\!' Together with \emph{individual names}
such as
`Pierre,\!'
%`KR 2024'/
%`Andromeda'/
%`Mercury'/
%`Venus',
they are used as \emph{referring expressions} to identify objects in a given domain~\cite{ReiDal00,BorEtAl16,BorEtAl17}.
%TomWed18

Definite descriptions and individual names can also \emph{fail to designate} any object at all, as in the case of the definite description
`the KR Conference held after KR 2018 and before KR 2020,\!'
%`the KR Conference between KR 2018 and KR 2020'/
%`the planet between Mercury and the Sun'/
or the individual name
`KR 2019.\!'
%`Vulcan'.
In order to admit these
%constructs
%definite descriptions and individual names
as genuine terms of the language, while allowing for their possible lack of referents, formalisms based on \emph{free logic} semantics have been developed~\cite{Ben02,Leh02,Ind21,IndZaw21}.
In contrast, classical logic approaches assume that individual names always designate, and that definite descriptions can be paraphrased in terms of
%sentences expressing
existence and uniqueness conditions, an approach dating back to Russell~(\citeyear{Rus05}).
Recently,
%in a non-modal setting,
definite descriptions have been introduced to enrich standard description logics (DLs) with nominals, $\ALCO$ and $\ELO$~\cite{NeuEtAl20,ArtEtAl20b,ArtEtAl21a}.

In particular, DLs $\ALCOud$ and $\ELOud$~\cite{ArtEtAl20b,ArtEtAl21a} include the universal role, $u$, as well as nominals and definite descriptions of the form $\{ \defdes C \}$ (`the unique object in $C$')
as basic concept constructs,
while also employing a free logic semantics that allows non-designating terms.
Hence, for instance, using nominal $\{ \mathsf{kr24} \}$ to designate KR 2024, we can refer to the General Chair of KR 2024 by means of the definite description $\{ \defdes \exists \mathsf{isGenChair}. \{ \mathsf{kr24} \} \}$.
Then, in $\ALCOud$, we can say that Pierre is the General Chair of KR 2024 with the following concept:
\begin{equation*}
\exists u. ( \{ \mathsf{pierre} \} \sqcap \{ \defdes \exists \mathsf{isGenChair}. \{ \mathsf{kr24} \} \} ).
\end{equation*}
Indeed, if this concept is satisfiable (in other words, has a non-empty extension), then there are objects $p$ and $k$ designated by $\mathsf{pierre}$ and $\mathsf{kr24}$, respectively,  and the pair $(p, k)$ belongs to the interpretation of role $\mathsf{isGenChair}$, which contains no other $(o,k)$, thus making $p$ the only object connected to $k$ by  $\mathsf{isGenChair}$.
The free DLs $\ALCOud$ and $\ELOud$ have, respectively, $\ExpTime$-complete ontology satisfiability and $\PTime$-complete entailment problems, thus matching the complexity of the classical
DL counterparts.

Names and descriptions display interesting behaviours also in \emph{modal} (epistemic, temporal) settings. % such as \emph{epistemic} or \emph{temporal} logics. 
These are indeed \emph{referentially opaque} contexts, where the \emph{intension} (i.e., the meaning) of a term might not coincide with its \emph{extension} (that is, its referent)~\cite{Fit04}.
Here, referring expressions can behave as \emph{non-rigid designators}, meaning that they can designate different individuals across different states (epistemic alternatives, instants of time, etc.).
%In addition, expressions referring to the same object cannot always be mutually substituted  in a sentence while preserving the truth value.
%%For instance,
%%even if everyone knows that Venus is a celestial body illuminated by the Sun, not everyone is thereby aware the evening star is illuminated by the Sun, where `the evening star' denotes the planet Venus,~\cite{FregeSenseAndReference}.
%%despite the fact that everyone is aware of the truism that Venus is Venus, not everyone thereby knows that the evening star is Venus, even though `the evening star' denotes the planet Venus.
%%Similarly, the first workshop \ldots
%Indeed, \emph{modal} settings, like the temporal or the epistemic ones, are
%%instances of
%\emph{referentially opaque} contexts, where the \emph{intension} (i.e., the meaning) of a term might not coincide with its \emph{extension} (that is, its referent)~\cite{Fit04}.
%In these cases, referring expressions can behave as \emph{non-rigid designators}, meaning that they can denote different individuals across different states (epistemic alternatives, instants of time, etc.).

For example, in an epistemic scenario, even if everybody is aware that Pierre is the General Chair of KR 2024, not everyone thereby knows that he is also the General Chair of the (only, so far, and excluding virtual location) KR Conference held in Southeast Asia, despite the fact that `the KR Conference held in Southeast Asia' and `KR 2024' refer (to this day) to the same object. Indeed, `the KR Conference held in Southeast Asia' can be conceived to designate another event by someone unaware of its actual reference to KR 2024.
Similarly, in a temporal setting,
`the General Chair of KR'
%`the smallest planet'
refers to
Pierre
%Mercury
in 2024,
%nowadays,
but
will designate someone else over the years.
So, for instance, if we assume that Pierre works and will continue working in Europe, 
%and always will, 
we can conclude that the General Chair of KR currently works in Europe, but we should not infer that it will always remain the case.
%~\cite{FitMen12}.

%For instance, in an epistemic scenario, `the first celestial object visible at night' can be conceived as referring to different astronomical bodies, by someone unaware of its actual coreference with Venus. Similarly, on a temporal scale, `the first celestial object visible at night' might refer to a certain heavenly object in a given season, and to different ones later on in the year.
%
Due to this fundamental and challenging interplay between designation and modalities,
non-rigid descriptions and names
%definite descriptions---and, more generally, non-rigid designators---
have been widely investigated in first-order modal and temporal logics
%\nb{missing ref}
%\nb{M: new edition Fitting \& Mendelshon + Synthese article Indrzejczak \& Zawidzki}
%For instance, while `the smallest planet in the Solar System' currently denotes the Mercury, it instead referred to Pluto less than twenty years ago.
%%For instance, while `the number of planets' currently denotes the number eight, it instead referred to the number nine less than twenty years ago.
%Non-rigid descriptions and names have been proposed
as individual concepts or flexible terms
%as \emph{intensional objects} or \emph{individual concepts},
capable of taking different values across states~\cite{Coc84,Gar01,BraGhi07,KroMer08,FitMen12,CorOrl13,Ind20,Orl21,IndZaw23}.
%The approach by~\cite{FittingMendhelsonFirstOrderModalLogic,FittingFOIntensionalLogic},
%in turn rooted in the modal~\cite{StalnakerThomason?} and intensional~\cite{Montague?} logic literature,
%uses \emph{predicate abstraction} ($\lambda x P$, for a predicate $P$) as a syntactic device to distinguish between:
%%\emph{de re} sentences, like `the number of planets (at the current instant) has always been even' (formalisable as $[\lambda x \Box P (x)](c)$), and \emph{de dicto} ones, such as `It has always been true that the number of planets (at that instant) was even' (with $\Box([\lambda x  P (x)](c))$ used to represent this instead).
%\emph{de re} sentences, like `The smallest planet in the Solar System (at the current instant) has always been close to the Sun' (formalisable as $[\lambda x \Box P (x)](c)$, where: $c$ stands for `the  smallest planet in the Solar System'; $P$ is the predicate `is close to the Sun'; and $\Box$ is read here as `always in the past');
%and \emph{de dicto} ones, such as `It has always been true that the smallest planet in the Solar System (at any of those instants) was close to the Sun' (with $\Box([\lambda x  P (x)](c))$ used to represent this sentence instead).
%%While the former sentence should be evaluated to true (since the number of currently recognised planets is eight, and eight has always been even), the latter one is clearly false (as the number of recognised planets was odd at some point in the past).
%
%When considering modal settings,

%{\color{red} Frank from here}

The aim of this contribution is to introduce modal DLs that have sufficient expressive power to represent the phenomena discussed above, explore their relationship to standard modal DLs without definite descriptions and non-designating names, and investigate decidability and complexity of reasoning in these formalisms.
%\nb{replaced `languages.'}

In detail, we propose the language $\MLALCOud$, which is a modalised extension of the free DL $\ALCOud$.
%
%% ORIGINAL VERSION - PIERRE ONLY
For instance, under an epistemic reading of the modal operator $\Box$, we can express with an $\MLALCOud$ concept that \emph{of} Pierre it is \emph{known} that he is the General Chair of KR 2024:
\[
\exists u. ( \{ \mathsf{pierre} \} \sqcap \Box \{ \defdes \exists \mathsf{isGenChair}. \{ \mathsf{kr24} \} \} ),
\]
while \emph{of} Pierre it is \emph{not} known that he is the General Chair of the KR Conference held in Southeast Asia:
\begin{multline*}
	\exists u. ( \{ \mathsf{pierre} \} \sqcap \lnot \Box  \{ \defdes \exists \mathsf{isGenChair}. \\
								 \{ \defdes ( \mathsf{KRConf} \sqcap \exists
\mathsf{hasLoc}.\mathsf{SEAsiaLoc}) \} \} ).
\end{multline*}

In a temporal setting, where $\Box$
%\nb{+ is not necessary}
is read as `always' and its dual $\Diamond$ as `at some point in the future,\!' KR 2024 can be made a rigid designator, which refers to the same object at all time instants. This can be achieved, for example, by means of an ontology that holds globally, at all time instants, and consists of the following concept inclusion (CI):
\[
	\{ \mathsf{kr24} \} \sqsubseteq \Box  \{ \mathsf{kr24} \}.
%	\quad \Diamond \{ \mathsf{kr24} \} \sqsubseteq \{ \mathsf{kr24} \}.
\]
%{\color{blue}{%
In contrast, we can use the nominal $\{ \mathsf{kr} \}$
%\nb{how is this related to KRConf above?}
to refer to the current edition of the KR Conference,
for instance 
stating that KR 2024 is the current KR Conference with the concept $\exists u. (\{\mathsf{kr24}\} \sqcap \{ \mathsf{kr} \})$.
Moreover, by reading $\Next$ as `next year',
we can exploit its lack of rigidity to express, e.g., that there will be \emph{other}
%different
KR Conferences in the future, with the CIs:
	\[
	\top \sqsubseteq \Diamond  \exists u. \{\mathsf{kr}\}, \qquad \{\mathsf{kr}\} \sqsubseteq \lnot \Next \{ \mathsf{kr} \}.
	\]
%	}}

%%% ALTERNATIVE VERSION -- NEXT YEAR'S KR
%In contrast, we can use the nominal $\{ \mathsf{kr} \}$ to refer to the current edition of the KR Conference, and
%(by reading $\Next$ as `next year')
%exploit its lack of rigidity to express, e.g., that the Program Chair of KR always becomes either the General Chair or a PC member of next year's KR,\nb{is next in the right place? \\ M: maybe not} with the CI:
%	\begin{multline*}
%\exists {\sf isProgChair}.\{ {\sf kr} \} \sqsubseteq \\
%							 \{ \defdes \exists {\sf isGenChair}. \Next \{ {\sf kr} \} \} \sqcup \exists {\sf isPCMember}. \Next \{ {\sf kr } \} ).
%	\end{multline*}

Compared
%\nb{re-read the paragraph}
to first-order modal logics
with non-rigid designators~\cite{StaTho68,FitMen12},
both the definition of the language and the scope distinctions
for
modal operators are simplified in modalised DLs,
as first-order variables are replaced by a class-based DL syntax  leaving quantification
and predicate abstraction
implicit.
We, however, show that 
our modal DLs can be translated to a natural fragment of first-order modal logic with definite descriptions and predicate abstraction.

In this work, we consider
interpretations with both constant domains (in which the first-order domain is fixed across all worlds) and expanding domains (in which it
%\nb{replaced `domain'}
can grow when moving 
to accessible worlds).
%from a world to an accessible world).
%
We
first
establish, for any class of Kripke frames and both constant and expanding domains, polytime  satisfiability-preserving reductions (with and without ontology)  to the language $\MLALCOu$ without definite descriptions. We will show that, in addition, we can assume that each nominal designates in every world, but importantly is still non-rigid.
%}

We
then
study
the satisfiability problem for various fundamental modal logics with epistemic and temporal interpretations. While for first-order modal logics with
%\nb{cut  `only'}
rigid designators and no counting the restriction to \emph{monodic} formulas (in which modal operators are  applied only to formulas with at most one
%\nb{replaced `only'}
free variable) very often ensures decidability, this is no longer the case if non-rigid designators
and/or
some counting are admitted~\cite{GabEtAl03}. For our modal DLs, this implies that the standard recipe for designing decidable languages --- apply modal operators only to concepts --- does not always work anymore.
Here,
we explore in detail when this recipe still works, and when it does not. 

First, we
%We first
closely link the two main sources of bad computational behaviour, non-rigid designators and counting, enabling us to use the results and machinery introduced
for first-order modal logics with counting~\cite{DBLP:conf/aiml/HampsonK12,HamKur15,DBLP:conf/aiml/Hampson16}.

Next, we prove that, rather surprisingly, for some fundamental modal epistemic logic, non-rigid designators come for free: concept satisfiability for the modal logics of all Kripke frames with $n$ accessibility relation, $\K^{n}$, and of all Kripke frames with $n$ equivalence relations, $\Sfive^{n}$, is in \NExpTime{} and thus no harder than without names at all. This holds under both  constant and expanding domains, and the proof is by showing the exponential finite model property. This answers an open problem discussed in~\cite{DBLP:conf/aiml/Hampson16}. With ontologies, however, concept satisfiability becomes undecidable under constant domains. While for $\Sfive^{n}$ (because of symmetric accessibility relations) constant domains coincide with expanding domains, for $\K^{n}$ decidability under expanding domains remains open with ontologies. As a fundamental example of an expressive modal logic, we investigate the extension $\K^{\ast n}$ of $\K^{n}$ with a modal operator interpreted by the transitive closure of the union of the $n$ accessibility relations, which can be interpreted as common knowledge~\cite{DBLP:books/mit/FHMV1995} but also as a fragment of propositional dynamic logic, PDL~\cite{DBLP:journals/sigact/HarelKT01}. In this case concept satisfiability is undecidable under expanding and constant domains, but becomes decidable, though Ackermann-hard, for the corresponding logic, $\Kfn$, on finite acyclic models with expanding domains. This answers an open problem posed in~\cite{WolZak01}.
 Note that Ackermann-hardness means that  the time required to establish (un)satisfiability is not bounded by any primitive recursive (or computable) function.
We refer the reader to Table~\ref{table:complexity} for an overview of our results. Recall that with rigid designators all these problems are known to be in 
%\nb{M: ?}
2\NExpTime~\cite{WolZak01,GabEtAl03}.

%\textcolor{red}{
Finally, in the temporal setting, we show that undecidability is a widespread phenomenon: concept satisfiability under global ontology with constant domain is undecidable for all our $\ALCO$-based fragments; the same applies to concept satisfiability in the languages with the universal modality and the temporal $\Diamond$ operator. 
Reasoning becomes decidable only when considering concept satisfiability (under global ontology) with expanding domains over finite flows of time (though the problem is Ackermann-hard),  or in fragments with  the  $\Next$ operator only, 
for which we prove \ExpTime{}-membership of  concept satisfiability (without ontologies).
%}

%\begin{table*}[t] % put at top of page if possible 
%	\centering
%	\begin{tabular}{l|c|c|c|c}
%		& \multicolumn{2}{c|}{concept satisfiability} &  \multicolumn{2}{c}{concept sat. under global ontology}\\
%		\mbox{}\hfil modal logic $L$ & const.\ domain & exp.\ domains & const.\ domain & exp.\ domains  \\
%		\hline
%		$\K^n$, $n \geq 1$ & NExp-c & NExp-c & undec. & ? \\[4pt]
%		$\Sfive$ & NExp-c & NExp-c & NExp-c & NExp-c \\
%		$\Sfive^{n}$, $n\geq 2$ & NExp-c & NExp-c & undec. & undec. \\[4pt]
%		%		$\KDfourfive$ & NExp-c & NExp-c & NExp-c & NExp-c \\
%	%	$\Kfour$ & ? & ? & ? & ?\\
%	%	$\Kfour^{n}, n\geq 2$ & ? & ? & undec. & ?\\[4pt]
%	%	$\GL$, $\Grz$ & ? & decidable & ? & decidable\\
%		%		$\K^{n}, n\geq 1$ & NExp-c & NExp-c & undec & undec. \\
%		%		$\KDfourfive_{n}, n\geq 2$ & NExp-c & NExp-c & undec & undec\\
%	%	$\GL^{n}$, $\Grz^{n}$, $n\geq 2$ & ? & dec & undec. & ? \\
%		$\K^{\ast n}$, $n\geq 1$ & undec. & undec. & undec. & undec. \\
%		$\Kfn$, $n\geq 1$ & undec. & dec., Ackermann-hard & undec. &  dec., Ackermann-hard
%		%        $\K^{n,\text{ftree},\ast}$, 
%	\end{tabular}
%	\caption{Concept satisfiability (under global ontology) for $L_{\ALCOud}$}\label{table:complexity}
%\end{table*}

\begin{table}[t] % put at top of page if possible 
\centering\tabcolsep=4pt%
\newcommand{\refr}[1]{~{\tiny[#1]}}%
	\begin{tabular}{lccc}\toprule
		\mbox{}\hfil modal  & \multirow{2}{*}{domain} & \multirow{2}{*}{concept sat.} &  concept sat.\\  
		\mbox{}\hfil logic $L$ & & & under global ont.\\
		\midrule
		\multirow{2}{*}{$\K^n$, $n \geq 1$} & const. & \textsc{NExp}-c.\refr{T~\ref{thm:modconcdec}} & undecidable\refr{T~\ref{thm:k:global:sat}}\\ & exp. &  \textsc{NExp}-c.\refr{T~\ref{thm:modconcdec}}   & ? \\[2pt]
		$\Sfive$ & ---& \multicolumn{2}{c}{\textsc{NExp}-complete\refr{T~\ref{thm:modconcdec}}} \\
		$\Sfive^{n}$, $n\geq 2$ & --- & \textsc{NExp}-c.\refr{T~\ref{thm:modconcdec}} & undecidable\refr{T~\ref{thm:k:global:sat}} \\[2pt]
		\multirow{2}{*}{$\K^{\ast n}$, $n\geq 1$} & const. & \multicolumn{2}{c}{$\Sigma^{1}_{1}$-complete\refr{L~\ref{lemma:ltl-to-kfn} + T~\ref{th:temp1:const}}} \\ 
		& exp. & \multicolumn{2}{c}{undecidable\refr{L~\ref{lemma:ltl-to-kfn} + T~\ref{th:temp1:exp}.1}}  \\[2pt]
		\multirow{2}{*}{$\Kfn$, $n\geq 1$} & const. & \multicolumn{2}{c}{undecidable\refr{L~\ref{lemma:ltl-to-kfn} + T~\ref{th:temp1:const}}} \\
		  &  exp. & \multicolumn{2}{c}{\begin{tabular}{c}decidable\refr{T~\ref{thm:moddickson}},\\[-2pt]Ackermann-hard\refr{L~\ref{lemma:ltl-to-kfn} + T~\ref{th:temp1:exp}.2}\end{tabular}}\\\bottomrule
	\end{tabular}
	\caption{Concept satisfiability (under global ontology) for $L_{\ALCOud}$}\label{table:complexity}
\end{table}

%{\color{red}
It is to be emphasised that the non-rigidity of symbol interpretation by itself is \emph{not} the cause for the satisfiability problem to become harder. For instance, 
rigid roles are known to often cause %harder satisfaction problems than non-rigid roles
an increase in the hardness of
the satisfiability problem compared with the case of non-rigid roles only~\cite{GabEtAl03}. What makes non-rigid designators computationally much harder than rigid designators is their ability to count in an unbounded way across worlds.
%}

Full proofs and additional material are available in the appendix.

\paragraph{Related Work}

Other than in non-modal DLs~\cite{NeuEtAl20,ArtEtAl20b,ArtEtAl21a}, and  in the already mentioned \emph{first-order} modal and temporal settings,
%~\cite{RepeatFOMLStuff?},
definite descriptions have been recently investigated in the context of \emph{propositional} hybrid logics with nominals and the $@$ operator~\cite{WalZaw23}. Here, the additional $\defdes$ operator allows one to refer to the (one and only) state of a model that satisfies a certain condition.
%\nb{M: todo fix refs}

Non-rigid designators have received, to the best of our knowledge, little attention in modal DLs,
despite the extensive body of research both on temporal~\cite{WolZak98,ArtFra05,LutEtAl08,ArtEtAl14} and epistemic~\cite{DonEtAl98,ArtEtAl07a,CalEtAl08,ConLen20} extensions.
As a notable exception, \citeauthor{MehRud11}~(\citeyear{MehRud11}) investigate non-rigid individual names  in an epistemic DL context,  
where
abstract individual names are interpreted on an infinite common domain, but without definite descriptions.
\section{Preliminaries}
%\section{Modal Free Description Logics}
\label{subsec:ml-nonrig}

The
$\MLALCOud$
language
is a modalised extension of the free description logic (DL)
$\ALCOud$~\cite{ArtEtAl20b,ArtEtAl21a}.
%
% \begin{definition}[$\TLALCOud$ syntax]
Let \NC, \NR and \NI be countably infinite and pairwise disjoint sets
of \emph{concept names}, \emph{role names} and \emph{individual names}, respectively, and let $I = \{ 1, \ldots, n \}$ be a finite set of \emph{modalities}.
$\MLALCOud$ \emph{terms} and
\emph{concepts} are defined by the following grammar:
\begin{gather*}
  \tau ::= a \mid \defdes C,
  \\
  C ::= A \mid \{ \tau \} \mid \lnot
  C \mid (C \sqcap C)
%\mid \exists_{\mathop{\geq n}} r.C
\mid \exists r.C
\mid \exists u.C
\mid \Diamond_{i} C,
%\mid (C \Until C),
\end{gather*}
where $a \in \NI$, $A \in \NC$, $r \in \NR$,
%\nb{rigid roles? \\ M: perche' rigid? non direi}
 $u\notin \NR$ is the
\emph{universal role}, and $\Diamond_{i}$, with $i \in I$, is a \emph{diamond} operator.
A term of the form $\defdes C$ is called a \emph{definite
  description} and
%  \nb{shortened}
  $C$ its \emph{body}; a concept $\{ \tau \}$ is called a \emph{\textup{(}term\textup{)}
  nominal}. All the usual syntactic abbreviations are assumed:
%In particular,
$\bot = A \sqcap \lnot A$, $\top = \lnot \bot$,
$C \sqcup D = \lnot (\lnot C \sqcap \lnot D)$,
$C \Rightarrow D = \lnot C \sqcup D$,
$C \Leftrightarrow D = (C \Rightarrow D) \sqcap (D \Rightarrow C)$,
$\forall s. C = \lnot \exists s. \lnot C$, for
$s \in \NR \cup \{ u \}$, and \emph{box} operator $\Box_{i} C = \lnot \Diamond_{i} \lnot C$.
%We
%also consider the \emph{reflexive diamond} operator,
%$\Diamond_{i}^{+} C = C \sqcup \Diamond_{i} C$, and the \emph{reflexive box}
%operator, $\Box_{i}^{+} C = \neg\Diamond_i^+\neg C$. %C \sqcap \Box_{i} C$.
%
A \emph{concept inclusion} (\emph{CI}) is of the form $C \sqsubseteq D$,
%or an \emph{$\MLALCOud$ assertion} of the form $C(\tau)$ or
%$r(\tau_1,\tau_2)$, 
for  concepts $C, D$.  We  use $C \equiv D$ to abbreviate $C \sqsubseteq D$ and $D \sqsubseteq C$.
An \emph{ontology} $\Omc$ is a finite set of CIs.

Fragments
$\ML^n_{\ALCOu}$, $\ML^n_{\ALCOd}$, and $\ML^n_{\ALCO}$ of  the full language are defined by restricting the
available DL constructors: they do not contain, respectively, definite descriptions, the universal role, and both 
definite descriptions and the universal role.

Given a concept $C$, the
set of \emph{subconcepts} of $C$, denoted by $\sub{C}$, is defined as usual
(see Appendix~\ref{sec:introprel}):
%in the full version):
%\footnote{
%The full version of the paper with proofs and additional material is available on ArXiv~\cite{ArtEtAl24}.}
we only note that 
$\sub{\{\defdes C\}}$ contains $C$ along with its own subconcepts.
% 
%inductively as follows.
%For $C = A$ and $C = \{ a \}$, we set $\sub{C} = \{ C \}$.
%For $C = \lnot D$, $C = \{ \defdes D \}$, and $C = \exists s. D$, with $s \in \NR \cup \{ u \}$, and $C = \Diamond_{i} D$, we set $\sub{C} = \{ C \} \cup \sub{D}$.
%For $C = D \sqcap E$, we set $\sub{C} = \sub{D} \cup \sub{E}$.
%}}
The \emph{signature} of $C$, denoted by~$\Sigma_{C}$, is the set of concept, role and individual names in~$C$. The signature of a CI or an ontology is defined similarly.
The set of \emph{connectives} of an ontology $\Omc$ is the constructors from the following list that occur in $\Omc$: $\defdes$, $\{ \cdot \}$, $\lnot$, $\sqcap$, $\exists$ with roles in $\NR$, $\exists u$, and $\Diamond_{i}$ with $i\in I$.
The \emph{modal depth} of terms and concepts is the maximum number of nested modal operators: $\md(A) = 0$, $\md(\defdes C) = \md(C)$ and $md(\Diamond_{i} C)  = \md(C) + 1$, for example.
%defined by mutual induction:
%%
%\begin{align*}
%\md(a) & = 0, & \md(\defdes C) & = \md(C), \\
%\md(A) & = 0,  &  \md(\{ \tau \}) & = \md(\tau), \\
%\md(\lnot C) & = \md(C), \hspace*{-20em}\\
%\md(C \sqcap D) & = \max\{ \md(C), \md(D) \}, \hspace*{-20em}\\
%\md(\exists s.C) & = \md(C), \ \text{with $s \in \NR \cup \{ u \}$}, \hspace*{-20em}\\
%\md(\Diamond_{i} C) & = \md(C) + 1. \hspace*{-20em}
%%\md(C \sqsubseteq D) & = \max\{ \md(C), \md(D) \}, \hspace*{-20em}\\
%%\md(\Omc) & = \max\{ \md(C \sqsubseteq D) \ | \ C \sqsubseteq D \in \Omc \}. \hspace*{-20em}
%\end{align*}
%
The \emph{modal depth} of a CI or an ontology is the maximum modal depth of their concepts.

A \emph{frame} is a pair $\Fmf = (W, \{ R_{i} \}_{i \in I})$, where $W$ is a non-empty set of \emph{worlds} (or \emph{states}) and each
\mbox{$R_{i} \subseteq W \times W$}, for $i\in I$, is a binary \emph{accessibility relation} on $W$.
%
%%% VERSION EXPANDING DOMAINS
A \emph{partial interpretation with expanding domains} based on
a frame $\Fmf = (W, \{ R_{i} \}_{i \in I} )$ is a triple $\Mmf = ( \Fmf, \Delta, \Int)$, where
%$\Fmf$ is the frame of $\Mmf$;
$\Delta$ is a function associating with every $w \in W$ a non-empty set, $\Delta^{w}$, called the \emph{domain of $w$ in~$\Mmf$}, such that $\Delta^{w} \subseteq \Delta^{v}$, whenever $w R_{i} v$,
%\nb{slight renaming to avoid confusion with $u$}
for some $i \in I$;
and $\Int$ is a function associating with every $w \in W$ a
  \emph{partial} DL interpretation
  $\Imc_{w} = (\Delta^{w}, \cdot^{\Imc_{w}})$ that maps every
  $A \in \NC$ to a subset $A^{\Imc_{w}}$ of~$\Delta^{w}$, every
  $r\in\NR$ to a subset $r^{\Imc_{w}}$ of
  $\Delta^{w} \times \Delta^{w}$, the universal role $u$ to the set
  $\Delta^{w} \times \Delta^{w}$, and every $a$ in \emph{some
  subset} of $\NI$ to an element $a^{{\Imc_{w}}}$ in
  $\Delta^{w}$.
Hence, every
$\cdot^{\Imc_{w}}$ is a total function on $\NC \cup \NR$ but a
\emph{partial} function on~$\NI$. If $\Imc_{w}$ is defined on $a\in\NI$, then we say
that $a$ \emph{designates at~$w$}.
If every $a\in\NI$ designates at $w\in W$, then $\Imc_{w}$ is called \emph{total}.
We say that $\Mmf = ( \Fmf, \Delta, \Int)$ is a \emph{total} 
  interpretation if every $\Imc_{w}$, 
$w \in W$, is a \emph{total} interpretation.
In the sequel, we refer to partial interpretations as interpretations, and add the adjective `total' explicitly whenever this is the case.

%%% VERSION CONSTANT DOMAIN
%A \emph{partial modal interpretation} based on
%$\Fmf$ is a triple $\Mmf = ( \Fmf, \Delta, \Int)$, where: $\Fmf$ is
%the frame of $\Mmf$; $\Delta$ is a non-empty set, called the
%\emph{domain} of $\Mmf$ (we adopt the so-called \emph{constant domain
%  assumption}~\cite{GabEtAl03}); and $\Int$ is a function associating
%with every $w \in W$ a \emph{partial interpretation}
%$\Imc_{w} = (\Delta, \cdot^{\Imc_{w}})$ that maps every $A \in \NC$ to
%a subset of $\Delta$, every $r\in\NR$ to a subset of
%$\Delta \times \Delta$, the universal role $u$ to the set
%$\Delta \times \Delta$ itself, and every $a$ in a \emph{subset} of
%$\NI$ to an element in $\Delta$. In other words, every
%$\cdot^{\Imc_{w}}$ is a total function on $\NC \cup \NR$ and a
%\emph{partia}l function on $\NI$.
%%
%We say that
%$\Mmf = ( \Fmf, \Delta, \Int)$ is a \emph{total modal
%  interpretation} if every $\Imc_{w}$, with $w \in W$, is a \emph{total}
%interpretation,
%meaning that $\cdot^{\Imc_{w}}$ is defined as above, except that it maps \emph{every} $a \in \NI$ to an element of $\Delta$.

An \emph{interpretation with constant domains} is defined as a special case% of interpretation with expanding domains
, where the function $\Delta$ is such that $\Delta^{w} = \Delta^{v}$, for every $w, v \in W$. With an abuse of notation, we denote the common domain by $\Delta$ and call it the \emph{domain of $\Mmf$}.
%We observe that partial modal interpretations with constant domain can be seen a special case of partial modal interpretations with expanding domains, with the function $\Delta$ assigning to every $w \in W$ the \emph{same} domain $\Delta^{w}$.
%\end{definition}

Given $\Mmf = (\Fmf, \Delta, \Int)$, with
$\Fmf = (W, \{ R_{i} \}_{i \in I})$, we define the \emph{value}
$\tau^{\Int_{w}}$ of a term $\tau$ in world $w\in W$ as $a^{\Int_{w}}$, for
$\tau = a$, and as follows, for $\tau = \defdes C$:
\begin{gather*}
		(\defdes C)^{\Int_{w}}  =
			\begin{cases}
				d, & \text{if} \ C^{\Int_{w}} = \{ d \}, \ \text{for some} \ d \in \Delta^{w}; \\
				\text{undefined}, & \text{otherwise}.
			\end{cases}
\end{gather*}
A term $\tau$ is said to
\emph{designate at $w$}
%be \emph{denoting at $w$}
if $\tau^{\Int_{w}}$ is defined.
%$\tau^{\Int_{w}} = d$, for some $d\in\Delta$, 
%and to
%%$\tau^{\Int_{w}} = \{d\}$, for some $d\in\Delta$,
%\emph{designate in $\Mmf$}
%be \emph{denoting in $\Mmf$}
%if $\tau$
%designates\nb{do we use it at all?}
%is denoting
%at some $w\in W$.  
%Moreover, an individual name $a$ is called a \emph{ghost
%  in} $\Mmf$ if, for every $w\in W$, $a$ does not denote at
%$w$.\nb{F: don't like ghost}
%
The
\emph{extension} $C^{\Int_{w}}$ of a concept $C$ in $w\in W$ is defined
as
follows, where $s \in \NR \cup \{ u \}$:
\begin{align*}
	(\neg C)^{\Int_{w}} & = \Delta^{w} \setminus C^{\Int_{w}}, \\ 
	(C \sqcap D)^{\Int_{w}} & = C^{\Int_{w}} \cap D^{\Int_{w}}, \\
	(\exists s.C)^{\Int_{w}} & = \{d \in \Delta^{w} \mid \exists  e \in C^{\Int_{w}}: (d,e) \in s^{\Int_{w}}\}, \\
  	(\Diamond_{i} C)^{\Imc_{w}} &= \bigl\{ d \in \Delta^{w}  \mid \exists 
  v \in W :
  w R_{i} v \text{ and } d \in C^{\Imc_{v}}
  \bigr\},
  \\
  \{ \tau \}^{\Imc_{w}} &=
			\begin{cases}
				\{ \tau^{\Imc_{w}} \}, & \text{if $\tau$ designates at $w$}, \\
%				\text{if $\tau$ is denoting at $w$}, \\
				\, \emptyset, & \text{otherwise.}
			\end{cases}
\end{align*}
%
%where a term $\tau$ is said to \emph{denote} in $\Imc_{w}$ if
%$\tau^{\Imc_{w}}$ is defined.
%
A concept $C$ is \emph{satisfied at $w\in W$ in $\Mmf$} if
$C^{\Imc_{w}} \neq \eset$; $C$ is \emph{satisfied in $\Mmf$} if it is satisfied at some $w\in W$ in $\Mmf$.  % and it is \emph{satisfiable} if there is a
A CI $C \sqsubseteq D$ is \emph{satisfied in~$\Mmf$}, written $\Mmf\models C\sqsubseteq D$, if 
$C^{\Imc_{w}} \subseteq D^{\Imc_{w}}$, for every $w\in W$. An ontology $\Omc$ is \emph{satisfied in $\Mmf$}, written $\Mmf\models \Omc$, if every CI in $\Omc$ is satisfied in $\Mmf$; we also say a concept $C$ is \emph{satisfied in $\Mmf$ under an ontology~$\Omc$}
if $\Mmf\models \Omc$ and $C^{\Imc_w} \neq \emptyset$, for some $w\in W$. 

An ontology $\Omc'$ is called a
\emph{model conservative extension} of an ontology $\Omc$ if every  interpretation that satisfies $\Omc'$ also satisfies $\Omc$, and every interpretation that satisfies $\Omc$ can be turned to satisfy $\Omc'$ by modifying the interpretation of symbols in $\sig{\Omc'} \setminus \sig{\Omc}$, while keeping fixed the interpretation of symbols in $\sig{\Omc}$.
Similarly, a concept $C'$ is said to be a \emph{model conservative extension} of a concept $C$ if every interpretation that satisfies $C'$ also satisfies $C$, and every interpretation satisfying $C$ can be turned into an interpretation that satisfies $C'$, by modifying the interpretation of symbols in $\sig{C'} \setminus \sig{C}$, while keeping fixed the interpretation of symbols in $\sig{C}$.

\begin{remark}[Encoding of assertions]\em
\label{rem:assertion}
Assertions can be introduced as syntactic sugar using the universal role, with $C(\tau)$ and  $r(\tau_1,\tau_2)$ abbreviations for, respectively, concepts
%\nb{more Pierre-like}
%
\begin{equation*}
   \exists u. (\{ \tau \} \sqcap  C) \text{ and }
   \exists u. (\{ \tau_{1} \} \sqcap \exists r.\{ \tau_{2} \}).
%  & \exists u. \{ \tau \} \sqcap \forall u. ( \{ \tau \} \Rightarrow C),\\
%  & \exists u. \{ \tau_{1} \} \sqcap \forall u. (\{ \tau_{1} \} \Rightarrow \exists r.\{ \tau_{2} \}).
\end{equation*}
The first example in Sec.~\ref{sec:intro} is thus $\mathsf{isGenChair}(\mathsf{pierre}, \mathsf{kr24})$.

To avoid ambiguities, we need to use
%\nb{M: slight notation change}
square brackets
%angled
%parentheses
when applying negation and modal operators to assertions, as in
$\lnot [ C(\tau) ]$
%$\lnot \lp C(\tau) \rp$
%$\lnot (C(\tau))$
and
$\Diamond_i [ C(\tau) ]$.
%$\Diamond \lp C(\tau) \rp$.
%$\Diamond (C(\tau))$.
%
Observe that, in an assertion of the form $\Diamond_{i} C(\tau)$, the diamond acts as a \emph{de re} operator, since the concept $\Diamond_{i} C$ applies to the object, if any, designated by the term $\tau$ at the current world $w$, and the assertion is false at a world whenever $\tau$ fails to designate at $w$. On the other hand, in an expression of the form
$\Diamond_{i} [ C(\tau) ]$,
%$\Diamond_{i} (C(\tau))$,
the diamond plays the role of a \emph{de dicto} modality, as it refers to the world of evaluation for the whole assertion $C(\tau)$. Using the lambda abstraction notation for first-order modal logic~\cite{FitMen12}, assertion $\Diamond_{i} A(a)$
corresponds to
$\exists x. (\lp \lambda y. x = y \rp (a) \land \Diamond_i A(x))$,
%$\langle \lambda x. \Diamond A(x) \rangle (a)$,
whereas
$\Diamond_i [A(a)]$
%$\Diamond (A(a))$
stands for
$\Diamond_{i} \exists x. (\lp \lambda y. x = y \rp (a) \land A(x))$; see Appendix~\ref{sec:introprel}
%of the full version
for details on the standard translation.
%$\Diamond \langle \lambda x.A(x) \rangle (a)$.
%\nb{M: rewrite more clearly}

 \end{remark}
\section{Reasoning Problems and Reductions}
\label{sec:reduction}

%\subsubsection{Reasoning problems}
%.{\color{red}{F: I believe this has to be reduced to two main problems: see above.}}

Let $\Cmc$ be a class of frames (e.g., frames with $n$ equivalence relations) and $\ML^n_{\DL}$ a language.
We consider the following two main reasoning problems. 
\begin{description}
\item[Concept $\Cmc$-Satisfiability:] Given an  $\ML^n_{\DL}$-concept $C$, is there an interpretation $\Mmf$ based on a frame in $\Cmc$ such that
$C$ is satisfied in $\Mmf$?
\item[Concept $\Cmc$-Satisfiability under Global Ontology:] \hfill Given an  $\ML^n_{\DL}$-concept $C$ and an  $\ML^n_{\DL}$-ontology $\Omc$, is there an interpretation $\Mmf$ based on a frame in $\Cmc$
such that $C$ is satisfied  in $\Mmf$ under $\Omc$?
\end{description}
In the sequel, for the case of concept $\Cmc$-satisfiability under global ontology, we will assume without loss of generality that $C$ is a concept name.
Indeed, we can extend $\Omc$ with CI $A \equiv C$, for a fresh concept name $A$, and consider satisfiability of $A$ under the extended ontology, which is a model conservative extension of $\Omc$.

We begin
with a few observations on polytime
	reductions between the concept satisfiability problems (under global ontology) for two main languages, $\ML^n_{\ALCOu}$ and $\MLALCOud$, and various semantic conditions, including (non-)rigid designators, total and partial interpretations, and expanding and constant domains. We also show how to eliminate the universal role using the global ontology and how to replace definite descriptions with nominals, and the other way round, which in particular means that satisfiability in the full $\MLALCOud$ and its fragment without $\defdes$ has the same computational properties. These observations will be useful in our constructions below, where we also apply them to smaller fragments, if the results carry over. Proofs are available in Appendix~\ref{app:reductions}.
%	of the full version.
%}

\paragraph{No-RDA Subsumes
%{\color{red}{Global}}
RDA}

%%% (LOCALLY) RIGID
An interpretation $\Mmf$ satisfies the \emph{rigid designator assumption} \textup{(}\emph{RDA}\textup{)} if every individual name $a \in \NI$ is a \emph{rigid designator} in $\Mmf$, in the sense that,
for every $w, v \in W$ such that $w R_{i} v$, if $a$ designates at $w$, then it designates at~$v$ and $a^{\Int_{w}} = a^{\Int_{v}}$.
%\nb{M: check}
%\textcolor{red}{does it not imply that we only need $\{a\} \sqsubseteq \Box \{ a\}$? the diamond CI would translate into an additional condition: if $a$ designates at $v$ and the object exists at $w$, then $a$ designates at $w$ and = }
%%% STRONGLY RIGID
%{\color{red}{
%Moreover, we say that an interpretation satisfies the \emph{strong rigid designator assumption} \textup{(}\emph{strong RDA}\textup{)} if every individual name $a \in \NI$ is a \emph{strongly rigid designator}, meaning that, for every $w, v \in W$ such that $w R_{i} v$, $a$ designates at $w$ iff $a$ designates at $v$, and $a^{\Int_{w}} = a^{\Int_{v}}$.
%}}
%%%GLOBALLY RIGID
%{\color{red}{
%  Moreover, we say that $\Mmf$ satisfies
%  the \emph{global rigid designator assumption} \textup{(}\emph{global RDA}\textup{)} if every individual name $a \in \NI$ is a \emph{globally rigid designator} in $\Mmf$, meaning that, for every $w, v \in W$, if $a$ is denoting at $w$, then it is denoting at~$v$ and $a^{\Int_{w}} = a^{\Int_{v}}$.
%  }}
%    \nb{M: changed, check + move to Appendix?}
%
%
%
%\begin{remark}[Satisfiability with the RDA]\em
%\label{rem:nonRDA}
%\nb{M: moved this remark here (it doesn't fit the example section)}
%We point out an interesting behaviour of formula satisfiability without the RDA.
%under the non-RDA
%Let us consider the following:
For instance, concept %the concept %\nb{if we replace the first $\Rightarrow$ with $\sqcap$, can we get rid of total? also do we not miss a forall here?}
\begin{equation*}
\exists u.(\{a\} \sqcap \Box C) \sqcap \Diamond \exists u.(\{a\} \sqcap \lnot C),
\end{equation*}
%$(\{a\} \sqsubseteq \Box C) \land \Diamond (\{a\} \sqsubseteq \lnot C)$.
%This concept,
%formula,
is unsatisfiable in
interpretations 
\emph{with the RDA},
%if the RDA is assumed,
but is satisfiable otherwise, 
as $a$ can designate differently at
different worlds.
Note that
 an individual that fails to designate at all worlds is vacuously
 rigid.
The following proposition shows the
%{\color{red}{global}} \nb{M: discuss -- added global} 
RDA can be enforced in interpretations
%\textcolor{red}{
by an ontology.
%}
%\nb{M: todo fix proof with local RDA}

\begin{restatable}{proposition}{rdatononrda}
\label{prop:rdatononrda}
%Let  $\Cmc$ be a class of frames. 
In $\ML^n_{\ALCOu}$ and $\MLALCOud$,
%{\color{blue}{
%\textup{(}total\textup{)}
%}}
concept $\Cmc$-satisfiability under global ontology with the RDA is poly\-time-reducible to
%{\color{blue}{
%\textup{(}total, respectively\textup{)}
%}}
concept $\Cmc$-satisfiability under global ontology,
with both constant and expanding domains.
%{\color{blue}{
%Moreover, total concept $\Cmc$-satisfiability with the RDA is polytime-reducible to total concept $\Cmc$-satisfiability, with both constant and expanding domains.
%\nb{M: check}}}
\end{restatable}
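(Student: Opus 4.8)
The plan is to leave the concept untouched and to force the RDA entirely through the global ontology, exploiting the fact that nominals together with the box operator express exactly the rigidity condition. Concretely, I would take $C' = C$ and
\[
\Omc' = \Omc \cup \{\, \{a\} \sqsubseteq \Box_{i} \{a\} \mid a \in \ind{C} \cup \ind{\Omc},\ i \in I \,\}.
\]
Every added CI uses only nominals and $\Box_{i}$, both available in $\ML^n_{\ALCOu}$ and $\MLALCOud$, and there are only $|I| \cdot (|\ind{C}| + |\ind{\Omc}|)$ of them, each of constant size, so the reduction is polytime. It uses the same frame class $\Cmc$ and the same $\Omc'$ for constant and expanding domains.

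The core of the argument is a semantic observation that I would establish first by unfolding the definitions: the inclusion $\{a\} \sqsubseteq \Box_{i}\{a\}$ holds at a world $w$ iff, whenever $a$ designates at $w$ and $w R_{i} v$, the name $a$ also designates at $v$ with $a^{\Int_{v}} = a^{\Int_{w}}$. Indeed, if $a$ designates at $w$ then $\{a\}^{\Int_{w}} = \{a^{\Int_{w}}\}$, and $a^{\Int_{w}} \in (\Box_{i}\{a\})^{\Int_{w}}$ means precisely that $a^{\Int_{w}} \in \{a\}^{\Int_{v}}$ for every $R_{i}$-successor $v$, i.e.\ $a$ designates at $v$ and $a^{\Int_{v}} = a^{\Int_{w}}$; if $a$ fails to designate at $w$ the inclusion is vacuous. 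Under expanding domains this is well-typed since $a^{\Int_{w}} \in \Delta^{w} \subseteq \Delta^{v}$, and the constant-domain case is the special case $\Delta^{w} = \Delta^{v}$. Hence $\Mmf \models \Omc'$ iff $\Mmf \models \Omc$ and every $a \in \ind{C} \cup \ind{\Omc}$ is a rigid designator in $\Mmf$.

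With this in hand the two directions are routine. For the forward direction, any interpretation based on a frame in $\Cmc$ that satisfies $C$ under $\Omc$ and enjoys the RDA satisfies in particular all the added CIs, hence satisfies $C$ under $\Omc'$. For the converse, suppose $\Mmf$ is based on a frame in $\Cmc$ and satisfies $C$ under $\Omc'$; then $\Mmf \models \Omc$ (as $\Omc \subseteq \Omc'$) and, by the observation above, every name occurring in $C$ or $\Omc$ is rigid in $\Mmf$.

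The one genuine subtlety --- and the only real obstacle --- is that the RDA is a condition on all of $\NI$, whereas a finite ontology can only constrain the names occurring in $C$ and $\Omc$. I would close this gap by modifying $\Mmf$ into $\Mmf'$ that agrees with $\Mmf$ on every symbol of $\sig{C} \cup \sig{\Omc}$ but makes every name outside this signature non-designating at all worlds. Such a name is vacuously rigid, so $\Mmf'$ satisfies the RDA for all of $\NI$; and since the reinterpreted names occur neither in $C$ nor in $\Omc$, we still have $\Mmf' \models \Omc$ and $C$ satisfied in $\Mmf'$, witnessing $\Cmc$-satisfiability under $\Omc$ with the RDA. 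The same reduction and argument apply verbatim to both languages and to both domain assumptions, and no step relies on the universal role or on definite descriptions.
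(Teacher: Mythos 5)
Your proof is correct and takes essentially the same approach as the paper's: the same CIs $\{a\} \sqsubseteq \Box_{i}\{a\}$ enforce rigidity of the relevant names, and the same final move of making every name outside the signature non-designating at all worlds (hence vacuously rigid) closes the gap between rigidity of the mentioned names and the full RDA over $\NI$. The only cosmetic difference is that the paper first normalises the input so that the concept is a concept name and thus adds CIs only for names in $\Omc$, whereas you also add them for $\ind{C}$ directly.
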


%\begin{restatable}{proposition}{rdatononrda}
%\label{prop:rdatononrda}
%%Let  $\Cmc$ be a class of frames. 
%In $\ML^n_{\ALCOu}$ and $\MLALCOud$,
%concept $\Cmc$-satisfiability \textup{(}under global ontology\textup{)} with the
%%{\color{red}{global}}
%%\nb{M: discuss -- added global}
%RDA is poly\-time-reducible to
%concept $\Cmc$-satisfiability \textup{(}under global ontology, respectively\textup{)},
%with both  constant and expanding domains.
%\end{restatable}
%
 The proof is based on the observation that
% $n$ pairs of
CIs of the form
$\{a\} \sqsubseteq \Box_i\{a\}$,
% \textcolor{red}{ and $\Diamond_i \{a\} \sqsubseteq \{a\}$},
for $i\in I$, ensure that $a$ can be made  a rigid designator in any interpretation. This provides a reduction for the case of global ontology, where a given $\Omc$ is extended with such CIs for every $a$.
%\textcolor{red}{
%\nb{added}
A similar reduction is provided in Appendix~B for concept satisfiability in total interpretations.
%}

In the sequel,
we assume implicitly that interpretations do \emph{not} satisfy the RDA, and explicitly write `with the
%{\color{red}{(global)}}
RDA' where necessary.

\paragraph{From Total to Partial Satisfiability and Back}

%\textcolor{red}{
%  Partial interpretations are a generalisation of the classical ones: an individual
%  can be forced to denote at some world
%%  (i.e., not being a ghost)
%  with the concept
%  $\Diamond_{i}^{+} \exists u.\{a\}$, and to denote at all states with $\Box_{i}^{+} \exists u.\{a\}$.
%      \nb{M: to discuss: remove (covered by reduction in Lemma~\ref{lemma:redtotaltopartial})}
%}
%In the following, we illustrate a polynomial-time reduction of formula
%satisfiability on total modal or temporal interpretations without the
%RDA, to the same problem over partial ones.
Partial interpretations are a generalisation of the classical, total, interpretations, where all nominals designate at all possible worlds. It turns out that satisfiability in partial and total interpretations  are polytime-reducible to each other.
%, as illustrated by the following polynomial-time reduction of satisfiability on total interpretations  to the same problem over partial ones.
%

\begin{restatable}{proposition}{redtotaltopartial}
\label{lemma:redtotaltopartial}
%Let  $\Cmc$ be a class of frames. 
In $\ML^n_{\ALCOu}$ and $\MLALCOud$,
%\nb{M: mention that covers also $\ELOu$ and $\ELOud$? (in Section~\ref{sec:reasontfdl}?)}
total concept $\Cmc$-satisfiability \textup{(}under global ontology\textup{)} is polytime-re\-ducible to 
concept $\Cmc$-satisfiability \textup{(}under global ontology, respectively\textup{)},
 with both  constant and  expanding domains.
\end{restatable}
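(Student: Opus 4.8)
The plan is to treat a total interpretation as the special case of a (partial) interpretation in which every individual name designates everywhere, and to force the relevant names to designate by auxiliary concepts. Since every total interpretation is in particular an interpretation, total satisfiability trivially implies satisfiability of the reduced instance; the content of the reduction is the converse direction, where from a partial model of the reduced instance we must recover a total model of the original. Both languages provide the universal role, and the key observation is that $(\exists u.\{a\})^{\Imc_{w}}$ equals $\Delta^{w}$ if $a$ designates at $w$ and $\emptyset$ otherwise; thus $\exists u.\{a\}$ is a concept that tests whether $a$ designates at the current world.

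Under global ontology the reduction is immediate. Given $C$ and $\Omc$, let $a_1,\dots,a_m$ be the individual names in $\sig{C}\cup\sig{\Omc}$ and put
\[
\Omc' \;=\; \Omc \;\cup\; \{\, \top \sqsubseteq \exists u.\{a_j\} \mid 1\le j\le m \,\}.
\]
As the new CIs hold at every world, any $\Mmf \models \Omc'$ makes $a_1,\dots,a_m$ designate at all worlds; extending $\Mmf$ to a total interpretation $\Mmf^{t}$ over the same frame by assigning arbitrary (the domains being non-empty) designations to the remaining names at each world affects neither $\Omc$ nor $C$, so $C$ is total-satisfiable under $\Omc$. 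Conversely, every total model of $\Omc$ already satisfies the added CIs, hence $\Omc'$. Thus $C$ is total-satisfiable under $\Omc$ iff $C$ is satisfiable under the polynomially larger $\Omc'$.

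For concept satisfiability \emph{without} ontology one cannot phrase a global ``designate everywhere'' requirement. The naive fix — forcing designation at every world within modal distance $d=\md(C)$ by a concept $G_d$ with $G_{k+1}=(\bigsqcap_j \exists u.\{a_j\})\sqcap\bigsqcap_{i\in I}\Box_{i}G_k$ — is correct but of exponential size. Instead I would \emph{guard each occurrence} of a nominal separately, along its own modal path. For an occurrence $o$ of $\{a\}$, with $a\in\NI$ (possibly inside a definite-description body), sitting under the sequence of modal operators with relations $R_{i_1},\dots,R_{i_k}$ (outermost first), set $\gamma_o=\Box_{i_1}\cdots\Box_{i_k}\,\exists u.\{a\}$ and let
\[
C' \;=\; C \;\sqcap\; \bigsqcap_{o}\gamma_o,
\]
the conjunction ranging over all such occurrences. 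Each guard has size $O(|C|)$ and there are $O(|C|)$ occurrences, so $|C'|=O(|C|^{2})$; the point is that $\gamma_o$ forces $a$ to designate at exactly those worlds at which the evaluation of $C$ can read this occurrence of $\{a\}$.

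The main work, and the expected obstacle, is the backward direction of correctness for the ontology-free case. Given a partial model $\Mmf$ with $d_0\in (C')^{\Imc_{w_0}}$, I would build a total interpretation $\Mmf^{t}$ over the same frame by adding, for every name not already designating at a world, an arbitrary designation from that (non-empty) domain. One then shows by induction on subconcepts that the value of $C$ at $w_0$ is unchanged: the two interpretations differ only in \emph{newly added} designations, whereas every world at which the evaluation of $C$ actually inspects a nominal $\{a\}$ is reachable along the modal path recorded in the corresponding guard $\gamma_o$, where $\Mmf$ already makes $a$ designate. Hence $\Mmf$ and $\Mmf^{t}$ agree on every name designation consulted in computing $C^{\Imc_{w_0}}$, giving $d_0\in C^{\Imc^{t}_{w_0}}$ and total-satisfiability of $C$; the forward direction is routine since in a total model every $\gamma_o$ evaluates to the full domain. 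The delicate steps are verifying that the modal path attached to each guard captures all worlds where that occurrence is evaluated (including occurrences inside definite descriptions and under negations, where a diamond can effectively act universally) and that this locality argument goes through uniformly for constant and for expanding domains, in the latter case using $\Delta^{w}\subseteq\Delta^{v}$ along $R_i$-edges.
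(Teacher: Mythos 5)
Your proposal is correct and takes essentially the same route as the paper: the ontology case adds exactly the CIs $\top \sqsubseteq \exists u.\{a\}$, and your per-occurrence guards $\Box_{i_1}\cdots\Box_{i_k}\exists u.\{a\}$ coincide with the paper's conjuncts $\Box^{\pi}\exists u.\{a\}$ for $\pi \in \rpath(C,\{a\})$, avoiding the same exponential blowup. The backward induction you outline is precisely the paper's locality argument (agreement of the two interpretations on each $a$ at all $\{a\}$-relevant worlds, as in its Claim~\ref{claim:total:rda}), including the treatment of boxes, definite-description bodies, and expanding domains.
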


We sketch the proof for the case of global ontology. Let~$\Omc$ be an ontology. Consider the extension $\Omc'$ of $\Omc$ with CIs
$\top \sqsubseteq \exists u. \{ a \}$,
for all individual names $a$ in~$\Omc$. Clearly, these CIs  ensure that each $a$ designates in every accessible world.
The case of concept satisfiability is shown in Appendix~\ref{app:reductions}.
Next, we provide the converse reduction. 
%showing that concept satisfiability (under global ontology)
%on
%partial
%interpretations
%without the RDA
%is polynomial-time reducible to the same problem over
%total
%interpretations.
%without the RDA.

\begin{restatable}{proposition}{redpartialtototal}
\label{lemma:redpartialtototal}
%Let  $\Cmc$ be a class of frames. 
In $\ML^n_{\ALCOu}$ and $\MLALCOud$,
concept $\Cmc$-satisfiability \textup{(}under global ontology\textup{)} is polytime-reducible to total
concept $\Cmc$-satisfiability \textup{(}under global ontology, respectively\textup{)},
with both constant and expanding domains.
\end{restatable}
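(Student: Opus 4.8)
The plan is to give a satisfiability-preserving \emph{relativisation} that marks a fresh ``null zone'' by a new concept name $N \in \NC$ and simulates the non-designation of a name $a$ by letting $a$ designate \emph{inside} $N$. For both \MLALCOu and \MLALCOud I would define a translation $\cdot^\sharp$ on terms and concepts whose extension always lands inside $\lnot N$: set $A^\sharp = A \sqcap \lnot N$, $(\{a\})^\sharp = \{a\} \sqcap \lnot N$, $(\defdes C)^\sharp = \defdes\,(C^\sharp)$ (so a relativised description can only pick a genuine witness), $(\lnot C)^\sharp = \lnot N \sqcap \lnot (C^\sharp)$, $(C \sqcap D)^\sharp = C^\sharp \sqcap D^\sharp$, $(\exists s.C)^\sharp = \lnot N \sqcap \exists s.C^\sharp$ for $s \in \NR \cup \{u\}$, and $(\Diamond_i C)^\sharp = \lnot N \sqcap \Diamond_i C^\sharp$. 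Reading $\lnot N$ as the ``genuine domain'' $G^w := \Delta^w \setminus N^{\Imc_w}$, the key invariant to establish by induction is that, for a total interpretation whose null zone is well behaved (see below), the partial interpretation obtained by discarding the $N$-elements at every world satisfies $C^{\Imc_w} = (C^\sharp)^{\Imc_w}$; conversely every partial interpretation arises this way from a total one in which $N$ is interpreted by a single fresh element per world. The clause $(\{a\})^\sharp = \{a\} \sqcap \lnot N$ is exactly what turns a \emph{designating-but-null} name into an empty nominal, matching the semantics of non-designation.

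For $G^w$ to yield a legitimate interpretation I need, at the relevant worlds, that $G^w \neq \eset$, and that $G^w \subseteq G^v$ for $w R_i v$ (expanding) or $G^w = G^v$ (constant). In the global-ontology case these are imposed directly: take $\Omc^\sharp = \{C^\sharp \sqsubseteq D^\sharp \mid C \sqsubseteq D \in \Omc\}$ and add $\top \sqsubseteq \exists u.\lnot N$ together with $\lnot N \sqsubseteq \Box_i (\lnot N)$ for all $i$ (and, for constant domains, also $N \sqsubseteq \Box_i N$, forcing $N$ to be rigid so that $G^w$ stays constant along $R_i$). Since the target, WLOG a fresh concept name $A$, becomes $A \sqcap \lnot N$, it is automatically genuine. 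The two directions are then routine and keep the frame fixed, so they work uniformly for every class $\Cmc$ and both domain assumptions: from a partial model one adds one rigid null element per world, sends every undesignating name into it, and verifies $\Omc^\sharp$; from a total model of $\Omc^\sharp$ one restricts each world to $G^w$ (non-empty and correctly monotone/constant by the added axioms) and declares $a$ to designate at $w$ iff $a^{\Imc_w} \in G^w$. This reuses the rigidity-CI idea of Proposition~\ref{prop:rdatononrda} and is the mirror image of Proposition~\ref{lemma:redtotaltopartial}.

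The case without an ontology is the delicate one, since the null-zone conditions can no longer be stated globally. Here I would conjoin to $C^\sharp$ a guard $\Phi_{k+1}$ of modal depth $k+1 = \md(C)+1$, built from the universal role and nested boxes, asserting $\exists u.\lnot N$ and $\forall u.\bigl(\lnot N \Rightarrow (\Box_1 \lnot N \sqcap \cdots \sqcap \Box_n \lnot N)\bigr)$ (plus the $N$-variant for constant domains) at every world reachable within $k{+}1$ steps; this is polynomial in $|C|$ and guarantees the three properties throughout the part of the model that $C$ can inspect. At worlds outside this horizon the value of $C$ is irrelevant, so the extracted partial interpretation is completed to a valid one on the whole fixed frame: for constant domains reuse $G^{w_0}$ as an empty-interpreted domain everywhere, and for expanding domains propagate a fixed genuine element together with the inherited genuine elements to preserve monotonicity.

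The main obstacle I anticipate is precisely this last step for expanding domains: turning the genuine-domain restriction into a \emph{globally} monotone domain assignment on an arbitrary frame, because back-edges from deep to shallow worlds could in principle violate $G^w \subseteq G^v$. The remedy is to push the rigidity guard one level beyond $\md(C)$ so that every edge inside the relevant horizon is controlled, and to assign only minimal (singleton-seeded, then upward-propagated) domains outside the horizon. A secondary point to check carefully is the $\defdes$ clause in \MLALCOud: one must verify that $\defdes\,(C^\sharp)$ designates in the total model exactly when $C^\sharp$ has a unique (necessarily genuine) element, which matches the partial semantics of $\defdes C$ since $(C^\sharp)^{\Imc_w} \subseteq G^w$ holds by construction.
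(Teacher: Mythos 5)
There is a genuine gap, and it lies exactly where you suspected: the backward direction of your null-zone construction. Unlike the paper's reduction, your translation \emph{changes the domain}, so from a total model of the translated input you must recover a legitimate constant-domain (resp.\ expanding-domain) interpretation \emph{on the same frame} by carving out $G^w = \Delta^w \setminus N^{\Imc_w}$. But your axioms control $N$ only along accessibility edges (global-ontology case) or only up to a bounded box-depth (concept case), while constancy of the domain is a \emph{global} condition across all worlds, including unreachable and disconnected ones. Concretely, for constant domains under global ontology, take $\Cmc$ to contain the disjoint union of a reflexive point and an irreflexive point, and the ontology $\{\Diamond\top \sqsubseteq \{a\},\ \Box\bot \sqsubseteq \exists u.B,\ \Box\bot \sqsubseteq \exists u.\lnot B\}$: it forces a one-element domain at the reflexive world and a two-element domain at the irreflexive one, hence is unsatisfiable with a genuinely constant domain on that frame; yet your relativised ontology \emph{is} satisfiable there, with null elements absorbing the cardinality mismatch, since $\lnot N \sqsubseteq \Box_i \lnot N$ and $N \sqsubseteq \Box_i N$ say nothing across components. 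So the reduction is unsound for that $\Cmc$. The expanding, no-ontology case fails even on rooted frames: on a cycle $w_0 R u_1 R \cdots R u_k R v' R w_0$ with $k = \md(C)$, expanding domains force all domains around the cycle to be equal, so a concept such as $\forall u.\{a\} \sqcap \Diamond^k(\exists u.A \sqcap \exists u.\lnot A)$ is unsatisfiable; but your guard of depth $k{+}1$ leaves the edge $v' R w_0$ uncontrolled, and a total model can let the genuine zone grow strictly along $w_0,\ldots,u_k$ and shed the surplus into $N$ through the uncontrolled edge --- your proposed remedy of pushing the guard one level beyond $\md(C)$ merely relocates the problem to longer cycles, and no finite depth suffices. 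A secondary defect: asserting the guard ``at every world reachable within $k{+}1$ steps'' via nested boxes costs $n^{k+1}$ box-prefixes for $n \geq 2$ modalities, breaking the polytime claim; you would need the paper's $C$-relevant-paths device ($\Box^\pi$ only for $\pi \in \rpath(C,\cdot)$) to stay polynomial.

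The paper's proof sidesteps all of this by never perturbing the domain: each nominal $\{a\}$ is replaced by a fresh concept name $N_a$ constrained by $N_a \sqsubseteq \{a\}$ (globally, or guarded as $\Box^\pi\forall u.(N_a \Rightarrow \{a\})$ along relevant paths in the no-ontology case), so $N_a$ is a possibly-empty subsingleton that simulates non-designation by \emph{emptiness} rather than by an outer-domain element. Both directions of that reduction keep the frame and the domain function fixed, only reinterpreting fresh symbols, which is why it is sound for arbitrary $\Cmc$ and both domain assumptions without any of the cross-world coherence obligations your construction incurs. If you want to salvage the null-zone idea, you would at minimum have to assume $\Cmc$ closed under point-generated subframes, which the proposition does not grant you.
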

We again sketch the proof for the case of global ontology. Let $\Omc$ be an ontology. Consider $\Omc'$ obtained from $\Omc$ by replacing every nominal~$\{ a \}$ in~$\Omc$ with a fresh concept name
$N_{a}$ and by extending the result with all CIs
$N_a \sqsubseteq \{ a \}$. It follows that every $a$ in $\Omc'$ can designate in all worlds, but the corresponding concept $N_a$ may still be interpreted by the empty set in some worlds, thus reflecting the fact that $a$ in $\Omc$ could have failed to designate in those worlds.
The case of concept satisfiability is dealt with in Appendix~\ref{app:reductions}.
%\nb{this fails in expanding domains with RDA}
%{\color{blue}{Observe\nb{M: check + move to Appendix?} that the proof above fails if we try to obtain a reduction to total concept $\Cmc$-satisfiability \emph{with the RDA} and \emph{expanding domains}.
%}
%}

%\textcolor{red}{M: mention that it does not cover $\ELOu$ and $\ELOud$? (in Section~\ref{sec:reasontfdl}?)\\R: for the entailment problem? Satisfiability in EL is trivial (no $\bot$); also does not seem to be an issue under the global ontology}

%\paragraph{Reduction of concept satisfiability}

%\begin{remark}
%\label{rem:concsattoconcsatglobont}
%Partial (resp., total) concept satisfiability is polynomial-time reducible to partial (resp., total) concept satisfiability under global (empty) ontology.
%\nb{M: add/remove (obvious) remark on reduction of concept sat to concept sat under global (empty) ontology?}
%\end{remark}

\paragraph{Normal Form for Ontologies and Concepts}
Next,
we define normal form that will help us prove further polytime reductions, e.g., Lemma~\ref{lemma:spy-point-reduction} and Proposition~\ref{lemma:redmludtomlu}, and then complexity upper bounds.
Let $\Omc$ be an ontology  and  $C$ a subconcept in $\Omc$. Denote by $\Omc[C/A]$ the result of replacing every occurrence of~$C$ in $\Omc$ with a fresh concept name $A$, called the \emph{surrogate} of $C$. Clearly, $\Omc[C/A]\cup \{ C \equiv A\}$ is a model conservative extension of~$\Omc$. We can systematically apply this procedure to obtain an ontology in \emph{normal form} where connectives are applied only to concept names: e.g., definite descriptions occur only in the form of $\defdes B$, for a concept name~$B$.  If surrogates are introduced for innermost connectives first, then the transformation runs in polytime.

\begin{restatable}{lemma}{ontologynormalform} 
\label{lemma:ontology:normal-form}
For any $\MLALCOud$ ontology $\Omc$, we can construct in polytime an $\MLALCOud$ ontology $\Omc'$ in normal form that is a model conservative extension of $\Omc$. Moreover, $\Omc'$ uses the same set of connectives as $\Omc$.
\end{restatable}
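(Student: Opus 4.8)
The plan is to establish the lemma by iterating the single-surrogate replacement step described just before the statement, applying it in a bottom-up order so that the polynomial bound is transparent. First I would make precise the notion of a single replacement: given an ontology $\Omc$ and a subconcept $C$ of $\Omc$ that has at least one connective at its root, introduce a fresh concept name $A$ (the surrogate of $C$) and form $\Omc[C/A] \cup \{ C \equiv A \}$. The text already asserts that this is a model conservative extension of $\Omc$, so I would only recall why: any model of $\Omc$ can be expanded by interpreting $A^{\Imc_w} := C^{\Imc_w}$ in every world $w$, which satisfies $C \equiv A$ and, since replacing $C$ by a name denoting exactly $C^{\Imc_w}$ changes no extension, also satisfies $\Omc[C/A]$; conversely any model of the extension already satisfies $\Omc$ because $A$ and $C$ have identical extensions everywhere. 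The interpretation of all symbols of $\Omc$ is untouched, and the only new symbol is $A \in \sig{\Omc'} \setminus \sig{\Omc}$, matching the definition of model conservative extension from the Preliminaries.

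Next I would set up the induction/iteration that drives the connectives down to concept names. The plan is to process subconcept occurrences innermost-first: at each stage, pick a subconcept $C = \kappa(B_1,\dots,B_k)$ whose immediate arguments $B_1,\dots,B_k$ are all concept names but whose top connective $\kappa$ is still applied to a non-atomic context somewhere, and surrogate it. Concretely, I would define the measure of an ontology as the number of occurrences of connectives that are \emph{not} in the target shape (i.e.\ connectives applied to something other than a concept name), and argue that each replacement step strictly decreases this measure while introducing exactly one new equivalence CI of constant size. Because the surrogate $A$ is a concept name, the new CI $C \equiv A$ is already in normal form (its top-level connective $\kappa$ is now applied only to concept names), and the replacement inside $\Omc$ removes one offending occurrence. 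Since model conservative extension is transitive — the composition of two such extensions is again one, as one checks directly from the definition by composing the two symbol-modifications on disjoint fresh signatures — the final ontology $\Omc'$ is a model conservative extension of the original $\Omc$.

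For the complexity claim I would bound the number of iterations by the total number of connective occurrences in $\Omc$, which is linear in $\size{\Omc}$, and observe that each step adds a CI of size $O(1)$ plus a renaming, so $\size{\Omc'}$ stays polynomial (in fact linear) in $\size{\Omc}$ and the whole transformation runs in polytime; introducing surrogates for innermost connectives first guarantees that we never re-expand a surrogate and that each occurrence is handled once. Finally, for the ``same set of connectives'' clause I would note that every new CI $C \equiv A$ uses only the single top connective of $C$ (already present in $\Omc$) together with the abbreviation for $\equiv$, and renaming a subconcept by a fresh name removes occurrences but never introduces a connective absent from $\Omc$; hence the set of connectives of $\Omc'$ equals that of $\Omc$.

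The main obstacle I anticipate is purely bookkeeping rather than conceptual: the subtlety lies in the definite-description constructor, since $\sub{\{\defdes C\}}$ contains $C$, and I must ensure the innermost-first order surrogates the body of a description \emph{before} the description itself so that after processing we are left exactly with terms of the form $\defdes B$ for concept names $B$ and nominals $\{\tau\}$ over such terms, as required by the stated normal form. I would therefore be careful to treat $\defdes$ and the term-nominal $\{\cdot\}$ as connectives in the measure and to verify the base case that an ontology already in normal form has measure zero and needs no further steps.
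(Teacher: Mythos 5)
Your proposal is correct and follows essentially the same route as the paper's proof: repeatedly replacing subconcepts with fresh surrogate names via $\Omc[C/A]\cup\{C \equiv A\}$, processing innermost connectives first, with each step being a model conservative extension and the whole transformation running in polytime. Your additional details (the explicit measure, transitivity of model conservativity, and the careful treatment of $\defdes$ and term nominals in the innermost-first order) are sound elaborations of what the paper leaves implicit.
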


If the language contains the universal role, then a similar construction transforms concepts $D$ into normal form. For a single modality ($n=1$), we can use $\Box_1^{k} \forall u.(C \Leftrightarrow A)$, for all $k \leq \md(D)$.  If $n > 1$, then we need to carefully select sequences of boxes to avoid an exponential blowup. So, for an $\MLALCOud$ concept $D$ and its subconcept $C$, we define the set of \emph{$C$-relevant paths in $D$},
denoted by $\rpath(D, C)$,
as the sequences $(i_1,\dots, i_n)$ of the $\Diamond_{i_j}$ operators under which $C$ occurs in~$D$. For example, for $D = \Diamond_1\neg A \sqcap \Diamond_2\Diamond_3 A$, we have $\rpath(D, A) =  \{(1), (2, 3) \}$ and $\rpath(D, \neg A) = \{(1)\}$.
%%
%\begin{align*}
%\rpath(A, A) & = \{\epsilon\},\\ 
%\rpath(\neg A, A) & = \{\epsilon\},\\
%\rpath(\Diamond_3 A, A) & = \{(3)\},\\
%\rpath(\Diamond_2\Diamond_3 A, \Diamond_3 A) & = \{(2)\},\\
%\rpath(\Diamond_2\Diamond_3 A, A) & = \{(2, 3)\},\\
% \rpath(\Diamond_1 \neg A, A) & = \{(1)\},\\
% \rpath(\Diamond_1 \neg A\sqcap \Diamond_2\Diamond_3 A,  A) & = \{(1), (2, 3) \},\\
% \rpath(\Diamond_1 \neg A\sqcap \Diamond_2\Diamond_3 A, \neg A) & = \{(1) \}. 
% \end{align*}
 %
Note that the maximum length of a path in $\rpath(D, C)$ is $\md(D)$.
% \textcolor{red}{and that $\bigcup_{C \in \sub{D}} \rpath(D, C)$ is closed under taking prefixes.}
We also define the `$\Box$-modality' for each path: %referring to the $C$-relevant paths in $D$: 
for a concept $E$, we recursively define
\begin{equation*}
	\Box^{\epsilon} E  = E  \text{ and } \Box^{i \cdot \pi} E =  \Box_{i} \Box^{\pi} E, \text{ for any path } \pi.
\end{equation*}
As before,  the \emph{surrogate} of $C$ is  a fresh concept name $A$, and $D[C/A]$ denotes the result of replacing~$C$ with $A$ in $D$.
%We now show the following lemma.

%Let $D$ be a concept, $C$ its subconcept and $A$ a fresh concept name. Denote by $D[C/A]$ is the result of replacing $C$ with  $A$ in $D$.
%%We show that
%the conjunction $D'$  of $D[C/A]$ and all 
%%
%\begin{equation*}
%\Box^\pi \forall u.(A \Rightarrow C) \text{ and } \Box^\pi \forall u.(C \Rightarrow A), \text{ for } \pi \in \rpath(D, C), 
%\end{equation*}
%%
%is a model conservative extension of $D$.

\begin{restatable}{lemma}{normalformconc}
\label{lem:normalformconc}
Let $D$ be an $\MLALCOud$ concept and $C$ its subconcept. 
Denote by $D'$  the conjunction of $D[C/A]$ and
\begin{equation}
\label{eq:normalformconj}
\Box^\pi \forall u.(C \Leftrightarrow A), \text{ for all } \pi \in \rpath(D, C).
\end{equation}
Then $D'$ is a model conservative extension of $D$. Moreover,
$\rpath(D', A) = \rpath(D, C)$ and $\rpath(D', E) = \rpath(D, E)$, for any subconcept $E$ of $C$.
\end{restatable}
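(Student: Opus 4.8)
The plan is to reduce everything to one structural fact about the surrogate replacement and then read off the two path identities by bookkeeping. Write $D' = D[C/A] \sqcap \bigsqcap_{\pi \in \rpath(D,C)} \Box^{\pi}\forall u.(C \Leftrightarrow A)$. The engine of the proof is a \textbf{substitution lemma}, proved by induction on the structure of an arbitrary concept $G$: if $w$ is a world of an interpretation $\Mmf$ such that $C^{\Imc_{w'}} = A^{\Imc_{w'}}$ holds at every $w'$ reachable from $w$ along some path in $\rpath(G,C)$, then $(G[C/A])^{\Imc_{w}} = G^{\Imc_{w}}$. The base case $G = C$ uses the hypothesis at the empty path (where $G[C/A]=A$); the atomic and name-nominal cases $G \neq C$ are trivial as $C$ does not occur; the Boolean, role and definite-description cases do not change the world and leave $\rpath$ unchanged (or take a union), so the inductive hypothesis applies at $w$ itself; and the case $G = \Diamond_{i} G_{0}$ prepends $i$ to every path, so for each $R_{i}$-successor $v$ of $w$ the hypothesis for $G_{0}$ at $v$ follows from that for $G$ at $w$. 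Here I would record once that $C$ is never a proper subconcept of itself, so the occurrences of $C$ in $D$ are pairwise non-nested and $D[C/A]$ is unambiguous.

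Next I would verify that the added conjuncts say exactly what the substitution lemma needs. Unfolding $\Box^{\pi} = \lnot\Diamond_{i_{1}}\lnot \cdots \lnot\Diamond_{i_{k}}\lnot$ and using that $u$ is the universal role, one checks that $d \in \Delta^{w}$ lies in $(\Box^{\pi}\forall u.(C \Leftrightarrow A))^{\Imc_{w}}$ iff $C^{\Imc_{v}} = A^{\Imc_{v}}$ at every $v$ reachable from $w$ along $\pi$ (domain monotonicity guarantees $d$ survives along the path, so this is uniform over constant and expanding domains). For the forward direction of conservativity, any $d \in D'^{\Imc_{w}}$ witnesses both $d \in (D[C/A])^{\Imc_{w}}$ and the equivalences along all $\pi \in \rpath(D,C)$, so the substitution lemma gives $d \in D^{\Imc_{w}}$. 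For the converse, since $A \notin \sig{D}$ I would set $A^{\Imc_{w}} := C^{\Imc_{w}}$ at every world, leaving all symbols of $\sig{D}$ untouched; then every equivalence conjunct has extension $\Delta^{w}$ everywhere and the substitution lemma yields $D'^{\Imc_{w}} = (D[C/A])^{\Imc_{w}} = D^{\Imc_{w}}$, so satisfaction transfers.

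For the path identities I would enumerate occurrences. Every occurrence of $A$ in $D'$ arises either from a replaced occurrence of $C$ in $D[C/A]$ (with that occurrence's path) or from the $A$ inside a conjunct indexed by $\pi$ (path $\pi$); both families of paths are exactly $\rpath(D,C)$, giving $\rpath(D',A) = \rpath(D,C)$. For a subconcept $E$ of $C$ (note $E \neq A$ as $A$ is fresh), split the occurrences of $E$ in $D$ into those inside some occurrence of $C$ --- with paths $\{\pi \cdot \rho \mid \pi \in \rpath(D,C),\ \rho \in \rpath(C,E)\}$ --- and those outside. Replacing $C$ by $A$ deletes exactly the former and preserves the latter with unchanged paths, so $\rpath(D[C/A],E)$ is the ``outside'' set; meanwhile each conjunct $\Box^{\pi}\forall u.(C \Leftrightarrow A)$ contributes an occurrence of $C$ at path $\pi$, hence occurrences of $E$ at paths $\pi \cdot \rho$, which is precisely the ``inside'' set. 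Their union recovers $\rpath(D,E)$.

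The main obstacle is stating the substitution lemma with the right, path-indexed induction hypothesis and matching it to the $\rpath$ conventions: since $\rpath$ counts only $\Diamond_{i}$ operators while the conjuncts are written with $\Box_{i} = \lnot\Diamond_{i}\lnot$, I must check that the boxes, negations and the $\forall u$ and Boolean scaffolding of $C \Leftrightarrow A$ contribute nothing to the paths beyond the intended prefix $\pi$, and that domain expansion never invalidates the element chosen to witness the boxes. Everything else is routine bookkeeping.
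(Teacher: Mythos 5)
Your proposal is correct and follows essentially the same route as the paper's proof: your substitution lemma is precisely the paper's inductive claim that $B[C/A]^{\Imc_v} = B^{\Imc_v}$ for all $v \in \rpathw(w, D, B)$, and both directions of conservativity match the paper's (extracting $C^{\Imc_v} = A^{\Imc_v}$ on the $C$-relevant worlds from the conjuncts in one direction, and setting $A^{\Imc_w} := C^{\Imc_w}$ for the fresh name $A$ in the other). Your explicit verification of the semantics of $\Box^\pi \forall u.(C \Leftrightarrow A)$ under expanding domains and the occurrence-level bookkeeping for $\rpath(D', A)$ and $\rpath(D', E)$ merely spell out steps the paper leaves implicit ("it can be seen"), so nothing is missing.
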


For any concept $D$, by repeatedly replacing non-atomic subconcepts with their surrogates, one can obtain a concept $D^{\ast}$ in \emph{normal form}, which is a conjunction of a concept name and concepts of the form~\eqref{eq:normalformconj}. By Lemma~\ref{lem:normalformconc}, $D^{\ast}$ is a model conservative extension of~$D$. Moreover, if  surrogates are introduced  for innermost  non-atomic concepts first, the procedure runs in polytime in the size of $D$.

\paragraph{Spy Points: Eliminating the Universal Role}
Our next observation allows us to eliminate occurrences of the universal role from ontologies. % (it again does not depend on RDA).

\begin{restatable}{lemma}{spypointreduction}
\label{lemma:spy-point-reduction}
Let $\Omc$ be an $\MLALCOud$ ontology in normal form. Denote by $\Omc'$ the $\MLALCOd$ ontology obtained from $\Omc'$ by replacing
\begin{itemize}
\item each CI of the form $B \sqsubseteq \exists u.B'$ with $B \sqsubseteq \exists r.B'$, and
\item each CI of the form $\exists u.B \sqsubseteq B'$ with the following:
% \neg B' \sqsubseteq \forall u.\neg B
%
\begin{equation*}
\top  \sqsubseteq \exists r.\{e\},\ \  A \sqsubseteq \{e\}, \ \
\neg B'  \sqsubseteq \exists r.A, \ \
\exists r.A \sqsubseteq \neg B, 
\end{equation*}
\end{itemize}
where $r$, $e$ and $A$ are fresh role, nominal and concept names, respectively. % 
Then $\Omc'$ is a model conservative extension of~$\Omc$, and the size of $\Omc'$ is linear in the size of $\Omc$.
\end{restatable}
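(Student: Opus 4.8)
The plan is to use the hypothesis that $\Omc$ is in normal form: there every connective, in particular $\exists u$, is applied only to concept names, so $u$ occurs in $\Omc$ exactly in CIs of the two shapes $B \sqsubseteq \exists u.B'$ and $\exists u.B \sqsubseteq B'$ with $B,B'$ concept names. These are precisely the CIs rewritten by the lemma, and every remaining CI of $\Omc$ is copied verbatim into $\Omc'$, so $\Omc'$ contains no universal role and is an $\MLALCOd$ ontology. Since none of the replacement CIs contains a modal operator, I would verify both conditions of model conservative extension \emph{world by world}, using only the elementary fact that at any world $w$ one has $(\exists u.B')^{\Imc_w} = \Delta^w$ if $(B')^{\Imc_w} \neq \emptyset$ and $(\exists u.B')^{\Imc_w} = \emptyset$ otherwise. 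To prevent the two rewrites from interfering, I would take a separate fresh role $r$ (and, for the second shape, a fresh nominal $e$ and concept name $A$) for each replaced CI; this keeps the argument modular and makes the linear size bound transparent.

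For soundness (every model of $\Omc'$ is a model of $\Omc$), fix $\Mmf \models \Omc'$ and a world $w$. For a rewritten CI $B \sqsubseteq \exists u.B'$: if $d \in B^{\Imc_w}$ then $B \sqsubseteq \exists r.B'$ gives $d$ an $r$-successor in $(B')^{\Imc_w}$, so $(B')^{\Imc_w}\neq\emptyset$ and hence $(\exists u.B')^{\Imc_w} = \Delta^w \ni d$. For a rewritten CI $\exists u.B \sqsubseteq B'$ I would argue the contrapositive: assume $(\neg B')^{\Imc_w}\neq\emptyset$ and pick $d$ in it. By $\neg B' \sqsubseteq \exists r.A$, $d$ has an $r$-successor in $A^{\Imc_w}$, so $A^{\Imc_w}\neq\emptyset$; together with $A \sqsubseteq \{e\}$ this forces $e$ to designate at $w$ with $e^{\Imc_w} \in A^{\Imc_w}$. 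Now $\top \sqsubseteq \exists r.\{e\}$ gives every element an $r$-edge to $e^{\Imc_w} \in A^{\Imc_w}$, so $(\exists r.A)^{\Imc_w} = \Delta^w$, whence $\exists r.A \sqsubseteq \neg B$ yields $B^{\Imc_w} = \emptyset$ and thus $(\exists u.B)^{\Imc_w} = \emptyset$. The verbatim CIs hold trivially, so $\Mmf \models \Omc$.

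For completeness (every model of $\Omc$ extends to a model of $\Omc'$), fix $\Mmf \models \Omc$ and define the fresh symbols at each world $w$, leaving the interpretation of all symbols in $\sig{\Omc}$ unchanged. For a CI $B \sqsubseteq \exists u.B'$: if $B^{\Imc_w}\neq\emptyset$ then $(B')^{\Imc_w}\neq\emptyset$, so fix a witness $b \in (B')^{\Imc_w}$ and add an $r$-edge from each element of $B^{\Imc_w}$ to $b$ (and set $r^{\Imc_w} = \emptyset$ if $B^{\Imc_w} = \emptyset$); then $B \sqsubseteq \exists r.B'$ holds. For a CI $\exists u.B \sqsubseteq B'$: choose any $e^{\Imc_w} \in \Delta^w$ as spy point, set $r^{\Imc_w} = \Delta^w \times \{e^{\Imc_w}\}$, and put $A^{\Imc_w} = \{e^{\Imc_w}\}$ exactly when $(\neg B')^{\Imc_w}\neq\emptyset$, and $A^{\Imc_w}=\emptyset$ otherwise. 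The first three replacement CIs are then immediate, and $\exists r.A \sqsubseteq \neg B$ holds because $e^{\Imc_w} \in A^{\Imc_w}$ occurs only in the case $(\neg B')^{\Imc_w}\neq\emptyset$, where $\Mmf \models \exists u.B \sqsubseteq B'$ forces $B^{\Imc_w} = \emptyset$. As the fresh symbols differ across replaced CIs, these world-local definitions never conflict; and since roles, nominals and concept names need not behave monotonically along accessibility relations (only domains expand), the same construction is valid for both constant and expanding domains.

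The linear size bound is immediate from the modular rewriting (one CI in, one or four CIs out, with $O(1)$ fresh symbols each). The step I expect to require the most care is the four-CI encoding of $\exists u.B \sqsubseteq B'$ in the free, modal setting. On the soundness side one must treat partial designation correctly: it is $A \sqsubseteq \{e\}$ together with the derived non-emptiness of $A^{\Imc_w}$ that \emph{forces} $e$ to designate at $w$ and places $e^{\Imc_w}$ inside $A^{\Imc_w}$, after which $\top \sqsubseteq \exists r.\{e\}$ is exactly what propagates $\neg B$ across all of $\Delta^w$. On the completeness side one must invoke $\Mmf \models \exists u.B \sqsubseteq B'$ at precisely the right moment, so that declaring $e^{\Imc_w} \in A^{\Imc_w}$ (as demanded by $\neg B' \sqsubseteq \exists r.A$) never contradicts $\exists r.A \sqsubseteq \neg B$. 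That all four CIs are modality-free is what licenses this purely world-local reasoning and makes the construction uniform across every frame class and over both domain assumptions.
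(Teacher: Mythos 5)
Your proof is correct and follows essentially the same route as the paper's: a world-by-world verification in which soundness for $\exists u.B \sqsubseteq B'$ chains the four CIs through the spy point (with $A \sqsubseteq \{e\}$ forcing $e$ to designate and collapse $A$ onto $e^{\Imc_w}$), and completeness extends a model of $\Omc$ by interpreting the fresh symbols locally at each world. The only differences are cosmetic: the paper interprets $r$ as the universal role and triggers $e^{\Imc_w} \in A^{\Imc_w}$ by the condition $B^{\Imc_w} = \emptyset$, whereas you use a minimal $r$ and the condition $(\neg B')^{\Imc_w} \neq \emptyset$ --- both choices validate the same four CIs by the same appeal to $\Mmf \models \exists u.B \sqsubseteq B'$.
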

Intuitively, positive occurrences of $\exists u.B'$ ensure that $B'$ is non-empty, which can also be achieved with a fresh role~$r$. For negative occurrences of $\exists u.B'$, we use a spy-point $e$, which is accessible, via a fresh $r$, from every domain element and belongs to $A$ whenever $\neg B'$ is non-empty (that is, whenever $B'$ does not coincide with the domain). If this is the case, then no domain element can be in~$B$, which, by contraposition, implies $\exists u.B\sqsubseteq B'$. Note that $e$ can be rigid.

\paragraph{From Nominals to Definite Descriptions and Back}

%\paragraph{Nominals without RDA to Definite Descriptions}

%\textcolor{red}{say that in a language without nominals, the RDA and totality are trivial}
%\nb{M: to discuss (we might introduce assertions and formulas as primitive, and then satisfy  $\Box C(a) \land \Diamond  (\lnot C(a))$
%without the RDA and
%$\lnot (A(a)) \land \lnot (\lnot A(a))$
%without totality (but with totality issue we enter a mess)}
We first observe that nominals can be easily encoded with definite descriptions. Indeed, given an ontology $\Omc$, take a fresh concept name $N_{a}$ for each individual name $a$ in~$\Omc$, and let $\Omc'$ be the result of replacing every occurrence of $\{ a \}$ in~$\Omc$
with~$\{ \defdes N_{a} \}$. Clearly, $\Omc'$ is a model conservative extension of $\Omc$,
and vice versa.
Note that $\Omc'$ is in the fragment $\MLALCud$ without nominals, which 
has no distinction 
between partial and total interpretations,
and
between the RDA and no-RDA cases.
Thus, we have the following result.
\begin{restatable}{proposition}{nomnonrdatodefdes}
\label{prop:nomnonrdatodefdes}
%Let  $\Cmc$ be a class of frames. 
In $\ML^n_{\ALCOu}$ and $\MLALCOud$,
concept $\Cmc$-satisfiability \textup{(}under global ontology\textup{)} is polytime-reducible to
$\MLALCud$ concept $\Cmc$-satisfiability \textup{(}under global ontology, respectively\textup{)},
with both constant and expanding domains.
%\nb{M: mention that covers also $\ELOu$ and $\ELOud$? (in Section~\ref{sec:reasontfdl}?)}
\end{restatable}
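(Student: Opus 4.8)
The plan is to make rigorous the encoding sketched immediately above the proposition, via a back-and-forth pair of model constructions witnessing satisfiability equivalence (these are exactly the two directions of the model-conservativity claim stated there). I would first fix an input consisting of a concept $C$ and an ontology $\Omc$ in either source language $\ML^n_{\ALCOu}$ or $\MLALCOud$ (for the ontology-free problem one simply omits $\Omc$). Let $a_1,\dots,a_m$ be the individual names occurring in $C$ and $\Omc$, choose pairwise distinct fresh concept names $N_{a_1},\dots,N_{a_m}\in\NC$, and define $C^{\dagger}$ and $\Omc^{\dagger}$ by replacing every occurrence of a nominal $\{a_j\}$ with $\{\defdes N_{a_j}\}$ and recursing unchanged through all remaining constructs. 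Each nominal is replaced by a term of constant size and $m\leq\size{C}+\size{\Omc}$ fresh names are introduced, so the map is computable in linear time. Since $C^{\dagger}$ and $\Omc^{\dagger}$ contain no nominals they are $\MLALCud$ expressions (the translation introduces definite descriptions, so the common target $\MLALCud$ is reached from both source languages), and in this nominal-free fragment the partial/total and RDA/no-RDA distinctions are vacuous, because only individual names, through nominals, can fail to designate or designate non-rigidly. It then remains to prove that $C$ is satisfiable under $\Omc$ over $\Cmc$ (in the partial, no-RDA regime, with the fixed domain discipline) iff $C^{\dagger}$ is satisfiable under $\Omc^{\dagger}$.

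The core of both directions would be the single identity $\{a\}^{\Imc_w}=\{\defdes N_a\}^{\Imc'_w}$, valid at every world $w$ for suitably paired interpretations. For the forward direction, given $\Mmf=(\Fmf,\Delta,\Int)$ satisfying $C$ under $\Omc$, I would build $\Mmf'$ on the same $\Fmf$ and $\Delta$, letting $\Imc'_w$ agree with $\Imc_w$ on all input symbols and additionally setting $N_a^{\Imc'_w}=\{a^{\Imc_w}\}$ if $a$ designates at $w$ and $N_a^{\Imc'_w}=\eset$ otherwise. By the semantics of $\defdes$, the set $N_a^{\Imc'_w}$ is a singleton exactly when $a$ designates at $w$, in which case $(\defdes N_a)^{\Imc'_w}=a^{\Imc_w}$; hence $\{\defdes N_a\}^{\Imc'_w}=\{a\}^{\Imc_w}$ at every $w$. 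A routine structural induction (the Boolean, role, universal-role and modal cases being immediate, since these constructs are untouched) then gives $D^{\Imc_w}=(D^{\dagger})^{\Imc'_w}$ for every subconcept $D$ and world $w$, so $\Mmf'$ satisfies $C^{\dagger}$ under $\Omc^{\dagger}$; as $\Mmf'$ reuses $\Fmf$ and $\Delta$, it is based on the same frame in $\Cmc$ and respects the same constant-or-expanding domain condition.

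For the backward direction, given an $\MLALCud$ interpretation $\Mmf=(\Fmf,\Delta,\Int)$ satisfying $C^{\dagger}$ under $\Omc^{\dagger}$, I would build a partial interpretation $\Mmf'$ on the same $\Fmf$ and $\Delta$, keeping every concept and role name and declaring that, at each world $w$, the name $a$ designates $d$ precisely when $N_a^{\Imc_w}=\{d\}$ is a singleton (so $d\in\Delta^w$), and fails to designate at $w$ otherwise. The same case distinction gives $\{a\}^{\Imc'_w}=\{\defdes N_a\}^{\Imc_w}$ at every $w$, and the analogous induction yields $D^{\Imc'_w}=(D^{\dagger})^{\Imc_w}$, so $\Mmf'$ satisfies $C$ under $\Omc$. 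Here it is essential that the constructed interpretation $\Mmf'$ may be partial and non-rigid: the singletons $N_a^{\Imc_w}$ may vary with $w$ (making $a$ non-rigid) or be non-singletons (making $a$ non-designating), both admissible in the no-RDA partial setting.

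I expect no genuine obstacle, since the construction is a faithful step-by-step simulation; the only points demanding care are (i) the non-designating cases of the identity, where one checks that $N_a^{\Imc'_w}=\eset$ (forward) and $N_a^{\Imc_w}$ non-singleton (backward) correctly force both $\{\defdes N_a\}$ and $\{a\}$ to be empty, and (ii) recording explicitly that the target fragment's lack of nominals makes the reduction insensitive to the partial/total and RDA/no-RDA choices, which is precisely what frees the backward construction to return a possibly partial, non-rigid interpretation. Both directions leave $\Fmf$ and $\Delta$ untouched, so the frame class $\Cmc$ and the domain discipline are preserved throughout, and omitting $\Omc$ and $\Omc^{\dagger}$ handles the ontology-free problem verbatim.
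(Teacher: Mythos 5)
Your proposal is correct and matches the paper's own proof essentially step for step: the same substitution of $\{a\}$ by $\{\defdes N_a\}$ for fresh concept names $N_a$, the same forward construction setting $N_a^{\Imc'_w}=\{a^{\Imc_w}\}$ when $a$ designates and $\emptyset$ otherwise, and the same backward construction defining $a^{\Imc_w}$ as the unique element of $N_a^{\Imc_w}$ when it is a singleton (the paper writes this as $a^{\Imc_w}=(\defdes N_a)^{\Imc'_w}$), with the resulting interpretation allowed to be partial and non-rigid. Your explicit structural induction and the remark that the nominal-free target collapses the partial/total and RDA/no-RDA distinctions are exactly the observations the paper relies on (stating the latter in the main text and leaving the induction as ``it can be seen'').
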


Conversely,
%In the converse direction,
we now show how to replace definite
descriptions $\iota C$ with nominals using the universal role.
%, hence reducing the 
%satisfiability
%problem in $\MLALCOud$ to
%satisfiability
%in $\MLALCOu$.

%\nb{M: keep this Lemma}

\begin{restatable}{proposition}{redmludtomlu}
\label{lemma:redmludtomlu}
%Let  $\Cmc$ be a class of frames. 
$\MLALCOud$ concept $\Cmc$-satisfiability \textup{(}under global ontology\textup{)} is polytime-reducible to
%\nb{removed: total \\ M: ok but in proof of Thm. 11 added ref. to Prop. 5}
$\MLALCOu$
concept $\Cmc$-satisfiability \textup{(}under global ontology, respectively\textup{)},
with both constant and expanding domains.
\end{restatable}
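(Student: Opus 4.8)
The plan is to eliminate each definite description by a fresh nominal together with constraints, expressed with the universal role, that force a fresh concept name to behave exactly like the definite description. First I would bring the input into normal form, so that $\defdes$ occurs only in subconcepts of the shape $\{ \defdes B \}$ with $B \in \NC$: for ontologies this is Lemma~\ref{lemma:ontology:normal-form}, and for concepts the construction of $C^\ast$ from Lemma~\ref{lem:normalformconc}; both are polytime and affect neither the frame class nor the domain assumption. For each such $\{ \defdes B \}$ I would introduce a fresh concept name $A_B$ and a fresh individual name $a_B$, replace every occurrence of $\{ \defdes B \}$ by $A_B$, and add the two constraints
\[
\exists u.B \sqsubseteq \exists u.(\{a_B\} \sqcap B), \qquad A_B \equiv B \sqcap \lnot \exists u.(B \sqcap \lnot \{a_B\}).
\]
The first forces $a_B$ to designate some element of $B$ whenever $B$ is non-empty (a ``chosen witness''); the second says that $A_B$ is all of $B$ if every element of $B$ equals that witness, and is empty otherwise. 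Since the target language lacks number restrictions, the crux is to simulate the at-most-one test built into $\defdes$ without counting, and this witness trick is exactly what replaces it.

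The correctness of the encoding is a world-local case distinction on $\size{B^{\Imc_{w}}}$. If $B^{\Imc_{w}} = \eset$, then $A_B^{\Imc_{w}} = \eset = \{ \defdes B \}^{\Imc_{w}}$. If $B^{\Imc_{w}} = \{e\}$, the first constraint forces $a_B^{\Imc_{w}} = e$, so $B \sqcap \lnot \{a_B\}$ is empty at $w$ and $A_B^{\Imc_{w}} = \{e\} = \{ \defdes B \}^{\Imc_{w}}$. If $\size{B^{\Imc_{w}}} \geq 2$, the witness lies in $B$ but some other element of $B$ does not equal it, so the second conjunct is empty and $A_B^{\Imc_{w}} = \eset = \{ \defdes B \}^{\Imc_{w}}$. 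Hence, in every model of the constraints, $A_B$ and $\{ \defdes B \}$ have the same extension at every world. For the forward direction of satisfiability preservation I interpret $a_B$ at each world as an arbitrary element of $B$ when $B$ is non-empty (the unique one if $B$ is a singleton) and leave it undefined otherwise; the constraints then hold and $A_B$ matches $\{ \defdes B \}$. For the backward direction I read off the original model by forgetting the fresh names and replacing $A_B$ by $\{ \defdes B \}$, which is legitimate precisely because the constraints force the two to coincide everywhere. Note that $a_B$ is allowed to be non-rigid and to designate only where needed, consistently with our default no-RDA partial semantics; all constraints refer to a single world, so the frame class $\Cmc$ and the choice between constant and expanding domains are untouched.

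It remains to enforce the two constraints in both problem variants. For concept $\Cmc$-satisfiability under global ontology I simply add them to $\Omc'$, so that they hold at every world. For plain concept $\Cmc$-satisfiability there is no ontology, so I instead conjoin the constraints to the concept and propagate them to all worlds that matter using the universal role together with box-prefixes along the relevant paths, exactly as in Lemma~\ref{lem:normalformconc}: for each constraint I add $\Box^{\pi} \forall u.(\cdot)$ for all $\pi \in \rpath(C^\ast, A_B)$, i.e.\ for all worlds at which the surrogate $A_B$ is read. I expect the main obstacle to be this last bookkeeping: verifying that enforcing the constraints only along the relevant paths of $A_B$ is sufficient (the surrogate is never read elsewhere) and that the resulting concept stays polynomial, which follows since the number of definite descriptions is linear, each relevant-path prefix has length at most the modal depth, and normal form keeps the path sets polynomial. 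With this in place the reductions are polytime and preserve satisfiability for both constant and expanding domains.
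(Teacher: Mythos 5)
Your proposal is correct and follows the same skeleton as the paper's proof: normalise so that $\defdes$ is applied only to concept names (Lemmas~\ref{lemma:ontology:normal-form} and~\ref{lem:normalformconc}), trade each $\{\defdes B\}$ for a surrogate concept name plus a fresh nominal $a_B$ acting as a chosen witness, and enforce the defining constraints either through the global ontology or, without one, through $\Box^{\pi}\forall u.(\cdot)$ prefixes along the relevant paths $\rpath(C^\ast, A_B)$. The differences are in the plumbing and in the axioms, and both work in your favour. First, you stay in the default partial semantics and simply let $a_B$ be undefined where $B$ is empty, so you avoid the paper's detour through total satisfiability via Propositions~\ref{lemma:redpartialtototal} and~\ref{lemma:redtotaltopartial}. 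Second, your axioms are strictly stronger than the paper's pair $A_{\{\defdes B\}} \sqsubseteq B \sqcap \{a_B\}$ and $B \sqcap \forall u.(B \Rightarrow \{a_B\}) \sqsubseteq A_{\{\defdes B\}}$: that pair guarantees the equivalence of the surrogate with $\{\defdes B\}$ only when $a_B$ is interpreted as intended, since in a model where $B^{\Imc_w}$ is a singleton $\{e\}$ but $a_B^{\Imc_w} \neq e$ both CIs hold with the surrogate empty while $\{\defdes B\}^{\Imc_w} = \{e\}$, so the coincidence of the two can fail in an arbitrary model of the translated ontology. Your first constraint $\exists u.B \sqsubseteq \exists u.(\{a_B\} \sqcap B)$ pins the witness inside $B$ whenever $B$ is non-empty, which is exactly what makes your three-way case analysis on $|B^{\Imc_w}|$ airtight and yields $A_B \equiv \{\defdes B\}$ in \emph{every} model of the constraints; the backward direction then really is just forgetting the fresh symbols, whereas the paper's version needs its intended-witness reading to justify the analogous claim. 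The bookkeeping for the no-ontology case (prefixing only along $\rpath(C^\ast, A_B)$, polynomiality via innermost-first normalisation) matches the paper's treatment, so the reduction is polytime and insensitive to the frame class and to constant versus expanding domains, as required.
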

The proof reduces the total $\Cmc$-satisfiability problems in $\MLALCOud$ to total $\Cmc$-satisfiability in $\MLALCOu$, which, by Propositions~\ref{lemma:redpartialtototal} and \ref{lemma:redtotaltopartial} gives us the required result. We sketch the case of the global ontology. By Lemma~\ref{lemma:ontology:normal-form}, we can assume that the given $\Omc$ is in normal form. Let $\Omc\red$ be the result of replacing each $A_{\{ \defdes  B \}} \equiv \{\defdes B\} $ in $\Omc$ with CIs
\begin{equation*}
A_{\{\defdes B\}} \sqsubseteq B\sqcap \{a_B \} \text{ and } B \sqcap \forall u.(B\Rightarrow \{a_B\}) \sqsubseteq A_{\{\defdes B\}}, 
%\}.
\end{equation*}
where $a_B$ is a fresh individual name. Intuitively, the first CI ensures that the surrogate for $\{\defdes B\}$ belongs to $B$ and is never interpreted by more than one domain element. The second CI ensures that if $B$ is a singleton, then that element belongs to the surrogate for $\{\defdes B\}$. Formally, we show that $\Omc\red$ is a model conservative extension of $\Omc$.
The case of concept satisfiability relies on normal form of concepts (Lemma~\ref{lem:normalformconc}) and is treated in Appendix~\ref{app:reductions}.

%\textcolor{red}{M: mention that it does not cover $\ELOu$ and $\ELOud$? (in Section~\ref{sec:reasontfdl}?)}

\paragraph{Expanding to Constant Domains}

It is known that, for the satisfiability problems, the interpretations with expanding domains can be simulated by constant domain interpretations, where a fresh concept name representing the domain is used to relativise concepts and CIs; see e.g.,~\cite[Proposition 3.32~(ii), (iv)]{GabEtAl03}. We restate this standard result in our setting for completeness:
%\nb{M: to discuss: enough total int. without $\defdes$?}
%
\begin{restatable}{proposition}{redexptoconst}
\label{prop:redexptoconst}
%Let  $\Cmc$ be a class of frames. 
In $\ML^n_{\ALCOu}$
%\nb{added the language with $\iota$ }
and $\MLALCOud$,
concept  $\Cmc$-satisfiability \textup{(}under global ontology\textup{)} with expanding domains is polytime-re\-ducible to concept $\Cmc$-satisfiability \textup{(}under global ontology, respectively\textup{)} with constant domain.
\end{restatable}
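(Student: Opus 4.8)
\emph{Approach.} The plan is to introduce a fresh concept name $E$, intended to denote, in each world $w$, the expanding domain $\Delta^{w}$, and to relativise every construct to $E$ so that an interpretation with constant domain can mimic one with expanding domains. Concretely, I would define a translation $\cdot^{\flat}$ on terms and concepts that leaves concept names and Boolean connectives untouched, guards every existential quantifier and every definite description by $E$, and leaves the modal operators unchanged:
\begin{align*}
(\exists s.C)^{\flat} &= \exists s.(E \sqcap C^{\flat}), \quad s \in \NR \cup \{u\}, \\
(\defdes C)^{\flat} &= \defdes (E \sqcap C^{\flat}), \\
(\Diamond_{i} C)^{\flat} &= \Diamond_{i} C^{\flat}.
\end{align*}
Leaving $\Diamond_{i}$ unrelativised is sound precisely because $E$ will be forced to be \emph{cumulative}, so that an element present at $w$ is still present at every $R_{i}$-successor. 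To this end I would use the \emph{existence axioms}
\[
\top \sqsubseteq \exists u.E, \qquad E \sqsubseteq \Box_{i} E \ (i \in I), \qquad \{a\} \sqsubseteq E,
\]
for every individual name $a$ occurring in the input (the guard in $(\defdes C)^{\flat}$ already places the value of a definite description inside $E$, and $\{a\}$ is translated to itself).

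\emph{Global-ontology case.} Here the reduction is immediate. Given $C$ and $\Omc$, I would relativise every CI $D \sqsubseteq D'$ of $\Omc$ to $E \sqcap D^{\flat} \sqsubseteq D'^{\flat}$, add the existence axioms above as global CIs, and ask for constant-domain satisfiability of $E \sqcap C^{\flat}$ under the resulting ontology. For correctness one reads $E$ as the domain: from a constant-domain model $\Mmf$ satisfying the reduced instance one obtains an expanding-domain model by putting $\Delta^{w} := E^{\Imc_{w}}$ (non-empty by $\top \sqsubseteq \exists u.E$, expanding by $E \sqsubseteq \Box_{i} E$) and restricting all roles to $E$; conversely, from an expanding-domain model one sets $E^{\Imc_{w}} := \Delta^{w}$. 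A routine induction on concept structure shows that, for every $d \in E^{\Imc_{w}}$, membership of $d$ in $C^{\flat}$ under the constant-domain reading coincides with membership of $d$ in $C$ under the expanding-domain reading; the $\Diamond_{i}$ step is exactly where cumulativity is used, and the $\defdes$ and nominal steps use the guards and the axioms $\{a\} \sqsubseteq E$. The whole construction is clearly polynomial.

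\emph{No-ontology case.} The difficulty is that the existence axioms can no longer be stated globally, and I must fold them into a single concept. I would enforce them at every world within modal depth $\md(C)$ of the satisfying world by conjoining to $E \sqcap C^{\flat}$ the concepts
\[
\Box^{\pi} \forall u.\Bigl(\exists u.E \sqcap \textstyle\bigsqcap_{i \in I}(E \Rightarrow \Box_{i} E) \sqcap \bigsqcap_{a} (\{a\} \Rightarrow E)\Bigr),
\]
ranging over all prefixes $\pi$ of the paths in $\rpath(C, C')$, taken over all subconcepts $C'$ of $C$. As in Lemma~\ref{lem:normalformconc}, there are only polynomially many such paths (at most one per occurrence of a subconcept), so this avoids the $|I|^{\md(C)}$ blow-up that naively taking all box-sequences would incur. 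Correctness then follows as above, using that the truth of $C$ at a world depends only on the submodel generated from it up to depth $\md(C)$, on which the conjoined axioms make $E$ non-empty and cumulative.

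\emph{Main obstacle.} The clean case is the global ontology; the real care is needed in the no-ontology case, where the existence axioms --- above all cumulativity $E \sqsubseteq \Box_{i} E$ --- must be propagated to all relevant worlds without an exponential blow-up, which is exactly what the relevant-path encoding of Lemma~\ref{lem:normalformconc} buys us. The argument is uniform for $\ML^{n}_{\ALCOu}$ and $\MLALCOud$: in the former one simply drops the clause for $\defdes$, while the nominal guards $\{a\} \sqsubseteq E$ remain. If preferred, by Propositions~\ref{lemma:redtotaltopartial} and~\ref{lemma:redpartialtototal} one may first pass to total interpretations, in which the individual-name axioms simplify to globally designating names inside $E$.
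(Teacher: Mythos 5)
Your proposal is correct and follows essentially the same route as the paper's proof: a relativisation of the input by a fresh concept name ($\Ex$ in the paper), cumulativity CIs $\Ex \sqsubseteq \Box_{i}\Ex$ added globally in the ontology case, and, in the concept-only case, the same relevant-path-prefixed concepts $\Box^{\pi}\forall u.(\cdot)$ from Lemma~\ref{lem:normalformconc} to propagate these axioms without an exponential blow-up. The only (harmless) deviations are that you treat nominals and definite descriptions directly, via the axioms $\{a\} \sqsubseteq E$ and the guard $\defdes(E \sqcap C^{\flat})$, where the paper instead first eliminates $\defdes$ (Proposition~\ref{lemma:redmludtomlu}) and passes to total interpretations with the guarded translation $\{a\}^{\exrel} = \Ex \sqcap \{a\}$, and that you state the non-emptiness axiom $\top \sqsubseteq \exists u.E$ explicitly, which the paper's appendix leaves implicit even though it is needed for the reconstructed domains $\Delta^{w} = \Ex^{\Imc_{w}}$ to be non-empty.
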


In the sequel,
we implicitly  adopt  the \emph{constant domain assumption}~\cite{GabEtAl03}, and explicitly write when we consider interpretations with expanding domains instead.

%%%%%%%%%%%%%%%%%%%%%%%%%%%%%%%%%%%%%%%%%%%%%%%%%%%%%%%%%%%%%%%%%%%%%%

 % !TEX root =  rda_arxiv.tex

%%%%%%%%%%%%%%%%%%%%%%%%%%%%%%%%%%%%%%%%%%%%%%%%%%%%%%%%%%%%%%%%%%%%%%

\section{Non-Rigid Designators and Counting}
\label{sec:counting}

\newcommand{\MLDiff}{\mathcal{ML}_{\text{\textup{Diff}}}^{n}}	
	
We prove a strong link between non-rigid designators and the first-order one-variable modal logic enriched with the `elsewhere' quantifier, $\MLDiff$, introduced and investigated in~\cite{DBLP:conf/aiml/HampsonK12,HamKur15,DBLP:conf/aiml/Hampson16}. We define $\MLDiff$ using DL-style syntax: concepts in $\MLDiff$ are of the form 
\begin{equation*}
  C ::= A \mid \lnot
  C \mid (C \sqcap C)
\mid \exists u.C
\mid \exists^{\ne} u.C
\mid \Diamond_i C,
\end{equation*}
where $i\in I$. Observe that the language has no terms and no roles apart from the universal role $u$. 
All constructs are interpreted as before,
with the addition of
%and 
%
\begin{align*}
  (\exists^{\ne} u.C)^{\Imc_{w}} &= \bigl\{ d \in \Delta^w \mid 
  C^{\Imc_{w}} \setminus \{ d \} \ne \emptyset
  \bigr\}.
\end{align*}
Note that our language contains existential quantification, which in~\cite{HamKur15} is introduced as an abbreviation for $C \sqcup \exists^{\ne} u.C$. In fact,
$\MLDiff$ can be regarded as a basic first-order modal logic with counting because the counting quantifier $\exists^{=1}u.C$ (`there is exactly one $C$') with
\begin{align*}
	(\exists^{=1} u.C)^{\Imc_{w}} &= \bigl\{ d \in \Delta^w \mid 
	|C^{\Imc_{w}}|=1\bigr\}
\end{align*}
is clearly logically equivalent to the  $\MLDiff$-concept \mbox{$\exists u.(C \sqcap \neg \exists^{\ne}u.C)$}. Conversely, $\exists^{\ne}u.C$ is logically equivalent to
$\exists u.C\sqcap (C \Rightarrow \neg \exists^{=1} u.C)$.
%$(\exists^{=1}u.C \sqcap \neg C) \sqcup (\neg \exists^{=1}u.C \sqcap \exists u.C)$ 
So, one could replace $\exists^{\ne}$ by $\exists^{=1}$ in the definition $\MLDiff$ and obtain a logic with exactly the same expressive power.

\begin{restatable}{theorem}{diff}
\label{th:diff}
%
%Let $\Cmc$ be a class of frames. 
%	
\textup{(1)} $\Cmc$-satisfiability of $\MLALCOu$-concepts \textup{(}under global ontology\textup{)}
	can be reduced in double exponential time to $\Cmc$-satisfiability of $\MLDiff$-concepts \textup{(}under global ontology, respectively\textup{)}, with both constant and expanding domains. 
	
\textup{(2)} Conversely, $\Cmc$-satisfiability of $\MLDiff$-concepts \textup{(}under global ontology\textup{)} is polytime-reducible to $\Cmc$-satisfiability of $\MLALCOu$-concepts \textup{(}under global ontology, respectively\textup{)},  with both constant and expanding domains. 
\end{restatable}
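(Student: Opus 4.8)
The plan is to prove the two directions separately, treating part (2), which only removes the elsewhere quantifier, as routine, and part (1), which must remove both nominals and ordinary roles, as the substantial one. Throughout I keep the modalities untouched, so that the target logic \MLDiff{} itself takes care of cross-world consistency; I only need to eliminate, per world, the constructs absent from \MLDiff{}.

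For part (2) I would first put the input into normal form via Lemmas~\ref{lemma:ontology:normal-form} and~\ref{lem:normalformconc}, so that $\exists^{\ne}u$ is applied only to a concept name $B$. For each such $B$ I introduce a fresh name $a_B$ and add the CI $\exists u.B \sqsubseteq \exists u.(\{a_B\} \sqcap B)$, forcing $a_B$ to name some element of $B$ whenever $B$ is non-empty. Under this constraint the case analysis $|B| = 0, 1, {\geq}\,2$ yields the pointwise identity
\begin{equation*}
\exists^{\ne}u.B \equiv \exists u.(B \sqcap \neg\{a_B\}) \sqcup (\neg B \sqcap \exists u.B),
\end{equation*}
so replacing each $\exists^{\ne}u.B$ by the right-hand side produces an equisatisfiable \MLALCOu{}-concept. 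For the no-ontology case the extra CI is internalised with the $\Box^\pi \forall u$ construction of Lemma~\ref{lem:normalformconc}; for the ontology case it is added to $\Omc$. This is polynomial and preserves the frame class and the constant/expanding choice.

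For part (1) I would eliminate nominals and roles in turn, after normalisation. Eliminating nominals is precisely the non-rigid-designators-as-counting step: replace each nominal $\{a\}$ by a fresh concept name $P_a$ and globally impose ``at most one $P_a$,'' e.g.\ $\neg \exists u.(P_a \sqcap \exists^{\ne}u.P_a)$. Because we work without the RDA and with partial interpretations (Propositions~\ref{lemma:redtotaltopartial},~\ref{lemma:redpartialtototal},~\ref{prop:nomnonrdatodefdes}), a possibly-empty, possibly-singleton, per-world value of $P_a$ faithfully models a non-rigid, non-designating name; this step is polynomial. Removing ordinary roles is the crux, and this is where the one-variable character of \MLDiff{} bites, since there is no term to point at an $r$-successor. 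My plan is the standard \ALC{} type/coherence translation, made global through the universal role: introduce a fresh name $E_{r,C}$ for each subconcept $\exists r.C$, replace $\exists r.C$ by $E_{r,C}$ (so $\forall r.D$ becomes $\neg E_{r,\neg D}$), and add global coherence axioms forcing these markers to be realisable by a genuine role. For every pattern $V$ of a node's active universal $r$-demands I assert
\begin{equation*}
E_{r,C} \sqcap \mathrm{Pat}_V \sqsubseteq \exists u.\bigl(C \sqcap \textstyle\bigsqcap_{\forall r.D \in V} D\bigr),
\end{equation*}
where $\mathrm{Pat}_V$ fixes exactly which $\forall r.D$ hold. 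Soundness then reconstructs each $r^{\Imc_w}$ from the $\exists u$-witnesses, while completeness simply reads the markers off the actual roles; the construction preserves the frame class, both domain assumptions, and the ontology/no-ontology distinction (internalising the axioms via $\Box^\pi \forall u$ in the latter case).

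The hard part will be this role elimination. Because a node's $\exists r.C$ and $\forall r.D$ constraints must be met by the \emph{same} successor, and \MLDiff{} cannot name that successor, I must quantify over all demand-patterns $V$, of which there are $2^{O(|\text{input}|)}$, and the correctness argument must certify that these purely global witness conditions genuinely capture the existence of a coherent per-world role structure (this is the one place where the usual per-successor reasoning has to be re-cast as a per-type existence statement). That enumeration over demand-patterns, together with the type closure, is what pushes part~(1) up to the stated (double-)exponential time, whereas nominal elimination and the entire reverse direction remain polynomial.
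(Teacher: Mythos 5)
Your proposal is correct, and part (2) is essentially the paper's own argument: the paper likewise uses the Gargov--Goranko witness-nominal trick, adding $B \sqsubseteq \exists u.(\{a_B\} \sqcap B)$ and rewriting $\exists^{\ne}u.B$ as $\exists u.B \sqcap (\{a_B\} \Rightarrow \exists u.(\neg\{a_B\} \sqcap B))$, which under the witness CI is logically equivalent to your disjunction $\exists u.(B \sqcap \neg\{a_B\}) \sqcup (\neg B \sqcap \exists u.B)$; your case analysis over $|B| = 0, 1, {\geq}2$ checks out. Part (1), however, takes a genuinely different route. The paper enumerates \emph{quasistates} (non-empty sets of types), computes the set $\Smc_\Omc$ of those whose description $\Xi_{\settp}$ is $\ALCOu$-satisfiable by calling an \ExpTime{} decision procedure on each of doubly exponentially many candidates (this is where its double-exponential bound comes from), and then adds the semantic filter $\top \sqsubseteq \bigsqcup_{\settp\in\Smc_\Omc}\Xi_{\settp}^\sharp$ together with $\top \sqsubseteq \exists^{=1}u.\{a\}^\sharp$; roles are reconstructed in its correctness lemma as the maximal relation consistent with the existential markers. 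Your translation replaces that enumeration-with-oracle by a purely syntactic, schematic axiomatization of the same coherence condition: the key point, which you correctly identify and which makes your soundness argument go through, is that when $d \notin E_{r,C}$ the demand $\forall r.\neg C$ belongs to $d$'s pattern $V$, so every $\exists u$-witness chosen for $d$ avoids $C$, and hence the very same maximal-relation reconstruction used by the paper validates all markers exactly. Your reduction is only single-exponential in size and needs no external satisfiability checks, comfortably within the stated bound; what the paper's formulation buys instead is the explicit quasistate/quasimodel machinery that it reuses for Theorems~\ref{thm:modconcdec} and~\ref{thm:moddickson} and in Section~\ref{sec:reasontfdl}. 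One minor divergence: you impose ``at most one $P_a$'' per world, modelling partial designation directly, whereas the paper imposes ``exactly one'' and implicitly routes through total interpretations via Propositions~\ref{lemma:redtotaltopartial} and~\ref{lemma:redpartialtototal}; both choices are sound, and your handling of the no-ontology case via the $\Box^\pi\forall u$ internalisation of Lemma~\ref{lem:normalformconc} matches the paper's adaptation.
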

We first give the main ingredients for the proof of Item~(1) with global ontology~$\Omc$. If $\Omc$ contains no roles apart from $u$, then we introduce a surrogate concept name $\{a\}^\sharp$ for each individual name $a$ in $\Omc$ and denote by $\Omc^\sharp$ the result of replacing each $\{a\}$ with $\{a\}^\sharp$ and then extending the ontology with CIs of the form $\top \sqsubseteq \exists^{=1} u.\{a\}^\sharp$. The resulting $\MLDiff$ ontology $\Omc^\sharp$ is clearly a model conservative extension of~$\Omc$, which completes the reduction. If, however, $\Omc$ contains role names, then we apply the quasimodel technique~\cite{GabEtAl03} to deal with binary relations (roles), whose interpretations in different worlds are independent from each other. This technique is also used in Sections~\ref{sec:reasoning} and~\ref{sec:reasontfdl}. In quasimodels, each DL interpretation is represented as a \emph{quasistate}, which is a non-empty set of \emph{types}, maximal consistent sets of subconcepts of $\Omc$ and their negations: each type~$\contp$ represents all domain elements satisfying the concepts in~$\contp$. 
In this proof, we can characterise  possible quasistates for $\Omc$  using concepts: the \emph{description} $\Xi_{\settp}$ of a quasistate $\settp$ is  
\begin{equation*}
\forall u.\bigsqcup_{\contp\in\settp} \contp \ \ \sqcap \ \ \bigsqcap_{\contp\in\settp}  \exists u.\contp,
\end{equation*}
where $\contp$ is the conjunction ($\sqcap$) of all concepts $C\in\contp$. The set $\Smc_\Omc$ of quasistates that can possibly occur in interpretations satisfying $\Omc$ can be obtained by checking whether the \emph{modal abstraction} $\Xi_{\settp}^\ast$ of its description $\Xi_{\settp}$  is satisfiable, where $\cdot^\ast$ is the result of replacing subconcepts starting with a modal operator by fresh concept names. Since satisfiability of $\ALCOu$-concepts under $\ALCOu$ ontologies is in \textsc{ExpTime}, the set~$\Smc_\Omc$ can be computed in double exponential time. In order to ensure that quasistates fit together to form a representation of an interpretation satisfying $\Omc$, we use the \emph{DL-abstraction} $\cdot^\sharp$ described above, except that now we replace not only nominals but also existential restrictions with fresh concept names. We construct $\MLDiff$-ontology $\Omc^\sharp$ by replacing CIs $C\sqsubseteq D$ in $\Omc$ with $C^\sharp\sqsubseteq D^\sharp$ and extending the result with CIs  $\top\sqsubseteq \bigsqcup_{\settp\in \Smc_\Omc} \Xi_{\settp}^\sharp$  and  $\top \sqsubseteq \exists^{=1} u.\{a\}^\sharp$, for all individual names $a$. Then we show that $A$ is $\Cmc$-satisfiable under $\Omc$ iff $\contp^\sharp$ is $\Cmc$-satisfiable under $\Omc^\sharp$, for some type $\contp$ containing~$A$. 
See Appendix~\ref{app:counting} for full details and the case of concept satisfiability (without ontology).

%The proof of (1) and proofs in Sections~\ref{sec:reasoning} and~\ref{sec:reasontfdl} use \emph{types}~$\contp$, sets of (negated) subconcepts of the input concept (and ontology) such that, for any subconcept $C$, either $C\in \contp$ or $\neg C\in \contp$.\nb{comment on the closure under $\sqcap$?} A \emph{quasistate} $T$ is a set of types and stands for a world satisfying exactly the types in $T$~\cite{GabEtAl03}. Given a quasistate $T$, one can check in polytime whether its \emph{non-modal abstraction}, obtained by replacing all concepts starting with a modal operator by a concept name, is satisfiable in a DL interpretation. On the other hand, assuming that $\mathcal{C}$-satisfiability of $\MLDiff$-concepts
%is decidable, one can check whether its \emph{non-DL abstraction}, in which existential restrictions and nominals are replaced by concept names, is satisfiable in a $\MLDiff$ model in which the concept names replacing nominals are interpreted as singletons.  As observed above, to enforce interpretation of a concept name $A$ as a singleton in all worlds one can use the $\MLDiff$ CI $\top \sqsubseteq \exists^{=1}u.A$. For concept satisfiability (without ontology) one makes the same statement for all worlds reachable along relevant paths. The reduction is now implemented using both abstractions. It requires double exponential time because the number of quasistates is double exponential in the size of the input. 

Item (2) is proved by lifting to the modal description logic setting the observation of \citeauthor{DBLP:journals/jphil/GargovG93}~(\citeyear{DBLP:journals/jphil/GargovG93}) that the difference modality and nominals are mutually interpretable by each other; see the equivalences that show the same expressive power of $\exists^{\ne}$ and $\exists^{= 1}$.

Using Theorem~\ref{th:diff}, one can transfer a large number of (un)decidability results and lower complexity bounds from first-order modal logics with `elsewhere' to modal DLs with non-rigid designators; see, e.g., Theorem~\ref{thm:k:global:sat} in Section~\ref{sec:reasoning}. Conversely, the results proved below entail new results for first-order modal logics with `elsewhere' and/or counting.

%{\color{red} 
%	
%	According to \cite{DBLP:conf/aiml/Hampson16} we have upper complexity bounds for $\K$, S5, Alt with elsewhere. 
%	
%	Hampson poses the question whether these decidability can be extended to monodic fragments. In this paper we show that this is indeed the case.    
%}

%%%%%%%%%%%%%%%%%%%%%%%%%%%%%%%%%%%%%%%%%%%%%%%%%%%%%%%%%%%%%%%%%%%%%%
\section{Reasoning in Modal Free Description Logics}
\label{sec:reasoning}
%
%
%\begin{table*}[t] % put at top of page if possible 
%\centering
%	\begin{tabular}{l|c|c|c|c}
%	        & \multicolumn{2}{c|}{concept satisfiability} &  \multicolumn{2}{c}{concept sat. under global ontology}\\
%		\mbox{}\hfil modal logic $L$ & const.\ domain & exp.\ domains & const.\ domain & exp.\ domains  \\
%		\hline
%		$\K^n$, $n \geq 1$ & NExp-c & NExp-c & undec. & ? \\[4pt]
%		$\Sfive$, $\KDfourfive$ & NExp-c & NExp-c & NExp-c & NExp-c \\
%		$\Sfive^{n}$, $\KDfourfive^{n}$, $n\geq 2$ & NExp-c & NExp-c & undec. & undec. \\[4pt]
%%		$\KDfourfive$ & NExp-c & NExp-c & NExp-c & NExp-c \\
%		$\Kfour$ & ? & ? & ? & ?\\
%		$\Kfour^{n}, n\geq 2$ & ? & ? & undec. & ?\\[4pt]
%		$\GL$, $\Grz$ & ? & decidable & ? & decidable\\
%%		$\K^{n}, n\geq 1$ & NExp-c & NExp-c & undec & undec. \\
%%		$\KDfourfive_{n}, n\geq 2$ & NExp-c & NExp-c & undec & undec\\
%		$\GL^{n}$, $\Grz^{n}$, $n\geq 2$ & ? & ? & undec. & ?
%	\end{tabular}
%	\caption{Concept satisfiability (under global ontology) for $L_{\ALCOud}$ and $L_{\ALCOu}$}\label{table:complexity}
%\end{table*}

The aim of this section is to show the results presented in Table~\ref{table:complexity} and also discuss a few related logics.
For some basic frame classes, it will be convenient to use standard modal logic notation when discussing the satisfiability problem.
So, given a propositional modal logic $L$ with $n$ operators and a DL fragment $\DL$, the problem of  
\emph{$L_{\DL}$ concept satisfiability} (\emph{under global ontology}) is the
 problem of deciding $\mathcal{C}_{L}$-satisfiability of $\mathcal{ML}^{n}_{\DL}$-concepts \textup{(}under global ontology, respectively\textup{)}, where $\mathcal{C}_{L}$ is the class of all frames validating $L$.  We focus on the modal logics $L$ characterised by the following classes $\Cmc_L$ of frames, with $n \geq 1$:
\begin{description}
\item[$\K^n$\textnormal{:}] $\Cmc_L$ is the class of all frames $(W,R_1,\dots,R_n)$;
%\item for $L = \KDfourfive$, $\mathcal{C}_{L}$ is the class of all %frames $(W,R)$ such that $R$ is serial, transitive, and Euclidean;
\item[$\Sfive^n$\textnormal{:}] $\Cmc_L$ is the class of all frames $(W,R_1,\dots,R_n)$ such that the $R_i$ are equivalence relations;
%\emph{reflexive}, \emph{transitive}, and \emph{symmetric}
%(i.e., an \emph{equivalence})
%\item for $L = \Kfour$, $\mathcal{C}_{L}$ is the class of all frames %$(W,R)$ such that $R$ is a transitive relation;
%\item for $L = \GL$, $\mathcal{C}_{L}$ is the class of all frames %$(W,R)$ such that $R$ is transitive, irreflexive, and %\emph{Noetherian} (there is no infinite chain $x_{0}Rx_{1}R\cdots$ %with $x_{i}\not=x_{i+1}$);
%\item for $L = \Grz$, $\mathcal{C}_{L}$ is the class of all frames %$(W,R)$ such that $R$ is transitive, reflexive, and Noetherian.
\item[$\K^{\ast n}$\textnormal{:}] $\Cmc_L$ is the class of all frames $(W,R_{1},\ldots,R_{n},R)$ such that $R$ is the transitive closure of $R_{1}\cup\cdots \cup R_{n}$;
\item[$\Kfn$\textnormal{:}] $\Cmc_L$
is as for $\K^{\ast n}$ and, in addition, $W$ is finite and $R$ is irreflexive (in other words, there are no chains $w_{0}R_{i_{1}}w_{1}\cdots R_{i_{n}}w_{n}$ with $w_{0}=w_{n}$).
\end{description}
%For any $L$ with one modal operator and $n\geq 1$ we denote by $\mathcal{C}_{L^{n}}$ the class of frames $(W,R_{1},\ldots,R_{n})$ such that $(W,R_{i})\in \mathcal{C}_{L}$ for $1\leq i \leq n$.

%\color{blue}%
To illustrate the language $\Sfive^{2}_{\ALCOud}$, we express that Agent 2 \emph{knows that} Agent 1 does \emph{not know of} the General Chair of KR 2024 that they are the General Chair of the KR Conference held in Southeast Asia:
\begin{multline*}
\Box_{2} \exists u. (
\{ \defdes \exists \mathsf{isGenChair}. \{ \mathsf{kr24} \} \} \sqcap
\lnot \Box_{1} \{ \defdes \exists \mathsf{isGenChair}. \\ \{ \defdes ( \mathsf{KRConf} \sqcap \exists \mathsf{hasLoc}.\mathsf{SEAsiaLoc}) \} \}  ).
\end{multline*}%
%or, to stress the scope distinctions for $\Box_{1}$ and $\Box_{2}$, as:
%\begin{multline*}
%\Box_{2} [
%\lnot \Box_{1} \{ \defdes \exists \mathsf{isGenChair}.  \{ \defdes ( \mathsf{KRConf} \sqcap \\ \exists \mathsf{hasLoc}.\mathsf{SEAsiaLoc}) \} \} (\{ \defdes \exists \mathsf{isGenChair}. \{ \mathsf{kr24} \} \} )].
%\end{multline*}
%\color{black}

The two main new results in Table~\ref{table:complexity} are the \NExpTime{} upper bound for $\K^{n}$ and $\Sfive^{n}$ and decidability of $\Kfn$. The remaining results are by (sometimes non-trivial) reductions to known results.
\begin{restatable}{theorem}{modconcdec}
	\label{thm:modconcdec}
	For $L\in \{ \K^{n},\Sfive^{n} \}$ with $n \geq 1$, $L_{\ALCOud}$
	concept
	satisfiability 
	is in $\NExpTime$ with both constant and expanding domains.
\end{restatable}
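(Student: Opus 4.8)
The plan is to establish the exponential finite model property and then use guess-and-check. First I would apply the reductions of Section~\ref{sec:reduction} to reach a clean setting: by Proposition~\ref{prop:redexptoconst} it suffices to treat constant domains, and by Proposition~\ref{lemma:redmludtomlu} I can eliminate definite descriptions and reduce $\MLALCOud$ concept satisfiability to $\MLALCOu$ concept satisfiability over the same frame class $\Cmc_L$, optionally normalising the input via Lemma~\ref{lem:normalformconc}. It then remains to show that every $\Cmc_L$-satisfiable $\MLALCOu$ concept $C$, for $L\in\{\K^{n},\Sfive^{n}\}$, has a constant-domain model based on a frame in $\Cmc_L$ whose number of worlds and whose domain are both singly exponential in $|C|$. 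A nondeterministic procedure can then guess such an interpretation and verify satisfaction of $C$ by a bottom-up computation of concept extensions in time polynomial in the model size, hence in \NExpTime.

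To prove the model property I would use the quasimodel machinery of~\cite{GabEtAl03} already invoked for Theorem~\ref{th:diff}. Let $\cl$ be the closure of $C$ under subconcepts and single negation; a \emph{type} is a maximal Boolean-consistent subset of $\cl$, a \emph{quasistate} is a set of types realised at a world, and, under the constant-domain assumption, a \emph{run} is a coherent assignment of a type to every world, representing one domain element. The relevant coherence conditions are: Boolean and role-witness conditions inside each world ($\exists r.D$ needs an $r$-successor, realised by local role trees of depth $\le\md(C)$); the quasistate-level condition for $\exists u.D$; the across-worlds condition linking $\Diamond_{i}D$ and $\Box_{i}D$ in a run to the types of that run at $R_{i}$-accessible worlds; and, crucially, the \emph{counting} condition from non-rigid nominals, namely that at each world the nominal $\{a\}$ is realised by at most one run (by exactly one when $a$ designates there), with this run allowed to differ from world to world. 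Then $C$ is $\Cmc_L$-satisfiable iff such a coherent quasimodel exists.

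The core is to bound a quasimodel to exponential size. For the world dimension I would unravel from the world satisfying $C$ to modal depth $\md(C)\le|C|$, observing that for each diamond subconcept $\Diamond_{i}D$ a \emph{single} $R_{i}$-successor world can witness it simultaneously for all domain elements requiring it (give each element a type consistent with its $\Box_{i}$-obligations, adding $D$ where needed), so per-relation branching is bounded by the number of diamond subconcepts in $\cl$ and is thus polynomial. For $\K^{n}$ this yields a tree of polynomial branching and depth $\le\md(C)$, hence singly exponentially many worlds; for $\Sfive^{n}$ I would instead build a tree of $\Sfive$-clusters, one per accessibility relation, each cluster containing polynomially many worlds to witness its internal diamonds while all worlds of a cluster agree on the persistent boxed formulas, again giving singly exponentially many worlds. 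For the domain I would retain only the designated element, one witness per world for each realised $\exists u.D$ and $\exists r.D$ requirement, and one witness per world for each nominal; summed over the exponentially many worlds this remains singly exponential, and each retained element is assigned a coherent type at every world so as to preserve the constant domain.

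I expect the main obstacle to be reconciling the world- and domain-bounds with the counting condition and the frame conditions simultaneously. The delicate point is exactly the feature that makes the problem interesting: since nominals are non-rigid, the unique run realising $\{a\}$ may change from world to world, so when worlds are reused or collapsed (within the $\Sfive^{n}$ clusters and when turning the unraveling into a finite structure) one must guarantee that no world ends up with two, or spuriously zero, $a$-witnesses, and that the equivalence-relation conditions for $\Sfive^{n}$ are preserved. Showing that these per-world counting requirements can always be met inside an exponential model --- so that, as announced, non-rigid designators come for free over $\K^{n}$ and $\Sfive^{n}$ --- is the crux; once an exponential-size coherent quasimodel is in hand, the \NExpTime{} bound follows from guess-and-check, and the case of expanding domains is immediate by Proposition~\ref{prop:redexptoconst}.
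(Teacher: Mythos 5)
Your overall architecture coincides with the paper's (reduce via Propositions~\ref{prop:redexptoconst} and~\ref{lemma:redmludtomlu} to constant-domain $\MLALCOu$ satisfiability, pass to quasimodels with types, quasistates and runs, prove an exponential finite model property, then guess-and-check), but the pivotal counting step fails. Your claim that for each $\Diamond_i D$ a \emph{single} $R_i$-successor world can witness it ``simultaneously for all domain elements requiring it,'' so that per-relation branching is polynomial in $|\cl|$, is false exactly because of non-rigid nominals. Take $D = \{a\} \sqcap A$: at any one world $v$ at most one element is the designatum of $a$, so if two distinct runs both contain $\Diamond_i(\{a\}\sqcap A)$ at $w$, no single accessible world can witness the diamond for both --- each such element needs its own witness world in which \emph{it} is interpreted as $a$. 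The number of elements demanding such a diamond is bounded only by the domain size, which is precisely what you are trying to bound, so your branching estimate is circular and the unravelling as described does not deliver an exponential model.

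You do correctly flag the reconciliation of world-reuse with the exactly-one-run-per-nominal condition as ``the crux,'' but you leave it as an acknowledged obstacle rather than resolving it, and this is where the paper's proof does its actual work (Lemma~\ref{lem:s5exp}). The paper fixes, for each world $\avec{w}$ and type $\contp$, one proto-run $\rho_{\avec{w},\contp}$ and one proto-witness world per tuple $(\avec{w},i,C,\contp)$ --- branching exponential in $|C_0|$ (one successor per type--diamond pair, not per element) but of depth at most $\md(C_0)$, hence still single exponential. This selective filtration satisfies the witness condition \textbf{(R2)} only for the proto-runs; the deficit is repaired by duplicating successor worlds indexed by permutations of the run set: the copy reached via the transposition swapping an arbitrary run $\rho$ of type $\contp$ with $\rho_{\avec{w},\contp}$ lets $\rho$ inherit the proto-run's witnesses, while bijectivity of the swap guarantees that every copy still realises each nominal by exactly one run, preserving condition \textbf{(M2)} and all quasistates. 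Without this repair device, or an equivalent mechanism, the exponential finite model property in the presence of non-rigid designators --- the genuinely nontrivial content of the theorem --- remains unproven. (Minor further points: the paper also routes through \emph{total} interpretations via Proposition~\ref{lemma:redpartialtototal}, and for $\Sfive^n$ works directly with tree-shaped frames whose relations are equivalence closures of parent--child pairs rather than with explicit cluster trees; the $\K^n$ case is then derived from the $\Sfive^n$ argument, not proved separately as in your sketch.)
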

We provide a sketch of the main ideas of the proof for $\Sfive^{n}$.
First, observe  (using an unfolding argument) that any satisfiable concept $C$ is satisfied in world $w_{0}$ in a model based on a frame $\mathfrak{F}=(W,R_{1},\ldots,R_{n})$ such that the domain $W$ of $\mathfrak{F}$ is a prefix-closed set of words of the form 
\begin{equation*}%\label{eq:world}
	\avec{w}=w_{0}i_{0}w_{1}\cdots i_{m-1}w_{m},
\end{equation*}
where $1\leq i_{j}\leq n$, $i_{j}\not=i_{j+1}$, and each $R_{i}$ is 
the smallest equivalence relation containing all pairs of the form $(\avec{w},\avec{w}iw)\in W\times W$. One can assume that $m$ is smaller than the modal depth of $C$. Next, one can substitute the first-order domain by quasistates (as introduced in Section~\ref{sec:counting}) and work with \emph{quasimodels} $\Qmf = (\mathfrak{F}, \funcand, \runs)$, in which $\funcand$ associates a quasistate with any world, and a set of \emph{runs} $\runs$ represents first-order domain elements as functions mapping every world $w$ to a type in $\funcand(w)$; note that runs were implicit in the proof of Theorem~\ref{th:diff} and could be defined as the domain elements in the interpretations for $\MLDiff$. If $\contp\in \funcand(w)$ contains a nominal, there is only one $r\in \runs$ with $r(w)=\contp$; this condition was expressed by the  CIs of the form~$\top\sqsubseteq \exists^{=1} u.\{a\}^\sharp$. Now, one can apply selective filtration and some surgery to such a quasimodel
in order
to obtain a quasimodel
%\nb{cut  `of the above form'}
with at most exponential outdegree (and so of at most exponential size), from which one can then extract
%\nb{replaced  `reconstruct'}
a model of exponential size. 
\begin{restatable}{theorem}{moddickson}
	\label{thm:moddickson}
$\Kfn_{\ALCOud}$ concept
	satisfiability under global ontology is decidable with expanding domains,
	 for $n\geq 1$.
\end{restatable}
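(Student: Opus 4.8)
The plan is to turn the problem into the search for a finite acyclic quasimodel and then to bound the length of its branches by a well-quasi-order argument based on Dickson's lemma (hence the name), crucially exploiting that expanding domains grow monotonically. First I would eliminate roles, nominals and descriptions. By Proposition~\ref{lemma:redmludtomlu}, $\Kfn_{\ALCOud}$ concept satisfiability under global ontology is polytime-reducible to $\MLALCOu$ concept satisfiability over the same frame class, and by Theorem~\ref{th:diff}(1) the latter reduces (over the same frame class and preserving expanding domains) to $\Kfn$-satisfiability of $\MLDiff$-concepts under global ontology, the transitive-closure modality being carried along as one further modality whose interpretation is pinned down by the frame class. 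The gain is that $\MLDiff$ has no binary roles: a first-order interpretation at a world is fully described by the multiset of \emph{types} it realises --- maximal $\Omc$-consistent sets of subconcepts and their negations --- and the only counting available is `at least one' ($\exists u$), `exactly one' ($\exists^{=1}u$) and `at least two' ($\exists^{\ne}u$).

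Next I would set up quasimodels $\Qmf=(\Fmf,\funcand,\runs)$ as in the proof of Theorem~\ref{thm:modconcdec} and in~\cite{GabEtAl03}: $\Fmf$ is a finite acyclic $\Kfn$-frame, $\funcand$ labels each world with a quasistate, and $\runs$ is a set of runs threading a type through the worlds, compatibly with the diamond semantics and the expanding-domain inclusions. The counting constraints are recorded through run multiplicities: each $\exists^{=1}u$-demand forces exactly one run of the relevant type at the world in question, whereas `free' types may be realised by arbitrarily many runs. A standard selective-filtration and witnessing argument bounds the out-degree of a quasimodel by at most exponentially many successors per world, so the only remaining unbounded parameter is the length of $R$-branches.

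The heart of the argument is to bound these branch lengths. For each world $w$ I would define an abstract state $s(w)$ consisting of a \emph{finite} part --- the quasistate, the set of still-pending $\Diamond$-eventualities, and the profiles of the runs forced to be unique by $\exists^{=1}u$-demands --- together with an $\N$-valued part recording, for each type, how many \emph{free} runs realise it at $w$. Comparing abstract states by equality on the finite part and componentwise $\le$ on the $\N^{k}$-part gives a well-quasi-order, by Dickson's lemma. Along any single $R$-branch the abstract states thus form a sequence in this order, so a sufficiently long branch contains an ancestor--descendant pair $w_a\,R\,w_b$ with $s(w_a)\le s(w_b)$. Then I would splice the subtree rooted at $w_b$ in place of the one rooted at $w_a$: each run entering $w_a$ from above is routed into a run of the same type at $w_b$ (possible since $w_b$ dominates $w_a$, and in particular the unique runs correspond because the finite parts agree), every eventuality pending at $w_a$ is discharged inside $w_b$'s subtree, and the expanding-domain inclusions $\Delta^{w}\subseteq\Delta^{v}$ are maintained as the domain only grows along the retained part of the branch. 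Since this strictly shortens the branch, a minimal quasimodel has branch length bounded by the longest bad sequence over $(\N^{k},\le)$ --- an Ackermann-sized but computable bound --- yielding a finite model property and hence decidability by an exhaustive search over quasimodels up to this size, settling the open problem of~\cite{WolZak01}.

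The step I expect to be hardest is the splice, because it is exactly there that the two sources of bad behaviour must be reconciled with the well-quasi-order: the \emph{non-monotone} `exactly one' constraints, which would be destroyed by naively adding elements and therefore have to be quarantined in the finite component of $s(w)$; and the transitive-closure modality, whose eventualities range over the whole $R$-future and must be shown to survive the collapse (which is where irreflexivity and finiteness of the frame are used, since no eventuality can then be postponed forever). It is precisely the expanding-domain assumption that keeps the splice sound --- retaining the larger domain of $w_b$ never violates $\Delta^{w}\subseteq\Delta^{v}$ --- which also explains why the companion statement fails, and the problem becomes undecidable, under constant domains.
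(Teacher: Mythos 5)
Your proposal follows the same architecture as the paper's proof: unfold to finite forest-shaped frames, replace models by quasimodels whose quasistates record, for each type, the number of runs realising it, and shorten branches by splicing the subtree rooted at a dominating descendant onto a dominated ancestor, with the expanding-domain assumption guaranteeing that every run entering the splice point can be re-routed into a run of the same type (the paper's \emph{Drop interval} rules, where the exactly-one constraints for nominals are preserved automatically because the quasistate conditions force the total multiplicity of nominal-containing types to be $1$, so dominated and dominating quasistates agree on them). Your preliminary detour through Proposition~\ref{lemma:redmludtomlu} and Theorem~\ref{th:diff}(1) to land in the role-free language $\MLDiff$ is a legitimate deviation --- the paper instead keeps roles and absorbs them into the quasistate conditions --- and its double-exponential cost is immaterial at this complexity level.

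There is, however, a genuine gap in your quantitative step. You conclude that a splice-minimal quasimodel ``has branch length bounded by the longest bad sequence over $(\mathbb{N}^{k},\le)$ --- an Ackermann-sized but computable bound.'' No such bound exists: for every $N$ the sequence $(N),(N-1),\ldots,(0)$ is bad, so bad sequences over $(\mathbb{N}^{k},\le)$ have unbounded length, and splice-minimality alone does not bound the size of a quasimodel --- a minimal quasimodel may carry astronomically large multiplicities and correspondingly long bad branches. Dickson's lemma yields a \emph{computable} length bound only for \emph{controlled} sequences, i.e.\ given recursive bounds on the norm of the first element and on the norm increase at each step~\cite{DBLP:conf/lics/FigueiraFSS11}. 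Supplying this control is precisely the part of the paper's proof your sketch omits: the \emph{minimize root} rule thins the root quasistate to at most one run per type, giving $|\funcand(w_{0})|\le 2^{|C_{0}|}$, and the \emph{minimize non-root} rule caps the number of fresh runs introduced at any world, giving $|\funcand(\avec{v})|\le|\funcand(\avec{w})|+2^{|C_{0}|}$ along every edge; one must verify that these thinning operations preserve quasimodelhood (in particular the nominal constraints and the witnessing of eventualities), and only then does the length function theorem deliver the recursive bound on which the exhaustive search rests. A related, smaller inaccuracy: your claim of a uniformly exponential out-degree is not what the normalization gives --- since the domain of every run is $R$-closed, each run through a world extends to \emph{all} of its successors, and the paper's bound is $|\funcand(\avec{w})|\times|C_{0}|$, which itself depends on the multiplicity control above; this is harmless for the argument, but it cannot be taken for granted before the multiplicities are bounded.
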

The proof is again based on appropriate quasimodels, which are now based on expanding domain models using finite frames $\mathfrak{F}=(W,R_{1},\ldots,R_{n},R)$ such that the transitive closure $R$ of $R_{1}\cup \cdots \cup R_{n}$
%\nb{added second $\cup$ and $R$}
contains no cycles. Unfolding shows that we may assume
that $(W,R)$ is a forest. We show decidability by proving a recursive bound on the size of these models. Let
%\nb{added}
$\extN = \mathbb{N}\cup \{\infty\}$, where we assume  $m \leq \infty$ and $m + \infty = \infty$, for all $m\in\extN$.
We fix an ordering $\contp_{1},\ldots,\contp_{k}$ of the types and represent a quasistate as a vector $(x_{1},\ldots,x_{k})\in \extN^{k}$, where $x_{i}$ is the number of domain elements that satisfy type $\contp_{i}$ in a world (equivalently, the number of runs through $\contp_{i}$). Let $|\avec{x}|=x_{1}+\cdots + x_{k}$
for $\avec{x}=(x_{1},\ldots, x_{k})\in\extN^k$. Observe that expanding domains correspond to the condition that $wR_{i}v$ implies $|\avec{x}| \leq |\avec{y}|$,
%\nb{more detailed explanation}
where  $\funcand(w), \funcand(v)$ are represented by $\avec{x}, \avec{y}\in\extN^k$, respectively. 
To obtain a recursive bound on the size of a finite model satisfying a concept we then apply Dickson's Lemma to the quasistates. Define the product ordering $\leq$ on $\extN^{k}$ by setting $\avec{x}\leq \avec{y}$ if $x_{i}\leq y_{i}$ for all $1\leq i \leq k$, where $\avec{x}=(x_{1},\ldots,x_{k})$ and $\avec{y}=(y_{1},\ldots,y_{k})$. 
A pair $\avec{x},\avec{y}$ with $\avec{x} \leq \avec{y}$ is called an \emph{increasing pair}. Dickson's Lemma states every infinite sequence $\avec{x}_{1},\avec{x}_{2},\ldots\in \extN^{k}$ contains an
increasing pair $\avec{x}_{i_{1}},\avec{x}_{i_{2}}$ with $i_{1}<i_{2}$.
In fact, assuming $|\avec{x}_{i}|\leq |\avec{x}_{i+1}|$ for all $i\geq 0$ and
given recursive bounds on $|\avec{x}_{1}|$ and
$|\avec{x}_{i+1}|-|\avec{x}_{i}|$, one can compute a recursive
bound on the length of the longest sequence without any increasing pair~\cite{DBLP:conf/lics/FigueiraFSS11}. Now, the proof of a recursive bound
on the size of a satisfying model consists in manipulating the quasimodel so that the outdegree of the forest is recursively bounded and Dickson's Lemma becomes applicable. The expanding domain assumption is crucial for this.  
 
We comment on the remaining results in Table~\ref{table:complexity}.
The $\NExpTime$-hardness results already hold without nominals~\cite[Theorem 14.14]{GabEtAl03} (the proof goes through also with expanding domain). The lower bounds for $\Kfn$ and $\K^{\ast n}$
follow from the following lemma and the corresponding lower bounds proved in the next section for temporal DLs (Table~\ref{table:complexity:temporal}).   
\begin{lemma}\label{lemma:ltl-to-kfn}
Concept satisfiability for $\LTLf_{\ALCOu}$ and $\LTL_{\ALCOu}$ are polytime-reducible to concept satisfiability for $\Kfn_{\ALCOu}$ and $\K^{\ast n}_{\ALCOu}$,
respectively, with and without ontology and with both constant and expanding domains.
\end{lemma}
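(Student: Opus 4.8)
The plan is to realise a linear temporal model as a special $\K^{\ast n}$- (respectively $\Kfn$-) model in which a single base relation $R_1$ plays the role of the successor and its transitive closure $R$ plays the role of the strict future. Concretely, I would define a translation $\mathsf{tr}$ from $\LTL_{\ALCOu}$- to $\K^{\ast n}_{\ALCOu}$-concepts that leaves every Boolean, role, nominal and universal-role construct untouched (these are evaluated inside a single world, hence orthogonal to the temporal dimension), and sets $\mathsf{tr}(\Next D)=\Diamond_1\mathsf{tr}(D)$ (strong next; weak next via $\Box_1$), $\mathsf{tr}(\Df D)=\Diamond_{\ast}\mathsf{tr}(D)$ and $\mathsf{tr}(\Bf D)=\Box_{\ast}\mathsf{tr}(D)$, where $\Diamond_{\ast},\Box_{\ast}$ are the operators interpreted by $R$. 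I would then pass to normal form (Lemmas~\ref{lemma:ontology:normal-form} and~\ref{lem:normalformconc}), giving every subconcept $D$ of $\mathsf{tr}(C)$ a surrogate concept name $X_D$, and augment the ontology with: (i) seriality $\top\sqsubseteq\Diamond_1\top$, included only in the $\LTL/\K^{\ast n}$ case; and, crucially, (ii) for every surrogate $X_D$ and every modality $i$ the \emph{agreement} CIs $\Diamond_i X_D\sqsubseteq\Box_i X_D$, together with the cross-modal variants $\Diamond_i X_D\sqsubseteq\Box_1 X_D$ and $\Diamond_1 X_D\sqsubseteq\Box_i X_D$. All of this is polynomial. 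For the variant \emph{without} ontology I would internalise these CIs into the concept using the $\Box^{\pi}\forall u(\cdots)$ device of Lemma~\ref{lem:normalformconc} along $C$-relevant paths, plus a top-level $\Box_{\ast}\forall u(\cdots)$, so that they hold throughout the submodel generated from the world of evaluation.

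Soundness is the routine direction. Given a temporal model of $C$ over $(\Nbb,{<})$ or a finite prefix, I would read it as a $\K^{\ast n}$- (respectively $\Kfn$-) model by putting $R_1$ equal to the successor, $R_2=\dots=R_n=\emptyset$, and $R$ its transitive closure; the per-world DL interpretations are copied verbatim. Seriality holds on $\Nbb$, the successor is a function so the agreement CIs hold vacuously, and a straightforward induction on subconcepts shows that $\mathsf{tr}(C)$ is satisfied where $C$ was. The expanding-domain case is identical, since the domain sequence of a trace is already monotone along $R_1$.

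Completeness is the heart of the matter and the expected obstacle: an arbitrary $\K^{\ast n}$-model branches, while linear-time satisfaction (in particular the eventualities $\Df$) must be witnessed on a single line. The key lemma I would prove is that the agreement CIs force the generated submodel to be \emph{linear up to formula-equivalence}. After passing to the (bisimilar, hence $\Diamond_{\ast}$-preserving) tree unfolding, I would show by induction on the $R_1$-distance from the root that \emph{all worlds at the same distance agree on every surrogate $X_D$, pointwise}: agreement makes the one-step successors of a world agree on all surrogates, and agreement on the surrogates of $\Diamond_i/\Box_i$-concepts propagates this to the next level. Hence the unfolding collapses, up to formula-equivalence, to a single chain of depth-classes; choosing one representative per depth yields a linear model (finite for $\Kfn$; an $\omega$-chain for $\K^{\ast n}$ thanks to seriality). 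I would then verify, again by induction on subconcepts, that deleting the off-chain worlds preserves the truth of every $X_D$. The only delicate case is $\Diamond_{\ast}/\Box_{\ast}$: a reachable witness of $X_D$ at some depth is, by the depth-wise agreement, present already at the retained representative of that depth, so the equivalences $X_{\Diamond_{\ast}D}\equiv\Diamond_{\ast}X_D$ and $X_{\Box_{\ast}D}\equiv\Box_{\ast}X_D$ survive the restriction. On the resulting chain $\Diamond_1$ is the successor and $R$ the strict order, so the chain is exactly a temporal model of $C$.

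It then remains to tidy up routine bookkeeping. The finite-trace behaviour of strong versus weak $\Next$, and of $\Df,\Bf$ at the last instant, is matched by $\Diamond_1/\Box_1$ and $\Diamond_{\ast}/\Box_{\ast}$ at a $\Kfn$ dead-end; a $\Until$ operator, if present, is handled by marking its scope with a fresh concept name propagated by $\Box_1$ along the chain; and the reduction from expanding to constant domains is in any case subsumed by Proposition~\ref{prop:redexptoconst}. The anticipated difficulty is concentrated entirely in the completeness lemma, namely making precise that the agreement CIs neutralise branching \emph{without} identifying genuinely distinct future behaviour --- that is, that depth-wise equivalence is both forced by the added CIs and strong enough to preserve the transitive-closure modalities under deletion of the off-chain worlds.
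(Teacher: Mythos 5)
Your overall architecture --- translate $\Next$ as $\Diamond_1$ and $\Df$ as the transitive-closure diamond $\Diamond_{\ast}$, pass to surrogates in normal form, add sibling/cross agreement CIs $\Diamond_i X\sqsubseteq\Box_j X$ and seriality, and read a linear model as a serial tree for soundness --- has the same shape as the reduction the paper intends (the paper gives no self-contained proof; it says the result is obtained by adapting the proof of Theorem~6.24 of \cite{GabEtAl03}). The gap is that your completeness argument rests on a key lemma that is false.

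The claimed lemma --- all worlds at the same $R_1$-distance agree pointwise on every surrogate, so that picking one representative per depth yields a linear model --- is not a consequence of your CIs. Counterexample (one relation, one domain element, constant domain): take a main branch $w_0 R_1 w_1 R_1 w_2\cdots$, attach to each $w_k$ a side child $o_k$ with a single child $p_k$ followed by an infinite all-false chain; let $X_A$ hold exactly at the worlds $p_k$, and $X_{\Df A}$ exactly at the worlds $w_k$ and $o_k$. The global equivalence $X_{\Df A}\equiv\Diamond_{\ast}X_A$ holds everywhere, seriality holds, and every agreement CI holds: at each $w_k$ the two children $w_{k+1},o_k$ have identical surrogate profiles, and every other world has a single child, so agreement there is vacuous. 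Yet the cousins $w_2$ and $p_0$ at depth $2$ disagree on both surrogates, and along the representatives $w_0,w_1,w_2,\dots$ --- an actual $R_1$-branch! --- the eventuality $X_{\Df A}$ holds forever while $X_A$ never holds, so the extracted chain is not a temporal model. Your induction step fails exactly where you located the difficulty: sibling agreement constrains only one-step successors, and nothing ties the value of a $\Diamond_{\ast}$-surrogate (or of $X_D$ itself when $\Next D$ is not in the closure) at depth $m{+}1$ to surrogates shared at depth $m$; single-child nodes make agreement vacuous and let branches drift apart. A correct completeness argument cannot collapse by depth: one must \emph{choose} the branch, steering at each step towards a $\Diamond_{\ast}$-witness of a currently scheduled pending eventuality --- pending eventualities do persist along any $R_1$-branch via the unfolding $X_{\Df D}\sqsubseteq\Diamond_1(X_D\sqcup X_{\Df D})$, which is in fact derivable from your cross-agreement CIs, seriality and the global equivalence --- and the fairness of the scheduling must be organised over the finitely many \emph{types} (quasimodel runs) rather than over domain elements, since with infinitely many elements an element-wise round-robin along an $\omega$-branch is impossible. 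This witness-steering argument at the quasimodel level is precisely the nontrivial content of the proof of Theorem~6.24 in \cite{GabEtAl03} that the paper says must be adapted.
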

The proof of this reduction from logics of linear frames with transitive closure to logics of branching frames with transitive closure is not trivial but can be done by adapting the reduction given in the proof of Theorem 6.24 in \cite{GabEtAl03} for product modal logics. 

Finally,  undecidability of concept satisfiability under global ontology for $\K^{n}$
%\nb{para changed, theorem added}
with $n \geq 1$ follows from undecidability of  $\MLDiff$~\cite{DBLP:conf/aiml/HampsonK12} using the reduction of Theorem~\ref{th:diff}. The result for  $\Sfive^{n}$ with $n \geq 2$ can be obtained in a similar way, see, e.g.,~\cite{GabEtAl03}.
%\nb{more specific reference for two S5 = K?} 
%Finally, the undecidability of concept satisfiability for $\K^{n}$ and $\Sfive^{n}$ under ontologies with constant domain can be proved in the same way as the undecidability of $\MLDiff$\nb{cut `extended with the universal modality'} over arbitrary frames\nb{cut `shown in'}~\cite{DBLP:conf/aiml/HampsonK12} using the reduction of Theorem~\ref{th:diff}.\nb{new theorem}

\begin{theorem}\label{thm:k:global:sat}
Concept satisfiability under global ontology is undecidable with constant domains for $\Sfive^n_{\ALCOu}$, with $n \geq 2$, and  for $\K^n_{\ALCOu}$ with $n \geq 1$.
\end{theorem}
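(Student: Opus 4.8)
The plan is to derive both undecidability claims by transporting the known undecidability of the one-variable first-order modal logic with counting, $\MLDiff$, through the reduction of Theorem~\ref{th:diff}(2). The key point is that this reduction is \emph{uniform in the frame class}: for every class $\Cmc$ of frames it turns $\Cmc$-satisfiability of $\MLDiff$ concepts under global ontology into $\Cmc$-satisfiability of $\MLALCOu$ concepts under global ontology in polynomial time, and it applies with constant domains. Instantiating $\Cmc$ with $\mathcal{C}_{\K^{n}}$ and with $\mathcal{C}_{\Sfive^{n}}$, it therefore suffices to establish that $\MLDiff$-concept satisfiability under global ontology is undecidable over these two frame classes; since undecidability is preserved under the (computable) reduction, both statements of the theorem follow.

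For $\K^{n}$ I would invoke the undecidability of $\MLDiff$ established by Hampson and Kurucz~\cite{DBLP:conf/aiml/HampsonK12}. Intuitively, $\MLDiff$ is a two-dimensional logic whose \emph{object} dimension is governed by the difference/elsewhere quantifier---equivalently, by the counting quantifier $\exists^{=1}$, as observed just before Theorem~\ref{th:diff}---while its \emph{modal} dimension is a $\K$-frame; the global ontology supplies in addition a universal `always' box over worlds. The counting quantifier is precisely what allows an unbounded amount of information to be carried across worlds, and with it one encodes an undecidable tiling (domino) problem. A single modality already suffices, so the result holds for all $n \geq 1$. Feeding this into Theorem~\ref{th:diff}(2) with $\Cmc = \mathcal{C}_{\K^{n}}$ yields undecidability of $\K^{n}_{\ALCOu}$ concept satisfiability under global ontology with constant domains.

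For $\Sfive^{n}$ with $n \geq 2$ the argument is analogous, this time importing the corresponding undecidability of $\MLDiff$ over two (or more) equivalence relations, which can be extracted from the standard grid/tiling machinery for many-dimensional $\Sfive$ logics in~\cite{GabEtAl03}: the two $\Sfive$ modalities lay out the rows and columns of a grid, and the counting available in $\MLDiff$ enforces the matching conditions of the tiling. Because the $R_i$ are symmetric, constant and expanding domains coincide here, so no separate expanding-domain statement is needed. Applying Theorem~\ref{th:diff}(2) with $\Cmc = \mathcal{C}_{\Sfive^{n}}$ then gives undecidability of $\Sfive^{n}_{\ALCOu}$ concept satisfiability under global ontology with constant domains.

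The reduction step itself is routine once Theorem~\ref{th:diff} is available, so the entire substance sits in the imported undecidability results. The main obstacle is to ensure that those results are stated for, or can be adapted to, \emph{exactly} our setting: the frame classes $\mathcal{C}_{\K^{n}}$ and $\mathcal{C}_{\Sfive^{n}}$, the global-ontology formulation (a global `always' box rather than a genuinely independent universal modality), constant domains, and the correct number of modalities ($n=1$ for $\K$, $n=2$ for $\Sfive$). Re-casting the $\Sfive$-with-two-relations tiling construction of~\cite{GabEtAl03} from the classical rigid-name setting into the counting formulation of $\MLDiff$ is the part that will require the most care.
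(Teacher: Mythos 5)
Your proposal is correct and follows essentially the same route as the paper: the paper likewise derives both claims by feeding the undecidability of $\MLDiff$ under global ontology---from \cite{DBLP:conf/aiml/HampsonK12} for $\K^{n}$, $n\geq 1$, and via the tiling machinery for two commuting $\Sfive$ modalities in \cite{GabEtAl03} for $\Sfive^{n}$, $n\geq 2$---through the polytime, frame-class-uniform reduction of Theorem~\ref{th:diff}(2). Your closing caveat about adapting the imported results to the exact setting (global box, constant domains, number of modalities) is exactly where the paper, too, leaves the details to the cited sources.
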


For many important modal logics the decidability status
of modal DLs with non-rigid designators remains open. Most prominently, for the modal logics of (reflexive) transitive frames ${\bf K4}$ (and ${\bf S4}$, respectively), decidability of concept satisfiability with or without ontologies and with expanding or constant domains is open. As a ``finitary'' approximation and a first step to understand ${\bf K4}$ and ${\bf S4}$, we prove, as a specialisation of the proof of Theorem~\ref{thm:moddickson}, decidability of concept satisfiability for the G\"odel-L\"ob logic $\GL_{\ALCOud}$ ($\GL$ is the logic of all transitive and Noetherian\footnote{$(W,R)$ is \emph{Noetherian} if there is no infinite chain $w_{0}Rw_{1}R\cdots$ with $w_{i}\not= w_{i+1}$.} frames~\cite{boolos1995logic}) and Grzegorczyk's logic $\Grz_{\ALCOud}$ ($\Grz$
is the logic of all reflexive and transitive Noetherian frames~\cite{grzegorczyk1967some})
in expanding domain models with and without ontologies.
Alternatively, the decidability of concept satisfiability for
$\GL_{\ALCOud}$ and $\Grz_{\ALCOud}$ can be proved by a double-exponential-time reduction (similar to Theorem~\ref{th:diff}) to satisfiability in expanding domain
products of transitive Noetherian frames, which is known to be decidable~\cite{DBLP:journals/apal/GabelaiaKWZ06}.
%
%The decidability of concept satisfiability for
%$\GL_{\ALCOu}$ and $\Grz_{\ALCOu}$ under expanding domains is now a direct consequence of Theorem~\ref{thm:noether} and the decidability results proved for concept satisfiability in $\mathcal{ML}^{n}_{lin}$ in~\cite{DBLP:journals/apal/GabelaiaKWZ06}.

%%%%%%%%%%%%%%%%%%%%%%%%%%%%%%%%%%%%%%%%%%%%%%%%%%%%%%%%%%%%%%%%%%%%%%

%\input{4-ml-reasoningplus.tex}
% !TEX root =  rda_arxiv.tex

\section{Reasoning in Temporal Free Description Logics}
\label{sec:reasontfdl}

In this section, we consider the temporal setting.
For the temporal DL language, we build $\TLALCOud$
\emph{terms}, \emph{concepts}, \emph{concept inclusions} and
\emph{ontologies} similarly to the $\MLALCOud$ case, with $n = 2$: the
language has two modalities --- temporal operators `sometime in
the future', $\Diamond$, and `at the next moment',~$\Next$.
In particular, the $\TLALCOud$ \emph{concepts} are defined by the grammar
\begin{gather*}
C ::= A
 \mid \{ \tau \}
 \mid \lnot C
 \mid (C \sqcap C)
\mid \exists r.C
\mid \exists u.C
\mid \Diamond C
\mid \Next C,
%\mid (C \Until C),
\end{gather*}
where $\tau$ is a $\TLALCOud$ \emph{term}, defined as in Section~\ref{subsec:ml-nonrig}.
%Similarly, an \emph{$\TLALCOud$ formula} is constructed as follows:
%%
%\[
%\p ::=
%(\alpha)
%\mid \neg \p \mid (\p \land \p)
%\mid \Diamond \p
%\mid \Next \p
%%\mid (\p \Until \p).
%\]
%%
%All the usual syntactic abbreviations are assumed.
%In particular, for temporal concepts, we set:
%$\Diamond C = \top \Until C$ (\emph{eventually});
%$\Diamond^{+} C = C \sqcup \Diamond C$ (\emph{reflexive eventually});
%$\Box C = \lnot \Diamond \lnot C$ (\emph{reflexive always});
%$\Box^{+} C = C \sqcap \Box C$ (\emph{reflexive always});
%$\Next C = \bot \Until C$ (\emph{strong next});
%$\Wnext C = \lnot \Next \lnot C$ (\emph{weak next}).
%Similar abbreviations holds for formulas.
%%\nb{M: add full syntax}

%%% VERSION EXPANDING DOMAIN
%A \emph{flow of time} a frame $\Fmf = (T, <)$, where $T$ is a
%non-empty set of \emph{instants} and $< \ \subseteq T \times T$ is a
%strict linear order on $T$.  
A  \emph{flow of time} $\Fmf$ is a pair $(T, <)$, where $T$ is either the set  $\Nbl$ of 
non-negative integers or a subset of $\Nbl$ of the form $[0, n]$, for $n\in\Nbl$, and $<$ is the strict linear order on $T$. Elements of $T$ are called \emph{instants} (rather than worlds). A flow of time $(T, <)$ naturally gives rise to a frame $(T, <, \textit{succ})$ for  $\TLALCOud$, where $\textit{succ}$ is the successor relation: $\textit{succ} = \{ (t, t+1) \mid t, t+1\in T\}$. So, we will often say that an  interpretation $\Mmf$ is based on a flow $\Fmf$ if its frame is induced by $\Fmf$. We
%\nb{moved here to avoid duplication}
denote it (with an abuse of notation) by
$\Mmf = (\Delta, (\Imc_{t})_{t \in T})$. 
%A \emph{partial temporal interpretation with expanding domains} is a triple
%$\Mmf = ( \Fmf, \Delta, \Int)$, defined as in the modal case.
%We similarly define the notion of \emph{total model with expanding domains}.
If $\Mmf$ is based
on  $(\Nbl, <)$, then we call it an \emph{$\LTLALCOud$  interpretation}; if it is
based on $([0, n], <)$, for some $n\in\Nbl$, it is
called an \emph{$\LTLfALCOud$  interpretation}.
% and often denoted by
%$\Mmf = (\Delta, (\Imc_{t})_{t \in [0, n]})$. %\textcolor{red}{Let $\textbf{Fin}$ be the class of frames based on finite flows of time $([0,n], <)$.?}

Given
%a partial trace
$\Mmf = (\Delta, (\Int_t)_{t\in T})$, the \emph{value} of a $\TLALCOud$ term~$\tau$ at $t\in T$ and
the \emph{extension} of a $\TLALCOud$ concept~$C$ at $t\in T$, are defined
as in the modal case for $n = 2$: in particular, we have
\begin{equation*}
  (\Next D)^{\Imc_{t}}  \!=  \begin{cases} D^{\Imc_{t+1}},\hspace*{-0.7em} & \text{if } t + 1 \in T,\\\emptyset, & \text{otherwise},\end{cases} \text{ and }
  (\Diamond D)^{\Imc_t} \! = \hspace*{-1.7em}\bigcup_{t' \in T \text{ with }  t < t'} \hspace*{-1.9em} D^{\Imc_{t'}}.			
\end{equation*}
Note that $\Diamond$ is interpreted by $<$ and so
%\nb{replaced `thus'}
does not include the current instant.

We will consider restrictions of the base language $\TLALCOud$ along
both the DL and temporal dimensions. First, $\TLALCOd$, $\TLALCOu$ and
$\TLALCO$ stand for the fragments of $\TLALCOud$ without the universal
role, definite descriptions and both constructs, respectively.  In
addition to the basic free description logic $\ALCOud$, we define temporal extensions of the light-weight free DL $\ELOud$, which does not contain negation (and so disjunction). 
More precisely, the language $\mathcal{TL}_{\ELOud}$ is obtained from
$\TLALCOud$ by allowing only $\top$ (considered as a primitive
logical symbol), concept names, term nominals, conjunctions
%(both on concepts and formulas)
and existential restrictions in the construction of concepts.
%while negations and disjunctions can be applied to formulas only.
%
Then, by removing the universal role or/and
definite descriptions, we define $\TL_{\ELOd}$, $\TL_{\ELOu}$ and
$\TL_{\ELO}$  in the obvious way. %similarly to the fragments with $\ALCOud$.  

In the temporal dimension, given a DL $\DL$, the%
%fragments
%$\TL^{\Diamond}_{\DL}$
%and
%$\TL^{\textstyle\circ}_{\DL}$
%\nb{M: small changes}
\emph{$\Diamond\Box$-fragment},
$\TL^{\Diamond}_{\DL}$,
and
the
\emph{$\Next$-fragment},
$\TL^{\textstyle\circ}_{\DL}$,
are obtained from $\TL_{\DL}$
by disallowing the $\Next$ and the $\Diamond/\Box$ operators, respectively. Both
%the $\Diamond\Box$- and the $\Next$-
fragments
correspond to the unimodal language $\ML^1_{\DL}$, but with different accessibility relations.

In the following we will combine the syntactic restrictions
(fragments) with the semantic restrictions on interpretations and refer, for example, to
the satisfiability problem for $\mathcal{TL}^\Diamond_{\ALCOu}$
concepts in $\LTLf_{\ALCOud}$ interpretations simply as
$\LTLfdALCOu$ concept satisfiability.

%\color{blue}
As an example,
%the concept expressing that
%%KR 2025 is next year's edition of KR,
%Magdalena is a Program Chair of KR
%and either the General Chair or a PC Member of KR 2025, abbreviated as
%\[
%%\Next \{ {\sf kr } \}( {\sf kr25} )
%\exists {\sf isProgChair}.\{ {\sf kr} \} \sqcap ( \{ \defdes \exists {\sf isGenChair}. \Next \{ {\sf kr} \} \} \sqcup \exists {\sf isPCMember}. \Next \{ {\sf kr } \} (\{ {\sf magdalena} \})
%\]
%is
in $\LTLf_{\ALCOud}$, we express that whoever is a Program Chair of KR will not be Program Chair of KR again, but is always appointed as either the General Chair or a PC member of next year's KR, by means of the CI:
%we can consider
%$\LTLf_{\ALCOud}$ concept satisfiability under a global ontology stating that whoever is a Program Chair of KR will not be Program Chair of KR again, but is always appointed as either the General Chair or a PC member of next year's KR, by means of the CI:
	\begin{multline*}
\exists {\sf isProgChair}.\{ {\sf kr} \} \sqsubseteq \lnot \Diamond \exists {\sf isProgChair}.\{ {\sf kr} \} \ \sqcap \\
							 ( \{ \defdes \exists {\sf isGenChair}. \Next \{ {\sf kr} \} \} \sqcup \exists {\sf isPCMember}. \Next \{ {\sf kr } \} ).
	\end{multline*}%
%\color{black}

%On the other hand, we consider languages with restrictions on the application of modal or temporal operators.
%Given a DL $\DL$, the language $\DL$-$L$ is obtained from $L_{\DL}$ by disallowing $L$ operators over $\DL$ concepts.

%We\nb{R: cut down? } say a $\TLALCOud$ concept $C$ is \emph{satisfied at instant $t$} if
%$C^{\Imc_{t}} \neq \eset$, for some infinite model $\Mmf$; and that $C$ is \emph{satisfied at instant $t$ in finite models} if $C^{\Imc_{t}} \neq \eset$, for some finite model $\Mmf$.  Similarly to the modal case, a concept inclusion $C \sqsubseteq D$ is \emph{satisfied in~$\Mmf$}, written $\Mmf\models C\sqsubseteq D$, if 
%
%$C^{\Imc_{t}} \subseteq D^{\Imc_{t}}$, for every $t\in T$. An ontology $\mathcal{O}$ is \emph{satisfied in $\Mmf$}, written $\Mmf\models \mathcal{O}$, if every CI in $\mathcal{O}$ is satisfied in $\Mmf$.

%Given $L \in \{ \LTL, \LTLf \}$, the \emph{partial} (resp., \emph{total})  \emph{$L_{\DL}$ $P$ problem} \emph{without} (resp., \emph{with}) \emph{the RDA} is the {partial} (resp., {total}) $\TL_{\DL}$ $P$ problem without (resp., with) the RDA over the class:
%\begin{itemize}
%\item $\{ (\mathbb{N}, <) \}$, if $L = \LTL$;
%\item
%%$\mathbf{Fin}$,
%of \emph{finite strict linear orders},
%if $L = \LTLf$.
%\end{itemize}

\begin{table*}[t]%
\centering%
\newcommand{\Exp}{\textsc{Exp}}%
\newcommand{\refr}[1]{~\tiny[#1]}%
\tabcolsep=5.5pt%
\begin{tabular}{lcccc}\toprule
	       \multicolumn{1}{c}{\multirow{2}{*}{temporal logic}} & \multicolumn{2}{c}{concept satisfiability} &  \multicolumn{2}{c}{concept sat. under global ontology}\\
		 & const.\ domain & exp.\ domains & const.\ domain & exp.\ domains  \\
		\midrule
$\LTLd_{\ALCOu}$, $\LTL_{\ALCOu}$ & $\Sigma_1^1$-complete\refr{T~\ref{th:temp1:const}} & undecidable\refr{T~\ref{th:temp1:exp}.1} & $\Sigma_1^1$-complete\refr{T~\ref{th:temp1:const}}   & undecidable\refr{T~\ref{th:temp1:exp}.1}  \\
$\LTLfd_{\ALCOu}$, $\LTLf_{\ALCOu}$ & undecidable\refr{T~\ref{th:temp1:const}} & \tabcolsep=0pt\begin{tabular}{c}decidable,\\[-2pt]Ackermann-hard\end{tabular}\refr{T~\ref{th:temp1:exp}.2} & undecidable\refr{T~\ref{th:temp1:const}}  & \tabcolsep=0pt\begin{tabular}{c}decidable,\\[-2pt]Ackermann-hard\end{tabular}\refr{T~\ref{th:temp1:exp}.2} \\
$\LTLd_{\ALCO}$, $\LTL_{\ALCO}$ & ? & ? & $\Sigma_1^1$-complete\refr{T~\ref{th:temp1:const}}  & undecidable\refr{T~\ref{th:temp1:exp}.1} \\
$\LTLfd_{\ALCO}$, $\LTLf_{\ALCO}$ & ? & ? & undecidable\refr{T~\ref{th:temp1:const}}  & \tabcolsep=0pt\begin{tabular}{c}decidable,\\[-2pt]Ackermann-hard\end{tabular}\refr{T~\ref{th:temp1:exp}.2} \\
$\LTLo_{\ALCOu}$ / $\LTLo_{\ALCO}$ & \Exp-c. / in \Exp{}\refr{T~\ref{thm:temp3}} & \Exp-c. / in \Exp{}\refr{T~\ref{thm:temp3}} & undecidable\refr{T~\ref{thm:temp4}} & ? \\[2pt]
% & & Exp-c & undec. & ? \\
$\LTLfo_{\ALCOu}$ / $\LTLfo_{\ALCO}$ & \Exp-c. / in \Exp{}\refr{T~\ref{thm:temp3}} & \Exp-c. / in \Exp{}\refr{T~\ref{thm:temp3}} & undecidable\refr{T~\ref{thm:temp4}} & decidable \\\bottomrule
% & in Exp & Exp-c & undec. & decidable 
\end{tabular}
	\caption{Concept satisfiability (under global ontology) for temporal DLs}\label{table:complexity:temporal}
\end{table*}

%%%%%%%%%%%%%%%
%\subsection{Temporal Free DLs Based on \ALCO{}}

%{\color{red}{
%We make the following observations about the reductions
%\nb{M: added\\R: not sure -- too fussy, move to the appendix? \\ M: I agree, maybe I can shorten it here?}
%in Section~\ref{sec:reduction}. Lemma~\ref{lemma:ontology:normal-form} and Proposition~\ref{prop:nomnonrdatodefdes}, as well as Proposition~\ref{lemma:redpartialtototal} for the global ontology case, apply already to the $\Next$- or $\Box\Diamond$-fragments without the universal role or definite descriptions,
%%based on $\ELO$
%and
%Lemma~\ref{lemma:spy-point-reduction} holds for $\ALCO$-based $\Next$- or $\Box\Diamond$-fragments without the universal role or definite descriptions.
%Moreover, Proposition~\ref{prop:rdatononrda}, as well as Proposition~\ref{prop:redexptoconst} for the global ontology case, applies to $\Box\Diamond$-fragments
%%based on $\ELO$
%without the universal role or definite descriptions.
%}}

We
%now\nb{M: added `now'}
begin the study of the satisfiability problems for temporal DLs based on $\ALCO$ by showing that concept satisfiability in constant domains 
is undecidable or even $\Sigma_1^1$-complete (highly undecidable
%\nb{ok?}
in the analytical hierarchy) over the infinite flow of time $(\Nbl,<)$. This is very different from the classical case with RDA,  which was
  shown to be decidable in the absence of definite
  descriptions~\cite[Theorem 14.12]{GabEtAl03}. 

\begin{restatable}{theorem}{tempone}
\label{th:temp1:const}
With constant domains, concept satisfiability is  $\Sigma_1^1$-complete for $\LTLd_{\ALCOu}$   and $\LTL_{\ALCOu}$, and undecidable for
  $\LTLfd_{\ALCOu}$; also, concept satisfiability under global ontology is
  $\Sigma_1^1$-complete for $\LTLd_{\ALCO}$ and $\LTL_{\ALCO}$, and undecidable for
  $\LTLfd_{\ALCO}$.
\end{restatable}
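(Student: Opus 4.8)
The plan is to establish matching upper and lower bounds. For the \emph{upper bound}, I would show $\Sigma^1_1$-membership of the four infinite-flow problems by a direct coding argument. A model over $(\Nbl,<)$ with a countable constant domain can be coded as a single subset of $\Nbl$, and, for a fixed concept $C$, the relation ``$C$ is satisfied at instant $t$'' is \emph{arithmetical} in this code: the recursion on $C$ is finite, $\Diamond$ and $\Box$ introduce only number quantifiers over instants, and $\exists r$, $\exists u$ (or, in the $\ALCO$ variants, the CIs of the global ontology) introduce only number quantifiers over domain elements. Hence ``$C$ is satisfiable'' has the form ``there exists a model such that (arithmetical condition)'', which is $\Sigma^1_1$. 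The same coding bounds the finite-flow problems, but there the theorem only claims undecidability, so the upper bound is immaterial.

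For the \emph{lower bounds} in the infinite-flow cases I would reduce from the recurrent (Harel) tiling problem --- tile $\Nbl\times\Nbl$ so that a distinguished tile occurs infinitely often in the first column --- which is $\Sigma^1_1$-complete. The decisive design choice is to make the \emph{recurrence axis be time}: I map rows to instants of $(\Nbl,<)$ and columns to constant-domain elements, so that the recurrence condition becomes a $\Box\Diamond$-statement forcing the distinguished tile to reappear infinitely often. Vertical matching (consecutive rows, same column) relates instants $t$ and $t+1$ for a fixed domain element and is enforced with $\Next$. The crux is horizontal matching and, beneath it, a \emph{column-successor structure that is stable across time}: here I would exploit precisely the counting power that non-rigid designators provide (Theorem~\ref{th:diff}), using nominals and the induced $\exists^{=1}u$ to pin unique elements and their non-rigid movement to enumerate columns, thereby building a rigid-enough successor on columns and transporting the horizontal compatibility relation along it. In the $\ALCOu$ variants every constraint is written inside one concept via $\exists u$/$\forall u$, whereas in the $\ALCO$ variants, which lack the universal role, the globally-quantified constraints are placed into the global ontology; this is exactly why the first group needs no ontology while the second does.

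For the $\Diamond$-only fragments $\LTLd_{\ALCOu}$ and $\LTLd_{\ALCO}$ (no $\Next$), the same reduction goes through once an immediate-successor relation on time is recovered from $\Diamond$/$\Box$ alone: I would two-colour the instants so that the next instant of a point is the earliest future instant of the opposite colour, which is $\Diamond$-definable, and then phrase vertical matching relative to this definable successor. This yields $\Sigma^1_1$-completeness for $\LTLd_{\ALCOu}$, $\LTL_{\ALCOu}$ (no ontology) and $\LTLd_{\ALCO}$, $\LTL_{\ALCO}$ (under global ontology). For the finite-flow cases $\LTLfd_{\ALCOu}$ and $\LTLfd_{\ALCO}$, recurrence along a finite time axis is unavailable, so I would instead reduce from the ordinary (non-recurrent) $\Nbl\times\Nbl$ tiling problem, which is undecidable: the infinite grid is encoded with the infinite constant domain as one axis and the non-rigid counting machinery across a short finite flow as the other, in the spirit of the undecidability of the one-variable counting logic underlying Theorem~\ref{th:diff}. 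The absence of the recurrence condition is precisely why these cases come out undecidable rather than $\Sigma^1_1$-hard.

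The step I expect to be the main obstacle is the stable column-successor construction. Roles are interpreted per instant and are non-rigid, so forcing a single time-invariant horizontal adjacency on the domain --- and making the horizontal tile-compatibility travel correctly along it while the vertical constraints run along time --- is the delicate combinatorial heart of the reduction. By contrast, the $\Diamond$-only simulation of $\Next$ and the bookkeeping that trades the universal role for a global ontology should be comparatively routine once the grid gadget is in place.
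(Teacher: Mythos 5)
There is a genuine gap, and it is localised in your finite-flow cases. Concept satisfiability over finite flows with constant domains is \emph{recursively enumerable}: satisfiability means satisfiability over some $[0,n]$, and for each fixed $n$ the problem translates into first-order satisfiability, so one enumerates $n$ and proofs. (This is exactly how the paper records the situation, quoting the ``r.e.\ but undecidable'' result of Hampson and Kurucz for the underlying one-variable counting logic.) The ordinary $\Nbl\times\Nbl$ tiling problem, however, is $\Pi^0_1$-complete (by K\"onig's lemma, the grid is tileable iff every finite square is), so a reduction mapping tileable instances to satisfiable concepts would place a $\Pi^0_1$-complete set many-one below a $\Sigma^0_1$ set --- impossible. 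Independently, your phrase ``infinite constant domain as one axis and the non-rigid counting machinery across a short finite flow as the other'' cannot produce an infinite grid: a finite flow supplies only finitely many instants, and $\ALCOu$ alone (one instant) is decidable. The master problem for $\LTLfd_{\ALCOu}$ and $\LTLfd_{\ALCO}$ must itself be $\Sigma^0_1$-complete; the paper uses \emph{reachability/halting for Minsky machines}, laying the finite computation along the unboundedly long finite flow, with each counter value realised as the cardinality of a concept and increments/decrements forced by non-rigid nominals (e.g.\ $Q_i \sqsubseteq \Next R_k \Leftrightarrow (R_k \sqcup \{a_k\})$, cf.\ the construction in the proof of Theorem~\ref{thm:temp4}).

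For the infinite-flow cases your recurrent-tiling plan is not unreasonable, but its entire weight rests on the ``stable column-successor'' gadget that you name and never construct; since roles are reinterpreted at every instant and nothing is rigid, transporting a successor relation on the domain through time is exactly the hard part, not a detail. The paper sidesteps grids entirely: it imports $\Sigma^1_1$-completeness and undecidability from the one-variable temporal logic with counting via the polytime reduction of Theorem~\ref{th:diff}~(2), and then derives the $\ALCO$ statements by normalising the ontology (Lemma~\ref{lemma:ontology:normal-form}) and eliminating the universal role with the spy-point construction (Lemma~\ref{lemma:spy-point-reduction}) --- a step your ``routine bookkeeping'' remark glosses over, since without $u$ a CI cannot directly assert global existential or counting statements. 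Its direct alternative also encodes Minsky machines rather than tilings: recurrence of a machine state yields $\Sigma^1_1$-hardness, and for the $\Diamond$-only fragments it replaces your two-colouring simulation of $\Next$ (note that ``the earliest future instant of the opposite colour'' is not $\Diamond/\Box$-definable without an Until-like device) by monotone concepts $R_k \sqsubseteq \Box R_k$ and $S_k \sqsubseteq \Box S_k$ with counter value the cardinality of $R_k \sqcap \lnot S_k$, using elements satisfying $\lnot R_k \sqcap \Box R_k$ to order the domain and build a diagonal. Your $\Sigma^1_1$ upper bound, by contrast, is fine and matches the paper's second-order coding argument.
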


The lower bounds follow,
%\nb{M: mention $\Sigma^{1}_{1}$-membership?}
using Theorem~\ref{th:diff}~(2), from the respective undecidability results for  the first-order one-variable temporal logic with counting~\cite{HamKur15}.
%{\color{red}{
Observe that,
%\nb{M: added, to check; $@$Roman: actually no need to apply Prop. 3 for the lower bound to hold, right?}
%\nb{M: add short summary of what applicable to $\ALCO$ - Lemma 6 + Proposition 3}
for $\LTLd_{\ALCO}$, we apply the spy-point universal role elimination of Lemma~\ref{lemma:spy-point-reduction}. Moreover, Proposition~\ref{lemma:redpartialtototal} also applies to $\LTLd_{\ALCO}$ under global ontology, and so the results hold in both total and partial interpretations.
%In addition, we note that
%Propositions~\ref{prop:rdatononrda} and~\ref{prop:nomnonrdatodefdes},
%Lemma~\ref{lemma:ontology:normal-form}, as well as Propositions~\ref{lemma:redpartialtototal} and~\ref{prop:redexptoconst} for the global ontology case, already hold for the $\Box\Diamond$-fragments without the universal role and definite descriptions.
%$\LTLd_{\ALCO}$.
%In addition, we note that Proposition~\ref{lemma:redpartialtototal} for the global ontology case, reducing  ...
%}}

They could also be proven more directly by encoding the $\Sigma_1^1$-complete recurrence and undecidable reachability problems for the Minsky counter machines~\cite{Min61,DBLP:journals/jacm/AlurH94}.
Intuitively, the value of a
%\nb{replaced `each'}
counter of the Minsky machine can be represented as the cardinality of a certain concept. Then non-rigid nominals (or indeed the counting quantifier) can be used to ensure that the value of the counter is incremented/decremented (depending on
%\nb{inserted}
the command) by the transition: for instance, a
%\nb{inserted}
CI of the form
\begin{equation*}
Q_i \sqsubseteq \Next R_k \Leftrightarrow (R_k \sqcup \{ a_k \})
\end{equation*}
could be used to say that from state $Q_i$, the value of counter $k$ is increased by one. Note, however, that this CI uses the $\Next$ operator. Without it, the proof is considerably more elaborate and represents a
%\nb{replaced  `each'}
counter as a pair of concepts: $R_k$ is used to increment the counter, while $S_k$ to decrement it, so that the counter value is the cardinality of $R_k\sqcap \neg S_k$. Both concepts are made `monotone': $R_k \sqsubseteq \Box R_k$ and $S_k \sqsubseteq \Box S_k$, and for each transition of the Minsky machine, the non-rigid nominals pick an element that, for example, has never been in $R_k$ before but will remain in $R_k$ from the next instant on: $\neg R_k \sqcap \Box R_k$.  A sequence of these elements allows us to linearly order the domain and  construct a `diagonal' in the two-dimensional interpretation necessary for the encoding of the computation using only the $\Diamond$ operator.  

Reasoning in expanding domains turns out to be less complex, and we obtain the following:  
\begin{restatable}{theorem}{temponeexp}
\label{th:temp1:exp}
\textup{(1)}  With expanding domains, concept satisfiability is
  undecidable for $\LTLd_{\ALCOu}$, and concept satisfiability under global ontology is
  undecidable for $\LTLd_{\ALCO}$.

\textup{(2)} With expanding domains, concept satisfiability  \textup{(}under global ontology\textup{)}
  is decidable  for
%  $\LTLfd_{\ALCO}$,
 % $\LTLfd_{\ALCOu}$,
 %   $\LTLfd_{\ALCOd}$
  $\LTLf_{\ALCOud}$.
However, both problems are Ackermann-hard for
  $\LTLfd_{\ALCOu}$\textup{;} 
  %and  $\LTLfd_{\ALCOd}$; 
    moreover, concept satisfiability under global ontology is Ackermann-hard for  $\LTLfd_{\ALCO}$.
%    and $\LTLfd_{\ALCOud}$.
\end{restatable}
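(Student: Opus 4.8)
The plan is to treat the lower bounds of both items uniformly, by transfer from the first-order one-variable temporal logic with counting, and to obtain the decidability result of Item~(2) by composing reductions and decidability machinery already developed. Throughout I read $\MLDiff$ temporally, i.e.\ with its single modality interpreted by the strict order $<$ of a flow of time, so that it coincides with the one-variable first-order temporal logic with counting of~\cite{HamKur15}.

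For the lower bounds, I would start from the results of~\cite{HamKur15}: this logic is undecidable over $(\Nbl,<)$ with expanding domains and Ackermann-hard over finite flows $([0,n],<)$ with expanding domains. Theorem~\ref{th:diff}(2) provides a polytime reduction of $\MLDiff$-concept satisfiability to $\MLALCOu$-concept satisfiability that holds for any frame class and preserves expanding domains, with and without ontology. Instantiating $\Cmc$ with the class of $(\Nbl,<)$-frames yields undecidability of $\LTLd_{\ALCOu}$ concept satisfiability, while instantiating it with the class of $([0,n],<)$-frames yields Ackermann-hardness of $\LTLfd_{\ALCOu}$ concept satisfiability; since satisfiability under global ontology subsumes the ontology-free problem, the same bounds hold under global ontology. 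The reduction of Theorem~\ref{th:diff}(2) introduces nominals but no definite descriptions, so its target stays within $\ALCOu$. To pass from $\ALCOu$ to $\ALCO$ for the two under-ontology claims, I would first put the ontology into normal form (Lemma~\ref{lemma:ontology:normal-form}) and then apply the spy-point reduction of Lemma~\ref{lemma:spy-point-reduction}, which eliminates the universal role, adds only fresh role, nominal and concept names (so the result is an $\ALCO$ ontology, with the spy-point taken rigid), and is a model conservative extension; as it acts inside each instant's DL interpretation and leaves the temporal relation untouched, it applies verbatim to interpretations based on $(\Nbl,<)$ and on $([0,n],<)$. This yields undecidability of $\LTLd_{\ALCO}$ and Ackermann-hardness of $\LTLfd_{\ALCO}$, both under global ontology.

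For decidability of $\LTLf_{\ALCOud}$ with expanding domains I would chain three reductions. First, Proposition~\ref{lemma:redmludtomlu} eliminates definite descriptions and reduces $\LTLf_{\ALCOud}$ to $\LTLf_{\ALCOu}$ (for any frame class, with and without ontology, expanding domains). Next, a finite flow $([0,n],<)$ is a $\Kfn$ frame with $n=1$, $\Next$ matching the single accessibility relation and $\Diamond$ its irreflexive transitive closure, so Lemma~\ref{lemma:ltl-to-kfn} reduces $\LTLf_{\ALCOu}$ concept satisfiability to $\Kfn_{\ALCOu}$ concept satisfiability. Finally, $\Kfn_{\ALCOu}$ is a fragment of $\Kfn_{\ALCOud}$, whose concept satisfiability under global ontology with expanding domains is decidable by Theorem~\ref{thm:moddickson}; the ontology-free case is recovered by taking $\Omc=\eset$. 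Since $\LTLfd_{\ALCOu}$ is a fragment of $\LTLf_{\ALCOud}$, the Ackermann lower bound above shows this decidable problem is not primitive recursive.

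The step I expect to be the main obstacle is securing the \emph{Ackermann} lower bound---as opposed to mere undecidability---in the $\Diamond$-only, expanding-domain, finite-flow regime, and checking that it survives the transfer. Concretely, I must confirm that the relevant hardness in~\cite{HamKur15} is indeed stated for expanding domains over finite flows and passes through Theorem~\ref{th:diff}(2); should a self-contained argument be preferred, the fallback is a direct reduction from an Ackermann-complete reachability problem for lossy (gainy) counter machines, encoding a counter by the cardinality of $R_k \sqcap \lnot S_k$ for monotone concepts with $R_k \sqsubseteq \Box R_k$ and $S_k \sqsubseteq \Box S_k$, and using non-rigid nominals to mark at each transition a fresh element that enters $R_k$ permanently---the ``diagonal'' device already sketched for Theorem~\ref{th:temp1:const}, now calibrated against an Ackermann-complete rather than merely undecidable problem and run over a finite flow with expanding domains.
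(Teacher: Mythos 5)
Your proposal is correct and follows essentially the same route as the paper: the lower bounds are obtained exactly as in the paper's proof, by transferring the expanding-domain results of~\cite{HamKur15} (undecidability over $(\Nbl,<)$, Ackermann-hardness over finite flows) through Theorem~\ref{th:diff}(2), and then passing to the $\ALCO$ cases under global ontology via normal form (Lemma~\ref{lemma:ontology:normal-form}) and the spy-point elimination (Lemma~\ref{lemma:spy-point-reduction}); and your decidability chain --- Proposition~\ref{lemma:redmludtomlu}, then Lemma~\ref{lemma:ltl-to-kfn}, then Theorem~\ref{thm:moddickson} --- is precisely the route the paper announces in the main text (the appendix records the equivalent alternative via Theorem~\ref{thm:noether} and the decidability of expanding-domain products over transitive Noetherian frames from~\cite{DBLP:journals/apal/GabelaiaKWZ06}). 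The concern you flag is resolved as you hoped: the paper relies on~\cite{HamKur15} stating Ackermann-hardness exactly for expanding domains over finite flows, with the lossy Minsky machine reachability encoding as the underlying master problem, so no fallback construction is needed.
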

Undecidability and Ackermann-hardness are proven similarly to Theorem~\ref{th:temp1:const}. In this case, however, the master problems are, respectively, the $\omega$-reachability and reachability problems for lossy Minsky machines~\cite{DBLP:conf/cade/KonevWZ05,DBLP:conf/mfcs/Schnoebelen10},
which in addition to normal transitions can also arbitrarily decrease the counter values. Such computations can be naturally encoded \emph{backwards} in interpretations with expanding domains: the arbitrary decreases of counter values correspond to the extension of the interpretation domain with fresh elements.   

The positive decidability results over the finite flows of time follows from Theorem~\ref{thm:moddickson} by Lemma~\ref{lemma:ltl-to-kfn} (together with the reduction in Proposition~\ref{lemma:redmludtomlu}).

%%%%%%%%%%%%%%%
\paragraph{Temporal Free DLs Based on \ELO{}.}

Next, we transfer the above results to the \ELO{} family. As $\TL_{\ELO}$ concepts do not contain negation and the empty concept ($\bot$), they are trivially satisfiable. Thus, our main reasoning problem is based on the notion of entailment (rather than satisfiability).
\begin{description}
%\item[Concept Entailment (over Finite Flows):] Given\nb{do we keep it? we do not claim any results for it - they would be about $\ELOu$ anyway and quite fiddly, as we would have to "internalise" the extra CIs} $\TL_{\DL}$-concepts $C_1$ and $C_2$, is it the case that $C_1^{\Imc(0)} \subseteq C_2^{\Imc(0)}$,  in every  interpretation $\Mmf$ based on the flow $(\mathbb{N}, <)$ (every finite flow, respectively)?
%
\item[CI Entailment  (over Finite Flows):] Given a $\TL_{\DL}$-CI $C_1\sqsubseteq C_2$ and a $\TL_{\DL}$-ontology $\Omc$, is it the case that $C_1^{\Imc_t} \subseteq C_2^{\Imc_t}$, for every $t\in T$ in  every  interpretation $\Mmf$ satisfying $\Omc$ and based  on  $(\Nbl, <)$ (every finite flow, respectively)?
\end{description}

It turns out that disjunction can be modelled in the temporal extension of $\ELO$ with the help of the $\Diamond$ modality~\cite{AKLWZ07}: intuitively, any CI of the form $\top \sqsubseteq B_1 \sqcup B_2$ is replaced with $\top \sqsubseteq\exists q.(\Diamond X_1 \sqcap \Diamond X_2)$, which says that $X_1$ and $X_2$ occur in some order in the future (possibly on another domain element). It then remains to check the order of $X_1$ and $X_2$ and, if, say, $X_1$ precedes $X_2$, then $B_1$ is chosen, otherwise $B_2$ is chosen. So, this reduction shows that the entailment problem for the fragments of $\TL^{\Diamond}_{\ELO}$ essentially has the same complexity as the complement of the satisfiability problem for the corresponding $\TL^{\Diamond}_{\ALCO}$ fragment:

\begin{restatable}{theorem}{temptwo}
\label{thm:temp2}
\textup{(1)} CI
%\nb{shortened}
entailment with constant domains  is $\Pi_1^1$-complete for $\LTLd_{\ELO}$ and 
 undecidable for $\LTLfd_{\ELO}$.
  
\textup{(2)} CI entailment with expanding domains is undecidable for $\LTLd_{\ELO}$. 

\textup{(3)}  CI entailment with expanding domains  is decidable but Ackermann-hard for $\LTLfd_{\ELO}$.
\end{restatable}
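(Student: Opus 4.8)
The plan is to show that, for the $\Diamond$-fragments, CI entailment in the $\ELO$ family and concept \emph{un}satisfiability in the corresponding $\ALCO$ family are polytime-interreducible, and then to read off every entry of the theorem from Theorems~\ref{th:temp1:const} and~\ref{th:temp1:exp}. One direction is immediate: since $\ELO$ is a sublanguage of $\ALCO$, the entailment $\Omc\models C_1\sqsubseteq C_2$ holds over a given flow class iff the $\ALCO$ concept $C_1\sqcap\lnot C_2$ is unsatisfiable under $\Omc$ over that class. Hence $\TL^{\Diamond}_{\ELO}$ CI entailment polytime-reduces to the complement of $\TL^{\Diamond}_{\ALCO}$ concept satisfiability under global ontology, which yields all upper bounds at once. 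The $\Pi_1^1$ membership in~(1) is the complement of the $\Sigma_1^1$ membership of $\LTLd_{\ALCO}$ satisfiability; undecidability in~(1) and~(2) transfers verbatim under complementation; and the decidability in~(3) follows because $\LTLfd_{\ALCO}$ is a fragment of $\LTLf_{\ALCOud}$, whose satisfiability under global ontology with expanding domains is decidable by Theorem~\ref{th:temp1:exp}(2), so the complementary entailment problem for $\LTLfd_{\ELO}$ is decidable too.

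\textbf{Lower bounds.} For the matching lower bounds I would establish the converse reduction, from $\TL^{\Diamond}_{\ALCO}$ concept satisfiability to $\TL^{\Diamond}_{\ELO}$ \emph{non}-entailment, following~\cite{AKLWZ07}. First put the input $\ALCO$ ontology in negation normal form, so that $\lnot$ occurs only in front of concept names; for each such $A$ introduce a fresh name $\bar A$, add a covering axiom $\top\sqsubseteq A\sqcup\bar A$ and a disjointness axiom $A\sqcap\bar A\sqsubseteq G$ with $G$ a fresh ``falsity'' name, and replace each negative literal $\lnot A$ by $\bar A$. Every remaining disjunction (including the covering axioms just added) is then eliminated by the $\Diamond$-gadget of the preamble: a fresh role $q$ and fresh names $X_1,X_2$ with $\top\sqsubseteq\exists q.(\Diamond X_1\sqcap\Diamond X_2)$, so that the strict temporal order in which $X_1$ and $X_2$ first appear at the $q$-successor encodes a binary choice that is read back to select the intended disjunct. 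The result is a negation- and disjunction-free $\ELO$ ontology $\Omc'$, and satisfiability of the original instance corresponds to a model of $\Omc'$ in which $G$ stays empty and a marker $I$ of the witness is realised, i.e.\ to the \emph{failure} of a single CI $I\sqsubseteq G$. Since the gadget refers only to the future and is oblivious to whether the flow is $(\Nbl,<)$ or a finite prefix, the reduction transfers the $\Sigma_1^1$-hardness, undecidability and Ackermann-hardness of $\LTLd_{\ALCO}$ and $\LTLfd_{\ALCO}$ satisfiability to non-entailment, giving $\Pi_1^1$-hardness, undecidability, and Ackermann-hardness of entailment, and thereby completing each row of the claim.

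\textbf{Main obstacle.} The delicate point is forcing \emph{global} consistency through the single right-hand side $G$: the disjointness axioms detect a clash only locally (by placing one element into $G$), whereas a faithful $\ALCO$ model must be clash-free at every point of the two-dimensional (time $\times$ domain) structure, and $\ELO$ offers neither $\lnot$, nor $\bot$, nor the universal role, nor a $\Box$/past operator with which to aggregate this information. The approach is to route all clashes that actually matter to one designated element — a nominal acting as a spy — and to exploit that, in the intended encodings, the control-state disjointness is a constraint on this tracked element, whose $\Diamond$-future therefore collects the $G$-markers generated along the computation, while the counters are governed by cardinalities and non-rigid nominals as in Theorem~\ref{th:diff} rather than by per-element negation. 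Phrasing the tested CI about this spy point then makes non-entailment equivalent to the existence of a genuinely clash-free witnessed computation. Verifying that this collection faithfully captures global consistency, uniformly over constant and expanding domains and over both finite and infinite flows, is the crux; once it is in place, the remainder is the routine bookkeeping of pairing each complexity class with its complement.
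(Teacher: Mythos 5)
Your overall architecture matches the paper's up to the crux: upper bounds by complementation (entailment is the complement of satisfiability of $C_1\sqcap\lnot C_2$ in the $\ALCO$ fragment), and lower bounds by a converse reduction that eliminates negation via covering/disjointness names and eliminates disjunction via the $\Diamond$-gadget of AKLWZ07 — that is exactly the paper's route. But the step you yourself flag as unverified is a genuine gap, and your spy-point workaround mislocates the mechanism. Non-entailment of $I\sqsubseteq G$ only yields a model in which the tested element avoids $G$ at the tested instant; clashes at other elements or at later instants are invisible to this test, a forward $\Diamond$ in the tested CI (e.g.\ $I\sqsubseteq\Diamond G$) cannot collect a clash occurring off the spy's trajectory, and without the universal role (absent from $\ELO$) a nominal spy has no way to reach all elements by role edges. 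The paper closes the gap not with a spy but with two \emph{backward-propagation} CIs, both legal in $\TL^{\Diamond}_{\ELO}$ because the operators occur only on left-hand sides: replace $\bot$ by a fresh concept name $L$ and add $\exists s.L \sqsubseteq L$ for every role name $s$ and $\Diamond L \sqsubseteq L$. These make $L$ flow backwards along roles and backwards in time, so any clash in the role-and-future cone of the witness reaches it, and conversely a model refuting $A\sqsubseteq L$ restricts to a role- and future-closed, clash-free substructure satisfying the original ontology. Satisfiability then equals non-entailment of the single CI $A\sqsubseteq L$, with no bespoke re-encoding of Minsky machines and no need to argue about which clashes ``actually matter'' — which is precisely the step your sketch leaves unproven.

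A second concrete error: your claim that the gadget ``is oblivious to whether the flow is $(\Nbl,<)$ or a finite prefix'' is false. Over a finite flow $[0,n]$, the axiom $\top\sqsubseteq\exists q.(\Diamond X_1\sqcap\Diamond X_2)$ is unsatisfiable at the last two instants, so the naive reduction kills satisfiability outright and the finite-flow entries of the theorem (undecidability for $\LTLfd_{\ELO}$ in (1), Ackermann-hardness in (3)) do not follow. The paper repairs this by relativising every CI $C_1\sqsubseteq C_2$ to $\Diamond\Diamond\top\sqcap C_1\sqsubseteq C_2$, reserving the two final instants for placing $X_1$ and $X_2$. Relatedly, the converse direction of the gadget requires countably infinite domains so that globally distinct $q$-successors can be chosen when reading the order of $X_1,X_2$ back into the choice of disjunct (note the paper's reading axioms also cover the simultaneous case $X_1\sqcap X_2$); your sketch omits this hypothesis. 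With the $L$-propagation axioms and the relativisation in place, your plan coincides with the paper's proof.
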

%

%%%%%%%%%%%%%%%%%%%%%%%%%%%%%%%%%%%
\paragraph{Next-Only Temporal Free DLs.}

As we have seen above, the $\Diamond$-only fragments normally exhibit the same bad computational behaviour as the full logics with both $\Diamond$ and $\Next$. We now provide some results for the fragments that contain only~$\Next$. We begin with some positive results for the satisfiability problem (without global ontology):

\begin{restatable}{theorem}{tempthree}
\label{thm:temp3}
With constant and with expanding domains,  concept satisfiability is \ExpTime-complete for $\LTLo_{\ALCOu}$ and $\LTLfo_{\ALCOu}$ and in \ExpTime{} for  $\LTLo_{\ALCO}$ and  $\LTLfo_{\ALCO}$. 
\end{restatable}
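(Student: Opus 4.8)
The plan is to exploit the two special features of the $\Next$-only setting \emph{without} a global ontology: the successor accessibility relation is (partially) functional, and every concept $C$ has a fixed finite modal depth $d=\md(C)$, so the extension of $C$ at an instant $t$ depends only on the finite window $\Imc_t,\dots,\Imc_{t+d}$. Since no ontology imposes constraints at every instant, the interpretations outside this window are entirely free, and satisfiability reduces to the existence of a single sequence of $d+1$ DL interpretations. I would first reduce expanding to constant domains by relativisation (Proposition~\ref{prop:redexptoconst}); this needs no universal role for the constructs present, so it applies to the $\ALCO$ fragments as well. I would then encode a length-$(d{+}1)$ sequence inside one ordinary DL interpretation by \emph{layering}: for each $A\in\NC$, $r\in\NR$ and layer $k\le d$ take fresh symbols $A_k,r_k$, and translate a concept $D$ evaluated at layer $k$ by a recursive map $\tau_k$ with $\tau_k(A)=A_k$, commuting with the Booleans, $\tau_k(\exists r.D)=\exists r_k.\tau_k(D)$, $\tau_k(\exists u.D)=\exists u.\tau_k(D)$, and crucially $\tau_k(\Next D)=\tau_{k+1}(D)$. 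The invariant $k+\md(D)\le d$ shows the layer index never exceeds $d$, so $\tau_0(C)$ is a non-modal $\ALCOu$- (resp.\ $\ALCO$-) concept of size $O(|C|\cdot d)=O(|C|^2)$.

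Non-rigidity is captured for free: a nominal $\{a\}$ is translated at layer $k$ by a \emph{distinct} target (free) nominal $\{a_k\}$, so the copies $a_0,\dots,a_d$ may denote different elements, or fail to designate, modelling exactly a non-rigid, possibly non-designating term. I would then argue faithfulness in both directions. Any model satisfying $C$ at some instant yields, by restriction to its window and arbitrary padding outside it, the required sequence $\Imc_0,\dots,\Imc_d$ and hence a model of $\tau_0(C)$; conversely, any DL model of $\tau_0(C)$ is read layer-by-layer into such a sequence, which for the infinite flow $\LTLo$ is extended by arbitrary interpretations beyond instant $d$, and for the finite flow $\LTLfo$ completed to some flow $[0,n]$. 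Since non-modal (free) $\ALCOu$ concept satisfiability is $\ExpTime$-complete and $\ALCO$ concept satisfiability is in $\ExpTime$, and the translation is polynomial, this gives the stated upper bounds. The $\ExpTime$ lower bound for the two $\ALCOu$ fragments is immediate, as a modal-depth-$0$ concept is just an $\ALCOu$-concept.

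Two points require care. For the finite flows I must allow $\Next D$ to be empty at the last instant; since $\Next$ can occur only at layers $0,\dots,d-1$, I would guess an ``end layer'' $j\in\{0,\dots,d-1\}$ (plus the case ``no end within the window'') and, for end layer $j$, set $\tau_j(\Next D)=\bot$ while defining $\tau_k$ only for $k\le j$. This yields $O(d)$ polynomial-size instances whose disjunctive satisfiability check stays in $\ExpTime$. The main obstacle I anticipate is verifying that this single layered encoding is faithful across all combinations simultaneously --- non-rigid and possibly non-designating nominals, the partially functional $\Next$ with its empty value at the end of a finite flow, and the expanding-domain relativisation --- i.e.\ that the independent layer-copies introduce no spurious cross-layer interaction and that every DL model of $\tau_0(C)$ genuinely unfolds into a temporal model. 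The Boolean, role, and universal-role cases are routine, but the nominal and end-of-flow cases need a careful case analysis, which is where I expect the bulk of the work to lie.
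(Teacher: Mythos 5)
Your proposal is correct, but it takes a genuinely different route from the paper. The paper proves the upper bound by a type-elimination procedure in the style of the product logic $\mathbf{Alt}\times\K_n$ (cf.\ Theorem~6.6 of Gabbay et al.): it first observes that satisfiability over finite flows reduces to flows of length at most $\md(C)+1$, then guesses, for each instant, a $u$-type and a set of at most $n(m_0+1)$ nominal runs, and exhaustively eliminates runs (tuples of types, one per instant) that violate the $\Next$-, $\exists r$- and $\exists u$-conditions; nominals and the universal role are handled inside this bespoke elimination, and expanding domains are handled, exactly as you propose, via Proposition~\ref{prop:redexptoconst}. Your layered translation instead flattens the window $[0,d]$ into a single non-modal interpretation with a fresh signature copy $A_k, r_k, a_k$ per instant, giving a polynomial reduction to (free) $\ALCOu$, respectively $\ALCO$, concept satisfiability; the induction establishing $\tau_k(D)^{\Jmc}=D^{\Imc_k}$ under the invariant $k+\md(D)\le d$ is sound, your observation that a $\Next$-subconcept is only ever evaluated at layers $\le d-1$ correctly justifies both the window bound and the end-layer guessing $\tau_j(\Next D)=\bot$ for finite flows, and non-rigidity is indeed free because the temporal semantics imposes no cross-instant constraint relating $a_0,\dots,a_d$. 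What your route buys is modularity: known \ExpTime{} results for non-modal DLs are used as a black box, and the correctness proof is an elementary induction rather than a run-elimination argument. One side claim of yours is inaccurate, though harmless for the theorem as stated: Proposition~\ref{prop:redexptoconst} \emph{does} rely on the universal role (its correctness conjuncts take the form $\Box^{\pi}\forall u.(\Ex \Rightarrow \Box\, \Ex)$, because the monotonicity of $\Ex$ must be enforced on all domain elements, which a pure $\ALCO$ concept cannot express), so applying it to the $\ALCO$ fragments lands you in the $\ALCOu$ fragment; but since the theorem claims only membership in \ExpTime{} for $\LTLo_{\ALCO}$ and $\LTLfo_{\ALCO}$, ending in non-modal $\ALCOu$ satisfiability (or, alternatively, plain $\ALCO$ satisfiability under the TBox $\{\Ex_k \sqsubseteq \Ex_{k+1} \mid k < d\}$, also \ExpTime-complete) is perfectly adequate, and the paper's own proof makes the same detour through the language with $u$. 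Your lower-bound argument, inheriting \ExpTime-hardness from modal-depth-$0$ $\ALCOu$ concepts, coincides with the paper's.
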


The  \ExpTime{} upper complexity bound can be shown by a type elimination procedure, similarly to the case of the product $\mathbf{Alt} \times \K_n$ of modal logics $\mathbf{Alt}$,  whose accessibility relation is a partial function, and multi-modal $\K_n$, which is a notational variant of $\ALC$; see~\cite[Theorem 6.6]{GabEtAl03}. One has to, in addition, take care of nominals and the universal role, but that can be done in exponential time. The matching lower bound is inherited from $\ALCOu$, but for the fragment without the universal role the exact complexity remains an open problem.

Our final result indicates that with the global ontology, the $\Next$-fragments behaves nearly as badly as the full language:

\begin{restatable}{theorem}{tempfour}
\label{thm:temp4}
With constant domains,  concept satisfiability under global ontology is undecidable for $\LTLo_{\ALCO}$ and $\LTLfo_{\ALCO}$.  
\end{restatable}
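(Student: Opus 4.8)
The plan is to reduce the halting problem of deterministic two-counter (Minsky) machines~\cite{Min61} to concept satisfiability under global ontology, first in the language \emph{with} the universal role, $\LTLo_{\ALCOu}$, and then to eliminate $u$ via the spy-point construction of Lemma~\ref{lemma:spy-point-reduction} to obtain the result for $\LTLo_{\ALCO}$. As already observed in the discussion following Theorem~\ref{th:temp1:const}, the $\Next$ operator makes counter updates straightforward to encode, so the real work lies in (i) simulating a machine faithfully over a constant domain, and (ii) circumventing the absence of $\Diamond$, which prevents us from directly asserting that a halting state is eventually reached.

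Given a machine $M$ with states $Q_0,\dots,Q_m$ (where $Q_0$ is initial and $Q_{\mathit{halt}}$ carries no outgoing instruction) and counters $1,2$, I encode a configuration at an instant $t$ as follows. The concept names $Q_0,\dots,Q_m$ act as global propositional flags, forced to be either $\emptyset$ or the whole domain by the CIs $\exists u.Q_i\sqsubseteq \forall u.Q_i$, together with $\top\sqsubseteq\bigsqcup_i Q_i$ and $Q_i\sqcap Q_j\sqsubseteq\bot$ (for $i\neq j$) making exactly one active. Counter $k$ is represented by the cardinality $|R_k^{\Imc_t}|$ of a concept name $R_k$, and $Z_k := \neg\exists u.R_k$ is the (global) zero-test flag. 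An increment instruction ``$Q_i$: $c_k{+}{+}$, goto $Q_j$'' is encoded, using a nominal $a$ with $\top\sqsubseteq\exists u.\{a\}$, by
\begin{gather*}
Q_i \sqsubseteq \neg\exists u.(\{a\}\sqcap R_k), \quad
Q_i \sqsubseteq \bigl(\Next R_k \Leftrightarrow (R_k\sqcup\{a\})\bigr), \\
Q_i \sqsubseteq \bigl(\Next R_{k'}\Leftrightarrow R_{k'}\bigr), \quad
Q_i \sqsubseteq \Next Q_j,
\end{gather*}
so that $R_k$ gains exactly the fresh element $a^{\Imc_t}\notin R_k^{\Imc_t}$ while the other counter and (via the flags) the state update deterministically; decrement and zero-test instructions are encoded symmetrically, branching on $Z_k$ and removing the element $a^{\Imc_t}\in R_k^{\Imc_t}$ on the positive branch. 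Every $\Next$-bearing CI carries a non-halting $Q_i$ in its antecedent. Let $\Omc_M$ be the resulting ontology and set $C_{\mathit{start}} = Q_0\sqcap Z_1\sqcap Z_2$.

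The absence of $\Diamond$ is then handled differently in the two flow classes. For finite flows ($\LTLfo_{\ALCOu}$) I reduce directly from halting, claiming that $C_{\mathit{start}}$ is satisfiable over some $[0,n]$ under $\Omc_M$ iff $M$ halts. For the forward direction, a halting run of length $N$ yields a constant-domain model over $[0,N]$ with the start configuration at $0$; since $Q_{\mathit{halt}}$ triggers no $\Next$-CI, the flow may legitimately terminate at $N$. Conversely, in any model the forward simulation from an instant satisfying $C_{\mathit{start}}$ is forced and faithful, as the biconditionals and the $\neg\exists u.(\{a\}\sqcap R_k)$ (resp.\ $\exists u.(\{a\}\sqcap R_k)$) CIs pin down the element added (resp.\ removed), so the cardinalities track the counters exactly; crucially, at the last instant $n$ every $\Next D$ is empty, so any non-halting state $Q_i$ would force $\Delta = Q_i^{\Imc_n}\subseteq(\Next Q_j)^{\Imc_n}=\emptyset$, a contradiction, whence the flow can only end in $Q_{\mathit{halt}}$ and $M$ halts. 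For the infinite flow ($\LTLo_{\ALCOu}$) I instead add $Q_{\mathit{halt}}\sqsubseteq\bot$ and reduce from \emph{non}-halting: over $(\Nbl,<)$ the forced simulation must run forever, so a model exists iff the run never reaches $Q_{\mathit{halt}}$, i.e.\ iff $M$ does not halt. Both halting and non-halting are undecidable. In each case the witnessing model is built with the start at instant $0$, which has no predecessor, so no constraint on a ``pre-history'' arises.

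Finally, I pass to $\LTLo_{\ALCO}$ and $\LTLfo_{\ALCO}$ by first normalising $\Omc_M$ (Lemma~\ref{lemma:ontology:normal-form}) and then applying the spy-point reduction of Lemma~\ref{lemma:spy-point-reduction} to eliminate the universal role; this introduces only fresh roles and a rigid nominal and leaves the temporal operators untouched, so it preserves satisfiability over both $(\Nbl,<)$ and finite flows under constant domains. The main obstacle, and the step demanding the most care, is precisely the replacement of ``eventual reachability'' by the two flow-specific devices above --- exploiting the endpoint of a finite flow to witness halting, and enforced non-termination together with $Q_{\mathit{halt}}\sqsubseteq\bot$ to witness non-halting --- while verifying that the constant-domain counter simulation stays faithful (in particular that each increment finds a fresh witness, which holds because every counter stays finite along the forced run starting from $Z_1\sqcap Z_2$).
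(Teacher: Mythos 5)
Your proposal is correct and follows essentially the same route as the paper's own proof: a reduction from (non-)halting of two-counter Minsky machines in which counter values are concept cardinalities updated via $\Next$-biconditionals and non-rigid nominals, states are global flags, the endpoint of a finite flow (respectively, $Q_{\mathit{halt}} \sqsubseteq \bot$ over $(\Nbl,<)$) replaces eventual reachability, and the universal role is then eliminated by normalisation (Lemma~\ref{lemma:ontology:normal-form}) plus the spy-point reduction (Lemma~\ref{lemma:spy-point-reduction}). The only deviations are cosmetic --- e.g., you enforce the initial configuration through the satisfied concept $Q_0 \sqcap \neg\exists u.R_1 \sqcap \neg\exists u.R_2$ (which, before applying the spy-point lemma, should be folded into the ontology via a fresh name $A \equiv C_{\mathit{start}}$, as in Section~\ref{sec:reduction}) where the paper uses $Q_0 \sqcap R_k \sqsubseteq \bot$, and you derive the last-instant halting condition from the semantics of $\Next$ together with a coverage CI where the paper instead adds $\neg\Next\top \sqsubseteq Q_L$.
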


The proof is by a direct reduction of the reachability problem for Minsky machines, similarly to the simplified sketch for Theorem~\ref{th:temp1:const}; note the proof makes use of the spy-point universal role  elimination in Lemma~\ref{lemma:spy-point-reduction}.

%%%%%%%%%%%%%%%%%%%%%%%%%%%%%%%%%%%%%%%%%%%%%%%%%%%%%%%%%%%%%%%%%%%%%%

\section{Discussion and Future Work}
\label{sec:conc}

We have introduced and investigated novel fragments of first-order modal logic with non-rigid (and possibly non-referring) individual names and definite descriptions. Potential applications that remain to be explored include business process management, where formalisms for representing the dynamic behaviour of data and information are crucial~\cite{DelEtAl23,DeuEtAl18}, and context, knowledge or standpoint-dependent reasoning for which possible worlds semantics is needed~\cite{GhiSer17,AlvEtAl23}.

Besides the open decidability problems discussed above, future research directions include the extension of our results to more expressive monodic fragments~\cite{GabEtAl03,HodEtAl02},
%the study of systems based on alternative classes of frames~\cite{???},
%the introduction of hybrid~\cite{HybridIndrzejczak} or branching time logic operators~\cite{DecidableBranchingTime?},
automated support for the construction of definite descriptions and referring expressions~\cite{ArtEtAl21a,KurEtAl23}, the design of `practical' reasoning algorithms for the languages considered here, and the extension of our results to  modal DLs with hybrid~\cite{Bra14,IndZaw23}, branching-time~\cite{HodEtAl02,GutEtAl12}, dynamic~\cite{Har79}, or non-normal operators~\cite{DalEtAl23}.

\section*{Acknowledgements}
Andrea Mazzullo acknowledges the support of the MUR PNRR project FAIR - Future AI Research (PE00000013) funded by the NextGenerationEU.
%%%%%%%%%%%%%%%%%%%%%% BIBLIO %%%%%%%%%%%%%%%%%%%%%%%%%%%%%%%%%%%%%%%%%
%% The file kr.bst is a bibliography style file for BibTeX 0.99c
\bibliographystyle{kr}
\bibliography{bibliography}
%%%%%%%%%%%%%%%%%%%%%%%%%%%%%%%%%%%%%%%%%%%%%%%%%%%%%%%%%%%%%%%%%%%%%

\newpage

\appendix

% !TEX root =  ../rda_arxiv.tex

\section{Details on Introduction and Preliminaries}
\label{sec:introprel}

We define the \emph{reflexive diamond} operator as
$\Diamond_{i}^{+} C = C \sqcup \Diamond_{i} C$, and the \emph{reflexive box}
operator as $\Box_{i}^{+} C = \neg\Diamond_i^+\neg C$. %C \sqcap \Box_{i} C$.
%Note that $\Diamond$ is interpreted by $<$ and thus does not include the current instant, but we also use standard abbreviations $\Diamond^+$ and $\Box^+$, which include the current time instant. 

The
set of \emph{subconcepts} of a concept $C$, denoted by $\sub{C}$, is defined 
inductively as follows:
\begin{align*}
\sub{A} & = \{ A\}, \\ 
\sub{\{a\}} & = \{ \{a\} \},\\
\sub{\{\defdes C\}} & = \{\{\defdes C\}\} \cup \sub{C}, \\ 
\sub{\neg C}  & = \{\neg C\}\cup \sub{C}, \\ 
\sub{C\sqcap D} & = \{ C\sqcap D\} \cup \sub{C} \cup \sub{D},\\
\sub{\exists s. C} & = \{ \exists s. C\} \cup\sub{C}, \ \text{with $s \in \NR \cup \{ u \}$},\\
\sub{\Diamond_i C} & = \{\Diamond_i C\} \cup \sub{C}.
\end{align*}
%For $C = A$ and $C = \{ a \}$, we set $\sub{C} = \{ C \}$.
%For $C = \lnot D$, $C = \{ \defdes D \}$, and $C = \exists s. D$, with $s \in \NR \cup \{ u \}$, and $C = \Diamond_{i} D$, we set $\sub{C} = \{ C \} \cup \sub{D}$.
%For $C = D \sqcap E$, we set $\sub{C} = \sub{D} \cup \sub{E}$.

The
%\nb{moved here}
\emph{modal depth} of terms and concepts is defined by mutual induction:
\begin{align*}
\md(a) & = 0, \\ \md(\defdes C) & = \md(C), \\
\md(A) & = 0,  \\  \md(\{ \tau \}) & = \md(\tau), \\
\md(\lnot C) & = \md(C), \hspace*{-20em}\\
\md(C \sqcap D) & = \max\{ \md(C), \md(D) \}, \\
\md(\exists s.C) & = \md(C), \ \text{with $s \in \NR \cup \{ u \}$}, \\
\md(\Diamond_{i} C) & = \md(C) + 1. \hspace*{-20em}
%\md(C \sqsubseteq D) & = \max\{ \md(C), \md(D) \}, \hspace*{-20em}\\
%\md(\Omc) & = \max\{ \md(C \sqsubseteq D) \ | \ C \sqsubseteq D \in \Omc \}. \hspace*{-20em}
\end{align*}
The \emph{modal depth} of a CI or an ontology is the maximum modal depth of concepts that occur in them.

\paragraph{Standard Translation to First-Order Modal Logic}
The following definitions are adjusted from~\cite{FitMen12}; see, e.g., Definitions~11.1.1-4.
%\section{First-order Modal Logic and Non-rigid Designators}
%\label{sec:fomlnonrig}
%
The alphabet of the \emph{quantified modal language}, $\QMLdl$, consists of:
countably infinite and pairwise disjoint sets of
\emph{predicates} $\NPr$
(of fixed arities $\geq 0$,
with a distinguished \emph{equality} binary predicate, $=$),
%(with $\textsf{ar}(P) \in \mathbb{N}$ being the arity of $P \in \textsf{N}_{\textsf{P}}$),
\emph{individual names} $\NI$
and \emph{variables} $\Var$;
the \emph{Boolean operators} $\lnot, \land$;
the \emph{existential quantifier} $\exists$; 
the \emph{predicate abstraction operator} $\lambda$;
the \emph{definite description operator} $\defdes$;
and the \emph{modal operators} $\Diamond_{i}$ (\emph{diamond}), for each \emph{modality} $i \in I$.
%Moreover, we assume that the set of individual names $\NI$ is partitioned in two disjoint sets, $\NIr$ and $\NIf$, of \emph{rigid} and \emph{non-rigid} (or \emph{flexible}) individual names, respectively. We use $b, c, \ldots,$ as meta-variables for rigid individual names, and $h, k, \ldots,$ as metavariables for flexible ones.
%
\emph{Terms} $\tau$ and \emph{formulas} $\p$ of $\QMLdl$ are defined by mutual induction:
\begin{align*}
\tau ::= & \ x \mid a \mid \defdes x . \p,\\
\p ::= 
& \
\begin{aligned}[t]
P(x_1, \ldots, x_n)
\mid
x_1 = x_2 \mid 
\neg \p & \mid (\p \land \p) \mid  \ \exists x\, \p\\ \mid & \ \Diamond_{i} \p 
\mid \ \lp \lambda x . \p \rp (\tau),
\end{aligned}
\end{align*}
where
$a \in \NI$,
$P \in \NPr$ ($n$-ary),
and
$x, x_{1}, \ldots, x_{n} \in \Var$.
Standard abbreviations are assumed, and \emph{free variables} are defined as in~\cite{FitMen12}.

A \emph{partial interpretation with expanding domains}
is a structure
$\Mmf = (\Fmf, \Delta, \Imc)$,
where
$\Fmf = (W, \{ R_{i} \}_{i \in I})$ is a \emph{frame}, with $W$ being a non-empty set of \emph{worlds} and $R_{i} \subseteq W \times W$ being an \emph{accessibility relation} on $W$, for each {modality} $i \in I$;
$\Delta$ is a function associating with every $w \in W$ a non-empty set, $\Delta^{w}$, called the \emph{domain of $w$ in~$\Mmf$}, such that $\Delta^{w} \subseteq \Delta^{v}$, whenever $w R_{i} v$, for some $i \in I$;
%$\Delta$ is a non-empty set, called the \emph{domain} of $\Mmf$ (thus, we adopt the \emph{constant domain assumption});
$\I$ is a function associating with each $w \in W$ a \emph{partial} first-order interpretation $\Imc_{w}$ with domain $\Delta$ so that $P^{\Imc_{w}} \subseteq \Delta^n$,
for each predicate $P \in \NPr$ of arity $n$,
and
$a^{\Imc_{w}} \in \Delta$, for \emph{some} subset of constants $a \in \NI$,
with no additional requirement
(in particular, there is no assumption that all constants are
{rigid designators} in $\Mmf$).
%, i.e., for all $w,v \in W$, $a^{\Imc_{w}} = a^{\Imc_{v}}$).
%with the additional requirement that, for all $c \in \NIr$ and $w,v \in W$, $c^{\Imc_{w}} = c^{\I(v)}$ (often denoted by $c^\I$), i.e., all $c \in \NIr$ are \emph{rigid designators} (this is a common assumption, usually required to hold for all individual names; in our setting, we drop it for individual names $k \in \NIf$).
%
An \emph{assignment in $\Mmf$} is a function~$\assign$ from $\Var$ to $\Delta$.
An \emph{$x$-variant} of an assignment $\assign$ is an assignment that can differ from $\assign$ only on $x$.
The definitions of \emph{value} $\tvalue{\tau}{w}{\assign}$ of term $\tau$ under assignment $\assign$
at world $w$ of $\Mmf$, and \emph{satisfaction} $\Mmf, w \models^\assign \p$ of formula $\p$ under assignment $\assign$
at world $w$ in $\Mmf$ are defined by mutual induction. First, we have
\[
\tvalue{\tau}{w}{\assign} =
\begin{cases}
\assign(x), & \text{ if } \tau \text{ is } x \in \Var; \\
a^{\Imc_{w}}, & \text{ if } \tau \text{ is } a \in \NI \text{ and $a^{\Imc_{w}}$ is defined};\\
%\cap \dom(\cdot^{\Imc_{w}}); \\
\assign'(x), & \text{ if } \tau  \text{ is } \defdes x . \p \text{ and } \Mmf, w \models^{\assign'} \p, \text{ for }\\
& \text{ \emph{exactly one} $x$-variant $\assign'$ of $\assign$},\\
\text{undefined}, & \text{ otherwise}.
%\text{undefined}, & \text{ otherwise}.
\end{cases}
\]
%If $\tau \in \dom(\tvalue_{w})$, we say that $\tau$ \emph{designates under $\assign$ at $w$} of $\Mmf$.
If $\tvalue{\tau}{w}{\assign}$ is defined, then we say that $\tau$ \emph{designates under $\assign$ at $w$} of $\Mmf$. Next, 
\[%
{\renewcommand{\arraystretch}{1.25}\renewcommand{\arraycolsep}{3pt}%
\begin{array}{lcl}
		\Mmf, w \models^{\assign} P(x_1, \ldots, x_n) & \text{iff} \ & (\assign(x_1), \ldots, \assign(x_n)) \in P^{\Imc_{w}}; \\
		\Mmf, w \models^{\assign} x_1 = x_2 & \text{iff}  \  & \assign(x_1) = \assign(x_2); \\
		\Mmf, w  \models^{\assign} \neg \varphi & \text{iff}  \ &  \Mmf, w  \not\models^{\assign} \varphi; \\
		\Mmf, w  \models^{\assign} \varphi \land \psi & \text{iff}  \  & \Mmf, w  \models^{\assign} \varphi \text{ and } \Mmf, w  \models^{\assign} \psi; \\
		\Mmf, w  \models^{\assign} \exists x\, \varphi & \text{iff}  \
		& \parbox[t][][t]{42mm}{$\Mmf, w \models^{\assign'} \varphi$, for some $x$-variant $\assign'$ of $\assign$;}\\
		\Mmf, w  \models^{\assign} \Diamond_{i} \varphi & \text{iff} \
		& \parbox[t][][t]{42mm}{$\Mmf, v \models^{\assign} \varphi$, for some $v \in W$ such that $wR_{i}v$;}\\
		\Mmf, w  \models^{\assign} \langle \lambda x . \p \rangle (\tau) & \text{iff} \
		& \parbox[t][][t]{42mm}{$\tau$ designates under $\assign$ at $w$ of $\Mmf$ and $\Mmf, w  \models^{\assign'}  \p$, where $\assign'$ is the $x$-variant of $\assign$ with $\assign'(x) = \tvalue{\tau}{w}{\assign}$.}
%		\noalign{\noindent If $\tau$ designates under $\assign$ at $w$:}
%		& \Mmf, w  \models^{\assign} (\lambda x \p)(\tau) & \text{iff}  \  & \Mmf, w  \models^{\assign'}  \p, \text{ with $\assign'$ $x$-variant of $\assign$} \\
%		& & & \text{such that $\assign'(x) = \assign(\tau)$} \\
%		\noalign{\noindent If $\tau$ does not designate under $\assign$ at $w$:}
%		& \Mmf, w  \not\models^{\assign} (\lambda x \p)(\tau)  &
\end{array}
}%
\]
We now introduce the \emph{standard translation} of an $\MLALCOud$ concept $C$ into a $\QMLdl$ formula $\sttr{x}{C}$ with at most one free variable $x$, defined as follows:
\begin{align*}
	\sttr{x}{A} & = A(x),\\
	\sttr{x}{\{ a \}} & = \lp \lambda y. x = y \rp (a),\\
	%	\sttr{x}{\{ a \}} = x = a,\\
	\sttr{x}{\{ \defdes C \}} & = \lp \lambda y. x = y \rp (\defdes z . \sttr{z}{C}), \\
%	\sttr{x}{\{ \defdes C \}} = \exists x \sttr{x}{C} \land \forall x \forall y (\sttr{x}{C} \land \sttr{y}{C} \to x = y) \land \forall y (\sttr{y}{C} \to x = y).
	\sttr{x}{\lnot C} & = \lnot \sttr{x}{C},\\
	\sttr{x}{C \sqcap D} & = (\sttr{x}{C} \land \sttr{x}{D}),\\
	\sttr{x}{\exists r.C} & = \exists y\, (r(x,y) \land \sttr{y}{C}),\\
	\sttr{x}{\exists u.C} & = \exists y\, \sttr{y}{C},\\
	\sttr{x}{\Diamond_{i} C} & = \Diamond_{i} \sttr{x}{C}.
\end{align*}
%$\MLALCOud$ concept inclusions are translated into $\QMLdl$ sentences (with no free variables) as follows:
%		\begin{gather*}
%			\sttr{}{C \sqs D} = \forall x(\sttr{x}{C} \to \sttr{x}{D}), \\
%%			\sttr{x}(C(\tau))\tr = \sttr{C}[\tau\tr / x], \\
%%			\sttr{}{\lnot \p} = \lnot \st{}{\p}, \qquad 
%%			\st{}{\p \land \psi} = \st{}{\p} \land \st{}{\psi}.
%%			(\B \p)\tr = \B \p\tr.
%		\end{gather*}
%\end{definition}

The following proposition shows that $\MLALCOud$ concepts can indeed be seen as a fragment of $\QMLdl$, via the standard translation above.
%
%In the following, given $d \in \Delta$, we let $[x \mapsto d]$ stand for any variable assignment that maps $x$ to $d$.

\begin{proposition}
\label{prop:standardtr}
\label{lemma:standardtr}
%\nb{M: todo add proof?}
For every
%$\Lmc \in \ExprDL$,
%$\Lmc$
$\MLALCOud$
concept $C$,
partial interpretation $\Mmf = (\Fmf, \Delta, \Imc)$, world $w$ of $\Mmf$, and $d \in \Delta^{{w}}$, we have
$d \in C^{\Int_{w}}$ iff
%$\Mmf, w \models^{[x \mapsto d]} \sttr{x}{C}$.
$\Mmf, w \models^{\assign} \sttr{x}{C}$, where $\assign$ is an assignment in $\Mmf$ such that $\assign(x) = d$.
%For every %$\Lang \in \DL$,
%%$\Lang \in \{ \ALCO, \ALCOu, \ALCOd\}$,
%$\ALCOd$ concept $C$, partial interpretation $\Int$ and $d \in \Delta^{\Int}$, we have that
%$d \in C^{\Int}$ iff $\Int, [x \mapsto d] \models \sttr{x}{C}$.
%%for every variable assignment $\assign$ in $\Imc$.
\end{proposition}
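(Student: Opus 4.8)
The plan is to prove the statement by induction on the structure of the concept $C$, establishing simultaneously an auxiliary \emph{term claim}. Since the standard translation is compositional in its free variable (it uses $\sttr{y}{\cdot}$ and $\sttr{z}{\cdot}$ inside $\sttr{x}{\cdot}$), I would strengthen the inductive hypothesis to hold for an arbitrary variable: for every subconcept $D$ of $C$ and every variable $v$, we have $e \in D^{\Int_w}$ iff $\Mmf, w \models^{\assign'} \sttr{v}{D}$ whenever $\assign'(v) = e$. This is legitimate because $\sttr{v}{D}$ has $v$ as its only free variable, so satisfaction depends only on $\assign'(v)$, and renaming the free variable preserves satisfaction under the corresponding reassignment; I would record this as a short preliminary observation. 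The term claim to be carried along is: a term $\tau$ (either $a$ or $\defdes D$) designates at $w$ iff its translation ($a$, resp.\ $\defdes z.\sttr{z}{D}$) designates under $\assign$ at $w$, and in that case the values agree, i.e.\ $\tau^{\Int_w} = \tvalue{\tau^{\sharp}}{w}{\assign}$ where $\tau^{\sharp}$ is the translated term.

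The Boolean and role cases are routine. For $C = \lnot D$ and $C = D \sqcap E$ the equivalence transfers through $(\lnot D)^{\Int_w} = \Delta^w \setminus D^{\Int_w}$ and $(D \sqcap E)^{\Int_w} = D^{\Int_w} \cap E^{\Int_w}$. For $C = \exists r.D$ I would unfold both the DL semantics and the translation $\exists y\,(r(x,y) \land \sttr{y}{D})$, matching a witness $e$ to the $y$-variant $\assign'$ with $\assign'(y) = e$, relying on $y$ being chosen distinct from $x$; the case $C = \exists u.D$ is the same with $u^{\Int_w} = \Delta^w \times \Delta^w$, so both sides reduce to $D^{\Int_w} \neq \emptyset$ independently of $d$. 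The case $C = \{a\}$ follows from the predicate-abstraction clause: $\Mmf, w \models^\assign \lp \lambda y. x = y \rp (a)$ holds iff $a$ designates at $w$ and $a^{\Int_w} = \assign(x) = d$, which is exactly $d \in \{a\}^{\Int_w}$. The one place where the domain condition enters is $C = \Diamond_i D$: from $w R_i v$ we have $\Delta^w \subseteq \Delta^v$, so the witness $d \in \Delta^w$ also lies in $\Delta^v$, and the inductive hypothesis may be applied at $v$ to yield $d \in D^{\Int_v}$ iff $\Mmf, v \models^\assign \sttr{x}{D}$; this works for both expanding and constant domains.

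The main obstacle is the definite-description nominal $C = \{\defdes D\}$, which is precisely where the term claim is needed. Here $\sttr{x}{\{\defdes D\}} = \lp \lambda y. x = y \rp (\defdes z.\sttr{z}{D})$, so by the predicate-abstraction clause the formula holds at $w$ under $\assign$ iff the term $\defdes z.\sttr{z}{D}$ designates under $\assign$ at $w$ and its value equals $\assign(x) = d$. By definition of the description operator, this term designates iff there is \emph{exactly one} $z$-variant $\assign'$ of $\assign$ with $\Mmf, w \models^{\assign'} \sttr{z}{D}$. Applying the strengthened inductive hypothesis to $D$ with the variable $z$, such $z$-variants correspond precisely to the elements of $D^{\Int_w}$; hence a unique one exists iff $D^{\Int_w}$ is a singleton, and in that case its value is the unique element of $D^{\Int_w}$, which is exactly $(\defdes D)^{\Int_w}$. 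Combining, $\Mmf, w \models^\assign \sttr{x}{\{\defdes D\}}$ iff $D^{\Int_w} = \{d\}$, i.e.\ iff $d \in \{\defdes D\}^{\Int_w}$. This closes the induction, the only delicate points being the uniqueness analysis for $\defdes$ and the free-variable renaming invariance, both isolated as preliminary observations.
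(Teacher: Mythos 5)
Your proposal is correct and follows essentially the same route as the paper's proof: a structural induction that unfolds the semantics of each constructor against the corresponding clause of the standard translation, matching domain elements to variable variants, with the definite-description case resolved by the exact correspondence between $z$-variants satisfying $\sttr{z}{D}$ and elements of $D^{\Int_w}$. The only difference is presentational: you isolate as explicit preliminary observations the free-variable renaming invariance and the term-designation claim, which the paper uses tacitly (e.g.\ in its conjunction case, where it notes satisfaction depends only on the value assigned to $x$), and you make explicit the use of $\Delta^{w}\subseteq\Delta^{v}$ in the $\Diamond_i$ case, which the paper leaves implicit.
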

\begin{proof}
By induction on the structure of $C$.

$C = A$. We have $d \in A^{\Int_{w}}$ iff $\assign(x) \in A^{\Imc_{w}}$, where $\assign$ is an assignment such that $\assign(x) = d$. That is, $\Mmf, w \models^{\assign} A(x)$.

$C = \{ a \}$.
%\nb{M: todo check}
We have $d \in \{ a \}^{\Int_{w}}$ iff $a$ designates at $w$ and $d = a^{\Imc_{w}}$, 
meaning that
$a$ designates under $\assign$ at $w$ and $\assign(x) = a^{\Imc_{w}}$, where $\assign$ is an assignment with $\assign(x) = d$.
Equivalently, $a$ designates under $\assign$ at $w$ and $\assign'(x) = \assign'(y)$, where $\assign'$ is a $y$-variant of $\assign$ such that $\assign'(y) = a^{\Imc_{w}}$.
The previous step means that $a$ designates under $\assign$ at $w$ and $\Mmf, w \models^{\assign'} x = y$, where $\assign'$ is a $y$-variant of $\assign$ such that $\assign'(y) = a^{\Imc_{w}}$.
That is, $\Mmf, w \models^{\assign} \lp \lambda y. x = y \rp (a)$.

$C = \{ \defdes C \}$.
%\nb{M: todo check}
We have $d \in \{ \defdes C \}^{\Int_{w}}$ iff $\defdes C$ designates at $w$ and $d = (\defdes C)^{\Imc_{w}}$,
i.e.,
$C^{\Imc_{w}} = \{ e \}$, for some $e \in \Delta^{w}$, and $d = e$.
This means that $C^{\Imc_{w}} = \{ e \}$, for some $e \in \Delta^{w}$, and $\assign(x) = e$, where $\assign$ is an assignment in~$\Mmf$ such that $\assign(x) = d$.
By induction hypothesis, we have equivalently that $\Mmf, w \models^{\assign'} \sttr{z}{C}$, where $\assign'$ is the (unique) $z$-variant of $\assign$ such that $\assign'(z) = e$,
%whereas for any $z$-variant $\sigma''$ of $\sigma'$ such that $\sigma''(z) \neq e$ we have that $\Mmf, w \not \models^{\sigma''} \sttr{z}{C}$;
and $\assign(x) = e$.
In other words, $\defdes z. \sttr{z}{C}$ designates under $\assign$ at $w$ of $\Mmf$,
and $\assign(x) = \defdes z. \sttr{z}{C}^{\Mmf, w}_{\assign}$.
The previous step can then be rewritten as:
$\defdes z. \sttr{z}{C}$ designates under $\assign$ at $w$ of $\Mmf$,
and $\assign(x) = \assign''(y)$, where $\assign''$ is a $y$-variant
of $\assign$ such that $\sigma''(y) = \defdes z. \sttr{z}{C}^{\Mmf, w}_{\assign}$.
Given that $\assign(x) = \assign''(x)$ (as $y$-variants, their values coincide on $x$), we have equivalently that
$\defdes z. \sttr{z}{C}$ designates under $\assign$ at $w$ of $\Mmf$
and $\assign''(x) = \assign''(y)$, i.e., $\Mmf, w \models^{\assign''} x = y$.
Since $\assign''$ is a $y$-variant of $\sigma$ such that $\assign''(y) = \defdes z. \sttr{z}{C}^{\Mmf, w}_{\assign}$, the last step is equivalent by definition to $\Mmf, w \models^{\assign} \lp \lambda y. x = y \rp (\defdes z. \sttr{z}{C})$.

$C = \lnot D$. We have $d \in (\lnot D)^{\Imc_{w}}$ iff $d \not \in D^{\Imc_{w}}$. By induction hypothesis, we have equivalently $\Mmf, w \not \models^{\assign} \sttr{x}{D}$, where $\assign$ is an assignment in $\Mmf$ such that $\assign(x) = d$.
That is,  $\Mmf, w \models^{\assign} \lnot \sttr{x}{D}$.

$C = D \sqcap E$.
%\nb{M: todo check}
We have $d \in (D \sqcap E)^{\Imc_{w}}$ iff $d \in D^{\Imc_{w}}$ and $d \in E^{\Imc_{w}}$. By induction hypothesis, we have equivalently $\Mmf, w \models^{\assign'} \sttr{x}{D}$ and $\Mmf, w \models^{\assign''} \sttr{x}{E}$, where $\assign', \assign''$ are assignments in $\Mmf$ such that $\assign'(x) = \assign''(x) = d$.
Hence, since the truth of $\sttr{x}{D}$ and $\sttr{x}{E}$ depends only on the values assigned to $x$, we have equivalently that $\Mmf, w \models^{\assign} \sttr{x}{D}$ and $\Mmf, w \models^{\assign} \sttr{x}{E}$, i.e., $\Mmf, w \models^{\assign} \sttr{x}{D} \land \sttr{x}{E}$, for an assignment $\assign$ in $\Mmf$ such that $\assign(x) = d$.

$C = \exists r. D$.
%\nb{M: todo check}
We have $d \in (\exists r. D)^{\Imc_{w}}$ iff there exists
$e \in D^{\Imc_{w}}$ such that $(d,e) \in r^{\Imc_{w}}$
%and $e \in D^{\Imc_{w}}$.
By induction hypothesis, this means that $(d, \sigma''(y)) \in r^{\Imc_{w}}$ and $\Mmf, w \models^{\sigma''} \sttr{y}{D}$, for some assignment $\sigma''$ in $\Mmf$ such that $\sigma''(y) = e$.
Equivalently, we have 
$(\sigma'(x), \sigma'(y)) \in r^{\Imc_{w}}$ and $\Mmf, w \models^{\sigma'} \sttr{y}{D}$, for some assignment $\sigma'$ in $\Mmf$ such that $\assign'(x) = d$ and $\assign'(y) = e$, i.e.,
$\Mmf, w \models^{\assign'} r(x,y) \land \sttr{y}{D}$.
By considering the assignment $\sigma$ defined as $\sigma'$, except possibly on $y$, we have equivalently that $\sigma'$ is a $y$-variant of $\sigma$ such that  $\Mmf, w \models^{\assign'} r(x,y) \land \sttr{y}{D}$.
Hence, the previous step means that
$\Mmf, w \models^{\assign} \exists y(r(x,y) \land \sttr{y}{D})$, where $\assign$ is an assignment in $\Mmf$ such that $\assign(x) = d$.
%$(\sigma'(x), \sigma'(y)) \in r^{\Imc_{w}}$ and $\Mmf, w \models^{\sigma'} \sttr{y}{D}$, where $\sigma$ is an assignment in $\Mmf$ such that $\sigma(y) = e$.

$C = \exists u. D$. We have $d \in (\exists u. D)^{\Imc_{w}}$ iff there exists
$e \in D^{\Imc_{w}}$.
By induction hypothesis, this is equivalent to $\Mmf, w \models^{\sigma'} \sttr{y}{D}$, for some assignment $\sigma'$ in $\Mmf$ such that $\sigma'(y) = e$.
By considering the assignment $\sigma$ defined as $\sigma'$, except possibly on $y$, we have equivalently that $\sigma'$ is a $y$-variant of $\sigma$ such that  $\Mmf, w \models^{\assign'} \sttr{y}{D}$, that is, $\Mmf, w \models^{\assign} \exists y \sttr{y}{D}$.

$C = \Diamond_{i} D$. We have $d \in (\Diamond_{i} D)^{\Imc_{w}}$ iff there exists $v \in W$ such that $w R_{i} v$ and $d \in D^{\Imc_{v}}$. By induction hypothesis, the previous step means that there exists $v \in W$ such that $w R_{i} v$ and $\Mmf, v \models^{\sigma} \sttr{x}{D}$, with $\sigma$ assignment in $\Mmf$ such that $\sigma(x) = d$.
Equivalently, by definition, $\Mmf, w \models^{\sigma} \Diamond_{i} \sttr{x}{D}$.

\end{proof}

\paragraph{Extended Examples}

The following examples extend the formalisations of the scenarios discussed in the Introduction.

%\nb{M: working here}
%\nb{M: check}

\begin{example}\em
\emph{Of} Pierre, Agent 1 \emph{knows} that he is the General Chair of KR 2024:
\[
\exists u. ( \{ \mathsf{pierre} \} \sqcap \Box_{1} \{ \defdes \exists \mathsf{isGenChair}. \{ \mathsf{kr24} \} \} ),
\]
%while it is \emph{not} known \emph{of} Pierre that he is the General Chair of the KR Conference held in Southeast Asia:
%%
%\begin{multline*}
%	\exists u. ( \{ \mathsf{pierre} \} \sqcap \lnot \Box  \{ \defdes \exists \mathsf{isGenChair}. \\
%								 \{ \defdes ( \mathsf{KRConf} \sqcap \exists
%\mathsf{hasLoc}.\mathsf{SEAsiaLoc}) \} \} ).
%\end{multline*}

Agent 1 does not know that the General Chair of KR 2024 is the General Chair of the KR Conference held in Southeast Asia: 
\begin{multline*}
\lnot \Box_{1} \exists u. ( \{ \defdes \exists \mathsf{isGenChair}. \{ \mathsf{kr24} \} \}  \sqcap \{ \defdes \exists \mathsf{isGenChair}. \\ \{ \defdes ( \mathsf{KRConf} \sqcap \exists \mathsf{hasLoc}.\mathsf{SEAsiaLoc}) \} \} ),
\end{multline*}

%%% KR 2024 ONLY KR CONFERENCE IN SE ASIA
%KR 2024 is the (only, so far) KR Conference held in Southeast Asia, but Agent 1 does not know that:
%\begin{align*}
%& \exists u. (\{ \mathsf{kr24} \} \sqcap \{ \defdes ( \mathsf{KRConf} \sqcap \exists
%\mathsf{hasLoc}.\mathsf{SEAsiaLoc}) \} ) \\
%& \lnot \Box_{1} \exists u.  (\{ \mathsf{kr24} \} \sqcap  \{ \defdes ( \mathsf{KRConf} \sqcap \exists
%\mathsf{hasLoc}.\mathsf{SEAsiaLoc}) \} )
%\end{align*}
Therefore, Agent 1 does not know of Pierre that he is the General Chair of the KR Conference held in Southeast Asia: 
%that he is the General Chair of the KR Conference held in Southeast Asia:
%Pierre is the General Chair of the KR Conference held in Southeast Asia, but Agent 1 does not know this of Pierre:
%%that he is the General Chair of the KR Conference held in Southeast Asia:
%\begin{multline*}
%	\exists u. ( \{ \mathsf{pierre} \} \sqcap
%								 \{ \defdes \exists \mathsf{isGenChair}. \\ \{ \defdes ( \mathsf{KRConf} \sqcap \exists
%\mathsf{hasLoc}.\mathsf{SEAsiaLoc}) \} \}
%\end{multline*}
\begin{multline*}
	\exists u. ( \{ \mathsf{pierre} \} \sqcap
								 \lnot \Box_{1}  \{ \defdes \exists \mathsf{isGenChair}. \\\{ \defdes ( \mathsf{KRConf} \sqcap \exists
\mathsf{hasLoc}.\mathsf{SEAsiaLoc}) \} \} )
\end{multline*}
%\[
%\begin{aligned}
%& \forall u. (\{ \mathsf{pierre} \} \Rightarrow \{ \defdes \exists \mathsf{isGenChr}. \{ \defdes ( \mathsf{KRConf} \sqcap \exists
%\mathsf{hasLoc}.\mathsf{SEAsiaLoc}) \} \} ), \\
%& \begin{aligned}
%	\lnot \Box \forall u. (& \{ \mathsf{pierre} \} \Rightarrow \\
%								& \{ \defdes \exists \mathsf{isGenChr}. \{ \defdes ( \mathsf{KRConf} \sqcap \exists
%\mathsf{hasLoc}.\mathsf{SEAsiaLoc}) \} \} )
%\end{aligned}
%\end{aligned}
%\]
\end{example}

%%% NOT VALID ARGUMENT
%\begin{example}\em
%Agent 1 knows that Agent 2 knows that the General Chair of KR 2024 is busy:
%\[
%\Box_{1} \Box_{2} \exists u. ( \{ \defdes \exists \mathsf{isGenChair}. \{ \mathsf{kr24} \} \} \sqcap \mathsf{Busy} ),
%\]
%abbreviated as $\Box_{1} \Box_{2} ( \mathsf{Busy}(\defdes \exists \mathsf{isGenChair}. \{ \mathsf{kr24} \} ) )$.
%
%
%Agent 1 knows that the General Chair of KR 2024 is not known by Agent 2 to be the General Chair of  KR Conference held in Southeast Asia: 
%\begin{multline*}
%\Box_{1} \exists u. ( \{ \defdes \exists \mathsf{isGenChair}. \{ \mathsf{kr24} \} \}  \sqcap \\ \lnot \Box_{2} \{ \defdes \exists \mathsf{isGenChair}. \{ \defdes ( \mathsf{KRConf} \sqcap \exists \mathsf{hasLoc}.\mathsf{SEAsiaLoc}) \} \} ),
%\end{multline*}
%
%%%% ALTERNATIVE VERSION -- Not known of KR that it is the KRConf hasLoc some SEAsiaLoc
%%Agent 1 knows that KR 2024 is not known by Agent 2 to be the KR Conference held in Southeast Asia: 
%%\[
%%\Box_{1} \exists u. ( \{ \mathsf{kr24} \}  \sqcap \lnot \Box_{2} \{ \defdes ( \mathsf{KRConf} \sqcap \exists \mathsf{hasLoc}.\mathsf{SEAsiaLoc}) \} \} ),
%%\]
%
%Therefore, Agent 1 knows that Agent 2 does not know that the General Chair of the KR Conference held in Southeast Asia is busy:
%\begin{multline*}
%\Box_{1} \lnot \Box_{2}  \exists u. ( \{ \defdes \exists \mathsf{isGenChair}. \\ \{ \defdes ( \mathsf{KRConf} \sqcap \exists \mathsf{hasLoc}.\mathsf{SEAsiaLoc}) \} \} \sqcap \mathsf{Busy} ),
%\end{multline*}
%\end{example}

\begin{example}\em
%Agent 1 knows \emph{of} the General Chair of KR 2024 that they are busy:
%\[
%\exists u. ( \{ \defdes \exists \mathsf{isGenChair}. \{ \mathsf{kr24} \} \} \sqcap \Box_{1} \mathsf{Busy} ),
%\]
%abbreviated as $\Box_{1} \mathsf{Busy}(\defdes \exists \mathsf{isGenChair}. \{ \mathsf{kr24} \} )$.

%Moreover,
Agent 2 knows \emph{that} Agent 1 knows \emph{of} the General Chair of KR 2024 that they are busy:
\[
\Box_{2} \exists u. ( \{ \defdes \exists \mathsf{isGenChair}. \{ \mathsf{kr24} \} \} \sqcap \Box_{1} \mathsf{Busy} ),
\]
abbreviated as $\Box_{2} [\Box_{1} \mathsf{Busy}( \defdes \exists \mathsf{isGenChair}. \{ \mathsf{kr24} \} ) ]$.

However, Agent 2 also knows that Agent 1 does \emph{not} know \emph{that} the General Chair of the KR Conference held in Southeast Asia is busy:
\begin{multline*}
\Box_{2} \lnot \Box_{1} \exists u. ( \{ \defdes \exists \mathsf{isGenChair}. \\ \{ \defdes ( \mathsf{KRConf} \sqcap \exists
\mathsf{hasLoc}.\mathsf{SEAsiaLoc}) \} \} \sqcap  \mathsf{Busy}  ),
\end{multline*}
abbreviated as 
\begin{multline*}
\Box_{2} \lnot \Box_{1} [\mathsf{Busy} ( \{ \defdes \exists \mathsf{isGenChair}. \\ \{ \defdes ( \mathsf{KRConf} \sqcap \exists
\mathsf{hasLoc}.\mathsf{SEAsiaLoc}) \} \}  ) ].
\end{multline*}
Hence, Agent 2 knows that Agent 1 does not know of the General Chair of KR 2024 that they are the General Chair of the KR Conference held in Southeast Asia:
\begin{multline*}
\Box_{2} \exists u. (
\{ \defdes \exists \mathsf{isGenChair}. \{ \mathsf{kr24} \} \} \sqcap
\lnot \Box_{1} \{ \defdes \exists \mathsf{isGenChair}. \\ \{ \defdes ( \mathsf{KRConf} \sqcap \exists \mathsf{hasLoc}.\mathsf{SEAsiaLoc}) \} \}  ),
\end{multline*}
abbreviated as
\begin{multline*}
\Box_{2} [
\lnot \Box_{1} \{ \defdes \exists \mathsf{isGenChair}.  \{ \defdes ( \mathsf{KRConf} \sqcap \\ \exists \mathsf{hasLoc}.\mathsf{SEAsiaLoc}) \} \} (\{ \defdes \exists \mathsf{isGenChair}. \{ \mathsf{kr24} \} \} )].
\end{multline*}
\end{example}

%However, Agent 2 knows that Agent 1 does not know of Pierre that he is the General Chair of KR 2024:
%\[
%\Box_{2} \exists u. ( \{ \mathsf{pierre} \} \sqcap \lnot \Box_{1} \{ \defdes \exists \mathsf{isGenChair}. \{ \mathsf{kr24} \} \} ).
%\]
%\begin{align*}
%%& \forall u. ( \{ \mathsf{pierre} \} \Rightarrow \{ \defdes \exists \mathsf{isGenChr}. \{ \mathsf{kr24} \} \} ), \\
%& \Box \forall u. ( \{ \mathsf{pierre} \} \Rightarrow \{ \defdes \exists \mathsf{isGenChr}. \{ \mathsf{kr24} \} \} )
%\end{align*}

\begin{example}\em
KR24 is a rigid designator:
\[
	\{ \mathsf{kr24} \} \sqsubseteq \Box \{ \mathsf{kr24} \}, \quad \Diamond \{ \mathsf{kr24} \} \sqsubseteq \{ \mathsf{kr24} \}.
	\]
KR24 is the current KR Conference, but from next year there will be more:
%	\begin{align*}
%	& \exists u. (\{\mathsf{kr24}\} \sqcap \{ \mathsf{kr} \}),  \\
%	& \Box^{+} (  \Diamond^{+}  \exists u. \{\mathsf{kr}\} \sqcap \forall u. (\{\mathsf{kr}\} \Rightarrow \lnot \Next \{ \mathsf{kr} \} )  )
%	\end{align*}
	\begin{align*}
	& \exists u. (\{\mathsf{kr24}\} \sqcap \{ \mathsf{kr} \}),  \\
	& \top \sqsubseteq \Diamond^{+}  \exists u. \{\mathsf{kr}\} \\
	& \{\mathsf{kr}\} \sqsubseteq \lnot \Next \{ \mathsf{kr} \}
	\end{align*}
%KR24 is the upcoming KR Conference, but there will be more (hopefully)
%	\begin{align*}
%	& \exists u. \{\mathsf{kr24}\} \sqcap \forall u. (\{\mathsf{kr24}\} \Rightarrow \Next \{ \mathsf{kr} \} )  \\
%	& \exists u. (\Diamond^{+} \{\mathsf{kr}\} \sqcap \lnot \Next \{ \mathsf{kr} \} )
%	\end{align*}
%However, KR24 will never come back (as the current KR conference)
%	\[
%	\Mmf, t \models \Box^{+} (\{ {\sf kr24 } \} \sqsubseteq \Next \Box \lnot \{\mathsf{kr}\} )
%\]
Whoever is a Program Chair of KR will not be Program Chair again, but always becomes either the General Chair or a PC member of next year's KR:
	\begin{multline*}
\exists {\sf isProgChair}.\{ {\sf kr} \} \sqsubseteq \lnot \Diamond \exists {\sf isProgChair}.\{ {\sf kr} \} \sqcap \\
							 ( \{ \defdes \exists {\sf isGenChair}. \Next \{ {\sf kr} \} \} \sqcup \exists {\sf isPCMember}. \Next \{ {\sf kr } \} ).
	\end{multline*}
%	\begin{align*}
%& \Box^{+} \forall u. ( \exists {\sf isPrgChr}.\{ {\sf kr} \} \Rightarrow \\
%							& \{ \defdes \exists {\sf isGenChr}. \Next \{ {\sf kr} \} \} \sqcup \exists {\sf isPCMbr}. \Next \{ {\sf kr } \} )
%	\end{align*}
\end{example}

% !TEX root =  ../rda_arxiv.tex

\section{Proofs for Section~\ref{sec:reduction}}\label{app:reductions}

Given an $\MLALCOud$ concept $C$ and its subconcept $B$, we define the set of \emph{$B$-relevant paths in $C$} by induction on the structure of $C$ as follows ($D$ is a subconcept of $C$  possibly containing $B$ as its own subconcept):
\begin{align*}
\rpath(D, B) & = \begin{cases}\{\epsilon\}, & \text{ if  } D = B,\\\emptyset, & \text{ otherwise},\end{cases} \\
%\rpath(\{ a \}, C) & = \begin{cases}\{\epsilon\}, & \text{ if  } C = \{a \},\\\emptyset, & \text{ otherwise},\end{cases} \\
\rpath(\{ \defdes D \}, B) & = \rpath(D, B), \\
\rpath(\neg D, B) & = \rpath(D, B), \\
\rpath(D_1 \sqcap D_2, B) & = \rpath(D_1, B) \cup \rpath(D_2, B),\\
\rpath(\exists s.D, B) & = \rpath(D, B),  \text{ for  } s \in \NR \cup \{ u \},\\
\rpath(\Diamond_i D, B) & =  \{ i \cdot \pi \mid \pi \in \rpath(D, B) \}.
\end{align*}
It can be seen that  we have the following inclusions, which will be helpful in proofs by induction on the structure of concepts:
%\nb{have the inclusions for rw instead? \\ M: added}
%
\begin{align*}
\rpath(C, \{\defdes B\}) & \subseteq \rpath(C,  B),\\
\rpath(C, \neg B) & \subseteq \rpath(C,  B),\\
\rpath(C, B_1 \sqcap B_2) & \subseteq \rpath(C,  B_1) \cap \rpath(C,  B_2),\\
\rpath(C, \exists s.B) & \subseteq \rpath(C,  B), \text{ for } s \in \NR \cup \{ u \},\\
\rpath(C, \Diamond_i B) & \subseteq \{ \pi \mid \pi \cdot i \in \rpath(C,  B)\}.
\end{align*}
%
%{\color{red}{
%M: Not sure about
%\[
%\rpath(D, \Diamond_i C) \subseteq \{ \pi\cdot i \mid \pi \in \rpath(D,  C)\}.
%\]
%For instance, $\rpath(\Diamond_2\Diamond_3 A, \Diamond_3 A) = \{(2)\}$, and $\rpath(\Diamond_2\Diamond_3 A, A) = \{(2, 3)\}$, so $\{ \pi \mid \pi \cdot 3 \in \rpath(\Diamond_2\Diamond_3 A, A) \} = \{ (2, 3, 3) \}$ but $\{(2)\} \not \subseteq \{ (2, 3, 3) \}$
%}}
Note
%\nb{added \\ M: agreed}
that the sets on the left are included in the sets on the right
%simply
because, for example, concept $C$ may contain fewer occurrences of $\neg B$ than of $B$, and so, not every $B$-relevant path in $C$ is a $\neg B$-relevant path in $C$.
The set of $B$-relevant paths in $C$ induces the set of worlds that are reachable via these sequences of~$\Diamond_{i_j}$ operators: given a world $w\in W$, we denote by $\rpathw(w, C, B)$ the set of \emph{$B$-relevant worlds for $C$ and $w$}, consisting of worlds $v\in W$ such that $w_0 R_{i_1} w_1 R_{i_2} \cdots R_{i_n} w_n$ for $(i_1, i_2, \dots, i_n)\in \rpath(C, B)$ and $w_0 = w$ and $w_n = v$. Note that $\rpathw(w, C, C) = \{ w\}$.
Moreover, the inclusions between
$\rpath(C, B)$ given above naturally translate into the following for $\rpathw(w, C,B)$:
\begin{align*}
\rpathw(w, C, \{\defdes B\}) & \subseteq \rpathw(w, C,  B),\\
\rpathw(w, C, \neg B) & \subseteq \rpathw(w, C,  B),\\
\rpathw(w, C, B_1 \sqcap B_2) & \subseteq \rpathw(w, C,  B_1) \cap \rpathw(w, C,  B_2),\\
\rpathw(w, C, \exists s.B) & \subseteq \rpathw(w, C,  B), \text{ for } s \in \NR \cup \{ u \},\\
\rpathw(w, C, \Diamond_i B) & \subseteq \{ v \mid vR_iu \text{ and } u \in\rpathw(w, C, B) \}.
\end{align*}

\rdatononrda*
\begin{proof}
%
%By Proposition~\ref{lemma:redpartialtototal}
%{\color{red}{for the global RDA case}}
%%{\color{red}{(adjusted for the RDA case)}}
%\nb{M: issue here - before it was RDA only (not working in expanding domains)} and Proposition~\ref{lemma:redtotaltopartial}, it is sufficient to reduce
%\emph{total} concept $\Cmc$-satisfiability (under global ontology) with the
%{\color{red}{global}}
%RDA to \emph{total} concept $\Cmc$-satisfiability (under global ontology, respectively), both with constant and with expanding domain.
%%\nb{M: todo fix for local RDA + expanding domain}
Let $A$ be a concept name
and $\Omc$ an ontology. Define $\Omc'$ by adding to $\Omc$ the CIs
%\nb{M: todo fix, working here}
%
 \begin{equation}
 \label{eq:rdacis}
 \{a\} \sqsubseteq \Box_i\{a\},
% \quad \Diamond_i \{a\} \sqsubseteq \{a\}, 
%  \top \sqsubseteq \ \forall u. (\{ a \} \Rightarrow \Box_i \{a\}), \quad
%  \top \sqsubseteq \ \forall u.  (\Diamond_i \{a\} \Rightarrow \{ a \}),
  \end{equation}
for all $i\in I$ and all individual names $a$ occurring in~$\Omc$.
It can be seen that $A$
is $\Cmc$-satisfiable under $\Omc$ with the RDA iff $A$
is $\Cmc$-satisfiable under $\Omc'$.
Indeed, the $(\Rightarrow)$ direction is immediate, since
an
interpretation with
the
RDA
that satisfies~$\Omc$ satisfies the CIs of the form~\eqref{eq:rdacis}, and hence $\Omc'$.
For the $(\Leftarrow)$ direction, suppose that $\Mmf'$ is
an interpretation based on a frame from $\Cmc$, with either constant or expanding domains, such that $\Mmf' \models \Omc'$ and $A^{\Imc'_w} \neq \emptyset$, for some $w\in W$.
Due to the CIs of the form~\eqref{eq:rdacis}, 
for every $w, v \in W$ with $w R_{i} v$,  if $a$ occurs in $\Omc$ and $a^{\Imc'_{w}}$ is defined, then $a^{\Imc'_{v}}$ is defined and $a^{\Imc'_{w}} = a^{\Imc'_{v}}$.
%\textcolor{red}{
We can then define $\Mmf$ as $\Mmf'$, except that all individual names $a$ that do not occur in $\Omc$ now fail to designate in $\Mmf$ in every world.
%}
it can be seen that $\Mmf$ is based on the same frame and the same domains, satisfies the RDA and $A$ under $\Omc$.
\end{proof}

\begin{proposition}
\label{prop:rdatononrda:total}
%Let  $\Cmc$ be a class of frames. 
In $\ML^n_{\ALCOu}$ and $\MLALCOud$,
total
concept $\Cmc$-satisfiability \textup{(}under global ontology\textup{)} with the RDA is poly\-time-reducible to
total
concept $\Cmc$-satisfiability \textup{(}under global ontology, respectively\textup{)},
with both constant and expanding domains.
\end{proposition}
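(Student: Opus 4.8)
The plan is to mirror the reduction behind Proposition~\ref{prop:rdatononrda}, enforcing rigidity through CIs of the form $\{a\}\sqsubseteq\Box_i\{a\}$, but to compensate for the fact that in the \emph{total} setting names can no longer be made non-designating. Concretely, for the global-ontology case I would, given a concept name $A$ and an ontology $\Omc$, let $\Omc'$ be $\Omc$ extended with the CIs $\{a\}\sqsubseteq\Box_i\{a\}$ for all $i\in I$ and all individual names $a$ occurring in $\Omc$ together with \emph{one} fresh individual name $a^\ast$. The claim to establish is that $A$ is totally $\Cmc$-satisfiable under $\Omc$ with the RDA iff $A$ is totally $\Cmc$-satisfiable under $\Omc'$, for both constant and expanding domains; since $\Omc'$ adds only polynomially many short CIs, the reduction runs in polytime.

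The $(\Rightarrow)$ direction is immediate: in a total interpretation satisfying the RDA every name, including $a^\ast$, is rigid and designates everywhere, so all added CIs hold and $\Omc'$ is satisfied. For $(\Leftarrow)$ the first step is to record that in a total interpretation the CI $\{a\}\sqsubseteq\Box_i\{a\}$ already \emph{forces} $a$ to be rigid: totality makes $\{a\}^{\Imc_w}=\{a^{\Imc_w}\}$ a singleton at every world, so membership of $a^{\Imc_w}$ in $(\Box_i\{a\})^{\Imc_w}$ gives $a^{\Imc_v}=a^{\Imc_w}$ whenever $wR_iv$. Hence in any total model $\Mmf'$ of $\Omc'$ all individual names occurring in $\Omc$, as well as $a^\ast$, are rigid.

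The main obstacle is the remaining (infinitely many) names not occurring in $\Omc$. In the partial case of Proposition~\ref{prop:rdatononrda} these are simply made to fail to designate and are thus vacuously rigid, but totality rules this out; worse, on expanding domains there need not exist any single domain element available at every world of a connected component to serve as a common rigid value, so a total RDA interpretation need not even exist on an arbitrary frame. The fresh name $a^\ast$ is introduced precisely to bypass this: being rigid and total in $\Mmf'$, it designates, on each connected component $K$ of the frame, a fixed element $d_K\in\bigcap_{w\in K}\Delta^{w}$. I would therefore define $\Mmf$ to agree with $\Mmf'$ on the frame, the domains, all concept and role names, and all names occurring in $\Omc$, and set $a^{\Imc_w}:=(a^\ast)^{\Imc'_w}$ for every remaining name $a$. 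This makes every name rigid and total, so $\Mmf$ satisfies the RDA; since the reinterpreted names do not occur in $\Omc$ and $A$ is a concept name, $\Mmf\models\Omc$ and $A^{\Imc_w}\neq\emptyset$ are inherited from $\Mmf'$. (In the constant-domain case one may drop $a^\ast$ and simply map all non-occurring names to any fixed $d\in\Delta$.)

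Finally, for plain concept satisfiability (without ontology) I would globalise the rigidity constraints at the concept level rather than through CIs: exploiting the universal role together with the $C$-relevant-path machinery of Lemma~\ref{lem:normalformconc}, I would conjoin to $C$ the concepts $\Box^\pi\forall u.\bigl(\bigsqcap_{a,i}(\{a\}\Rightarrow\Box_i\{a\})\bigr)$ ranging over the relevant paths $\pi$ and over $i\in I$ and all individual names $a$ in $C$ together with $a^\ast$, so that the constraints hold at every world relevant to the evaluation of $C$ while avoiding an exponential blow-up. The $(\Leftarrow)$ argument is then the same copy-of-$a^\ast$ construction as above. I expect the only delicate point to be verifying that this concept-level globalisation indeed reaches all worlds that matter for the evaluation of $C$ (including the one extra $R_i$-step taken by each $\Box_i$); the remaining verifications are routine.
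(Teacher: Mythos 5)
Your global-ontology half is correct and follows the paper's route (extend $\Omc$ with the CIs $\{a\}\sqsubseteq\Box_i\{a\}$), with one genuine addition: the fresh anchor name $a^\ast$. The paper's appendix only says ``an argument similar to Proposition~\ref{prop:rdatononrda} can be used,'' and in the backward direction one still has to reinterpret the infinitely many names \emph{not} occurring in $\Omc$, which in the total setting can no longer be made non-designating. Your $a^\ast$ handles this explicitly and frame-class-independently: since the added CIs hold globally, $a^\ast$ is rigid and total at \emph{every} world of any total model of $\Omc'$, so its value can be copied onto all remaining names even with expanding domains, where a weakly connected component need not otherwise contain an element common to all its domains (your two-roots observation is right, and the paper's sketch glosses exactly this point when $\Omc$ happens to contain no individual names).

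For plain concept satisfiability, however, your closing step fails as stated. Without a global ontology, the conjuncts $\Box^\pi\forall u.(\{a\}\Rightarrow\Box_i\{a\})$ constrain $a^\ast$ (and the names in $C$) only at worlds reachable along $C$-relevant paths from the satisfying world $w$; outside that finite cone, $a^\ast$ is just an arbitrary non-rigid total name, so ``the same copy-of-$a^\ast$ construction'' does not yield an interpretation satisfying the RDA --- and it says nothing about the names occurring in $C$, which are likewise rigid only on the relevant worlds, whereas the RDA is a condition on \emph{all} names at \emph{all} worlds. The paper's proof instead freezes \emph{every} individual name at its value at $w$, setting $a^{\Imc_v}:=a^{\Imc'_w}$ for all $v$ (defined by totality), and then proves the actual crux: an induction over the sets $\rpathw(w,C,B)$ of $B$-relevant worlds (Claim~\ref{claim:total:rda}) showing $C^{\Imc_w}=C^{\Imc'_w}$, which uses precisely the conjuncts indexed by the \emph{prefix closure}, i.e.\ one conjunct $\Box^\pi\forall u.(\{a\}\Rightarrow\Box_i\{a\})$ for each $\pi\cdot i\in\rpathp(C,\{a\})$, to argue that the frozen values agree with the original ones at every relevant world. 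This induction is the step you label ``routine,'' and your indexing must be the prefix-closed family (or a superset of it), not the full relevant paths with arbitrary $i\in I$: otherwise rigidity is not propagated along the intermediate steps of a relevant path, which is exactly the ``one extra $R_i$-step'' you flag but leave unresolved.
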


\begin{proof}
An argument similar to the proof of Proposition~\ref{prop:rdatononrda} can be used to show that a concept name $A$
is satisfied in a total interpretation with the RDA based on a frame from $\Cmc$ under $\Omc$ iff $A$
is satisfied in a total interpretation based on a frame from $\Cmc$ under $\Omc$ extended with CIs of the form~\eqref{eq:rdacis}.

%\textcolor{red}{
%M: Maybe another clarification, for concept satisfiability under global ontology, is required for the $(\Leftarrow)$ direction, since the CIs added in $\Omc'$ make $a$ only ``locally'' rigid. Indeed, despite the fact that these CIs hold globally, they ``propagate'' $a$ along $\Box_{i}$ only. Even if we fix them so to cover additional ``paths'', they will still propagate $a$ along accessible worlds only. Thus, when $A$ is $\Cmc$-satisfiable under $\Omc'$, we still need to provide a construction to show that there exists an interpretation (based on a frame from $\Cmc$) that satisfies $A$ under $\Omc$ with the (``global'') RDA. This can be done by restricting the model to a $w$ such that $A^{\Imc_{w}} \neq \emptyset$ and all the ``relevant'' worlds $v$ accessible from $w$ (see~\cite[Proposition 10.2.2]{FitMen12}).\\R: ok, but one has to be careful to remain in the same class of frames though.
%Add an argument for redefining the interpretation on the inaccessible part}

%\newcommand{\rpath}{\textit{rp}}
%\newcommand{\rpathp}{\overline{\rpath}}

For the concept satisfiability problem, without global ontology, the reduction above needs to be modified. Let $C$ be a $\MLALCOud$ concept.
Denote by $C'$ the concept obtained from $C$ by adding to it
%all
the conjuncts of the following form,
%~\eqref{eq:rdainernal},
for individual names $a$ occurring in $C$:
%\nb{M: todo fix}
%\nb{removed one direction as no $\leftarrow$ in (4) \\ M: agreed}
%\nb{M: remove diamond formula?}
\begin{equation}
\label{eq:rdainernal}
\Box^{\pi} \forall u.(\{a\} \Rightarrow \Box_{i} \{a\}), \text{ for all } \pi \cdot i \in \rpathp(C, \{ a \}),
%\text{ for all } \pi \in \rpath(C, \{ a \}).
%\quad \Box^{\pi} \forall u.( \Diamond_{i} \{a\} \Rightarrow  \{a\}),
\end{equation}
%for all $\pi \in \rpath(C, \{ a \})$.
where $\rpathp(C, \{ a \})$ is the closure under taking prefixes of $\rpath(C, \{ a \})$.
%
%
%Note that $\rpathp(C') = \rpathp(C)$.
%\nb{R: explain why that is needed} 
%Let $C'$ be the concept obtained from $C$ by adding all the conjuncts of the form~(\ref{eq:rdainernal}) to it.
We show that $C$ is $\Cmc$-satisfiable with the
RDA iff $C'$ is $\Cmc$-satisfiable.
%\textcolor{red}{R: does the concept with the diamond work in expanding domains? or do we assume totality and then do not need the diamond formula? why did we use it before then?
%[M: Now I don't see anymore why it shouldn't work with expanding domain (also, it seems that totality is not really necessary). The point seems to be that domains can grow, but if a non-RDA interpretation satisfies~(\ref{eq:rdainernal}), then it enforces both rigidity and totality along the reachable worlds]}

%The $(\Rightarrow)$ direction is trivial.
For the $(\Rightarrow)$ direction, observe that any
%\nb{not necessarily total \\ M: ok}
interpretation $\Mmf$ with the 
RDA that satisfies $C$ also satisfies $C'$: all the conjuncts of the form~\eqref{eq:rdainernal} hold by the definition of the RDA.

For the $(\Leftarrow)$ direction, suppose that $\Mmf'$ is
a total
interpretation based on a frame from $\Cmc$ that satisfies $C'$ at world~$w$.
We define an interpretation $\Mmf$
%based on a frame from $\Cmc$ with the RDA
that coincides with $\Mmf'$,
except that, for every $v \in W$, we set $a^{\Imc_{v}} = a^{\Imc'_{w}}$ (observe that, since $\Mmf'$ is total, $a^{\Imc'_{w}}$ is defined for every $a$).
% Moreover, for every individual name $a$ occurring in $C$ and all worlds $v \in \rpathw(w, C, \{ a \})$,
%we set $a^{\Imc_{v}} = a^{\Imc'_{v}}$, if $a$ designates at $v$, and $a^{\Imc_{v}}$ undefined, otherwise.
%Finally, for all worlds $v' \not \in \bigcup_{D \in \sub{C}}\rpathw(w, C, D)$,
%we leave $a^{\Imc_{v'}}$ undefined.
% Thanks to the conjuncts of the form~\eqref{eq:rdainernal},
% we have by construction
% since either an individual name designates the same object at all worlds of $\Mmf$, or it does not designate at any world, 
Thus, $\Mmf$ is a total interpretation
satisfying the
RDA.

%Now, for a world $v \in W$ and a concept $C$, let $\rpathwp(v, C) = \bigcup_{D \in \sub{C}} \rpathw(v, C, D)$, where $\sub{C}$ is the set of subconcepts of $C$, be the set of all \emph{relevant worlds for $C$ and $v$}.
It remains to show the following:
%Thanks to the fact that $\rpath(C') = \rpath(C)$,
\begin{claim}\label{claim:total:rda}
$C^{\Imc'_{w}} = C^{\Imc_{w}}$.
\end{claim}

\begin{proof}
By induction on the structure of $C$: for each subconcept $B$ of $C$, we show that 
\begin{equation}\label{eq:total:rda:claim}
B^{\Imc'_{v}} = B^{\Imc_{v}}, \text{ for every } v \in \rpathw(w, C, B).
\end{equation} 

The base case of $B = A$, for a concept name $A$, is straightforward.

For the base case of $B = \{ a \}$, where $a$ is an individual name, both $\{ a \}^{\Imc'_{v}}$ and $\{ a\}^{\Imc_{v}}$ are defined as the interpretations are total.
In addition, by construction, $\{ a \}^{\Imc'_{v}}$ coincides with $\{ a \}^{\Imc'_{w}} = \{ a \}^{\Imc_{w}}$, both of which are also defined. On the other hand, $\{ a\}^{\Imc_{v}}$ is equal to $\{ a\}^{\Imc_{w}}$ 
due to conjuncts~\eqref{eq:rdainernal} applied along the path connecting $w$ to $v$.
Thus, we obtain $a^{\Imc_v}  = \{ a\}^{\Imc_v}$, for every $v\in \rpathw(w, C, \{ a \})$, as required.

For the induction step, we need to consider the following cases. 

If $B$ is of the form $\{ \defdes B'\}$ or $\neg B'$ or $\exists s.B'$, then~\eqref{eq:total:rda:claim} is immediate from the induction hypothesis as $\rpathw(w,C, B) \subseteq  \rpathw(w,C, B')$. 

If $B$ is of the form $B_1 \sqcap B_2$, then~\eqref{eq:total:rda:claim} is also immediate from the induction hypothesis as $\rpathw(w,C,B) \subseteq  \rpathw(w,C, B_1)\cap  \rpathw(w,C,B_2)$. 

Finally, if $B = \Diamond_i B'$, then, by the induction hypothesis, we have $(B')^{\Imc'_v} = (B')^{\Imc_v}$, for all $v\in \rpathw(w,C, B')$. Since
$\rpathw(w,C, \Diamond_i B') \subseteq \{ v \mid v R_{i} u \ \text{and} \ u \in \rpathw(w, C, B')\}$, we obtain  $(\Diamond_i B')^{\Imc'_v} = (\Diamond_i B')^{\Imc_v}$, for all $v\in \rpathw(w,C, \Diamond_i B')$, as required.

This completes the proof of Claim~\ref{claim:total:rda}.
\end{proof}

%Let $R_{\pi}(w)$ be the set of worlds that are reachable from $w$ along path $\pi = i_{1} \ldots i_{k}$, together with $w$ itself, defined inductively as follows:
%\begin{align*}
%	R_{i}(w) & = \{ w \} \cup \{ v \in W \mid w R_{i} v \} \\
%	R_{i\pi}(w) & = R_{i}(w) \cup \bigcup_{v \in R_{i}(w) \setminus \{ w \}} R_{\pi}(v)
%\end{align*}
%
%Add the following conjunct to a concept $C$, for every agent $i$ and every individual name $a$ occurring in it:
%  \[
%   \forall u. (\{ a \} \Rightarrow \Box_{i}^{\leq \md(C)} \{a\}) \sqcap \forall u. (\Diamond_{i}^{\leq \md(C)} \{a\} \Rightarrow \{ a \}).
%%  \Diamond_{i}^{\leq md(C)} \{a\} \Rightarrow \Box_{i}^{\leq md(C)} \{a\}.
%  \]
%  Moreover, given an ontology $\Omc$, add the following CI for every $i$ and every $a$ occurring in $\Omc$:
%  \begin{align*}
%  \top \sqsubseteq \ & \forall u. (\{ a \} \Rightarrow \Box_{i}^{\leq \md(\Omc)} \{a\}) \ \sqcap \\
%  							\ & \forall u. (\Diamond_{i}^{\leq \md(\Omc)} \{a\} \Rightarrow \{ a \})
%  \end{align*}
%  Let $\Omc'$ and $C'$ be the ontology and concept so obtained.
%  It holds that $C$ is satisfied under $\Omc$ in an interpretation based on a frame from $\Cmc$ with the RDA iff $C'$ is satisfied under $\Omc'$ in an interpretations baed on a frame from $\Cmc$ without the RDA.
It follows that $C$ is satisfied at $w$ in a
total
interpretation $\Mmf$ with the
RDA based on a frame in $\Cmc$.
  \end{proof}

Observe
%\nb{M: keep remark only here or also in main text?} 
that, for the reduction to hold in the global ontology case, an individual name $a$ does not have to designate at every world.
For instance, consider an
%expanding-domains
interpretation $\Mmf$
with $W = \{ w, v \}$, $wRv$, $\Delta = \{ d \}$,
%$\Delta^{v} = \{ d, e \}$
and such that $a$ does not designate in $w$ but $a^{\Imc_{v}} = d$.
%Clearly, $\Mmf$ satisfies the RDA but cannot be made total by assigning an interpretation of $a$ at $w$ without breaking the RDA.
In this example, $a$ is rigid and the CI of the form~\eqref{eq:rdacis} is satisfied in both $w$ and $v$:
$a$ does not designate at $w$, and $v$ has no $R$-successors.
%The same applies to an expanding domain variant of the example above, where $\Delta^{w} = \{ d \}$, $\Delta^{v} = \{ d, e \}$, $a$ does not designate in $w$, and $a^{\Imc_{v}} = e$.

%For the \emph{total} concept satisfiability case,
%we denote by $C'$ the concept obtained from $C$ by adding conjuncts $D_{C}$ of the form $\Box^{\pi}\forall u.(\{ a \} \Rightarrow \Box_{i} \{ a \})$, for $i\in I$, where $\Box^{\pi}$ is a suitable sequences of boxes (polynomial in the size of $C$) that captures the nesting of operators under which the nominal $a$ occurs in $C$.
%It can be seen that $C$ is $\Cmc$-satisfiable on total interpretations with the RDA iff $C'$ is $\Cmc$-satisfiable on total interpretations.

However, in the concept satisfiability case, we cannot enforce that any \emph{partial} interpretation satisfying at some world
$C'$,
i.e., 
$C$ and the additional conjuncts of the form~\eqref{eq:rdainernal}, can be transformed into a partial interpretation with the RDA that satisfies $C$, as witnessed by the following counterexample.
Let $C$ be the concept
\[
\forall u. \lnot \{ a \}  \sqcap \Box \exists u.\{ a \}, % \sqcap  \Diamond \lnot \{ a \} )
\]
%\[
%\lnot \exists u. \{ a \} \sqcap \Diamond \exists u.\{ a \} \sqcap \lnot \exists u. \Box \{ a \},
%\]
and let $C'$ be the conjunction of $C$ with the concept $\forall u.(\{ a \} \Rightarrow \Box \{ a \})$.
Consider a partial interpretation $\Mmf$ with constant domain such that $w R v_{i}$ and $v_{i} R u$, for $i = 1,2$, and $\Delta = \{ d, e \}$. Moreover, let $a$ be non-designating in $w$, while $a^{\Imc_{v_{1}}} = d$, $a^{\Imc_{v_{2}}} = e$, and $a^{\Imc_{u}} = d$. It can be seen that $\Mmf$ is a partial interpretation satisfying $C'$ in $w$. However, we cannot turn $\Mmf$ into a partial interpretation with the RDA that satisfies $C$.
%A similar counterexample applies also to the constant domain case, by taking $\Delta = \{ d, e \}$.
This is due to the fact that $C'$ is bounded on the levels of successors of $w$ that it can reach, whereas, under global ontology, the corresponding CI $\{ a \} \sqsubseteq \Box \{ a \}$ would hold at all worlds of the interpretation.

\redtotaltopartial*
\begin{proof}
%  We adapt the reduction~\cite{ArtEtAl21a} of
%  $\ALCOud$ ontology satisfiability and entailment from total to
%  partial interpretation to the case of $\MLALCOud$.
%Remark~\ref{rem:redtotpart}
%of Section~\ref{sec:observations}.
  
First, consider the case of satisfiability under global ontology: let
%$C$
$A$
be a concept
name
and $\Omc$ an ontology. Define $\Omc'$ by adding to $\Omc$ the CIs
\begin{equation}\label{eq:designate}
\top \sqsubseteq \exists u. \{ a \}
\end{equation}
for every $a$ occurring in $\Omc$.
%and $C$.
Trivially, every total interpretation $\Mmf$ satisfying
$A$
%$C$
under $\Omc$ also satisfies
$A$
%$C$
under $\Omc'$. Conversely, every  interpretation $\Mmf$ satisfying
$A$
%$C$
under $\Omc'$ can be easily extended to a total interpretation satisfying
$A$
%$C$
under $\Omc'$ (and so under $\Omc$), as it only remains to choose the interpretation of individual names not occurring in
%$C$ and
$\Omc$, which can be done in an arbitrary way. 

Now, consider the case of satisfiability: let $C$ be a concept. 
For the conjunction $C'$ of $C$ with concepts of the form
%\nb{simpler! \\ M: ok} 
%
\begin{equation*}
\Box^\pi \exists u.\{a\}, \text{ for all } \pi \in \rpath(C, \{ a \}),
\end{equation*}
for each $a$ occurring in $C$,
it can be seen that any total interpretation satisfying $C$ satisfies $C'$ too; conversely, any interpretation $\Mmf'$ satisfying $C'$ can be turned into a total interpretation $\Mmf$ satisfying $C$, by ensuring that every individual name designates at each world. The proof that $C^{\Imc'_{w}} = C^{\Imc_{w}}$ is inductive and relies on the fact that $\Mmf'$ and $\Mmf$ coincide on the interpretation of every individual name $a$ from $\Omc$ in all $v\in  \rpathw(w, C, \{ a \})$; see Claim~\ref{claim:total:rda} for a similar proof.
%One can easily show that
%the conjunction $C'$ of $C$ with concepts of the form $\Box^\pi \exists u.\{a\}$, for $\pi\in\bigcup_{D \in \sub{C}} \rpath(C, D)$, is
%as required.
%  
%  \textcolor{red}{
%  For a concept $C$,
%  let $\Box^{\leq \ast}$ denote the \ldots
%  }
%  We
%  define $C'$ 
%%
%for every $i\in I$ and every $a\in \NI$ occurring in $C$.
%Similarly, for an ontology $\Omc$, define $\Omc'$ by adding the CIs
%\[
%\top \sqsubseteq
%\exists u. \{ a \}
%\]
%to $\Omc$, for every $i$ and every $a$ occurring in $\Omc$.
%
%Given an ontology $\Omc$ and a concept $C$, it can be seen that
%$C$ is $\Cmc$-satisfiable under \Omc in \emph{total} interpretations
%iff $C'$ is $\Cmc$-satisfiable under $\Omc'$ in \emph{partial} interpretations.
\end{proof}

\redpartialtototal*
\begin{proof}
  For satisfiability under global ontology, let $A$ be a concept name
  and $\Omc$ an ontology. Let $\Omc'$ be the ontology obtained from $\Omc$ by replacing every nominal~$\{ a \}$ in~$\Omc$ with a fresh concept name
$N_{a}$,
%  Define $\Omc''$ by
%  adding to $\Omc'$
and by adding the CIs
$N_a \sqsubseteq \{ a \}$,
%$A_b \sqsubseteq \{a_b\}$,
%  \nb{R: replaced $\top \sqsubseteq \forall u. (A_{b} \Rightarrow \{ a_{b} \})$; It would work for EL or for the language with $\iota$ alone}
%  where
%  $a_{b}$ is a fresh individual name,
  for all individual names $a$
  occurring in $\Omc$.
  It can be seen that $A$ is $\Cmc$-satisfiable
  under~$\Omc$ iff $A$ is $\Cmc$-satisfiable under~$\Omc'$ in total
  interpretations.
  
$(\Rightarrow)$ Given an interpretation $\Mmf$ such that $\Mmf \models \Omc$ and $A^{\Imc_{w}} \neq \emptyset$, for some world $w \in W$, define the total interpretation
$\Mmf'$ defined as $\Mmf$, except the following, for every $w \in W$:
\begin{itemize}
	\item $N_a^{\Imc'_{w}} = \{ a \}^{\Imc_{w}}$, for $a$ that occurs in $\Omc$;
	\item $a^{\Imc'_{w}} = a^{\Imc_{w}}$, if $a$ designates at $w$ in $\Mmf$; and $a^{\Imc'_{w}}$ is arbitrary, otherwise, for any individual name $a$.
%	\nb{we define $a_b$ here, right? \\ M: yes, is that right?}
\end{itemize}
It can be seen that
$\Mmf' \models \Omc'$ and $A^{\Imc'_{w}} \neq \emptyset$.

$(\Leftarrow)$
Given a  total interpretation $\Mmf'$ such that $\Mmf' \models \Omc'$ and $A^{\Imc'_{w}} \neq \emptyset$, for some world $w \in W$, we define an interpretation $\Mmf$ that coincides with $\Mmf'$, except for the following, for every $w \in W$:
\begin{itemize}
\item $a^{\Imc_{w}} = d$, if $a$ occurs in $\Omc$ and $N_{a}^{\Imc'_{w}} = \{ d \}$, for some $d \in \Delta^{w}$; and $a$ fails to designate at $w$ in $\Mmf$, otherwise.
%\item $\defdes B^{\Imc_{w}} = d$, if $B^{\Imc'_{w}} = \{ d \}$, for some $d \in \Delta^{w}$; and $\defdes B$ non-denoting at $w$ of $\Imc$, otherwise.
\end{itemize}
It can be seen that $\Mmf \models \Omc$ and $A^{\Imc_{w}} \neq \emptyset$.

For the concept satisfiability problem, the reduction above is modified similarly to the proof of Proposition~\ref{lemma:redtotaltopartial}: let $C'$ be the conjunction result of replacing each $\{a\}$ with $N_a$ in $C$ and the following additional conjuncts
%\nb{simpler \\ M: ok}
%
\begin{equation*}
\Box^{\pi}\forall u. (N_{a} \Rightarrow \{ a \}), \text{ for all  } \pi \in \rpath(C, \{ a\}).
\end{equation*}
%
%$\forall u. (A_{b} \Rightarrow \{ a_{b} \})$,
%prefixed with sequences of boxes
%for the paths in $\$.
%\nb{R: For EL, can we not just replace each $b$ with $A_b \sqcap \{ a_b \}$?}
%Substitute every individual name $a$ occurring in an ontology $\Omc$ and in a concept $C$ with a fresh concept name $B_{a}$. Then, add to $\Omc$ the CI $\top \sqsubseteq \Box_{i}^{\leq \md(\Omc)} \forall u. (B_{a} \Rightarrow \{ b_{a} \})$, and to $C$ the conjunct $ \Box_{i}^{\leq \md(C)} \forall u. (B_{a} \Rightarrow \{ b_{a} \})$, 
%for a fresh individual name~$b_{a}$. Let $\Omc'$ and $C'$ be the ontology and the concept so obtained.
%
It can be seen that any interpretation $\Mmf$ satisfying $C$ gives rise to a total interpretation $\Mmf'$ satisfying $C'$ by using the extension of $\{a\}^{\Imc_v}$ to interpret both $a$ and $N_a$ at any $v$ in $\Mmf'$ and arbitrarily choosing the value of $a$ at $v$ in $\Mmf'$ if it is undefined in $\Mmf$. The result clearly satisfies $C'$. Conversely, the additional conjuncts guarantee that $N_a$ behaves like a nominal term (its extension contains at most one element at every world) and so $C$ is satisfied in an interpretation $\Mmf$ obtained from any interpretation $\Mmf$ by redefining each $a$ as the corresponding $N_a$ (which may give a partial interpretation, of course). Again, the proof that $C^{\Imc'_{w}} = C^{\Imc_{w}}$ is inductive and relies on the fact that $\Mmf'$ and $\Mmf$ coincide on the interpretation of every individual name $a$ from $\Omc$ in all $v\in  \rpathw(w, C, \{ a \})$; see Claim~\ref{claim:total:rda} for a similar proof.
\end{proof}

%%% PROOFS ABOVE FAIL WITH RDA (cf. "fork" counterexample with constant/expanding domains {d, (e)}^w, {d, e}^v_i, and such that a is not designating at w, while a^v_1 = d and a^v_2 = e -- this is a partial interpretation with the RDA (satisfying C) that cannot be turned into a total interpretation with the RDA  (satisfying C'))
%{\color{blue}{
%\nb{M: check}
%Observe that the proof above fails with the RDA. To see this, consider a partial interpretation $\Mmf$ with worlds $w R v_{i}$, for $i = 1, 2$, constant (and hence expanding) domain $\Delta = \{ d, e \}$, and so that $a$ is not designating at $w$, while $a^{\Imc_{v_{1}}} = d$ and $a^{\Imc_{v_{2}}} = e$. It can be seen that $\Mmf$ is a partial interpretation with the RDA
%that can satisfy a concept (under global ontology), but
%that cannot be turned into a total interpretation, by interpreting $a$ in $w$, without breaking the RDA.
%}}

%More precisely,
An ontology is in \emph{normal form} if it consists of CIs of the form
\begin{align*}
A & \sqsubseteq \{ a \},  &  \{a\} & \sqsubseteq A,\\
A & \sqsubseteq \{ \iota A_1 \},  &  \{\iota A_1 \} & \sqsubseteq A,\\
A & \sqsubseteq \neg A_1,  &  \neg A_1 & \sqsubseteq A,\\
A & \sqsubseteq A_1\sqcap A_2, & A_1 \sqcap A_2 & \sqsubseteq A,\\
A & \sqsubseteq \exists s.A_1, & \exists s.A_1  & \sqsubseteq A,\\
A & \sqsubseteq \Diamond_i A_1, & \Diamond_i A_1  & \sqsubseteq A,
\end{align*}
where $A$, $A_1$ and $A_2$ are concept names, $s\in \NR \cup  \{u \}$ and $i\in I$.

\ontologynormalform*
\begin{proof}
Let $C$ be a subconcept in $\Omc$, and let $\Omc[C/A]$ be the result of replacing every occurrence of~$C$ in $\Omc$ with a fresh concept name $A$. Clearly, $\Omc[C/A]\cup \{ C \equiv A\}$ is a model conservative extension of $\Omc$. By repeated application of this procedure, starting from innermost connective first, we obtain in polynomial time an ontology $\Omc'$ in normal form having the same set of connectives as $\Omc$.
\end{proof}

\normalformconc*
\begin{proof}
%Indeed,
Let $\Mmf$ be an interpretation satisfying $D$. We define an interpretation $\Mmf'$ that extends $\Mmf$ with the interpretation of $A$ by taking $A^{\Imc_w} = C^{\Imc_w}$. Clearly, $D'$ is satisfied in~$\Mmf'$. Conversely, suppose $D'$ is satisfied at $w$ in an interpretation~$\Mmf'$. The additional conjuncts ensure that $A^{\Imc'_v} = C^{\Imc'_v}$, for all $v\in \rpathw(w, D, C)$. In other words, we have $C[C/A]^{\Imc'_v} = C^{\Imc'_v}$, for all $v\in \rpathw(w, D, C)$.  We show
\begin{claim}
$D[C/A]^{\Imc'_w} = D^{\Imc'_w}$.
\end{claim}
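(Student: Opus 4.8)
The plan is to prove the Claim by induction on the structure of $D$, establishing the stronger statement that
$B[C/A]^{\Imc'_v} = B^{\Imc'_v}$ for every subconcept $B$ of $D$ and every $v \in \rpathw(w, D, B)$; the Claim itself is then the instance $B = D$, $v = w$, since $\rpathw(w, D, D) = \{ w \}$. The argument runs parallel to the proof of Claim~\ref{claim:total:rda}, the difference being that here $\Mmf'$ is a single interpretation and I compare the extension of a subconcept $B$ with that of its surrogate-substituted variant $B[C/A]$, rather than comparing two interpretations. The only facts about $\Mmf'$ I would use are the already-established invariant $A^{\Imc'_v} = C^{\Imc'_v}$ for all $v \in \rpathw(w, D, C)$ (guaranteed by the conjuncts~\eqref{eq:normalformconj}), together with the inclusions between the sets $\rpathw(w, D, \cdot)$ recorded in Appendix~\ref{app:reductions}.

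For the base cases I would argue as follows. If $B = C$, then $B[C/A] = A$ and $\rpathw(w, D, B) = \rpathw(w, D, C)$, so the required equality is exactly the invariant above; if $B$ is a concept name or a nominal distinct from $C$, then $B[C/A] = B$ and there is nothing to prove. For the inductive step I may assume $B \neq C$, so the substitution commutes with the outermost constructor and $B[C/A]$ has the same shape as $B$ with $[C/A]$ applied to the immediate subconcepts. The Boolean cases ($\neg B'$, $B_1 \sqcap B_2$) and the existential cases ($\exists s.B'$ with $s \in \NR \cup \{u\}$) follow immediately from the induction hypothesis and the inclusions $\rpathw(w, D, B) \subseteq \rpathw(w, D, B')$ (respectively $\subseteq \rpathw(w, D, B_1) \cap \rpathw(w, D, B_2)$), because each of these extensions at a world $v$ is determined from the extensions of the immediate subconcepts at the same $v$. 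The definite-description case $\{ \defdes B' \}$ is analogous: the induction hypothesis gives $(B')^{\Imc'_v} = (B'[C/A])^{\Imc'_v}$ for every $v \in \rpathw(w, D, \{\defdes B'\}) \subseteq \rpathw(w, D, B')$, and the extension of $\{\defdes B'\}$ at $v$ is a function of $(B')^{\Imc'_v}$ alone (the singleton of its unique element if $(B')^{\Imc'_v}$ is a singleton, and empty otherwise), so the two extensions coincide.

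The one case requiring genuine care, and the main obstacle, will be $B = \Diamond_i B'$, since $(\Diamond_i B')^{\Imc'_v}$ quantifies over \emph{all} $R_i$-successors $u$ of $v$, whereas the induction hypothesis only controls $B'$ at worlds in $\rpathw(w, D, B')$. The point that resolves this is that the relevant-path bookkeeping is exactly tight: if $v \in \rpathw(w, D, \Diamond_i B')$ is reached from $w$ along a path $\pi \in \rpath(D, \Diamond_i B')$, then $\pi \cdot i \in \rpath(D, B')$, so every $u$ with $v R_i u$ is reached from $w$ along $\pi \cdot i$ and hence lies in $\rpathw(w, D, B')$; this is the universal reading of the inclusion $\rpathw(w, D, \Diamond_i B') \subseteq \{ v \mid v R_i u \ \text{and}\ u \in \rpathw(w, D, B')\}$. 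Thus the induction hypothesis applies to every successor that contributes to $(\Diamond_i B')^{\Imc'_v}$, giving $(B'[C/A])^{\Imc'_u} = (B')^{\Imc'_u}$ for all such $u$, and therefore $(\Diamond_i (B'[C/A]))^{\Imc'_v} = (\Diamond_i B')^{\Imc'_v}$, as needed. Taking $B = D$ and $v = w$ then yields the Claim. It is worth stressing that this modal case is precisely where restricting the conjuncts~\eqref{eq:normalformconj} to the relevant paths $\pi \in \rpath(D, C)$ (rather than to all paths of length up to $\md(D)$) is exactly what is needed: the invariant $C \Leftrightarrow A$ must hold only at the worlds that the diamonds above $C$ can actually reach.
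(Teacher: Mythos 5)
Your proof is correct and takes essentially the same route as the paper's: the same strengthened induction hypothesis $B[C/A]^{\Imc'_v} = B^{\Imc'_v}$ for all $v \in \rpathw(w, D, B)$, the same use of the conjunct-enforced invariant $A^{\Imc'_v} = C^{\Imc'_v}$ on $\rpathw(w, D, C)$ as the base case, and the same path-inclusion bookkeeping for the Boolean, existential and definite-description cases. Your explicit universal reading of the successor inclusion in the case $B = \Diamond_i B'$ (every $R_i$-successor of a world in $\rpathw(w, D, \Diamond_i B')$ lies in $\rpathw(w, D, B')$, since $\pi \in \rpath(D, \Diamond_i B')$ implies $\pi \cdot i \in \rpath(D, B')$) is precisely the argument the paper states tersely, so no gap remains.
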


\begin{proof}
We prove this  by induction on the structure of $D$ that, for all subconcepts $B$ of $D[C/A]$, we have 
\begin{equation}\label{eq:normal-form-conj:claim}
B[C/A]^{\Imc'_v} = B^{\Imc'_v}, \text{ for all } v\in \rpathw(w,D, B). 
\end{equation}
For the basis of induction, we need to consider two cases.
If $B$ is either a concept name different from $A$ or a term nominal $\{a\}$, then~\eqref{eq:normal-form-conj:claim} is immediate as $B[C/A]$ coincides with $B$,
If $B$ is $A$, then~\eqref{eq:normal-form-conj:claim} is by construction. 

For the induction step, we need to consider the following cases. 
If $B$ is of the form $\{ \defdes B'\}$ or $\neg B'$ or $\exists s.B'$, then~\eqref{eq:normal-form-conj:claim} is immediate from the induction hypothesis as $\rpathw(w,D, B) \subseteq  \rpathw(w,D, B')$. 
If $B$ is of the form $B_1 \sqcap B_2$, then~\eqref{eq:normal-form-conj:claim} is also immediate from the induction hypothesis as $\rpathw(w,D, B) \subseteq  \rpathw(w,D, B_1)\cap  \rpathw(w,D, B_2)$. Finally, if $B = \Diamond_i B'$, then, by the induction hypothesis, we have $B'[C/A]^{\Imc'_v} = (B')^{\Imc'_v}$, for all $v\in \rpathw(w,D, B')$. Since
$\rpathw(w, D, \Diamond_i B) \subseteq \{ v \mid v R_{i} u \ \text{and} \ u \in \rpathw(w, D, B)\}$, we obtain  $(\Diamond_i B')[C/A]^{\Imc'_v} = (\Diamond_i B')^{\Imc'_v}$, for all $v\in \rpathw(w,D, \Diamond_i B')$, as required.
\end{proof}

Also by induction on the structure of $D$, it can be seen that $\rpath(D', A) = \rpath(D, C)$, for any subconcept $C$ of $D$, and $\rpath(D',E) = \rpath(D,E)$, for any subconcept $E$ of $C$.
\end{proof}

\spypointreduction*
\begin{proof}
First, we consider the positive occurrences of the universal role.
Suppose $B \sqsubseteq \exists u.B'$ is satisfied at all $w$ in some interpretation $\Mmf$. Then we extend $\Mmf$ to $\Mmf'$ by interpreting the fresh role $r$ as the universal role at all $w$. Then $B \sqsubseteq \exists r.B'$ is satisfied at all $w$ in $\Mmf'$. 

Conversely, suppose $B \sqsubseteq \exists r.B'$ is satisfied at all $w$ in some $\Mmf'$. Then, clearly, $B \sqsubseteq \exists u.B'$ is satisfied in $\Mmf'$ as well. 

Second, we consider the negative occurrences of the universal role.
Suppose that $\exists u.B \sqsubseteq B'$ is satisfied at all $w$ in some interpretation $\Mmf$. Then we extend $\Mmf$ to $\Mmf'$ by interpreting the fresh role $r$ as the universal role at all $w$, choosing one arbitrary element as  the interpretation of the fresh nominal $s$, and, for every $w\in W$,  including $e^{\Imc'_w}$ in $A^{\Imc'_w}$   if $B^{\Imc_w} = \emptyset$ and leaving $A^{\Imc'_w}$ empty otherwise. We show that the four CIs are satisfied at all $w$ in $\Mmf'$. The first and second CIs are satisfied by consruction. If $(\neg B')^{\Imc'_w} =\emptyset$, then the third CI  is trivially satisfied at $w$ in $\Mmf'$. So, assume that $(\neg B')^{\Imc'_w} \ne\emptyset$. As $\exists u.B \sqsubseteq B'$ is satisfied at $w$ in $\Mmf$, we have $B^{\Imc'_w} = \emptyset$ and so the third CI is also satisfied in $\Mmf'$ because, by construction, $e^{\Imc'_w} \in A^{\Imc'_w}$. To show that the last CI is also satisfied at $w$ in $\Mmf'$, suppose that $e^{\Imc'_w}\in A^{\Imc'_w}$. Then, by construction, $B^{\Imc_w}  = \emptyset$ and so every domain element is in $(\neg B)^{\Imc'_w}$, as required by the fourth CI.

Conversely, suppose the four CIs are satisfied at all $w$ in some $\Mmf'$. By the first CI, $(d, e^{\Imc'_w})\in r^{\Imc'_w}$, for every element $d$ of the domain. Assume first that $ B^{\Imc'_w}\ne\emptyset$.  Take any $d \in B^{\Imc'_w}$. By the fourth CI, $e^{\Imc'_w} \notin A^{\Imc'_w}$. Therefore, by the third CI, every element of the domain is in $(B')^{\Imc'_w}$ for otherwise $e^{\Imc'_w} \in A^{\Imc'_w}$. Thus, $\exists u.B \sqsubseteq B'$ is satisfied at all $w$ in $\Mmf'$.
\end{proof}

\nomnonrdatodefdes*
\begin{proof}
We first consider the case of satisfiability under global ontology.
Given a concept name $A$ and an $\MLALCO$ ontology $\Omc$, take a fresh concept name $N_a$ for each individual name $a$ in $\Omc$, and let $\Omc'$ be the result of replacing every occurrence of $\{ a \}$ in
$\Omc$
with~$\{ \defdes N_a \}$. 
%Denote by
 %the $\MLALCd$
%ontology so obtained.
%respectively.
%By induction on the construction of
%%$C$ and
%$\Omc$,
It can be seen that,
%and $w$ in $\Mmf$,
if
$A$ is satisfied in an interpretation $\Mmf$ under $\Omc$,
%$\Mmf \models \Omc$ and $C^{\Imc_{w}} \neq \emptyset$
%iff
then
$A$ is satisfied in $\Mmf'$ under $\Omc'$,
%$\Mmf' \models \Omc'$ and $C'^{\Imc'_{w}} \neq \emptyset$,
where $\Mmf'$ is obtained from $\Mmf$ by setting, for every $w$ in $\Mmf$, $N_a^{\Imc'_{w}} = \{ a^{\Imc_{w}} \}$, if $a$ designates at $w$ in $\Mmf$, and  $N_a^{\Imc'_{w}} = \emptyset$, otherwise; note that the interpretation of nominals  can be chosen arbitrarily as $\Omc'$ does not contain them.
%$\Mmf, w \models \p$ iff $\Mmf', w \models \p'$, where $\Mmf'$ is obtained from $\Mmf$ by setting $A_{a}^{\Imc'_{w}} = \{ a^{\Imc_{w}} \}$, if $a$ denotes in $\Imc_{w}$, and  $A_{a}^{\Imc'_{w}} = \emptyset$, otherwise.
Conversely, if
$A$ is satisfied in an interpretation $\Mmf'$ under $\Omc'$, then $A$ is satisfied in $\Mmf$ under $\Omc$, where $\Mmf$ is obtained from $\Mmf'$ by setting $a^{\Imc_{w}} = (\defdes N_a)^{\Imc'_{w}}$, for every $w$ in $\Mmf$ and every $a$ occurring in $\Omc$; note that the resulting interpretation is not necessarily total and does not necessarily satisfy the RDA.

The same construction and argument work for the satisfiability problem too.
\end{proof}

\redmludtomlu*
\begin{proof}
%Similar to the proof of~\cite[Lemma~1]{ArtEtAl21a},
%  which shows that satisfiability of $\ALCOud$-formulas  is polynomial-time reducible to satisfiability of  $\ALCOu$ formulas.
%  \nb{M: todo write explicitly the reduction in the modal case}
By Propositions~\ref{lemma:redpartialtototal} and~\ref{lemma:redtotaltopartial}, it is sufficient to reduce $\MLALCOud$ \emph{total} concept $\Cmc$-satisfiability \textup{(}under global ontology\textup{)} to \emph{total} $\MLALCOu$ concept $\Cmc$-satisfiability \textup{(}under global ontology, respectively\textup{)}.

First, consider the case of satisfiability under global ontology: let $A$ be a concept name and $\Omc$ an $\MLALCOud$ ontology.  By Lemma~\ref{lemma:ontology:normal-form}, we assume that $\Omc$ is in normal form and, in particular,  (\emph{a}) all definite descriptions are applied to concept names only and (b) term nominals occur only in CIs of the form
$A_{\{\tau\}} \equiv\{ \tau \}$,
%\end{equation*}
% 
for $A_{\{\tau\}} \in \NC$ and terms $\tau$ that are either $b \in \NI$ or $\defdes B$, for $B\in \NC$.
%
%An $\MLALCOud$ ontology $\Omc$ is in \emph{normal form} if it is in flattened form and  all the CIs in $\Omc$ are either of the form
%$E \sqsubseteq F$, where $E, F$ are
%$\ML_{\ALC_{u}}$
%(i.e., $\MLALC$ with the universal role) concepts,
%or
%$\{ \tau \} \sqsubseteq A$, or
%$A \sqsubseteq \{ \tau \}$,
%with $A\in\NC$.
%%and $\tau$ either an individual name or of the form $ \defdes B $, where $B\in\NC$. 
%%Indeed,
%%given an $\Lmc$ ontology $\Omc''$ with $C \sqsubseteq D \in \Omc''$,  we can obtain an
%%equisatisfiable $\Lmc$ ontology by substituting all nominals
%%$\{ \tau \}$ occurring in $C \sqsubseteq D$ with concept names
%%$A_{\tau}$, and taking the conjunction of the resulting $\Lmc^{\mathsf{c}}$ CI with
%%the CEs $A_{\tau} \equiv \{ \tau \}$.
%%An ontology in this format is said to be in \emph{normal form}.
%It can be seen that an $\MLALCOud$ ontology can be transformed in linear time into an $\MLALCOud$ ontology in normal form that is a model conservative extension of the original.
%one.
%Thus, in the rest of this section, we assume w.l.o.g. that $\ALCOud$ ontologies are in normal form.
%\textcolor{red}{We similarly define the notion of \emph{normal form} for a concept $\MLALCOud$ concept $C$.}\nb{M: todo write}
%
We then define a translation of an $\MLALCOud$ ontology $\Omc$ in normal form into an $\MLALCOu$ ontology $\Omc\red$. 
Let $\Omc\red$ be the result of replacing each $A_{\{ \defdes  B \}} \equiv \{\defdes B\} $ in $\Omc$ with CIs of the form
%
%\begin{equation*}
%A_{\{b\}} \sqsubseteq \{b\}, 
%\end{equation*}
%%
%for all  $b\in\NI$ in $\Omc$, and
\begin{equation*}
A_{\{\defdes B\}} \sqsubseteq B\sqcap \{a_B \} \text{ and } B \sqcap \forall u.(B\Rightarrow \{a_B\}) \sqsubseteq A_{\{\defdes B\}}, 
%\}.
\end{equation*}
%
%for all $\defdes B$ in $\Omc$,
where $a_B$ is a fresh nominal. %, for all term nominals $\tau$ in~$\Omc$.
%\[
%  \p\tr = \p\red \land \bigwedge_{\{\tau\} \in \con{\p}}
%%  \big(
%  \{\tau\}^+ \sqsubseteq
%  \forall u. (\{ a_{\tau} \} \Rightarrow \{ \tau \}^+).
%%  \big).
%%  \p\tr = \p\red \land \bigwedge_{\{\tau\} \in \con{\p}}
%%  \big(\exists u.\{\tau\}^+ \sqsubseteq
%%  \forall u. (\{ a_{\tau} \} \Rightarrow \{ \tau \}^+) \big).
%\]
%
%%
%
%
%\textcolor{red}{For a concept $\MLALCOud$ concept $C$, its translation into a normal form concept $C^{\ast}$ is defined as follows.}\nb{M: todo write}
%
%
We then show the following.
\begin{claim}
\label{cla:alcodtoalcou}
$\Omc\red$ is a model conservative extension of $\Omc$.
\end{claim}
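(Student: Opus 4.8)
The plan is to verify the two defining conditions of a model conservative extension directly, after localising everything to a world-by-world comparison. By Lemma~\ref{lemma:ontology:normal-form} I may assume $\Omc$ is in normal form, so every $\defdes$ occurs only inside a defining CI $A_{\{\defdes B\}} \equiv \{\defdes B\}$ with $B$ a concept name, and $\Omc\red$ keeps all remaining CIs verbatim while the nominals $a_B$ are fresh (hence absent from $\Omc$). Consequently both conditions reduce to relating $A_{\{\defdes B\}}^{\Imc_w}$ to $\{\defdes B\}^{\Imc_w}$ at each world $w$, for each definite description $\defdes B$ occurring in $\Omc$, using the two facts that $\{\defdes B\}^{\Imc_w} = \{d\}$ precisely when $B^{\Imc_w} = \{d\}$ (and $\{\defdes B\}^{\Imc_w} = \emptyset$ otherwise), and that $A_{\{\defdes B\}}^{\Imc_w}$ carries at most one element by the first new CI.

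For the expansion condition (turning a model $\Mmf \models \Omc$ into a model of $\Omc\red$ while fixing every symbol of $\sig{\Omc}$) I would leave $\Mmf$ unchanged on $\sig{\Omc}$ and interpret each fresh nominal by $a_B^{\Imc_w} := d$ whenever $B^{\Imc_w} = \{d\}$ is a singleton, and arbitrarily otherwise. Since $\Mmf \models \Omc$ already gives $A_{\{\defdes B\}}^{\Imc_w} = \{\defdes B\}^{\Imc_w}$, a case split on $\lvert B^{\Imc_w}\rvert$ closes this direction: if $B^{\Imc_w} = \{d\}$ both new CIs collapse to $\{d\} \subseteq \{d\}$; if $B^{\Imc_w}$ is empty or has at least two elements, then $\forall u.(B \Rightarrow \{a_B\})$ is empty at $w$ (a single named point cannot absorb two distinct $B$-elements), so the second CI holds vacuously and the first holds because $A_{\{\defdes B\}}^{\Imc_w} = \emptyset$. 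The inherited CIs are insensitive to the choice of the $a_B$, so nothing else needs checking.

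The soundness condition (every $\Mmf \models \Omc\red$ satisfies $\Omc$) is where I expect the real difficulty, since it requires the two surrogate CIs to \emph{entail} the defining equivalence at every $w$. The first CI yields $A_{\{\defdes B\}}^{\Imc_w} \subseteq B^{\Imc_w} \cap \{a_B\}^{\Imc_w}$, so the surrogate holds at most the named point $a_B^{\Imc_w}$ and only when it lies in $B$; the second CI injects that point into the surrogate exactly when $B^{\Imc_w} \subseteq \{a_B\}^{\Imc_w}$, i.e.\ when $B$ is the singleton named by $a_B$. The delicate point is reconciling this with $\{\defdes B\}^{\Imc_w}$, whose value is a singleton whenever $B^{\Imc_w}$ is, \emph{regardless of where $a_B$ points}: the lower bound supplied by the second CI is active only under the coordination $B^{\Imc_w} = \{a_B^{\Imc_w}\}$, so the argument must pin down the interplay between $a_B$ and the unique $B$-element rather than doing mere Boolean bookkeeping. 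I would isolate this as a small lemma treating the singleton case (leaning on totality of the interpretation and the universal-role conjunct to force $a_B$ onto the $B$-element), and then assemble the global claim by ranging the local equality over all worlds and all definite descriptions of $\Omc$; this coordination step, and its propagation through the other CIs in which $A_{\{\defdes B\}}$ occurs, is the crux on which the whole reduction rests.
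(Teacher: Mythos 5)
Your expansion direction is exactly the paper's proof: interpret $a_B$ at each $w$ as the unique element of $B^{\Imc_w}$ when that set is a singleton, arbitrarily otherwise, and split on $|B^{\Imc_w}|$; that part is correct and complete. The genuine gap is the ``small lemma'' you defer to in the soundness direction: nothing in $\Omc\red$ forces $a_B$ onto the unique $B$-element. Totality only guarantees that $a_B$ designates \emph{somewhere}, and the universal-role conjunct $\forall u.(B \Rightarrow \{a_B\})$ sits in the \emph{antecedent} of the second CI, so it imposes no constraint on $a_B$ whatsoever --- it merely switches the CI off when the coordination fails. Concretely, take $\Delta^w = \{d,e\}$, $B^{\Imc_w} = \{d\}$, $a_B^{\Imc_w} = e$ and $A_{\{\defdes B\}}^{\Imc_w} = \emptyset$: the first CI holds vacuously, the second holds because $(\forall u.(B \Rightarrow \{a_B\}))^{\Imc_w} = \emptyset$, yet $\{\defdes B\}^{\Imc_w} = \{d\} \not\subseteq A_{\{\defdes B\}}^{\Imc_w}$. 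Dually, with $B^{\Imc_w} = \{d,e\}$, $a_B^{\Imc_w} = d$ and $A_{\{\defdes B\}}^{\Imc_w} = \{d\}$, both CIs hold while $A_{\{\defdes B\}}^{\Imc_w} \not\subseteq \{\defdes B\}^{\Imc_w} = \emptyset$. So the pointwise equivalence $A_{\{\defdes B\}}^{\Imc_w} = \{\defdes B\}^{\Imc_w}$ is not entailed by the two CIs, and the lemma you propose cannot be proved as stated.

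To be fair to you, this is precisely the direction the paper's own proof dismisses with ``clearly'' (it details only the expansion), so your instinct that the converse is the crux --- and not ``mere Boolean bookkeeping'' --- is sound; what you should conclude, though, is that the crux cannot be discharged from the two printed CIs at all. What is missing is a witness axiom coordinating $a_B$ with $B$, in the style of $\Omc^{\textsf{wt}}_B$ in the proof of Theorem~\ref{th:diff}~(2): a CI such as $B \sqsubseteq \exists u.(\{a_B\} \sqcap B)$ forces $a_B^{\Imc_w} \in B^{\Imc_w}$ whenever $B^{\Imc_w} \neq \emptyset$, which settles the singleton case, and the second countermodel additionally requires a trigger such as $\exists u.(B \sqcap \neg\{a_B\}) \sqsubseteq \neg A_{\{\defdes B\}}$, emptying the surrogate when $B$ has at least two elements. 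With these additions your localized lemma (and hence the claim) goes through, and your expansion construction still works verbatim; without them, any attempt at the first half of the claim must fail on the models above.
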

\begin{proof}
Clearly, for any interpretation $\Mmf\red$ with $\Mmf\red \models \Omc\red$, we have $\Mmf\red \models \Omc$ as well.

Conversely, given a total interpretation $\Mmf$ based on a frame in $\Cmc$  such that $\Mmf \models \Omc$,
%and $A^{\Imc_{w}} \neq \emptyset$, for some world $w \in W$,
we define the total interpretation
$\Mmf^{\ast}$ by extending $\Mmf$, for every $w \in W$, with 
\begin{itemize}
%	\item $A_{\{b\}}^{\Imc^{\ast}_{w}} = \{ b \}^{\Imc_{w}}$;
%	\item $b^{\Imc^{\ast}_{w}} = b^{\Imc_{w}}$, if $b$ is denoting at $w$ of $\Imc$; and $b^{\Imc^{\ast}_{w}}$ arbitrary, otherwise;
	\item
%	{\color{blue}{$A_{\{ \defdes B \}}^{\Imc^{\ast}_{w}} = B^{\Imc_{w}}$ and}}\nb{removed because A_{\{ \defdes B \}} already appears in normal form ontology}
	 $a_{B}^{\Imc^{\ast}_{w}} = d$, if $B^{\Imc_{w}} = \{ d \}$, for some $d \in \Delta^{w}$; and 
%	 {\color{blue}{$A_{\{ \defdes B \}}^{\Imc^{\ast}_{w}} = \emptyset$ and}}
	  $a_{B}^{\Imc^{\ast}_{w}}$ is arbitrary, otherwise.
\end{itemize}
It can be seen that
$\Mmf^{\ast} \models \Omc^{\ast}$.
\end{proof}

Hence, $A$ is satisfied under $\Omc$ in a total interpretation based on a frame from $\Cmc$ iff $A$ is satisfied under $\Omc\red$ in a total interpretation based on a frame from $\Cmc$.

Second, we consider the case of concept satisfiability.
Let $C$ be an $\MLALCOud$ concept.
By repeatedly applying  Lemma~\ref{lem:normalformconc}, we transform $C$ into normal form, where, in particular, each subconcept of $C$ of the form $\{\defdes B\}$, where $B$ is a concept name, has a surrogate $A_{\{ \defdes B \}}$ and corresponding conjuncts of the form~\eqref{eq:normalformconj}.
%Let $C'$ be the result of replacing each occurrence of $\{\defdes B\}$ in $C$ with $A_{\{\defdes B\}}$. 
Consider now the  $\MLALCOu$  concept $C\red$ obtained from the normal form of $C$ by replacing
%$\Box^\pi\forall u.(A \equiv \{\defdes B\})$
$\Box^\pi\forall u.(A_{\{ \defdes B \}} \equiv \{\defdes B\})$ with the following
\begin{align*}
%& \Box^\pi \forall u.(A_{\neg B} \Leftrightarrow \neg B), && \text{ if } D = \neg B,\\
%& \Box^\pi \forall u.(A_{B_1 \sqcap B_2} \Leftrightarrow B_1 \sqcap B_2), && \text{ if } D = B_1 \sqcap B_2,\\
%& \Box^\pi \forall u.(A_{\exists s.B} \Leftrightarrow \exists s.B), && \text{ if } D = \exists s.B,\\
%& \Box^{\pi'} \forall u.(A_{\Diamond_i B} \Leftrightarrow \Diamond_i B), \text{ where }  \pi = \pi' \cdot i, && \text{ if } D = \Diamond_i B,\\
%& \Box^\pi \forall u.(A_{\{ b \}} \Rightarrow \{ b \}), && \text{ if } D = \{ b \},\\
& \Box^\pi \forall u.(A_{\{\defdes B\}} \Rightarrow B \sqcap \{a_B \}) \ \sqcap \\
& \Box^\pi\forall u.(B \sqcap \forall u.(B \Rightarrow \{a_B\}) \Rightarrow A_{\{\defdes B\}})
\end{align*}
where $a_B$ is a fresh nominal for $B$. Recall that $\pi$  range over all paths in $\rpath(C, \{\defdes B\})$.
%
%
%It can be seen that $\rpath(C) = \rpath(C^{\ast})$.
We now show the following.

\begin{claim}
\label{cla:conceptflat}
$C\red$ is a model-conservative extension of $C$.
\end{claim}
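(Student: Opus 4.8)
The plan is to treat Claim~\ref{cla:conceptflat} as the concept-level counterpart of Claim~\ref{cla:alcodtoalcou}, replacing the appeal to the global ontology by the relevant-worlds bookkeeping already used for Lemma~\ref{lem:normalformconc} and Claim~\ref{claim:total:rda}. By repeatedly applying Lemma~\ref{lem:normalformconc} I may assume $C$ is in normal form, so that every definite-description subconcept has the shape $\{\defdes B\}$ for a concept name $B$, carries a surrogate $A_{\{\defdes B\}}$, and the normal form contains the conjuncts $\Box^{\pi}\forall u.(\{\defdes B\}\Leftrightarrow A_{\{\defdes B\}})$ for all $\pi\in\rpath(C,\{\defdes B\})$ as in~\eqref{eq:normalformconj}; the concept $C\red$ is obtained by swapping each such conjunct for the two weak conjuncts built from the fresh nominal $a_B$. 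Since $\sig{C\red}\setminus\sig{C}$ consists exactly of the surrogates $A_{\{\defdes B\}}$ together with the fresh nominals $a_B$, it suffices to establish the two transfers required for model-conservativity, and throughout I work with total interpretations (so each $\{a_B\}$ is a genuine singleton), which is legitimate by Propositions~\ref{lemma:redpartialtototal} and~\ref{lemma:redtotaltopartial}.

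For the constructive transfer I start from a total $\Mmf$ satisfying $C$ at a world $w_0$ and define the extension $\Mmf^{\ast}$ by setting, at every world $v$, $A_{\{\defdes B\}}^{\Imc^{\ast}_{v}}:=\{\defdes B\}^{\Imc_{v}}$ and taking $a_{B}^{\Imc^{\ast}_{v}}$ to be the unique element of $B^{\Imc_{v}}$ when this set is a singleton and an arbitrary element otherwise. Because the $a_B$ are fresh, the remaining skeleton of $C\red$ is unaffected, and it remains to check the two weak conjuncts at each $B$-relevant world $v\in\rpathw(w_0,C,\{\defdes B\})$: the first, $A_{\{\defdes B\}}\sqsubseteq B\sqcap\{a_B\}$, holds because $A_{\{\defdes B\}}^{\Imc^{\ast}_{v}}=\{\defdes B\}^{\Imc_{v}}$ is empty or the singleton $\{a_B^{\Imc^{\ast}_{v}}\}$; the second, $B\sqcap\forall u.(B\Rightarrow\{a_B\})\sqsubseteq A_{\{\defdes B\}}$, holds because its left-hand side is nonempty only when $B^{\Imc_{v}}=\{a_B^{\Imc^{\ast}_{v}}\}$, in which case it equals $\{\defdes B\}^{\Imc_{v}}=A_{\{\defdes B\}}^{\Imc^{\ast}_{v}}$. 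The one subtlety, exactly as in the handling of the $\Box^{\pi}\forall u.$-conjuncts in Lemma~\ref{lem:normalformconc}, is that each weak conjunct is satisfied at $w_0$ iff its inner universal statement holds at all worlds reachable along the paths in $\rpath(C,\{\defdes B\})$, that is, precisely on $\rpathw(w_0,C,\{\defdes B\})$; since the surrogate is queried by the skeleton only on those worlds, this yields $\Mmf^{\ast}\models C\red$.

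For the reverse transfer I take a total $\Mmf^{\ast}$ satisfying $C\red$ at $w_0$ and must produce a satisfying model of $C$. The weak conjuncts force, on each $v\in\rpathw(w_0,C,\{\defdes B\})$, that $A_{\{\defdes B\}}^{\Imc^{\ast}_{v}}\subseteq B^{\Imc^{\ast}_{v}}\cap\{a_B^{\Imc^{\ast}_{v}}\}$ while $(B\sqcap\forall u.(B\Rightarrow\{a_B\}))^{\Imc^{\ast}_{v}}\subseteq A_{\{\defdes B\}}^{\Imc^{\ast}_{v}}$, so the surrogate is sandwiched and is in particular at most a singleton. I then pass to the interpretation $\Mmf$ that agrees with $\Mmf^{\ast}$ on all names of $C$ but reinterprets each surrogate honestly, $A_{\{\defdes B\}}^{\Imc_{v}}:=\{\defdes B\}^{\Imc_{v}}$, processing surrogates innermost-first so that the reassignment propagates consistently through the surrogate hierarchy, and forgets the $a_B$. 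The verification that $C$ is still satisfied is then an inductive agreement argument of the form of Claim~\ref{claim:total:rda}: for every subconcept $D$ one shows that the value of $D$ with surrogates in $\Mmf^{\ast}$ and the value of $D$ with honest descriptions in $\Mmf$ coincide on $v\in\rpathw(w_0,C,D)$, using the path inclusions $\rpathw(w_0,C,\{\defdes B\})\subseteq\rpathw(w_0,C,B)$, etc., whose base case for the surrogates is exactly where the two weak CIs (and, if the witnessing $a_B$ is misplaced, a re-choice of $a_B$ at the referent) must be invoked.

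I expect this reverse transfer to be the main obstacle. The two weak CIs do \emph{not}, by themselves, pin $A_{\{\defdes B\}}$ to $\{\defdes B\}$: a model of $C\red$ may carry a \emph{spurious} surrogate value when $a_B$ happens to land inside a non-singleton $B$, so the reverse passage cannot be the identity on interpretations and genuinely requires the honest reinterpretation of surrogates together with the confinement of the whole argument to the $B$-relevant worlds $\rpathw(w_0,C,\{\defdes B\})$, where alone the surrogate is consulted. The totality assumption is what makes $\{a_B\}$ a singleton and keeps the sandwich tight on those worlds; getting the innermost-first reinterpretation to interact correctly with nested surrogates is the delicate bookkeeping. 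Once this is in place, the satisfiability transfers that witness model-conservativity of $C\red$ over $C$ follow, and combining them with Propositions~\ref{lemma:redpartialtototal} and~\ref{lemma:redtotaltopartial} completes the concept-satisfiability case of Proposition~\ref{lemma:redmludtomlu}.
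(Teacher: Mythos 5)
Your forward direction is exactly the paper's construction, but the reverse transfer---which you correctly single out as the main obstacle---cannot be completed, and the difficulty you flag is in fact fatal rather than a matter of delicate bookkeeping. Your plan is to reinterpret each surrogate honestly, $A_{\{\defdes B\}}^{\Imc_v} := \{\defdes B\}^{\Imc_v}$, and then run an agreement induction in the style of Claim~\ref{claim:total:rda}; but the base case of that induction is precisely the equality $A_{\{\defdes B\}}^{\Imc^{\ast}_v} = \{\defdes B\}^{\Imc^{\ast}_v}$ on relevant worlds, which the two weak conjuncts do not supply. Take $\Delta = \{d,e\}$, $B^{\Imc_v} = \{d,e\}$, $a_B^{\Imc_v} = d$, $A_{\{\defdes B\}}^{\Imc_v} = \{d\}$: the first conjunct holds, and the second holds vacuously because $\forall u.(B \Rightarrow \{a_B\})$ has empty extension, yet $\{\defdes B\}^{\Imc_v} = \emptyset$. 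Honest reinterpretation then changes the value of the skeleton, so satisfaction of $C\red$ gives no purchase on satisfaction of $C$; re-choosing $a_B$ cannot help either, since restoring the second conjunct would require changing the surrogate again. Worse, the spurious value can do real work: for the unsatisfiable concept $C = \{\defdes B\} \sqcap \exists u.(B \sqcap \neg\{\defdes B\})$, the one-world total interpretation above (with the other normal-form surrogates computed according to their defining conjuncts) satisfies $C\red$ at $d$. So no argument closes your gap for $C\red$ as defined---and note that the paper's own proof of this claim simply asserts the reverse direction with ``it can be seen,'' so your suspicion is justified, but it refutes the printed conjuncts rather than being repairable by path bookkeeping.

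The repair is to tighten the sandwich. Strengthen the first conjunct to $\Box^\pi\forall u.\bigl(A_{\{\defdes B\}} \Rightarrow B \sqcap \{a_B\} \sqcap \forall u.(B \Rightarrow \{a_B\})\bigr)$, so that a nonempty surrogate forces $B = \{a_B\}$ to be a singleton (killing the spurious value), and add a witness conjunct $\Box^\pi\forall u.\bigl(B \Rightarrow \exists u.(\{a_B\} \sqcap B)\bigr)$ in the style of $\Omc^{\textsf{wt}}_B$ from the proof of Theorem~\ref{th:diff}~(2), so that $a_B$ lies in $B$ whenever $B \neq \emptyset$ and hence, via the second conjunct, the surrogate is nonempty whenever $B$ is a singleton. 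With both additions one gets $A_{\{\defdes B\}}^{\Imc_v} = \{\defdes B\}^{\Imc_v}$ at every relevant world of every total model of the modified $C\red$, after which the reverse passage is the identity (no reinterpretation and no induction beyond Lemma~\ref{lem:normalformconc}), while your forward construction survives verbatim once $a_B$ is placed inside $B$ whenever $B \neq \emptyset$. The same strengthening is needed for the ontology version in Claim~\ref{cla:alcodtoalcou}.
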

\begin{proof}
It can be seen that any interpretation  $\Mmf\red$ that satisfies $C\red$ also satisfies $C$.

Conversely, given a total interpretation $\Mmf$ based on a frame from $\Cmc$ that satisfies $C$, we define the total interpretation $\Mmf\red$ extending $\Mmf$, for every $w \in W$, with
\begin{itemize}
	\item 
%	{\color{blue}{
%	$A_{\{ \defdes B \}}^{\Imc^{\ast}_{w}} = B^{\Imc_{w}}$ and
%	\nb{removed because A_{\{ \defdes B \}} already appears in normal form ontology}
%	}}
	$a_{B}^{\Imc^{\ast}_{w}} = d$, if $B^{\Imc_{w}} = \{ d \}$, for some $d \in \Delta^{w}$; and 
%	{\color{blue}{
%	$A_{\{ \defdes B \}}^{\Imc^{\ast}_{w}} = \emptyset$ and
%	}}
	 $a_{B}^{\Imc^{\ast}_{w}}$ is arbitrary, otherwise.
\end{itemize}
It can be seen that $\Mmf^{\ast}$ is a total interpretation based on a frame from $\Cmc$ that satisfies $C^{\ast}$.
\end{proof}

Hence, $C$ is total $\Cmc$-satisfiable iff $C\red$ is total $\Cmc$-satisfiable.
%
%%
%\begin{claim}
%\label{cla:conceptflat}
%$C$ is total $\Cmc$-satisfiable iff $C\red$ is total $\Cmc$-satisfiable.
%\end{claim}
%
%\begin{proof}
%$(\Rightarrow)$ Suppose that $\Mmf$ is a total interpretation based on a frame from $\Cmc$ that satisfies $C$, and let $\Mmf\red$ be a total interpretation defined as $\Mmf$, except from the following, for every $w \in W$:
%\begin{itemize}
%	\item 
%	%$A_{\{ \defdes B \}}^{\Imc^{\ast}_{w}} = B^{\Imc_{w}}$ and 
%	$a_{B}^{\Imc^{\ast}_{w}} = d$, if $B^{\Imc_{w}} = \{ d \}$, for some $d \in \Delta^{w}$; and 
%	%$A_{\{ \defdes B \}}^{\Imc^{\ast}_{w}} = \emptyset$ and
%	 $a_{B}^{\Imc^{\ast}_{w}}$ is arbitrary, otherwise.
%\end{itemize}
%It can be seen that $\Mmf^{\ast}$ is a total interpretation based on a frame from $\Cmc$ that satisfies $C^{\ast}$.
%
%%By cases on the structure of the subconcept $D$ of $C$.
%%We show only the cases of $D = \Diamond_i B$, $D = \{ b \}$, and $D = \{\defdes B\}$.
%$(\Leftarrow)$ Suppose that $\Mmf\red$ is a total interpretation based on a frame from $\Cmc$ that satisfies $C\red$. 
%%Define $\Mmf = (\Fmc, \Delta, \Imc)$ as a total interpretation that coincides with $\Mmf^{\ast}$, except from the following, for every $w \in W$:
%%\begin{itemize}
%%\item $\defdes B^{\Imc_{w}} = d$, if $B^{\Imc^{\ast}_{w}} = \{ d \}$, for some $d \in \Delta^{w}$; and $\defdes B$ non-denoting at $w$ of $\Imc$, otherwise.
%%\end{itemize}
%It can be seen that $\Mmf\red$ 
%%is an interpretation based on a frame from $\Cmc$ that 
%also satisfies $C$.
%%}
%\end{proof}
%
This concludes the proof of the proposition.
\end{proof}

\redexptoconst*
\begin{proof}
By
Propositions~\ref{lemma:redpartialtototal} and~\ref{lemma:redtotaltopartial}, it is sufficient to reduce $\MLALCOu$ \emph{total} concept $\Cmc$-satisfiability \textup{(}under global ontology\textup{)} with expanding domains to \emph{total} $\MLALCOu$ concept $\Cmc$-satisfiability \textup{(}under global ontology, respectively\textup{)} with constant domain.

%Given an $\ML^n_{\ALCOu}$ or $\MLALCOud$ concept $C$,
Let $\Ex$ be a fresh concept name, used to represent the objects that actually \emph{exist} at a given world domain.
We introduce the $\cdot^{\exrel}$ \emph{relativisation},
%called \emph{existence relativisation},
mapping an
%$\ML^n_{\ALCOu}$ or
$\MLALCOu$ concept $C$ to an
%$\ML^n_{\ALCOu}$ or
$\MLALCOu$ concept $C^{\exrel}$ as follows, where $s \in \NR \cup \{ u \}$:
\begin{gather*}
	A^{\exrel} = A, \quad
	\{ a \}^{\exrel}  = \Ex \sqcap \{ a \}, \quad
	(\lnot D)^{\exrel} = \lnot D^{\exrel}, \\
%	(\exists r. C)^{\exrel} = \exists r.(\Ex \sqcap C^{\exrel}), \
	(\exists s. D)^{\exrel} = \exists s.(\Ex \sqcap D^{\exrel}), \
	(D \sqcap E)^{\exrel} = D^{\exrel} \sqcap E^{\exrel}, \\
	(\Diamond_{i} D)^{\exrel} = \Diamond_{i} (D)^{\exrel}.
%	\{ \tau \}^{\exrel}  = 
%	\begin{cases}
%		\{ a \}, \hfill \text{if $\tau = a$}; \\
%		\{ \defdes (\Ex \sqcap C^{\exrel}) \} \sqcup (\lnot \exists u.\{ \defdes (\Ex \sqcap C^{\exrel}) \} \sqcap \{  a_{\defdes C} \}),
%%		& \text{if $\tau = \defdes C$}. \\
%	\end{cases}
%	\\
%	\hfill \text{if $\tau = \defdes C$};
\end{gather*}
The $\cdot^{\exrel}$ relativisation of CIs and ontologies is then obtained by replacing every concept $C$ with $C^{\exrel}$. 
%as:
%\begin{gather*}
%(C \sqsubseteq D)^{\exrel} = C^{\exrel} \sqsubseteq D^{\exrel}, \\
%\Omc^{\exrel} = \{ (C \sqsubseteq D)^{\exrel} \mid (C \sqsubseteq D) \in \Omc \}.
%\end{gather*}

For the case of satisfiability under global ontology, it can be seen that
$A$ is $\Cmc$-satisfiable in total interpretations with expanding domains under global ontology $\Omc$ iff
$\Ex \sqcap A$ is is $\Cmc$-satisfiable in total interpretations with constant domains under ontology that consists of $\Omc^{\exrel}$ along with the following CIs:
%\nb{does it follow that $w\in W$ that witnesses $A$ is in $\Ex$?}
%
\begin{align*}
%\top & \sqsubseteq \exists u.\Ex,\\
% \{ b \} & \sqsubseteq \Ex, \text{ for all } b \text{ in } \Omc,\\
 \Ex & \sqsubseteq \Box_{i} \Ex, \text{ for all } i\in I;
\end{align*}
%\ldots
cf.~the translation of $\varphi$ in~\cite[Proposition 3.32~(ii), (iv)]{GabEtAl03}.
%$C^{\exrel}$ is $\Cmc$-satisfiable on total interpretations with constant domain under global ontology $\Omc^{\exrel}$.

Similarly, for concept satisfiability, it can be seen that $C$ is $\Cmc$-satisfiable in total interpretations with expanding domains iff the conjunction $C'$ of $\Ex \sqcap C^{\exrel}$ with the following concepts:%\nb{add relativisation for $\{a\}$ above instead?}
\begin{align*}
%& \Ex,\\
%& \Box^{\pi} \forall u.( \{ b \} \Rightarrow \Ex), \text{ for all } b \text{ in } \Omc \text{ and } \pi  \in \rpath(C, B)\\ & \hspace*{10em} \text{ with a subconcept } B \text{ of } C,\\
& \Box^{\pi} \forall u.(\Ex \Rightarrow \Box_{i} \Ex), \text{ for all } i \in I \text{ and } \pi \cdot i \in \rpath(C, B)\\ & \hspace*{10em} \text{ with a subconcept } B \text{ of } C.
\end{align*}
is $\Cmc$-satisfiable in total interpretations with constant domains;
cf.~$C'$ in~\cite[Proposition 3.32~(ii)]{GabEtAl03}.
%$C^{\exrel}$ is $\Cmc$-satisfiable on total interpretations with constant domain.
%
\end{proof}

% !TEX root =  ../rda_arxiv.tex

\section{Proofs for Section~\ref{sec:counting}}\label{app:counting}

\diff*
\begin{proof}
(1) We first consider the case of satisfiability under global ontology.
Let $A$ be a concept name and $\Omc$ an $\MLALCOu$ ontology. We assume that $A$ occurs in~$\Omc$. Let $\con{\Omc}$ be the closure under single negation 
of the set of concepts occurring in $\Omc$. A \emph{type} for $\Omc$ is a subset $\contp$ of $\con{\Omc}$
such that
	$\neg C \in \contp$ iff $C\not \in \contp$, for all
	$\neg C \in \con{\Omc}$.
A \emph{quasistate for $D$} is a non-empty set $\settp$ of types for $\Omc$. The \emph{description of $\settp$} is the $\MLALCOu$-concept 
\begin{equation*}
\Xi_{\settp} = \forall u.(\bigsqcup_{\contp\in \settp}\contp) \sqcap \bigsqcap_{\contp\in \settp} \exists u.\contp. 
\end{equation*}
Let $\Smc_\Omc$ denote the set of all quasistates $\settp$ for $\Omc$ that are
\emph{$\ALCOu$-satisfiable}, that is, such that $\Xi_{\settp}^{t}$ is satisfiable, where $\Xi_{\settp}^{t}$ denotes the result of replacing 
every outermost occurrence of $\Diamond_{i}C$ by $A_{\Diamond_{i}C}$, for a fresh concept name $A_{\Diamond_{i}C}$.
It should be clear that $\Smc_\Omc$ can be computed in double exponential time as satisfiabilty of $\ALCOu$-concepts under $\ALCOu$ ontologies is in \ExpTime{}.

We now reserve for any $C\in \con{\Omc}$ of the form $\{a\}$ or
$\exists r.C'$ a fresh concept name $A_{C}$. Define a mapping
$\cdot^{\sharp}$ that associates with every $\MLALCOu$-concept an
$\MLDiff$-concept by replacing outermost occurrences of concepts of
the form $\exists r.C$ or concepts of the form $\{a\}$ by the
respective fresh concept name.  Let $\Omc'$ be the extension of
$\Omc^{\sharp}$
%\nb{A: $\Omc\to \Omc^{\sharp}$}
with the following CIs:
\begin{align*}
\top & \sqsubseteq \exists^{=1} u.\{a\}^\sharp, \text{ for every } a\in\NI \text{ in } \Omc, \\
\top & \sqsubseteq \bigsqcup_{\settp\in \Smc_\Omc}\Xi_{\settp}^{\sharp}.
\end{align*}
Then we have the following:
\begin{lemma}
$A$ is $\Cmc$-satisfiable under $\Omc$ iff there is a type $\contp$ for $\Omc$ such that $A\in\contp$ and $\contp^{\sharp}$ is $\Cmc$-satisfiable under $\Omc'$. 
\end{lemma}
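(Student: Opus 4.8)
The plan is to link, for a fixed frame $\Fmf\in\Cmc$ and fixed domains, $\MLALCOu$-interpretations of $\Omc$ with $\MLDiff$-interpretations of $\Omc'$, the passage between them being controlled by the two abstractions $\cdot^{\sharp}$ (hiding roles and nominals behind fresh names) and $\cdot^{t}$ (hiding modalities). It is harmless to work throughout with \emph{total} interpretations: by Propositions~\ref{lemma:redpartialtototal} and~\ref{lemma:redtotaltopartial} the total variant of concept $\Cmc$-satisfiability under global ontology is interreducible with the general one, and this matches $\Omc'$, whose CIs $\top\sqsubseteq\exists^{=1}u.\{a\}^\sharp$ make every surrogate $\{a\}^\sharp$ a singleton in every world --- exactly the demand that each $a$ designate everywhere. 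I also use that $\con{\Omc}$ is closed under subconcepts, so that every type decides each $\exists r.C\in\con{\Omc}$ together with its immediate subconcept $C$.

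For the direction from $\Omc$ to $\Omc'$, I start from a total $\Mmf$ on $\Fmf\in\Cmc$ with $\Mmf\models\Omc$ and $d\in A^{\Imc_w}$ for some $w$ and $d\in\Delta^w$. I assign to each $e\in\Delta^v$ its type $\contp(v,e)=\{C\in\con{\Omc}\mid e\in C^{\Imc_v}\}$ and set $\settp_v=\{\contp(v,e)\mid e\in\Delta^v\}$. The first key step is $\settp_v\in\Smc_\Omc$: reading each fresh atom $A_{\Diamond_i C}$ as $(\Diamond_i C)^{\Imc_v}$ turns the single world $\Imc_v$ into a classical $\ALCOu$-interpretation satisfying $\Xi_{\settp_v}^{t}$, since every element realises a type of $\settp_v$ and each type of $\settp_v$ is realised. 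I then define $\Mmf^\sharp$ on the same frame and domains, keeping the names of $\con{\Omc}$ and interpreting each fresh $A_{\{a\}}$, $A_{\exists r.C}$ by the extension of the concept it replaces; a routine induction gives $C^{\Imc_v}=(C^\sharp)^{\Mmf^\sharp_v}$ for all $C\in\con{\Omc}$ and all $v$. Extension preservation yields $\Mmf^\sharp\models\Omc^\sharp$, totality yields the $\exists^{=1}$-CIs, and $\settp_v\in\Smc_\Omc$ (with $\Xi_{\settp_v}$ globally true at $v$) yields $\top\sqsubseteq\bigsqcup_{\settp\in\Smc_\Omc}\Xi_{\settp}^\sharp$; so $\Mmf^\sharp\models\Omc'$. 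Finally $\contp:=\contp(w,d)$ contains $A$ and $\contp^\sharp$ is satisfied at $w$.

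The converse direction carries the real work. Given an $\MLDiff$-model $\Nmf\models\Omc'$ on $\Fmf\in\Cmc$ with $\contp_0^\sharp$ satisfied and $A\in\contp_0$, I read off at each world $v$ the type $\contp(v,e)=\{C\in\con{\Omc}\mid e\in(C^\sharp)^{\Nmf_v}\}$; because $\Xi_{\settp}$ is built only from $\forall u$- and $\exists u$-concepts, its extension is always $\Delta^v$ or $\eset$, so $\top\sqsubseteq\bigsqcup_{\settp}\Xi_{\settp}^\sharp$ forces the set of types realised at $v$ to be exactly one $\settp_v\in\Smc_\Omc$. As $\settp_v$ is $\ALCOu$-satisfiable, I fix a classical model $\Jmc_v$ of $\Xi_{\settp_v}^{t}$ and, for each $\contp\in\settp_v$, a representative $z_\contp\in\Jmc_v$ of type $\contp$; then $\mathrm{Succ}_r(\contp)$, the set of successor-types of $z_\contp$ under $r$, realises precisely the $\exists r.C\in\contp$ and avoids those with $\neg\exists r.C\in\contp$. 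I now build $\Mmf$ on $\Nmf$'s frame and domains, keeping the names of $\con{\Omc}$, letting $a^{\Imc_v}$ be the unique point of $(\{a\}^\sharp)^{\Nmf_v}$, and putting $(e,e')\in r^{\Imc_v}$ iff $\contp(v,e')\in\mathrm{Succ}_r(\contp(v,e))$. Since every type of $\settp_v$ is realised at $v$ and $\Jmc_v$ uses only types of $\settp_v$, each $e$ then satisfies in $\Imc_v$ exactly the $\exists r.C$ prescribed by its type; an induction on $C\in\con{\Omc}$ gives $C^{\Imc_v}=(C^\sharp)^{\Nmf_v}$, whence $\Mmf\models\Omc$ (from $\Nmf\models\Omc^\sharp$) and, at the witness of $\contp_0^\sharp$, some element lies in $A^{\Imc_w}$.

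The main obstacle is exactly this reconstruction of roles: in $\MLDiff$ the roles survive only as the propositional shadows $A_{\exists r.C}$, so one must certify that the pattern of $\exists r.C$-memberships recorded by $\Nmf$ at each world is genuinely realisable by actual roles --- and this is precisely what membership in $\Smc_\Omc$, enforced by $\top\sqsubseteq\bigsqcup_{\settp}\Xi_{\settp}^\sharp$, supplies. What makes the world-by-world realisation legitimate, without disturbing the cross-world content already fixed by $\Nmf$, is that role interpretations in different worlds are mutually independent (the quasimodel principle). Both the constant- and expanding-domain cases need no extra argument, since the construction reuses $\Nmf$'s domains and inherits $\Delta^w\sbs\Delta^v$ whenever $wR_iv$.
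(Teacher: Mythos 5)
Your proof is correct and takes essentially the same route as the paper's: in one direction you read off types and quasistates from the model and interpret the surrogates $A_{\{a\}}$, $A_{\exists r.C}$ by the extensions of the concepts they replace, and in the other you recover each nominal from the singleton forced by $\top\sqsubseteq\exists^{=1}u.\{a\}^\sharp$ and rebuild roles world-by-world from the $\ALCOu$-satisfiability of the forced quasistate --- where the paper takes the maximal relation consistent with the $\neg\exists r.D$ constraints, your $\mathrm{Succ}_r$ relation transplanted from a fixed classical model $\Jmc_v$ of $\Xi^{t}_{\settp_v}$ is a sub-relation of it, and the two verifications are interchangeable. Your explicit restriction to total interpretations (via the partial/total interreductions) makes precise something the paper leaves implicit, and is in fact needed: for an ontology forcing some $a$ never to designate, the left-to-right direction would otherwise clash with $\top\sqsubseteq\exists^{=1}u.\{a\}^\sharp$.
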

\begin{proof}
Suppose $A$ is satisfied at $w\in W$ in an $\MLALCOu$ interpretation $\Mmf$ based on a frame from $\Cmc$ and $\Omc$ is satisfied in $\Mmf$. Then we can define the type $\contp$ by reading it off from $e\in A^{\Imc_w}$:
\begin{equation*}
\contp = \bigl\{ C \in\con{\Omc} \mid e\in C^{\Imc_w} \bigr\}.
\end{equation*}
 Then, we define an $\MLDiff$ interpretation $\Mmf'$ from $\Mmf$ by providing the interpretation of the additional concept names $A_C$  for $\MLALCOu$  concepts $C$ of the form $\{a\}$ and $\exists r.C'$: we simply set $A_C^{\Imc'_v} = C^{\Imc_v}$, for all $v\in W$. It should be clear that $\Mmf'$ satisfies $\Omc'$ and that $e\in (\contp^{\sharp})^{\Imc'_w}$.

Conversely, suppose there is a type $\contp$ for $\Omc$ such that $A\in\contp$ and $\contp^{\sharp}$ is satisfied in an $\MLDiff$ interpretation $\Mmf'$ based on  a frame from $\Cmc$ such that  $\Omc'$ is also satisfied in~$\Mmf'$. We define $\Mmf$ by extending $\Mmf'$ by suitable interpretations of nominals and role names. Let $v\in W'$. Observe that for any nominal $a$ in $\Omc$, $(a^{\sharp})^{\mathcal{I}'_v}$ is a singleton, by definition of~$\Omc^{\sharp}$. Hence we define $a^{\Imc_v}$ as its single element. For any role name $r$, we define $r^{\Imc_v}$ as the maximal relation that is consistent with existential restrictions in the types we aim to satisfy: for $d\in (\contp^{\sharp})^{\Imc'_v}$ and $d'\in (\contp'^{\sharp})^{\Imc'_v}$, let $(d,d')\in r^{\Imc_v}$ if, for all $\exists r.D \in \con{\Omc}$, if $D\in \contp'$, then $\exists r.D\in \contp$.
It is easy to see that $\Mmf$ is as required.
\end{proof}

For the concept $\Cmc$-satisfiability problem we adapt the proof in the same way as in Proposition~\ref{lemma:redtotaltopartial}: the additional CIs of $\Omc'$ need to be replaced by conjuncts with suitable prefixes of the box operators and the universal role restrictions.

\bigskip

(2) \citeauthor{DBLP:journals/jphil/GargovG93}~(\citeyear{DBLP:journals/jphil/GargovG93}) observed that the logic of
the difference modality and nominals have the same expressive power.
Their technical result can be presented in our setting in the
following way. Let $C$
be an $\MLDiff$-concept and $\exists^{\ne} u.B$ its subconcept. Denote by $C^{\ddagger_B}$ the result of
replacing every occurrence of $\exists^{\ne} u.B$ in $D$ with
\begin{equation*}
\exists u.B  \sqcap (\{ a_B \} \Rightarrow \exists u.(\neg \{ a_B \} \sqcap B)),
\end{equation*}
where $a_B$ is a fresh nominal associated with concept $B$ (note that we replace only a single subconcept here). Let
$\Omc^{\textsf{wt}}_B$ consist of the following CI:
\begin{equation*}
B \sqsubseteq \exists u.(\{ a_B \} \sqcap B). 
\end{equation*}
Then we can extend the proof of Lemmas
4.2 and 4.3 in~\cite{DBLP:journals/jphil/GargovG93} to the modal
setting (provided that interpretations do not satisfy the RDA, and so nominals can be interpreted differently in different worlds) to obtain the following:
\begin{lemma}\label{lemma:gg}
For any interpretation $\Mmf$ satisfying $\Omc^{\textup{\textsf{wt}}}_B$, we have $C^{\Imc(w)} = (C^{\ddagger_B})^{\Imc(w)}$, for all $w$ in $\Mmf$.
\end{lemma}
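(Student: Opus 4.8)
The plan is to factor the claimed equality into a purely local (single-world) statement and then propagate it through the concept $C$ by structural induction. Since $\Mmf \models \Omc^{\textsf{wt}}_B$, the inclusion $B \sqsubseteq \exists u.(\{a_B\}\sqcap B)$ holds at \emph{every} world of $\Mmf$ (a CI is satisfied in $\Mmf$ exactly when it holds at all worlds). Unfolding the semantics of this global CI at a world $w$ says precisely that whenever $B^{\Imc_w}\neq\emptyset$, the term $a_B$ designates at $w$ and $a_B^{\Imc_w}\in B^{\Imc_w}$. I will call $a_B^{\Imc_w}$ the \emph{anchor} of $B$ at $w$; it is the one fact about $a_B$ that the whole argument relies on.

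First I would prove the local equivalence: for every world $w$ of $\Mmf$,
\[
(\exists^{\ne}u.B)^{\Imc_w} = \bigl(\exists u.B \sqcap (\{a_B\}\Rightarrow \exists u.(\neg\{a_B\}\sqcap B))\bigr)^{\Imc_w}.
\]
For $(\subseteq)$, if $d$ has some $B$-element $e\neq d$, then $B$ is nonempty (first conjunct) and, in the case $d$ is the anchor, $e$ witnesses $\exists u.(\neg\{a_B\}\sqcap B)$, while if $d$ is not the anchor the implication is vacuously satisfied. For $(\supseteq)$, nonemptiness of $B$ together with $\Omc^{\textsf{wt}}_B$ supplies the anchor $a_B^{\Imc_w}\in B^{\Imc_w}$; if $d$ is not the anchor, this anchor is already a $B$-element distinct from $d$, and if $d$ is the anchor, the implication yields a $B$-element distinct from $a_B^{\Imc_w}=d$. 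In each direction one obtains an element of $B$ different from $d$, i.e.\ $d\in(\exists^{\ne}u.B)^{\Imc_w}$.

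Next I would lift this to all of $C$ by induction on structure, proving the slightly stronger statement that $D^{\Imc_v}=(D^{\ddagger_B})^{\Imc_v}$ for every subconcept $D$ of $C$ and every world $v$ (taking $D=C$ then gives the lemma; the quantification over all $v$ is what the $\Diamond_i$ case needs). The concept-name, Boolean and $\exists u$ cases are immediate from compositionality of the semantics and the induction hypothesis, and the $\Diamond_i D'$ case follows because $(\Diamond_i D')^{\Imc_v}$ depends only on the extensions of $D'$ at $R_i$-successors, to which the hypothesis applies. The only nonroutine case is the distinguished subconcept $D=\exists^{\ne}u.B$, handled by the local equivalence applied at $v$ (legitimate since $\Omc^{\textsf{wt}}_B$ holds at every world). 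Here I would record the well-foundedness observation that $B$ cannot contain $\exists^{\ne}u.B$ as a subconcept, so $B^{\ddagger_B}=B$ and the replacement of this leaf is literally the displayed formula; for any other difference modality $\exists^{\ne}u.E$ with $E\neq B$ the replacement merely descends into $E$, and the hypothesis on $E$ suffices since $\exists^{\ne}u$ is interpreted solely through the extension of its argument.

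I expect the main obstacle to be the $(\supseteq)$ direction of the local equivalence, as it is the one place where the witness ontology is genuinely needed: without $\Omc^{\textsf{wt}}_B$ guaranteeing an anchor lying \emph{inside} $B$, the surrogate formula could be satisfied by an element having no distinct $B$-companion, and the equality would fail. Everything else is compositional bookkeeping; the only care required in the induction is that $\ddagger_B$ replaces the exact subconcept $\exists^{\ne}u.B$, so that it is treated as a leaf while its body $B$ is left intact.
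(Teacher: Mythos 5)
Your proof is correct and takes essentially the same route as the paper, which gives no detailed argument but defers to Gargov and Goranko's Lemmas 4.2--4.3: the world-by-world equivalence between $\exists^{\ne}u.B$ and its surrogate, driven by the anchor $a_B^{\Imc_w}\in B^{\Imc_w}$ that $\Omc^{\textsf{wt}}_B$ supplies at every world, followed by a structural induction using compositionality (with the correct well-foundedness remark that $B$ cannot contain $\exists^{\ne}u.B$, so the replaced occurrence is a leaf). You have simply written out in full what the paper leaves to the citation, including the one genuinely non-routine step, the $(\supseteq)$ direction of the local equivalence, where the witness CI is indispensable.
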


This result gives us the reduction for the case of satisfiability under global ontology. Let $\Omc$ be an $\MLDiff$-ontology and $A$ a concept name. By Lemma~\ref{lemma:ontology:normal-form}, we can assume that $\Omc$ is in normal form. 
Denote by $\Omc^\ddagger$ the result of replacing every CI $C_1 \sqsubseteq C_2$ in $\Omc$ with $C_1^\ddagger \sqsubseteq C_2^\ddagger$, where $C_i^\ddagger$  is the result of applying $\cdot^{\ddagger_B}$ to $C_i$, for every subconcept of the form $\exists^{\ne} u.B$; note that the order of applications of the
$\cdot^{\ddagger_B}$
does not matter as the ontology is in normal form. Let $\Omc' = \Omc^\ddagger\cup \Omc^{\textsf{wt}}$, where $\Omc^{\textsf{wt}}$ is the union of all $\Omc^{\textsf{wt}}_{B}$, for concepts $\exists^{\ne} u.B$ in $\Omc$.  It should be clear that, if $A$ is satisfied under $\Omc$ at $w$ in an interpretation $\Mmf$ based on frame from $\Cmc$, then we can extend $\Mmf$ to $\Mmf'$ by interpreting the nominals $a_B$ for concepts of the form $\exists^{\ne} u.B$: for each such nominal, if $B^{\Imc_v}\ne\emptyset$, then we pick any $e\in B^{\Imc_v}$ and assign $a_B^{\Imc_v} = e$; otherwise, we leave $a_{B}^{\Imc_v}$ undefined.
Clearly, $\Mmf'$ satisfies $\Omc^{\textsf{wt}}$. Thus, by Lemma~\ref{lemma:gg}, $\Mmf'$ satisfies $\Omc^\ddagger$, and also $A$ is satisfied  at $w$ in $\Mmf'$. 

Conversely, if $A$ is satisfied under $\Omc'$ at $w$ in an interpretation $\Mmf'$ based on a frame from $\Cmc$, then, by Lemma~\ref{lemma:gg}, $A$ is satisfied under $\Omc$ at $w$ in $\Mmf'$. We can now take the reduct $\Mmf$ of $\Mmf'$ by removing the interpretation of nominals: as $\Omc$ does not contain nominals, $A$ is satisfied under $\Omc$ at $w$ in $\Mmf$.

Next, we consider the case of concept satisfiability. Let $D$
be an $\MLDiff$-concept. We repeatedly apply Lemma~\ref{lem:normalformconc}, starting from the innermost occurrences 
of subconcepts of the form $\exists^{\ne} u.C$, to transform $D$ into normal form $D'$, where each `elsewhere' quantifier is applied
only to concept names. We then construct $D^\ddagger$ by applying the $\cdot^{\ddagger_B}$ to $D'$, for each $B$ such that $\exists^{\ne} u.B$ occurs in $D'$; again, the order of applications does not matter as $D'$ is in normal form. Consider the conjunction of $D^\ddagger$ with all concepts of the form
\begin{equation*}
\Box^\pi\forall u.(B \Rightarrow \exists u.(\{ a_B \} \sqcap B)),
\end{equation*}
for $\pi\in\rpath(D', \exists^{\ne} u.B)$ and concepts of the form $\exists^{\ne} u.B$ in~$D'$ (note that, by Lemma~\ref{lem:normalformconc}, $\rpath(D', \exists^{\ne} u.B)$ coincides with $\rpath(D, \exists^{\ne} u.C)$ for the respective $C$ in $D$). We can easily modify the argument presented above (see the proof of Proposition~\ref{lemma:redtotaltopartial}) to show that $D$ is satisfied in an interpretation $\Mmf$ iff $D'$ is satisfied in an interpretation $\Mmf'$ that additionally interprets nominals $a_B$ as `witnesses' for concepts of the form $\exists^{\ne} u.B$.  
\end{proof}	
% !TEX root =  ../rda_arxiv.tex

\section{Proofs for Section~\ref{sec:reasoning}}

\subsection{Proof of Theorem~\ref{thm:modconcdec}}

We first observe that for $\Sfive$, concept satisfiability under global ontology can be reduced to concept satisfiability without ontology by using the concept $C_{\mathcal{O}} \sqcap \Box C_{\mathcal{O}}$ with $C_{\mathcal{O}}$ the conjunction of all $C \Rightarrow  D$ with $C\sqsubseteq D\in \mathcal{O}$.

\modconcdec*
\begin{proof}
By
Proposition~\ref{lemma:redmludtomlu},
%the reductions shown above
%\nb{M: add ref}
we can drop definite descriptions (hence consider $L_{\ALCOu}$) and by Proposition~\ref{lemma:redpartialtototal}, it is sufficient to consider total satisfiability. By Proposition~\ref{prop:redexptoconst}, satisfiability with expanding domains can be reduced in polytime to satisfiability with constant domain, so we consider constant domain models. We give the proof for $\Sfive^{n}_{\ALCOu}$. The proof for $\K^{n}_{\ALCOu}$ can then be derived in a straightforward manner. 

We show the exponential finite model property (every $\Sfive^{n}_{\ALCOu}$-concept that is satisfiable, is satisfiable in a model of exponential size). The \NExpTime{} upper bound then follows directly. We proceed in a number of steps. First, we argue that tree-shaped models are sufficient. Then, we introduce quasimodels as a surrogate for models that is easier to manipulate and show the exponential finite model property for tree-shaped quasimodels. Finite exponential-size models are then easily constructed.     

Recall that a frame $\mathfrak{F}=(W,R_{1},\ldots,R_{n})$ is an  $\Sfive^{n}$-frame if all $R_{i}$ are equivalence relations. Then such an $\Sfive^{n}$-frame $\mathfrak{F}=(W,R_{1},\ldots,R_{n})$ is 
called a \emph{tree-shaped $\Sfive^{n}$-frames} if there exists a $w_{0}\in W$
such that the domain $W$ of $\mathfrak{F}$ is a prefix-closed set of 
words of the form 
\begin{equation}\label{eq:world}
\avec{w}=w_{0}i_{0}w_{1}\cdots i_{m-1}w_{m},
\end{equation}
where $1\leq i_{j}\leq n$, $i_{j}\not=i_{j+1}$, and each $R_{i}$ is 
the smallest equivalence relation containing all pairs of the form $(\avec{w},\avec{w}iw)\in W\times W$. We call $w_{0}$ the root of $\mathfrak{F}$. 
We define the \emph{depth} of $\mathfrak{F}$ as the maximal $m$ such that
$W$ contains a word of the form \eqref{eq:world}.
A \emph{tree-shaped $\Sfive^{n}_{\ALCOu}$-model}
%\nb{M: change `model' to `interpretation'}
takes the form $\mathfrak{M}=(\mathfrak{F},\Delta,\mathcal{I})$ with $\mathfrak{F}$
a tree-shaped $\Sfive^{n}$-frame.

%\nb{M: todo define depth of concept somewhere}%%
\begin{lemma}\label{lem:tree-shapedS5}
	$\Sfive^{n}_{\ALCOu}$ is determined by tree-shaped $\Sfive^{n}_{\ALCOu}$-models\textup{:} an $\MLALCOu$-concept 
    $C$ is satisfiable in an $\Sfive^{n}_{\ALCOu}$-model
    iff it is satisfiable in the root of a tree-shaped $\Sfive^{n}_{\ALCOu}$-model of depth bounded by the modal depth of $C$.
\end{lemma}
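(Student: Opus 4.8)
The ($\Leftarrow$) direction is immediate, since every tree-shaped $\Sfive^{n}_{\ALCOu}$-model is in particular an $\Sfive^{n}_{\ALCOu}$-model. For ($\Rightarrow$), the plan is a standard unravelling adapted to equivalence frames, the universal role, and (non-rigid) nominals. Suppose $C$ is satisfied at $w_0$ in an $\Sfive^{n}_{\ALCOu}$-model $\mathfrak{M}=(\mathfrak{F},\Delta,\mathcal{I})$ with $\mathfrak{F}=(W,R_1,\ldots,R_n)$. I would root the unravelling at $w_0$: take as tree-worlds the words $\avec{w}=w_0 i_0 w_1 \cdots i_{m-1} w_m$ of length $m\leq\md(C)$ with $i_j\neq i_{j+1}$ and $w_j R_{i_j} w_{j+1}$ in $\mathfrak{M}$ for all $j$, set $f(\avec{w})=w_m$, let each $R_i$ on this set be the least equivalence relation containing the pairs $(\avec{w},\avec{w}iw)$, keep the domain constant equal to $\Delta$, and copy the local interpretation through $f$, i.e.\ $A^{\mathcal{I}_{\avec{w}}}=A^{\mathcal{I}_{f(\avec{w})}}$, $r^{\mathcal{I}_{\avec{w}}}=r^{\mathcal{I}_{f(\avec{w})}}$ and $a^{\mathcal{I}_{\avec{w}}}=a^{\mathcal{I}_{f(\avec{w})}}$, for all concept names $A$, role names $r$ and individual names $a$. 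Since nominals are copied locally, they may designate differently at different tree-worlds, so non-rigidity is preserved. By construction this is a tree-shaped $\Sfive^{n}_{\ALCOu}$-model of depth at most $\md(C)$.

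The constraint $i_j\neq i_{j+1}$ makes each $R_i$-class a \emph{star}: a centre $\avec{v}$ whose last edge is not $i$ (or $\avec{v}=w_0$) together with all its $i$-children $\avec{v}iw$. Using this I would establish two properties of $f$ relative to each $R_i$. \textbf{Back:} $\avec{w}R_i\avec{w}'$ implies $f(\avec{w})R_i f(\avec{w}')$, which is immediate from the generating pairs together with $w_j R_{i_j} w_{j+1}$. \textbf{Forth:} if $f(\avec{w})R_i v$, then there is $\avec{w}'$ with $\avec{w}R_i\avec{w}'$, $f(\avec{w}')=v$, and depth at most that of $\avec{w}$ plus one. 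If the last edge of $\avec{w}$ is not $i$ (or $\avec{w}=w_0$), the child $\avec{w}iv$ works. If instead $\avec{w}=\avec{u}iw_m$ already ends in an $i$-edge, then $f(\avec{u})R_i w_m R_i v$, so by symmetry and transitivity of $R_i$ the word $\avec{u}iv$ is a legal sibling of $\avec{w}$ in the same star, providing a witness at the \emph{same} depth.

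The core is a truth lemma proved by induction on the subconcept $D$ of $C$: for every tree-world $\avec{w}$ of depth $d$ with $d+\md(D)\leq\md(C)$, one has $D^{\mathcal{I}_{\avec{w}}}=D^{\mathcal{I}_{f(\avec{w})}}$ as subsets of $\Delta$. Concept names, nominals and role restrictions are handled by the copying and constant-domain assumption; the $\exists u$ case uses that $u$ is the full relation $\Delta\times\Delta$ at every world and that $\md(\exists u.D)=\md(D)$; the Boolean cases are routine. For $\Diamond_i D$ the budget yields $d+1+\md(D)\leq\md(C)$, so every witness produced by Back and Forth lies at a depth $d'$ with $d'+\md(D)\leq\md(C)$, and the induction hypothesis applies there; Back and Forth then give $(\Diamond_i D)^{\mathcal{I}_{\avec{w}}}=(\Diamond_i D)^{\mathcal{I}_{f(\avec{w})}}$. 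Instantiating at the root $\avec{w}=w_0$ (depth $0$) gives $C^{\mathcal{I}_{w_0}}=C^{\mathcal{I}_{f(w_0)}}\neq\eset$, which proves the claim.

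The main obstacle, and the only place where the equivalence-relation hypothesis is genuinely needed, is the Forth step when the requested $\Diamond_i$ step repeats the index of the last edge of $\avec{w}$: there one must remain within the current star instead of descending, so that no depth is consumed and the bound $\md(C)$ is respected; this is precisely what reflexivity, symmetry and transitivity of $R_i$ buy. Everything else is careful bookkeeping of modal depth along the induction.
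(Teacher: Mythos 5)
Your proof is correct and follows essentially the same route as the paper's: unfold $\mathfrak{M}$ at $w_0$ into words with alternating indices and depth bounded by $\md(C)$, copy the local interpretation along the last-world map, and prove $D^{\mathcal{I}^{\ast}_{\avec{w}}}=D^{\mathcal{I}_{f(\avec{w})}}$ by induction on subconcepts with a modal-depth budget (your explicit star-shaped Back/Forth analysis is exactly the bookkeeping the paper leaves implicit in ``one can show by induction''). The only cosmetic difference is that the paper additionally requires $w_j\neq w_{j+1}$ in the unfolded words, which merely prunes redundant copies arising from reflexivity and does not affect your argument.
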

\begin{proof}
	Assume that $C$ has modal depth $K$ and assume $\mathfrak{M}=(\mathfrak{F},\Delta,\mathcal{I})$ with $\mathfrak{F}=(W,R_{1},\ldots,R_{n})$ and $w_{0}\in W$ with $C^{\mathcal{I}_{w_{0}}}\not=\emptyset$ are given. Unfold $\mathfrak{F}$  into 
	$$
	\mathfrak{F}^{\ast}=(W^{\ast},R_{1}^{\ast},\ldots,R_{n}^{\ast}) 
	$$
	where $W^{\ast}$ is the set of words $\avec{w}$ of the form \eqref{eq:world}
	with $(w_{j},w_{j+1})\in R_{i_{j}}$, $w_{j}\not=w_{j+1}$,
	$i_{j}\not=i_{j+1}$, and $m\leq K$, and $R^{\ast}_{i}$ is the smallest  equivalence relation containing all pairs $(\avec{w},\avec{w}iw)\in W^{\ast}\times W^{\ast}$. Define a model $\mathfrak{M}^{\ast}=(\mathfrak{F}^{\ast},\Delta^{\ast},\mathcal{I}^{\ast})$
	by setting $\Delta^{\ast}=\Delta$ and $\mathcal{I}^{\ast}_{\avec{w}}=\mathcal{I}_{w_{m}}$ for $\avec{w}\in W^{\ast}$ of the form \eqref{eq:world}. Clearly $\mathfrak{M}^{\ast}$ is tree-shaped.
	Moreover, for all $\MLALCOu$-concepts $D$ of modal depth not exceeding $K$, 
	one can show by induction that $D^{\mathcal{I}^{\ast}_{w_{0}}}=
	D^{\mathcal{I}_{w_{0}}}$. In particular, $C^{\mathcal{I}^{\ast}_{w_{0}}}\ne\emptyset$.
\end{proof}
Given an
$\MLALCOu$
%$\SfiveALCOu$
concept $C_{0}$, let
% $\ind{C_\p}$ be the set of
% individual names occurring in $C_\p$, and 
$\con{C_{0}}$ be the closure under single negation of the set of
concepts occurring in $C_{0}$.
A \emph{type} for $C_{0}$ is a subset $\contp$ of $\con{C_{0}}$
such that
\begin{enumerate}
  [label=\textbf{C\arabic*},leftmargin=*,series=run]
\item
  $\neg C \in \contp$ iff $C\not \in \contp$, for all
  $\neg C \in \con{C_{0}}$;
  \label{ct:neg}
\item
  $C \sqcap D \in \contp$ iff $C, D \in \contp$, for all
  $C \sqcap D \in \con{C_{0}}$.
  \label{ct:con}
\end{enumerate}
Note that there are at most $2^{|\con{C_{0}}|}$ types for $C_{0}$.
% Moreover,
% $|\con{C_\p}| \leq 2\len{C_\p}$, where $\len{C_\p}$ is the number of symbol
% occurrences in $C_\p$.

A \emph{quasistate for $C_{0}$} is a non-empty set $\settp$ of types
for $C_{0}$ satisfying the following conditions:
\begin{enumerate}
  [label=\textbf{Q\arabic*},leftmargin=*,series=run]
% \item for every $\contp \in \settp$, $C_{\p} \in \contp$;
% \label{qs:concphi}
%
\item for every $\{a\} \in \con{C_{0}}$, there exists exactly one
  $\contp \in \settp$ such that $\{ a \} \in \contp$;
  \label{qs:nom}
\item for every $\contp \in \settp$ and every
  $\exists r.C \in \contp$, there exists $\contp' \in \settp$ such
  that $\{ \lnot D \mid \lnot \exists r. D \in \contp \} \cup \{ C \}
  \subseteq \contp'$;
  \label{qs:existsr}
\item for every $\contp \in \settp$, $\exists u.C \in \contp$ iff
  there exists $\contp' \in \settp$ such that $C \in \contp'$.
  \label{qs:univ}
\end{enumerate}

A \emph{basic structure for $C_{0}$} is a pair $(\mathfrak{F}, \funcand)$, where
$\mathfrak{F}=(W,R_{1},\ldots,R_{n})$ is a tree-shaped $\Sfive^{n}$-frame with root $w_{0}$ and $\funcand$ is a function associating with
every $w \in W$ a quasistate $\funcand(w)$ for $C_{0}$, satisfying
\begin{enumerate}
  [label=\textbf{B\arabic*},leftmargin=*,series=run]
  \item there exists a type $\contp \in
    \funcand(w_{0})$ such that $C_{0} \in \contp$.
    \label{b1}
\end{enumerate}  
A \emph{run through $(\mathfrak{F}, \funcand)$} is a function $\rho$ mapping each
world $w \in W$ into a type $\rho(w) \in \funcand(w)$ and satisfying
the following condition for every $\Diamond_{i} C \in \con{C_{0}}$:
\begin{enumerate}
[label=\textbf{R\arabic*},leftmargin=*,series=run]
\item $\Diamond_{i} C \in \rho(w)$ if there exists $v \in W$ such that
  $wR_{i}v$ and $C \in \rho(v)$.
  \label{rn:modal}
\item if $\Diamond_{i} C \in \rho(w)$ then there exists $v \in W$ such that
$wR_{i}v$ and $C \in \rho(v)$.
\label{rn:modal2}
\end{enumerate}

An \emph{$\Sfive^{n}_{\ALCOu}$ quasimodel for $C_{0}$} is a triple $\quasimod =
(\mathfrak{F}, \funcand, \runs)$, where $(\mathfrak{F}, \funcand)$ is a basic structure for
$C_{0}$ and $\runs$ is a set of runs through $(\mathfrak{M}, \funcand)$ such that
the following condition holds:
\begin{enumerate}
[label=\textbf{M\arabic*},leftmargin=*,series=run]
\item for every $w \in W$ and every $\contp \in \funcand(w)$, there
  exists $\rho \in \runs$ with $\rho(w) = \contp$;
  \label{run:exists}
\item for every $w\in W$ and every
  $\contp\in \funcand(w), \text{ with } \{a\}\in \contp$, there exists
  exactly one $\rho\in \runs$ such that $\{a\} \in \rho(w)$.
    \label{run:nominal}
\end{enumerate}

The next lemma provides a link between quasimodels and standard models.

\begin{lemma}\label{th-s5nsat}
An $\MLALCOu$-concept 
$C_{0}$ is satisfiable in the root of a tree-shaped $\Sfive^{n}_{\ALCOu}$-model based on a frame $\mathfrak{F}$ iff 
there exists a $\Sfive^{n}_{\ALCOu}$-quasimodel for $C_{0}$ based on the same frame $\mathfrak{F}$.
\end{lemma}
\ifappendix
\begin{proof}
$(\Rightarrow)$
%Let $\Mmf = (\Fmf, \Delta, \Imc)$, with
%  $\Fmf = (W, W\times W)$,
%  be a partial modal interpretation
%  satisfying $C_\p$
Let $\Mmf = (\Fmf, \Delta, \Imc)$, with
  $\Fmf = (W,R_{1},\ldots,R_{n})$ be a tree-shaped $\Sfive^{n}_{\ALCOu}$-model of modal depth at most md$(C_{0})$ with $C^{\Imc_{w_{0}}}\ne\emptyset$ for the root $w_{0}$ of $\Fmf$. Let
  $\contp^{\Imc_{w}}(d) = \{ C \in \con{C_0} \mid d \in
  C^{\Imc_{w}}\}$, for every $d \in \Delta$ and $w \in W$. Clearly,
  $\contp^{\Imc_{w}}(d)$ is a concept type for $C_{0}$ since it 
  satisfies \eqref{ct:neg}--\eqref{ct:con}. We now define a
  triple $\Qmf = (\Fmf, \funcand, \runs)$, where
  \begin{itemize}
  \item $\funcand$ is the function from $W$ to the set of
    quasistates for $C_{0}$ defined by setting
    $\funcand(w) = \{ \contp^{\Imc_{w}}(d) \mid d \in \Delta\}$, for
    every $w\in W$;
  \item $\runs$ is the set of functions $\rho_{d}$ from $W$ to
    the set of types for $C_{0}$ defined by setting
    $\rho_{d}(w) = \contp^{\Imc_{w}}(d)$, for every
    $d \in \Delta$ and $w\in W$.
   \end{itemize}
   It is easy to show that $\Qmf$ is a quasimodel for $C_{0}$. Indeed,
   $\funcand$ is well-defined, as $\funcand(w)$ is a set of types for
   $C_{0}$ satisfying~\eqref{qs:nom}--\eqref{qs:univ}, for every $w \in
   W$. Moreover, $(\Fmf, \funcand)$ is a basic structure for $C_{0}$ since
   $\Mmf$ satisfies $C_{0}$ in the root $w_{0}$ of $\Fmf$ and thus $(\Fmf, \funcand)$
   satisfies~\eqref{b1}. The set of runs, $\runs$, by construction,
   satisfies~\eqref{rn:modal} and~\eqref{rn:modal2}. Finally, by definition of $\funcand(w)$ and of~$\rho$, $\Qmf$ satisfies~\eqref{run:exists}, and, since for
   every $w\in W$, if $\{a\}\in \contp$, there is exactly
   one $d\in\Delta$ such that $d=\{a\}^{\Imc_{w}}$, thus $\Qmf$
   satisfies~\eqref{run:nominal}.

   \noindent $(\Leftarrow)$ Suppose there is a quasimodel
   $\Qmf = (\mathfrak{F}, \funcand, \runs)$ for $C_{0}$. Define a model
   $\Mmf = (\Fmf, \Delta, \Imc)$, by setting $\Delta = \runs$, and, for any $A\in\NC$, $r\in\NR$ and $a\in\NI$:
   \begin{itemize}
   \item $A^{\Imc_{w}} = \{ \rho \in \Delta \mid A \in \rho(w) \}$;
   \item
     $r^{\Imc_{w}} = \{ (\rho, \rho') \in \Delta \times \Delta \mid \{
     \lnot C \mid \lnot \exists r . C \in \rho(w) \} \subseteq
     \rho'(w) \}$;
   \item $u^{\Imc_{w}} = \Delta \times \Delta$;
   \item $a^{\Imc_{w}} = \rho$, for the unique $\rho \in \runs$ such that $\{ a \} \in \rho(w)$.
   \end{itemize}
   Observe that $\Delta$ is well-defined since $W$ is a non-empty set
   and $\funcand(w)\neq\emptyset$, for all $w\in W$. Thus,
   by~\eqref{run:exists}, $\runs\neq\emptyset$. Also,
   by~\eqref{qs:nom} and~\eqref{run:nominal}, $a^{\Imc_{w}}$ is
   well-defined. We now require the following claim.
   \begin{claim}\label{cla:qmcon:s5}
     For every $C \in \con{C_{0}}$, $w \in W$ and $\rho \in \Delta$, we have
     $\rho \in C^{\Imc_{w}}$ iff $C \in \rho(w)$.
   \end{claim}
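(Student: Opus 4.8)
The plan is to prove the claim by induction on the structure of the concept $C \in \con{C_0}$, establishing the biconditional simultaneously for all worlds $w \in W$ and all runs $\rho \in \Delta = \runs$ (so that the inductive hypothesis is available at every world, which the modal case will need). First I would dispatch the base cases. For $C = A$ a concept name, the equivalence is immediate from the definition $A^{\Imc_w} = \{\rho \mid A \in \rho(w)\}$. For $C = \{a\}$ a nominal, I would use that the model is total, so $\{a\}^{\Imc_w} = \{a^{\Imc_w}\}$, and that $a^{\Imc_w}$ is well-defined as the \emph{unique} run with $\{a\} \in \rho(w)$ by \eqref{qs:nom} and \eqref{run:nominal}; hence $\rho \in \{a\}^{\Imc_w}$ iff $\rho = a^{\Imc_w}$ iff $\{a\} \in \rho(w)$. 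The Boolean cases $C = \neg D$ and $C = D \sqcap E$ then follow directly from the inductive hypothesis together with the type conditions \eqref{ct:neg} and \eqref{ct:con}.

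For the modal and universal-role cases I would argue both directions explicitly. For $C = \Diamond_i D$: the $(\Leftarrow)$ direction uses \eqref{rn:modal2} to produce an $R_i$-successor $v$ with $D \in \rho(v)$ and then applies the hypothesis at $v$; the $(\Rightarrow)$ direction applies the hypothesis at the witnessing successor and closes with \eqref{rn:modal}. For $C = \exists u.D$: both directions combine the semantics $u^{\Imc_w} = \Delta \times \Delta$, the quasistate condition \eqref{qs:univ}, and the surjectivity condition \eqref{run:exists}, which is what lets me realise a type containing $D$ as an actual run so that the hypothesis becomes applicable.

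The main obstacle is the existential role case $C = \exists r.D$, because the interpretation $r^{\Imc_w}$ is defined as the \emph{maximal} relation compatible with the $\neg\exists r$-restrictions, so both directions need care. For $(\Leftarrow)$, given $\exists r.D \in \rho(w)$, condition \eqref{qs:existsr} yields a type $\contp' \in \funcand(w)$ containing $D$ together with all $\neg D'$ such that $\neg \exists r.D' \in \rho(w)$; condition \eqref{run:exists} realises $\contp'$ as a run $\rho'$ with $\rho'(w) = \contp'$, whence $(\rho,\rho') \in r^{\Imc_w}$ by the very definition of the role and $\rho' \in D^{\Imc_w}$ by the hypothesis. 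For $(\Rightarrow)$, I would argue by contradiction: if $\neg\exists r.D \in \rho(w)$ while some $r$-successor $\rho'$ satisfies $D$, then the definition of $r^{\Imc_w}$ forces $\neg D \in \rho'(w)$, contradicting $D \in \rho'(w)$ (which holds by the hypothesis) via \eqref{ct:neg}. Here I rely on closure of $\con{C_0}$ under single negation to guarantee that $\neg\exists r.D$ and $\neg D$ are available as members of the relevant types. Once the claim is established, the lemma follows: by \eqref{b1} and \eqref{run:exists} there is a run $\rho$ with $C_0 \in \rho(w_0)$, so $\rho \in C_0^{\Imc_{w_0}} \neq \emptyset$.
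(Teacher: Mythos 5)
Your proof is correct and follows essentially the same route as the paper's: a structural induction establishing the biconditional at all worlds simultaneously, with the nominal case settled by the uniqueness conditions on types and runs, the Boolean cases by the type conditions, the $\exists r$ case handled exactly as in the paper (the backward direction via the witness type from the quasistate condition realised as a run, the forward direction by contradiction using the maximality of the constructed role relation and closure under single negation), and the $\Diamond_i$ case via the two run conditions. The only cosmetic difference is that you invoke the run-existence condition explicitly in the $\exists u$ case, where the paper's argument leaves it implicit.
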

   \begin{proof}
     The proof is by induction on $C$. The base cases, $C = A$ and $C =
     \{a\}$, follow immediately from the definition of $\Mmf$. We then
     consider the inductive cases.\\
     Let $C = \lnot D$. $\lnot D\in \rho(w)$ iff, by~\eqref{ct:neg},
     $D\not\in \rho(w)$. By induction, $D\not\in \rho(w)$ iff
     $\rho\not\in D^{\Imc_{w}}$ iff $\rho\in(\lnot D)^{\Imc_{w}}$.\\
     Let $C = D\sqcap E$. Similar to the previous case, now by
     using~\eqref{ct:con}.\\
     Let $C = \exists u.D$. $\rho \in (\exists u.D)^{\Imc_{w}}$ iff
     there exists $\rho' \in D^{\Imc_{w}}$. By inductive hypothesis,
     $\rho' \in D^{\Imc_{w}}$ iff $D \in \rho'(w)$. By~\eqref{qs:univ},
     the previous step holds iff $\exists u.D \in \rho(w)$.\\
     Let $C = \exists r.D$. $(\Rightarrow)$ Suppose that
     $\rho\in (\exists r.D)^{\Imc_{w}}$. Then, there exists
     $\rho' \in \Delta$ such that $(\rho, \rho') \in r^{\Imc_{w}}$ and
     $\rho' \in D^{\Imc_{w}}$. By inductive hypothesis,
     $\rho' \in D^{\Imc_{w}}$ iff $D\in\rho'(w)$. By contradiction,
     assume that $\exists r.D\not\in\rho(w)$, then, by~\eqref{ct:neg},
     $\lnot \exists r.D\in\rho(w)$. By definition of $r^{\Imc_{w}}$,
     since $(\rho,\rho')\in r^{\Imc_{w}} $, then,
     $\lnot D\in \rho'(w)$, thus contradicting, by~\eqref{ct:neg}, that
     $D\in \rho'(w)$.
     $(\Leftarrow)$ Conversely, suppose that
     $\exists r.D \in \rho(w)$. By~\eqref{qs:existsr}
     and~\eqref{run:exists}, there exists a $\rho' \in \Delta$ such that
     $\{ \lnot E \mid \lnot \exists r . E \in \rho(w) \} \cup \{ D \}
     \subseteq \rho'(w)$. By inductive hypothesis and the definition
     of $r^{\Imc_{w}}$, $\rho' \in  D^{\Imc_{w}}$ and $(\rho, \rho')
     \in r^{\Imc_{w}}$. Thus, $\rho \in (\exists r.D)^{\Imc_{w}}$.\\
     Let $C = \Diamond_{i} D$.  $\rho \in (\Diamond_{i} D)^{\Imc_{w}}$ iff
     there exists $v \in W$ such that $(w,v)\in R_{i}$ and $\rho\in D ^{\Imc_{v}}$.  By
     inductive hypothesis, $\rho \in D^{\Imc_{v}}$ iff $D \in \rho(v)$. Hence, 
     by~\eqref{rn:modal} and~\eqref{rn:modal2}, such a $v$ exists iff $\Diamond_{i} D\in\rho(w)$.
\end{proof}
Now we can easily finish the proof of Lemma~\ref{th-s5nsat} by observing
that, by~\eqref{b1}, there exists a world $w'\in W$ and a type
$\contp \in \funcand(w')$ such that $C_{0}\in \contp$. Thus,
by~\eqref{run:exists}, there exists $\rho \in \runs$ with
$\rho(w') = \contp$ and, by Claim~\ref{cla:qmcon:s5},
$\rho\in C_{0}^{\Imc_{w'}}$.
\end{proof}

We now show the exponential finite model property in terms of 
quasimodels.
%\nb{A: what about FMP? \\ M: to discuss, it should be a consequence of this, right?}
%

\begin{lemma}\label{lem:s5exp}
  There exists an $\Sfive^{n}_{\ALCOu}$ quasimodel for $C_{0}$ iff there exists
  an $\Sfive^{n}_{\ALCOu}$ quasimodel for $C_{0}$ of exponential size in the
  length of $C_{0}$.
\end{lemma}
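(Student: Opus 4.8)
The $(\Leftarrow)$ direction is immediate, since every exponential-size quasimodel is in particular a quasimodel. For $(\Rightarrow)$, suppose $\Qmf = (\mathfrak{F}, \funcand, \runs)$ is an $\Sfive^n_{\ALCOu}$ quasimodel for $C_0$; by Lemmas~\ref{lem:tree-shapedS5} and~\ref{th-s5nsat} I may assume $\mathfrak{F}$ is a tree-shaped $\Sfive^n$-frame of depth at most $\md(C_0) \le |C_0|$. Since every quasistate is a set of types for $C_0$ and there are at most $2^{|\con{C_0}|}$ such types, each $\funcand(w)$ already has at most exponential size; hence it suffices to bound the number of worlds and the number of runs. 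The plan is to bound the outdegree of $\mathfrak{F}$ by an exponential and then exploit that the linear depth forces an exponential number of worlds.

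Fix a world $w$ and a modality $i$. Recall that in a tree-shaped $\Sfive^n$-frame the $R_i$-class (``cluster'') of $w$ consists of $w$ together with its $i$-children, and these children are $i$-leaves (they have no further $i$-successors). The only obligations the $i$-children must meet are the diamond witnesses demanded by R2: for each type $\contp \in \funcand(w)$ and each $\Diamond_i C \in \contp$ with $C \notin \contp$, some world of the cluster must realise a type containing $C$. There are at most $|\con{C_0}| \cdot 2^{|\con{C_0}|}$ such pairs $(\contp, \Diamond_i C)$, i.e. exponentially many. By \emph{selective filtration} I would keep, for each such obligation, a single witnessing $i$-child together with the subtree below it, and discard the remaining $i$-children; repeating this at every world yields an at most exponential outdegree, and hence (by depth $\le |C_0|$) at most $2^{p(|C_0|)}$ worlds for a polynomial $p$.

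The \emph{surgery} then consists in redefining the runs on the pruned frame so that conditions R1, R2, M1 and M2 are restored. Here the equivalence structure of $\Sfive^n$ is essential: since all worlds of a cluster are mutually $R_i$-accessible, the set of concepts $C$ witnessed by a run inside a cluster depends only on the types the run realises there, not on the position within the cluster, so runs may be freely rerouted among the kept witnesses. Concretely, I would rebuild the run set top-down, threading each run entering $w$ with type $\contp$ through the kept witness $i$-children so that every $\Diamond_i C \in \contp$ is realised, while routing any run carrying a nominal type through the unique nominal child to preserve M2. Once the worlds are exponentially bounded, I would finally replace $\runs$ by a minimal covering subset --- one run through each pair $(w, \contp)$ with $\contp \in \funcand(w)$, plus the unique run through each nominal type --- which has at most (number of worlds) $\times\, 2^{|\con{C_0}|}$ elements, still exponential, and still satisfies M1 (guaranteed by the covering) and M2 (any kept run through a nominal type must be the unique nominal run). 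Combined with Lemma~\ref{th-s5nsat}, this yields an exponential model and the claimed $\NExpTime$ bound.

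The main obstacle is the surgery of the previous paragraph: pruning a single $i$-child can simultaneously destroy the diamond witnesses of many runs, so one cannot simply keep the old runs. The delicate point is to verify that the rerouted runs remain globally consistent --- that the reassignment performed inside one cluster is compatible both with the subtrees hanging below the kept children and with the reroutings in neighbouring clusters, and that the nominal-uniqueness condition M2 is maintained throughout. I expect this to rest squarely on the mutual accessibility within $\Sfive^n$-clusters, which decouples the witnessing requirement of a run from the particular child used to meet it; for $\K^n$ the analogous argument is in fact simpler, since the clusters degenerate and each diamond obligation is met by an independent successor, which is why the proof for $\K^n_{\ALCOu}$ can then be derived in a straightforward manner.
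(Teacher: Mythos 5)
Your first half coincides with the paper's proof: reduction to tree-shaped $\Sfive^{n}$-frames of depth at most $\md(C_{0})$, and a selective filtration that keeps, for each obligation $(\contp,\Diamond_{i}C)$ with $\contp\in\funcand(\avec{w})$, one witnessing $i$-child (the paper's proto-runs $\rho_{\avec{w},\contp}$ and proto-witnesses $v_{\avec{w},i,C,\contp}$). The genuine gap is in your surgery: you try to restore condition (\textbf{R2}) by rerouting runs \emph{within the fixed pruned frame}, and this cannot work in the presence of nominals. Concretely, take $\Diamond_{i}\{a\}\in\con{C_{0}}$ and a world $\avec{w}$ whose quasistate contains a non-nominal type $\contp\ni\Diamond_{i}\{a\}$; nothing bounds the number of runs through $\contp$ (condition (\textbf{M1}) is only a lower bound), so there may be more of them --- even infinitely many, before your final thinning step --- than there are kept children of $\avec{w}$. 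By (\textbf{R2}) each such run needs a child at which \emph{it} carries the $\{a\}$-type, but by (\textbf{Q1}) and (\textbf{M2}) every child offers exactly one $\{a\}$-slot; this is precisely the non-rigidity of $a$, which lets distinct children assign $a$ to distinct runs in the original quasimodel. Since you kept only one child per obligation, there are simply not enough slots, and no rerouting among the kept witnesses can create more. The same conflict cascades downwards: threading a rerouted run along the proto-run $\rho_{v,\contp'}$ of a kept child $v$ duplicates that run's path below $v$ and violates (\textbf{M2}) at every nominal type on it. Note also that your order of operations aggravates this: you reroute the (unbounded) original run set first and thin it only at the end, whereas any counting argument needs the runs bounded before witnesses are allocated.

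The paper resolves exactly this obstruction by a different device: it first thins the run set to the exponentially many proto-run restrictions $\Rmf'$, and then, instead of rerouting on a fixed frame, it \emph{duplicates worlds}. Each kept child is copied once per ``repair'' bijection $\sigma_{\avec{w},\rho}$ (which swaps $\rho$ with the proto-run of its type at $\avec{w}$), and successors compose these swaps ($\sigma'=\tau\circ\sigma$), so every run obtains a private copy of the witness subtree it needs, with the map $\rho\mapsto\rho^{\uparrow}$ remaining a bijection --- which is what preserves (\textbf{M1}) and (\textbf{M2}) globally. The outdegree is multiplied only by $|\Rmf'|$ and the depth is unchanged, so the result is still exponential. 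Your cluster-symmetry intuition (witnessing obligations in an $\Sfive^{n}$-cluster are insensitive to which child meets them) is indeed the reason the swaps are sound, but by itself it does not remove the cardinality obstruction above; without the world-copying step the construction fails, so as written the proof does not go through.
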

\begin{proof}
Assume $\Qmf = (\mathfrak{F},\funcand,\runs)$ with $\mathfrak{F}=(W,R_{1},\ldots,R_{n})$ is given such that $W$ is a prefix-closed set of words of the form \eqref{eq:world}, where $1\leq i_{j}\leq m$, $i_{j}\not=i_{j+1}$, and $m$ is bounded by md$(C_{0})$.

	For each $\avec{w}\in W$ and $\contp\in \funcand(\avec{w})$ we fix a \emph{proto-run, $\rho_{\avec{w},\contp}\in \runs$, for $\avec{w}$ and $\contp$} with $\rho_{\avec{w},\contp}(\avec{w})=\contp$. Let $\mathcal{F}(\avec{w})$ denote the set of selected proto-runs through $\funcand(\avec{w})$, that is $\mathcal{F}(\avec{w})=\{\rho_{\avec{w},\contp} \mid \contp\in \funcand(\avec{w})\}$.
	We also fix for any $\Diamond_{i}C\in \contp\in \funcand(\avec{w})$ a \emph{proto-witness} $\avec{w}'=v_{\avec{w},i,C,\contp}\in W$ such that
	$\avec{w}'$ takes the form $\avec{w}iv$ (and so $(\avec{w},\avec{w}')\in R_{i}$) and $C\in \rho_{\avec{w},\contp}(\avec{w}')$,
	whenever such a $\avec{w}'$ exists. Note that such a $\avec{w}'$ might not exist since by the definition of $R_{i}$ as equivalence closure in tree-shaped models the witness for $\Diamond_{i}C$ might also be $\contp$ itself if $C\in \contp$ or $\avec{w}$ could take the form $\avec{w}_{0}iw$ for some $\avec{w}_{0}$ and either $C\in \rho_{\avec{w},\contp}(\avec{w}_{0})$ or there is $w'$ with $\avec{w}_{0}iw'\in W$ and $C\in\rho_{\avec{w},\contp}(\avec{w}_{0}iw')$.
	
    We next define inductively sets $W_{0},W_{1},\ldots,W_{K}\subseteq W$ with $K$ the depth of $\mathfrak{F}$ by setting $W_{0}=\{w_{0}\}$ and $W_{j+1}=W_{j+1}^{1} \cup \cdots \cup W_{j+1}^{n}$, where for $1\leq i \leq n$:
    $$
    W_{j+1}^{i}= \{v_{\avec{w},i,C,\contp} \mid  \Diamond_{i}C\in \contp\in \funcand(\avec{w}), \avec{w}\in W_{j}\}.
    $$
    Let $W'=W_{0}\cup \cdots \cup W_{K}$ and let $R_{1}',\ldots,R_{n}'$ and $\funcand'$ be the restrictions of $R_{1},\ldots,R_{n}$ and $\funcand$ to $W'$:
    $$
    R_{i}'= R_{i}\cap (W'\times W'), \quad \funcand'(\avec{w})=\funcand(\avec{w}) \text{ for $\avec{w}\in W'$}
    $$  
    Also let $\Rmf'$ denote the set of restrictions of the proto-runs $\rho_{\avec{w},\contp}$ with $\avec{w}\in W'$ to $W'$. Let $\mathfrak{F}'=
    (W',R_{1}',\ldots,R_{n}')$ and $\mathfrak{Q}'=(\mathfrak{F}',\funcand',\Rmf')$. Then 
    $\mathfrak{Q}'$ is a quasimodel except that the proto-runs in $\Rmf'$ might not all satisfy~(\ref{rn:modal2}). Note that~(\ref{rn:modal2})
    holds for every proto-run $\rho_{\avec{w},\contp}$ at the world
    $\avec{w}$ for those $\Diamond_{i}C$ for which $i_{m-1}\ne i$ (assuming that $\avec{w}$ takes the form~\eqref{eq:world}) but that there might not be witnesses for $\Diamond_{i}C$ on $\avec{w}'\ne\avec{w}$ and if $i_{m-1}=i$. To ensure~(\ref{rn:modal2}) also in these cases we introduce, in a careful way, copies of existing worlds that provide the witnesses for such $\Diamond_{i}C$. 
    
    Denote the set of all bijections on $\Rmf'$ by $B(\Rmf')$. 
    The new domain is constructed using copies of the original domain elements indexed by bijections in $B(\Rmf')$. 
    
    Take for any $\avec{w}\in W'$ and $\rho\in \Rmf'$ the bijection
    $\sigma_{\avec{w},\rho}\in B(\Rmf')$ that swaps $\rho$ with
    the proto-run $\rho_{\avec{w},\contp}$ with $\contp=\rho(\avec{w})$ and maps the remaining runs to themselves. 
    The \emph{$\avec{w}$-repair set}, $\textrm{Rep}(\avec{w})$, is the set of all $\sigma_{\avec{w},\rho}$ with $\rho \in \Rmf'$. Note that the identity function, $id$, on $\Rmf'$ is an element of $\textrm{Rep}(\avec{w})$.
    
    For any $(\avec{w},\sigma)\in W'\times B(\Rmf')$, define 
    \begin{multline*}
    Suc_{i}(\avec{w},\sigma) = \{ (\avec{w}',\sigma') \in W'\times B \mid
    \avec{w}'=\avec{w}iw\in W', \\
     \sigma'=\tau \circ \sigma \text{ for some } \tau \in \textrm{Rep}(\avec{w})\}
    \end{multline*}
    and let $W''$ denote the set of all words 
    \begin{equation}\label{eqveru}
    \avec{u}=(\avec{w}_{0},\sigma_{0})i_{0}(\avec{w}_{1},\sigma_{1})i_{1} \cdots i_{m-1}(\avec{w}_{m},\sigma_{m}),
    \end{equation}
    where $(\avec{w}_{0},\sigma_{0})= (w_{0},id)$ and $(\avec{w}_{i+1},\sigma_{i+1})\in Suc_{i}(\avec{w}_{i},\sigma_{i})$ for all $0\leq i<m$. 
    
    Define $\funcand''$ by setting $\funcand''(\avec{u})= \funcand'(\avec{w}_{m})$ for any $\avec{u}\in W''$ of the form \eqref{eqveru}.
    Associate with any run $\rho\in \Rmf'$ a unique run $\rho^{\uparrow}$ on $W''$ by setting $\rho^{\uparrow}(\avec{u}) := \sigma_{m}(\rho)(\avec{w}_{m})$ for any $\avec{u}\in W''$ of the form~\eqref{eqveru} and let $\Rmf''= \{ \rho^{\uparrow} \mid \rho\in \Rmf'\}$.
         
We show that $(\mathfrak{F}'',\funcand'', \Rmf'')$ with $\mathfrak{F}''=(W'',R_{1}'',\ldots,R_{n}'')$ the tree-shaped  $\Sfive^{n}_{\ALCOu}$ frame defined by $W''$, is as required.

We show that conditions \eqref{b1}, \eqref{rn:modal}, \eqref{rn:modal2}, \eqref{run:exists}, and \eqref{run:nominal} are satisfied.

Observe first that $\rho \mapsto \rho^{\uparrow}$ is a bijection from $\Rmf'$ onto~$\Rmf''$. To show this observe that the mapping is surjective by definition. To show that it is injective, assume that  $\rho_{1}\not=\rho_{2}$. Take $\avec{w}_{m}\in W'$ of the form~\eqref{eq:world} with $\rho_{1}(\avec{w}_{m})\ne\rho_{2}(\avec{w}_{m})$. Then let
$$
\avec{v}=(\avec{w}_{0},id)i_{0}(\avec{w}_{1},id)i_{1} \cdots i_{m-1}(\avec{w}_{m},id)\in W'',
$$
where $\avec{w}_{i+1}=\avec{w}_{i}iw_{i+1}$ for $i<m$. We have $\rho_{1}^{\uparrow}(\avec{v})=id(\rho_{1})(\avec{w}_{m})\ne id(\rho_{2})(\avec{w}_{m})$, as required.

\eqref{b1} is straightforward.

We prove \eqref{run:exists} and \eqref{run:nominal} in one go by showing that, for any 
$\avec{u}\in W''$ of the form~\eqref{eqveru} and $\contp\in \funcand''(\avec{u})$, there is a bijection between
$V_{0}=\{\rho\in \Rmf''\mid \rho(\avec{u})=\contp\}$ and 
$V_{1}=\{\rho\in \Rmf'\mid \rho (\avec{w}_{m})=\contp\}$ (recall that 
$\funcand''(\avec{u}) = \funcand'(\avec{w}_{m})$).

Indeed, we claim that $F\colon V_{0}\rightarrow V_{1}$ defined by setting $F(\rho^{\uparrow})=\sigma_{m}(\rho)$ is a bijection. As $\rho\mapsto \rho^{\uparrow}$ is bijective and $\rho^{\uparrow}(\avec{u})=\sigma_{m}(\rho)(\avec{w}_{m})$, $F$ is a well defined function into $V_{1}$. It is injective since $\sigma_{m}$ is also bijective, by definition. It is surjective since for any $\rho\in \Rmf'$ with $\rho(\avec{w}_{m})=\contp$ we have $F((\sigma_{m}^{-1}(\rho))^{\uparrow})=\sigma_{m}\sigma_{m}^{-}(\rho)=\rho$
and $\sigma_{m}^{-1}(\rho)\in \Rmf'$.

We next show \eqref{rn:modal} and \eqref{rn:modal2}. Assume that $\Diamond_{i} C \in \con{C_{0}}$ and $\rho^{\uparrow}\in \Rmf''$.
Consider $\avec{u}\in W''$ of the form~\eqref{eqveru}.

For~\eqref{rn:modal}, assume there exists $\avec{v} \in W''$ such that $\avec{u}R_{i}''\avec{v}$ and $C \in \rho^{\uparrow}(\avec{v})$.
We have to show that $\Diamond_{i}C\in \rho^{\uparrow}(\avec{u})$.
Observe that the runs in $\Rmf'$ satisfy~\eqref{rn:modal} in $\Qmf'$ and that, moreover, 
\begin{enumerate}
[label=\textbf{R\arabic*},leftmargin=*,series=run]\addtocounter{enumi}{2}
\item if $\Diamond_{i}C \in \rho(\avec{w})$ with $\rho\in \Rmf'$ and $(\avec{w},\avec{w}')\in R_{i}'$, then $\Diamond_{i}C\in \rho(\avec{w}')$.
\end{enumerate}
%
%\medskip
We distinguish the following four cases.
\begin{enumerate}
	\item $\avec{v}=\avec{u}$. Then $\Diamond_{i}C\in \rho^{\uparrow}(\avec{u})$ since we have $\Diamond_{i}C\in \contp$ whenever $C\in \contp$ for $\Diamond_{i} C \in \con{C_{0}}$ for any $\contp$.
	\item $\avec{v}=\avec{u}i(\avec{u}i w_{m+1},\sigma_{m+1})$ for some $(w_{m+1},\sigma_{m+1})$ with $\avec{w}_{m}iw_{m+1}\in W'$ with $\sigma_{m+1}=\tau\circ \sigma_{n}$ and $\tau \in \textrm{Rep}(\avec{w}_{m})$.
	
	We have $\rho^{\uparrow}(\avec{u})=\sigma_{m}(\rho)(\avec{w}_{m})$
	and $\rho^{\uparrow}(\avec{v})=\tau\circ \sigma_{m}(\rho)(\avec{w}_{m}iw_{m+1})$. From $C\in \tau\circ \sigma_{m}(\rho)(\avec{w}_{m}iw_{m+1})$, it follows that $\Diamond_{i}C\in \tau\circ \sigma_{m}(\rho)(\avec{w}_{m} )$ since $\tau\circ \sigma_{m}(\rho)$ satisfies condition~\eqref{rn:modal} for runs in $\Qmf'$. Now, $\tau\circ \sigma_{m}(\rho)$ coincides with $\sigma_{m}(\rho)$ on $\avec{w}_{m}$. Hence, $\Diamond_{i}C\in \sigma_{m}(\rho)(\avec{w}_{m})$, as required.
	 
	\item $\avec{v}=(\avec{w}_{0},\sigma_{0})i_{0}(\avec{w}_{1},\sigma_{1})i_{1} \cdots i_{m-2}(\avec{w}_{m-1},\sigma_{m-1})$ and $i_{m-1}=i$.
	
	We have $\sigma_{m}=\tau'\circ \sigma_{m-1}$ for some $\tau' \in \textrm{Rep}(\avec{w}_{m-1})$.
	Note that  $\rho^{\uparrow}(\avec{u})=\tau'\circ\sigma_{m-1}(\rho)(\avec{w}_{m-1}iw_{m})$
	and $\rho^{\uparrow}(\avec{v})=\sigma_{m-1}(\rho)(\avec{w}_{m-1})$.
	Now $\tau'\circ \sigma_{m-1}(\rho)$ coincides with $\sigma_{m-1}(\rho)$ on $\avec{w}_{m-1}$. Hence, $C\in \tau'\circ\sigma_{m-1}(\rho)(\avec{w}_{m-1})$.
	Thus,  as required, $\Diamond_{i}C\in \tau'\circ\sigma_{m-1}(\rho)(\avec{w}_{m-1}iw_{m})$ since
	$\tau'\circ \sigma_{m-1}$ satisfies condition~\eqref{rn:modal} for runs in $\Qmf'$.
	 
	\item for $\avec{v}'=(\avec{w}_{0},\sigma_{0})i_{0}(\avec{w}_{1},\sigma_{1})i_{1} \cdots i_{m-2}(\avec{w}_{m-1},\sigma_{m-1})$ and $i_{m-1}=i$ we have 
	$\avec{v}=\avec{v}'i(\avec{w}_{m-1}i w_{m}',\sigma_{m}')$ for some $(w_{m}',\sigma_{m}')$ with $\avec{w}_{m-1}i w_{m}'\in W'$ and $\sigma_{m}'=\tau\circ \sigma_{m-1}$ for some $\tau \in \textrm{Rep}(\avec{w}_{m-1})$.	
	
	We can then first show that $\Diamond_{i}C\in \rho^{\uparrow}(\avec{v}')$ in exactly the same way as in the proof of Point~2. Now we can argue in the same way as in the proof of Point~3 and using~(\textbf{R3}) that $\Diamond_{i}C\in \rho^{\uparrow}(\avec{u})$.  
\end{enumerate}

For~\eqref{rn:modal2}, assume $\Diamond_{i}C \in \rho^{\uparrow}(\avec{u})$. We have to show that there exists $\avec{v} \in W''$ such that $\avec{u}R_{i}''\avec{v}$ and $C \in \rho^{\uparrow}(\avec{v})$.
We have $\rho^{\uparrow}(\avec{u})= \sigma_{m}(\rho)(\avec{w}_{m})$ and distinguish the following three cases.
\begin{enumerate}
	\item $C\in \sigma_{m}(\rho)(\avec{w}_{m})$. Then we are done since, for $\avec{v}=\avec{u}$, we have $C\in \rho^{\uparrow}(\avec{v})$ and $\avec{u}R_{i}''\avec{v}$.
	\item For $\contp=\sigma_{m}(\rho)(\avec{w}_{m})$ we have for the proto-run $\rho_{\avec{w}_{m},\contp}\in \Rmf$ that there is a proto-witness 
	$\avec{w}_{m+1}=v_{\avec{w}_{m},i,C,\contp}\in W'$ of the form $\avec{w}_{m}iw_{m+1}$. Then $C\in \rho_{\avec{w}_{m},\contp}(\avec{w}_{m+1})$. 
	
	Then let $\avec{v}= \avec{u}i(\avec{w}_{m+1},\tau\circ\sigma_{m})$, where
    $\tau$ swaps $\sigma_{m}(\rho)$ with $\rho_{\avec{w}_{m},\contp}$. We have
    $\avec{u}R_{i}''\avec{v}$ and $\rho^{\uparrow}(\avec{v})=\tau\circ\sigma_{m}(\rho)(\avec{w}_{m+1})=\rho_{\avec{w}_{m},\contp}(\avec{w}_{m+1})$. So, $C \in\rho^{\uparrow}(\avec{v})$, as required.
    \item $i_{m-1}=i$ and $\Diamond_{i} C \in \sigma_{m-1}(\rho)(\avec{w}_{m-1})$. 
    
    If $C\in \sigma_{m-1}(\rho)(\avec{w}_{m-1})$, then $C \in \rho^{\uparrow}(\avec{v})$
    for $\avec{v}=(\avec{w}_{0},\sigma_{0})i_{0}(\avec{w}_{1},\sigma_{1})i_{1} \cdots i_{m-2}(\avec{w}_{m-1},\sigma_{m-1})$, and we are done.
    
    Otherwise, we can proceed with $\sigma_{m-1}(\rho)(\avec{w}_{m-1})$
    in the same way as we did with $\sigma_{m}(\rho)(\avec{w}_{m})$
    in Point~2 and find for $\avec{v}'=(\avec{w}_{0},\sigma_{0})i_{0}(\avec{w}_{1},\sigma_{1})i_{1} \cdots i_{m-2}(\avec{w}_{m-1},\sigma_{m-1})$ a
    $\avec{v}=\avec{v}'i(\avec{w}_{m-1}i w_{m}',\sigma_{m}')$ for some $(w_{m}',\sigma_{m}')$ with $\avec{w}_{m-1}i w_{m}'\in W'$ and $\sigma_{m}'=\tau\circ \sigma_{m-1}$ for some $\tau \in \textrm{Rep}(\avec{w}_{m-1})$ such that $C \in \rho^{\uparrow}(\avec{v})$.
\end{enumerate}
Clearly, the quasimodel $(\mathfrak{F}'',\funcand'',\runs'')$ is of at most exponential size, as required. 
\end{proof}
The exponential finite model property now follows from 
Lemma~\ref{lem:s5exp} using Lemma~\ref{th-s5nsat}.
\end{proof}

\subsection{Proof of Theorem~\ref{thm:moddickson}}

\moddickson*	

As before, it suffices to consider total concept satisfiability under ontologies without definite descriptions. For $\Kfn_{\ALCOu}$, the decidability proof for concept satisfiability without ontologies is easily extended to a proof with ontologies, so we consider concept satisfiability without ontology. Call a finite frame $\mathfrak{F}=(W,R_{1},\ldots,R_{n},R)$ a \emph{finite tree-shaped} frame if $R$ is the transitive closure of $R_{1}\cup \cdots \cup R_{n}$ and there exists a $w_{0}\in W$ such that the domain $W$ of $\mathfrak{F}$ is a prefix closed set of words of the form 
	\begin{equation}\label{eq:worldtrans}
		\avec{w}=w_{0}i_{0}w_{1}\cdots i_{n-1}w_{m}
	\end{equation}
	where $1\leq i_{j}\leq n$, and 
	$$
	R_{i} = \{ (\avec{w},\avec{w}iw)\mid \avec{w}iw\in W\}.
	$$
	We call $w_{0}$ the root of $\mathfrak{F}$. A \emph{finite tree-shaped model}
	%\nb{M: change `model' to `interpretation'}
	takes the form $\mathfrak{M}=(\mathfrak{F},\Delta,\mathcal{I})$ with $\mathfrak{F}$
	a finite tree-shaped frame. A straightforward unfolding argument shows that any $\MLALCOu$-concept that is satisfiable in a $\Kfn_{\ALCOu}$-model is also satisfiable in a finite tree-shaped model. So we start with finite tree-shaped models.  
	
	Given an $\MLALCOu$-concept $C_{0}$, let
	% $\ind{C_\p}$ be the set of
	% individual names occurring in $C_\p$, and 
	$\con{C_{0}}$ be the closure under single negation of the set of
	concepts occurring in $C_{0}$.
	A \emph{type} for $C_{0}$ is a subset $\contp$ of $\con{C_{0}}$
	such that
	\begin{enumerate}
		[label=\textbf{C\arabic*},leftmargin=*,series=run]
		\item
		$\neg C \in \contp$ iff $C\not \in \contp$, for all
		$\neg C \in \con{C_{0}}$;
		\label{ct:neg}
		\item
		$C \sqcap D \in \contp$ iff $C, D \in \contp$, for all
		$C \sqcap D \in \con{C_{0}}$.
		\label{ct:con}
	\end{enumerate}
	Note that there are at most $2^{|\con{C_{0}}|}$ types for $C_{0}$.
	We fix an ordering $\contp_{1},\ldots,\contp_{k}$ of the set of types for $C_{0}$. We use vectors $\avec{x}=(x_{1},\ldots,x_{k})\in \extN^{k}$ to represent that the type~$\contp_{i}$ is satisfied by $x_{i}$-many elements in a world $w$, for $1\leq i \leq n$. Let $|\avec{x}|=\sum_{i=1}^{k}x_{i}$. A \emph{quasistate} for $C_{0}$ is a vector $\avec{x}=(x_{1},\ldots,x_{k}) \in \extN^{k}$ with $|\avec{x}|>0$   
	% Moreover,
	% $|\con{C_\p}| \leq 2\len{C_\p}$, where $\len{C_\p}$ is the number of symbol
	% occurrences in $C_\p$.
	satisfying the following conditions:
	\begin{enumerate}
		[label=\textbf{Q\arabic*},leftmargin=*,series=run]
		% \item for every $\contp \in \settp$, $C_{\p} \in \contp$;
		% \label{qs:concphi}
		%
		\item for every $\{a\} \in \con{C_{0}}$, 
		$$
		\sum_{\{a\}\in \contp_{i}} x_{i}=1;
		$$  %
		\item for every $x_{i}>0$ and every
		$\exists r.C \in \contp_{i}$, there exists $x_{j}>0$ such
		that $\{ \lnot D \mid \lnot \exists r. D \in \contp_{i} \} \cup \{ C \}
		\subseteq \contp_{j}$;
		\item for every $x_{i}>0$ and $\exists u.C \in \contp_{i}$ iff
		there exists $x_{j}>0$ such that $C \in \contp_{j}$.
		\label{qs:univ}
	\end{enumerate}
	
	A \emph{basic structure for $C_{0}$} is a pair $(\mathfrak{F}, \funcand)$, where
	$\mathfrak{F}=(W,R_{1},\ldots,R_{n})$ is a finite tree-shaped frame with root $w_{0}$ and $\funcand$ is a function associating with
	every $w \in W$ a quasistate $\funcand(w)$ for $C_{0}$, satisfying
	\begin{enumerate}
		[label=\textbf{B\arabic*},leftmargin=*,series=run]
		\item there exists $i\leq k$ with $\funcand(w_{0})_{i}>0$ such that $C_{0} \in \contp_{i}$.
		\label{b1}
	\end{enumerate}  
	Call a subset $V$ of $W$ \emph{$R_{i}$-closed} if $v\in V$ whenever $w\in V$
	and $wR_{i}v$. Call $V$ \emph{$R$-closed} if $V$ is $R_{i}$-closed for $1\leq i \leq n$.
	A \emph{run through $(\mathfrak{F}, \funcand)$} is a partial function $\rho$ mapping worlds $w \in W$ to natural numbers $\{1,\dots, k\}$ with $\funcand(w)_{\rho(w)}>0$ such that the domain of $\rho$ is $R$-closed and satisfies
	the following condition for every $\Diamond_{i} C \in \con{C_{0}}$:
	\begin{enumerate}
		[label=\textbf{R\arabic*},leftmargin=*,series=run]
		\item $\Diamond_{i} C \in \contp_{\rho(w)}$ if there exists $v \in W$ such that
		$wR_{i}v$ and $C \in \contp_{\rho(v)}$;
		\label{rn:modal}
		\item if $\Diamond_{i} C \in \contp_{\rho(w)}$ then there exists $v \in W$ such that
		$wR_{i}v$ and $C \in \contp_{\rho(v)}$.
		\label{rn:modal2}
	\end{enumerate}
	
	A \emph{quasimodel for $C_{0}$} is a triple $\quasimod =
	(\mathfrak{F}, \funcand, \runs)$, where $(\mathfrak{F}, \funcand)$ is a basic structure for
	$C_{0}$, and $\runs$ is a set of runs through $(\mathfrak{M}, \funcand)$ such that
	the following condition holds:
	\begin{enumerate}
		[label=\textbf{M\arabic*},leftmargin=*,series=run]
		\item\label{run:exists} for every $w \in W$ and every $i\leq k$,
		$$
		|\{\rho\in \runs \mid \rho(w)=i\}| = \funcand(w)_{i}.
		$$
	\end{enumerate}
	The following lemma provides a link between quasimodels for $C_{0}$ and models satisfying $C_{0}$.
	\begin{lemma}\label{th-GLnsat}
		An $\MLALCOu$-concept 
		$C_{0}$ is satisfiable in the root of a finite tree-shaped model iff there exists a quasimodel for $C_{0}$ based on the same frame. 
	\end{lemma}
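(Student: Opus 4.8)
The plan is to establish the two directions of the equivalence by the standard correctness argument for the quasimodel abstraction, closely mirroring Lemma~\ref{th-s5nsat} for $\Sfive^n_{\ALCOu}$, but now tracking the expanding-domain structure through the $R$-closed domains of runs. For the $(\Rightarrow)$ direction, I would start from a finite tree-shaped model $\mathfrak{M}=(\mathfrak{F},\Delta,\mathcal{I})$ with $C_0^{\mathcal{I}_{w_0}}\ne\emptyset$ and read off, for each world $w$ and element $d\in\Delta^w$, the realised type $\contp^{\mathcal{I}_w}(d)=\{C\in\con{C_0}\mid d\in C^{\mathcal{I}_w}\}$, which satisfies \textbf{C1}--\textbf{C2} by the semantics. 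The quasistate $\funcand(w)$ is then the vector whose $i$-th component counts the elements of $\Delta^w$ realising $\contp_i$ (with value $\infty$ when infinitely many do), and each $d$ induces a run $\rho_d$ defined on $\{w\mid d\in\Delta^w\}$ with $\rho_d(w)$ the index of $\contp^{\mathcal{I}_w}(d)$. The key observation is that under expanding domains this set is $R$-closed (if $d\in\Delta^w$ and $wR_iv$ then $d\in\Delta^v$), so each $\rho_d$ is a legitimate run. Conditions \textbf{Q1}--\textbf{Q3} and \textbf{R1}--\textbf{R2} then follow directly from the interpretation of nominals, role and universal restrictions, and diamonds; \textbf{B1} holds since $C_0$ is realised at $w_0$; and \textbf{M1} holds by construction, the number of runs through $\contp_i$ at $w$ being exactly the number of elements realising $\contp_i$.

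For the $(\Leftarrow)$ direction, given a quasimodel $\quasimod=(\mathfrak{F},\funcand,\runs)$ I would take the domain at $w$ to be $\{\rho\in\runs\mid w\in\dom(\rho)\}$; the $R$-closedness (hence $R_i$-closedness) of run domains guarantees the expanding property $\Delta^w\subseteq\Delta^v$ whenever $wR_iv$. I would set $A^{\mathcal{I}_w}=\{\rho\mid A\in\contp_{\rho(w)}\}$ and $u^{\mathcal{I}_w}=\Delta^w\times\Delta^w$, interpret each role $r$ world-by-world as the maximal relation compatible with the $\lnot\exists r.D$ requirements of the source type (as in the $\Sfive$ construction), and let $a^{\mathcal{I}_w}$ be the unique run with $\{a\}\in\contp_{\rho(w)}$, uniqueness coming from \textbf{Q1} together with \textbf{M1}. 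The heart of the argument is a truth lemma stating $\rho\in C^{\mathcal{I}_w}$ iff $C\in\contp_{\rho(w)}$ for all $C\in\con{C_0}$, proved by induction on $C$ exactly as in Claim~\ref{cla:qmcon:s5}: the $\exists r.C$ case uses \textbf{Q2} to locate a witnessing type and \textbf{M1} to realise it by an actual run at $w$, the $\exists u.C$ case uses \textbf{Q3}, and the $\Diamond_i C$ case uses \textbf{R1}--\textbf{R2}. Finally \textbf{B1} yields a run realising $C_0$ at the root, so $C_0$ is satisfied there.

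The main obstacle, and the only real departure from the $\Sfive^n_{\ALCOu}$ argument, is the bookkeeping of \emph{partial} runs forced by expanding domains: at each inductive step I must check that the witnessing runs supplied by \textbf{Q2} and \textbf{M1} are genuinely defined at the world $w$ under consideration. This holds because $\funcand(w)_j>0$ entails, via \textbf{M1}, the existence of a run $\rho'$ with $\rho'(w)=j$ and hence $w\in\dom(\rho')$; and because the role relations, being defined independently in each world, realise all existential restrictions without disturbing the universal ($\lnot\exists r.D$) constraints. This is precisely where the non-rigidity of roles in the quasimodel is exploited. Compared with the subsequent recursive size bound obtained via Dickson's Lemma, this correctness lemma is routine, but it is the indispensable bridge from finite tree-shaped models to the combinatorial objects $(\mathfrak{F},\funcand,\runs)$ on which that bound is computed.
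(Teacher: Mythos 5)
Your proof is correct and takes essentially the same route as the paper's: reading off counting quasistates and element-induced partial runs in the forward direction, and in the backward direction building the model on runs with maximal role relations, unique runs for nominals via \textbf{Q1}+\textbf{M1}, and a truth lemma by induction, with the $R$-closedness of run domains doing exactly the work the paper assigns to it for well-definedness and the expanding-domain property. The one shared (inherited) imprecision is that the map $d \mapsto \rho_d$ need not be injective, so $\runs$ should formally be a family indexed by domain elements for \textbf{M1} to be a genuine cardinality equality---a wrinkle the paper's own sketch glosses over as well.
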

	\ifappendix
	\begin{proof}
		$(\Rightarrow)$
		%Let $\Mmf = (\Fmf, \Delta, \Imc)$, with
		%  $\Fmf = (W, W\times W)$,
		%  be a partial modal interpretation
		%  satisfying $C_\p$
		Let $\Mmf = (\Fmf, \Delta, \Imc)$, with
		$\Fmf = (W,R_{1},\ldots,R_{n})$ be a finite tree-shaped model  with $C^{\mathcal{I}_{w_{0}}}\not=\emptyset$ for the root $w_{0}$ of $\mathfrak{F}$. Let
		$\contp^{\Imc_{w}}(d) = \{ C \in \con{C_0} \mid d \in
		C^{\Imc_{w}}\}$, for every $d \in \Delta$ and $w \in W$. Clearly,
		$\contp^{\Imc_{w}}(d)$ is a type for $C_{0}$ since it 
		satisfies \eqref{ct:neg}--\eqref{ct:con}. We now define a
		triple $\Qmf = (\mathfrak{F}, \funcand, \runs)$, where
		\begin{itemize}
			\item $\funcand$ is the function from $W$ to the set of
			quasistates for $C_{0}$ defined by setting
			$\funcand(w) = (x_{1},\ldots,x_{k})\in\extN^k$ with 
			$$
			x_{i}=|\{d \in \Delta^{w}\mid \contp^{\Imc_{w}}(d)=\contp_{i}\}|
			$$
			for $1\leq i \leq k$ and every $w\in W$;
			\item $\runs$ is the set of functions $\rho_{d}$ from $W$ to
			$\{1,\ldots,k\}$ defined by setting
			$\rho_{d}(w) = i$ if $\contp^{\Imc_{w}}(d)=\contp_{i}$,
			for every $d \in \Delta$ and $w\in W$.
		\end{itemize}
		It is easy to show that $\Qmf$ is a quasimodel for $C_{0}$. 
		
		\medskip
		
		\noindent $(\Leftarrow)$ Suppose there is a quasimodel
		$\Qmf = (\mathfrak{F}, \funcand, \runs)$ for $C_{0}$. Define a model
		$\Mmf = (\Fmf, \Delta, \Imc)$, by setting $\Delta^{w} = \{ \rho \in \runs\mid w\in \text{dom}(\rho)\}$, and, for any $A\in\NC$, $r\in\NR$ and $a\in\NI$:
		\begin{itemize}
			\item $A^{\Imc_{w}} = \{ \rho \in \Delta^{w} \mid A \in \rho(w) \}$;
			\item
			$r^{\Imc_{w}} = \{ (\rho, \rho') \in \Delta^{w} \times \Delta^{w} \mid \{
			\lnot C \mid \lnot \exists r . C \in \rho(w) \} \subseteq
			\rho'(w) \}$;
			\item $u^{\Imc_{w}} = \Delta^{w} \times \Delta^{w}$;
			\item $a^{\Imc_{w}} = \rho$, for the unique $\rho \in \Delta^{w}$ such that $\{ a \} \in \rho(w)$.
		\end{itemize}
		Observe that $\Delta$ is well-defined and expanding since $|\avec{x}|>0$ for every quasistate $\avec{x}$ and since the domain of each $\rho \in \runs$ is $R$-closed. The following claim is now straightforward from the definition of quasimodels for $C_{0}$. 
		\begin{claim}\label{cla:qmcon}
			For every $C \in \con{C_{0}}$, $w \in W$ and $\rho \in \Delta$,
			$\rho \in C^{\Imc_{w}}$ iff $C \in \contp_{\rho(w)}$.
		\end{claim}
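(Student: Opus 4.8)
The plan is to establish the claim by induction on the structure of $C \in \con{C_{0}}$, proving the usual truth-lemma correspondence between the extensions of concepts in $\Mmf$ and membership in the types labelling the runs. I read the statement for a fixed $w\in W$ and all $\rho\in\Delta^{w}$, that is, all runs $\rho$ defined at $w$; this guarantees that $\rho(w)$, and hence the type $\contp_{\rho(w)}$, is well defined.

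For the base cases I would argue as follows. The concept-name case $C=A$ is immediate from the definition $A^{\Imc_{w}}=\{\rho\in\Delta^{w} \mid A\in\contp_{\rho(w)}\}$. For a nominal $C=\{a\}$, condition (Q1) ensures that exactly one type of $\funcand(w)$ contains $\{a\}$ and that its multiplicity is $1$; combined with (M1) this produces a unique run $\rho_{a}\in\Delta^{w}$ with $\{a\}\in\contp_{\rho_{a}(w)}$, and $a^{\Imc_{w}}=\rho_{a}$ by construction. Hence $\rho\in\{a\}^{\Imc_{w}}$ iff $\rho=\rho_{a}$ iff $\{a\}\in\contp_{\rho(w)}$.

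The inductive Boolean cases $C=\lnot D$ and $C=D\sqcap E$ follow directly from the induction hypothesis together with the type-consistency conditions (C1) and (C2). For $C=\exists u.D$ I would combine the induction hypothesis with the quasistate condition (Q3): a witnessing run in $\Delta^{w}$ lying in $D^{\Imc_{w}}$ exists precisely when some type of positive multiplicity in $\funcand(w)$ contains $D$, which by (Q3) is equivalent to $\exists u.D\in\contp_{\rho(w)}$; the backward direction uses (M1) to convert such a type into an actual run. The relational case $C=\exists r.D$ proceeds exactly as in the $\Sfive^{n}$ quasimodel argument already given: the forward direction uses the definition of $r^{\Imc_{w}}$ and (C1) to derive a contradiction from $\lnot\exists r.D\in\contp_{\rho(w)}$, while the backward direction uses the existential-saturation condition (Q2) and (M1) to produce an $r$-successor run, with the induction hypothesis placing it in $D^{\Imc_{w}}$.

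The only case that genuinely exploits the structure of $\Kfn$ models with expanding domains is the modal case $C=\Diamond_{i}D$, which I expect to be the main point requiring care. Here I would first use the fact that the domain of each run is $R$-closed: since $w\in\text{dom}(\rho)$ and $wR_{i}v$ imply $v\in\text{dom}(\rho)$, the run $\rho$ survives along the accessibility edge, so $\rho\in\Delta^{v}$ and the induction hypothesis applies at $v$, giving $\rho\in D^{\Imc_{v}}$ iff $D\in\contp_{\rho(v)}$. The desired equivalence $\rho\in(\Diamond_{i}D)^{\Imc_{w}}$ iff $\Diamond_{i}D\in\contp_{\rho(w)}$ then follows from the run conditions (R1) and (R2), which together state that $\Diamond_{i}D\in\contp_{\rho(w)}$ holds exactly when some $R_{i}$-successor $v$ of $w$ satisfies $D\in\contp_{\rho(v)}$. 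The subtle interaction of the partial-function nature of runs with the expanding-domain semantics is confined to this appeal to $R$-closedness, after which the argument closes routinely.
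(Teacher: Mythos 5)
Your proof is correct and follows essentially the same route the paper intends: the paper dismisses this claim as ``straightforward from the definition of quasimodels,'' deferring implicitly to the detailed structural induction it gives for the analogous $\Sfive^{n}$ claim, and your argument is precisely that induction adapted to the $\Kfn$ setting. In particular, your reading of $\rho \in \Delta$ as $\rho \in \Delta^{w}$ (so that $\contp_{\rho(w)}$ is well defined) and your explicit appeal to the $R$-closedness of run domains in the $\Diamond_{i}$ case supply exactly the two points on which the expanding-domain construction relies, and which the paper records only in the remark immediately preceding the claim.
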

			Now we can easily finish the proof of Lemma~\ref{th-GLnsat} by observing
			that, by~\eqref{b1}, there exists $i\leq k$ with $\funcand(w_{0})_{i}>0$ such that $C_{0} \in \contp_{i}$. Then there exists a $\rho\in \runs$ with $\rho(w_{0})=i$. Hence, by  Claim~\ref{cla:qmcon}, 
			$\rho\in C_{0}^{\Imc_{w_{0}}}$.
		\end{proof}
		The \emph{size} of a quasimodel $\Qmf=(\mathfrak{F},\funcand,\runs)$ with $\mathfrak{F}=(W,R_{1},\ldots,R_{n})$ is defined as 
		$$
		|W|\times \max\{ |\funcand(w)|\mid w\in W\}.
		$$
		Our next aim is to show that there exists a quasimodel for $C_{0}$ of size bounded by a recursive function in $|C_{0}|$ (in particular with $|\funcand(w)|<\infty$ for all $w$). We are going to apply Dickson's Lemma. For some $k>0$,
		let $(\mathbb{N}^{k},\leq)$ be the set of $k$-tuples 
		of natural numbers ordered by the natural product ordering: for $\avec{x}=(x_{1},\ldots,x_{k})$ and $\avec{y}=(y_{1},\ldots,y_{k})$ we set $\avec{x}\leq \avec{y}$ if $x_{i}\leq y_{i}$ for $1\leq i \leq k$.
		A pair $\avec{x},\avec{y}$ with $\avec{x} \leq \avec{y}$ is called an \emph{increasing pair}. Dickson's Lemma states every infinite sequence $\avec{x}_{1},\avec{x}_{2},\ldots\in \mathbb{N}^{k}$ contains an
		increasing pair $\avec{x}_{i_{1}},\avec{x}_{i_{2}}$ with $i_{1}<i_{2}$.
		In fact, assuming $|\avec{x}_{i}|\leq |\avec{x}_{i+1}|$ for all $i\geq 0$ and
		given recursive bounds on $|\avec{x}_{1}|$ and
		$|\avec{x}_{i+1}|-|\avec{x}_{i}|$ one can compute a recursive
		bound on the length of the longest sequence without any increasing pair~\cite{DBLP:conf/lics/FigueiraFSS11}.
		
		Assume a finite tree-shaped frame $\mathfrak{F}=(W,R_{1},\ldots,R_{n})$ 
		with root $w_{0}$ and worlds of the form~\eqref{eq:worldtrans} is given. Then we say that $\avec{w}$ is the \emph{predecessor} of $\avec{w}iw\in W$ in $\mathfrak{F}$ and that $\avec{w}iw\in W$ is a \emph{successor} of $\avec{w}$ in $\mathfrak{F}$. Worlds reachable along a path of predecessors from a world are called \emph{ancestors} and worlds reachable along a path of successors from a world are called \emph{descendants}. We set $\avec{w}<\avec{v}$ if $\avec{w}$ is a predecessor of $\avec{v}$ and denote $W_{\avec{w}}=\{\avec{w}\}\cup \{\avec{v} \in W\mid \avec{w}<\avec{v}\}$.
		
		We now show that one can reduce quasimodels for $C_{0}$ to quasimodels 
		for $C_{0}$ not having increasing pairs $\funcand(\avec{w}), \funcand(\avec{w}')$ with $\avec{w}<\avec{w}'$ and satisfying recursive bounds for $|\funcand(w_{0})|$,  $|\funcand(\avec{w}iw)|-|\funcand(\avec{w})|$ and the outdegree of each~$\avec{w}$. We then obtain a recursive bound on the size of such quasimodels from Dickson's Lemma.
		%\nb{A: what about FMP? \\ M: to discuss, it should be a consequence of this, right?}
		%

		\begin{lemma}\label{lem:recbound}
			There exists a quasimodel for $C_{0}$ iff there exists
			a quasimodel $\Qmf = (\mathfrak{F},\funcand,\runs)$ for $C_{0}$ with
			\begin{enumerate}
				\item $|\funcand(w_{0})|\leq 2^{|C_{0}|}$, for the root $w_{0}$ of $\Qmf$;
				\item if $\avec{v}$ is a successor of $\avec{w}$, then $|\funcand(\avec{v})|\leq |\funcand(\avec{w})|+2^{|C_{0}|}$;
				\item if $\avec{w}<\avec{v}$, then $\funcand(\avec{w})\not\leq \funcand(\avec{v})$;
				\item the outdgree of $\avec{w}$ in $\mathfrak{F}$ is bounded by
				$|\funcand(\avec{w})|\times |C_{0}|$. 
			\end{enumerate} 
		\end{lemma}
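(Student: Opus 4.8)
The $(\Leftarrow)$ direction is immediate, since any quasimodel meeting conditions 1--4 is in particular a quasimodel. For $(\Rightarrow)$ the plan is to avoid juggling four separate surgeries --- where fixing one condition may spoil another, the usual source of friction in such arguments --- by instead passing to an \emph{extremal} quasimodel. First I would note that finite quasistates suffice: the quasistate and run conditions (one witness per existential and per diamond, exactly one run per nominal) are all finitary, so any entry $\funcand(w)_i=\infty$ can be thinned to a finite value, yielding a quasimodel with every $|\funcand(w)|<\infty$ and hence $\sum_{w}|\funcand(w)|<\infty$ on the finite frame $\mathfrak{F}$. Among all such finite-quasistate quasimodels for $C_0$ I would then fix one, $\Qmf=(\mathfrak{F},\funcand,\runs)$, minimising the well-ordered lexicographic measure $\mu(\Qmf)=(|W|,\ \sum_{w\in W}|\funcand(w)|)$. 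The remainder of the proof is to show that minimality of $\mu$ forces all four conditions, as a failure of any one of them exhibits a quasimodel of strictly smaller $\mu$.

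Conditions 4, 1 and 2 then follow by \emph{local trimming}. For condition 4, I would select, for each pair consisting of a run $\rho$ through $\avec{w}$ and a diamond $\Diamond_i C$ in the type $\contp_{\rho(\avec{w})}$, one successor witnessing it (available by the run condition \ref{rn:modal2}); any successor selected by no pair can be deleted, and since the number of pairs is at most $|\funcand(\avec{w})|\times|C_0|$, an outdegree exceeding this bound leaves such a deletable subtree, strictly lowering $|W|$. Here one uses that $R$-closedness of run domains is relative to the frame, so excising a whole successor subtree keeps the surviving runs well-defined, and that witnessing is per-run (via \ref{rn:modal2}), so no retained run loses a witness. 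For condition 1, if $|\funcand(w_0)|>2^{|C_0|}$ then two runs through the root carry the same type there; that common type contains no nominal (else the nominal clause would already be violated at $w_0$), so I can detach one of them from $w_0$, making it instead born at the children of $w_0$ --- legitimate under expanding domains --- which removes it from exactly one world and strictly lowers $\sum_w|\funcand(w)|$ while fixing $|W|$. The surviving same-type run covers all quasistate demands \ref{qs:univ} at $w_0$, and the detached run keeps all its roles below $w_0$. An identical detach-and-push-down step applied to two same-type runs born at an edge $\avec{w}\to\avec{v}$ handles condition 2, a violation there meaning more than one run of some type is born at $\avec{v}$.

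The main work, and the step I expect to be hardest, is condition 3. Suppose the minimal $\Qmf$ contained an increasing pair, i.e.\ an ancestor/descendant pair $\avec{w}<\avec{v}$ with $\funcand(\avec{w})\le\funcand(\avec{v})$ in the product order. Since $\funcand(\avec{w})_i\le\funcand(\avec{v})_i$ for every type $\contp_i$, there is a type-preserving injection $\iota$ from the runs through $\avec{w}$ into the runs through $\avec{v}$. I would then \emph{graft} the subtree rooted at $\avec{v}$ directly under $\avec{w}$: keep everything on and above $\avec{w}$ (in particular $\funcand(\avec{w})$ and all runs above), discard the worlds strictly between $\avec{w}$ and $\avec{v}$ together with the old subtree of $\avec{w}$, and reattach the children of $\avec{v}$ as children of $\avec{w}$. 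Each run $\rho$ through $\avec{w}$ is rerouted to follow $\iota(\rho)$ inside the graft, while the runs of the $\avec{v}$-subtree outside the image of $\iota$ reappear as runs born within the graft. Because $\rho$ and $\iota(\rho)$ share their type at $\avec{w}$ and $\avec{v}$, the run conditions \ref{rn:modal}--\ref{rn:modal2} at $\avec{w}$ transfer verbatim from those at $\avec{v}$; the quasistate conditions and \ref{b1} are untouched; the per-world type counts \ref{run:exists} are preserved by the bijection induced by $\iota$ together with the surviving born runs; and expanding domains still hold since every run through $\avec{w}$ continues into the graft. As $\avec{w}<\avec{v}$ is proper, the result has strictly fewer worlds, so $\mu$ strictly decreases, contradicting minimality.

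The payoff of the extremal formulation is exactly that grafting lowers the \emph{first} coordinate $|W|$ of $\mu$, so it outranks any increase in run counts it may incur; this is what lets condition 3 coexist with the count bounds 1--2 without committing to an order of surgeries. Conditions 1 and 2 together then bound $|\funcand(\avec{w})|$ by $2^{|C_0|}(1+\mathrm{depth}(\mathfrak{F}))$, confirming finiteness, and the four conditions are precisely the hypotheses feeding into the effective form of Dickson's Lemma along each branch (bounded start, bounded increments, no increasing pair) together with the outdegree bound across branches, giving the recursive size bound in the next step. The two delicate points to discharge in full are the bookkeeping of runs across the graft that realises \ref{run:exists}, and the verification that detaching runs never destroys the unique nominal witness --- both resting on the expanding-domain monotonicity $|\funcand(\avec{w})|\le|\funcand(\avec{v})|$ whenever $\avec{w}R_i\avec{v}$.
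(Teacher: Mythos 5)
Your proposal is correct, and the combinatorial content coincides with the paper's proof: your four surgeries match the paper's rewrite rules one-for-one (your root/edge detachment of a duplicated-type run is the paper's \emph{minimize root}/\emph{minimize non-root}, your graft with the type-preserving injection $\iota$ is the paper's \emph{drop interval} with its injection $f$, and your witness-selection trimming is \emph{restrict outdegree}), and both arguments exploit non-rigidity in the same way, namely that quasimodel conditions constrain only per-world type \emph{counts}, so runs can be detached, rerouted and re-born freely as long as (\textbf{Q1}) and (\textbf{M1}) are respected. Where you differ is in how the surgeries are combined: the paper simply asserts that ``applying the rules exhaustively'' yields the desired quasimodel, leaving implicit both termination and the fact that fixing one condition can re-violate another (a graft can break the increment bound at the new edge), whereas your extremal formulation --- minimise $\mu=(|W|,\sum_w|\funcand(w)|)$ lexicographically and show every violation strictly decreases $\mu$ --- discharges exactly this bookkeeping, at the price of your explicit preliminary thinning step (needed so that $\mu$ is finite and a minimum exists; the paper handles infinite counts only implicitly through its minimize rules). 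A second genuine variant: your graft keeps $\funcand(\avec{w})$ at the graft point and reattaches the children of $\avec{v}$, re-birthing the non-image runs one level down, while the paper transplants the whole $\avec{v}$-subtree including its root quasistate and re-births non-image runs at the graft point itself; yours has the advantage of uniformly covering the case $\avec{w}=w_0$ (the paper needs a separate \emph{drop interval 2} there, with (\textbf{B1}) preserved via the injection), but it obliges you to check that splitting a run's domain into the children's subtrees is harmless --- which it is, since (\textbf{R1})--(\textbf{R2}) are downward-looking, including for the transitive-closure modality, where the shared type $\contp_{\rho(\avec{w})}=\contp_{\iota(\rho)(\avec{v})}$ is precisely what lets ancestor witnesses lost in the excised region be re-supplied below $\avec{v}$. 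Both delicate points you flagged at the end do go through for these reasons, so the proposal is sound as it stands.
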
  
		\begin{proof}
			Assume that a quasimodel $\Qmf$ for $C_{0}$ with $\Qmf = (\mathfrak{F},\funcand,\runs)$ and $\mathfrak{F}=(W,R_{1},\ldots,R_{n})$ with root $w_{0}$ is given. We may assume that every run $\rho\in \runs$ has a \emph{root} $\avec{w}\in W$ in the sense that $W_{\avec{w}}=\text{dom}(\rho)$. We let $\runs_{\avec{w}}$ denote the set of runs with root $\avec{w}$. We introduce a few rules that allow us to modify $\Qmf$.
			
			\medskip
			\noindent
			\emph{Minimize root:} Assume $\funcand(w_{0})=(x_{1},\ldots,x_{k})$. Pick, for any $x_{i}>0$,
			a single $\rho_{i}\in \runs$ with $\rho_{i}(w_{0})=i$. Now construct an updated $\Qmf'= (\mathfrak{F}',\funcand',\runs')$ by defining 
			\begin{itemize}
				\item $\mathfrak{F}'=\mathfrak{F}$, 
				\item $\funcand'(w_{0})= (\min\{x_{1},1\},\ldots,\min\{x_{k},1\})$ and $\funcand'(\avec{w})=\funcand(\avec{w})$ for all $\avec{w}\ne w_{0}$, 
				\item $\runs'$ as the set of selected runs $\rho_{i}$, all runs in $\runs\setminus\runs_{w_{0}}$, and 
				the restrictions of all remaining $\rho\in \runs_{w_{0}}$ to $W_{\avec{w}}$ for $\avec{w}$ a successor of $w_{0}$. 
			\end{itemize}
			It is easy to see that $\Qmf'$ is a quasimodel for $C_{0}$ satisfying Condition~1.
			
			\medskip
			\noindent
			\emph{Minimize non-root}: Assume $\avec{v}$ is a successor of $\avec{w}$ and $\funcand(\avec{v})=(x_{1},\ldots,x_{k})$. Call $i\leq k$ increasing at $\avec{v}$ if there exists
			$\rho_{i}\in \runs_{\avec{v}}$ such that $\rho_{i}(\avec{v})=i$. Select, for any  $i$ increasing at $\avec{v}$, such a $\rho_{i}\in \runs_{\avec{v}}$. We construct an updated $\Qmf'= (\mathfrak{F}',\funcand',\runs')$ by defining 
			\begin{itemize}
				\item $\mathfrak{F}'=\mathfrak{F}$, 
				\item $\funcand'(\avec{v})= (x_{i}',\ldots,x_{k}')$ with 
				$$
				x_{i}'=|\{\rho \in \runs \mid \avec{w}\in \text{dom}(\rho),\rho(\avec{v})=i\}| + 1
				$$
				if $i$ is increasing at $\avec{v}$ and $x_{i}'=x_{i}$ otherwise.
				Set $\funcand'(\avec{w}')=\funcand(\avec{w}')$ for all $\avec{w}'\ne \avec{v}$.
				\item $\runs'$ as the set of selected runs $\rho_{i}$, all runs in $\runs\setminus\runs_{\avec{v}}$, and 
				the restrictions of all remaining $\rho\in \runs_{\avec{v}}$ to $W_{\avec{v}'}$ for $\avec{v}'$ a successor of $\avec{v}$. 
			\end{itemize}
			It is easy to see that $\Qmf'$ is a quasimodel for $C_{0}$ satisfying Condition~2
			for $\avec{w}$ and $\avec{v}$.
			
			\medskip
			\noindent
			\emph{Drop interval 1}: Consider $\avec{w}=\avec{w}'iw\in W$ and $\avec{v}=\avec{v'}jv\in W$ with $\avec{w}<\avec{v}$ and $\funcand(\avec{w})\leq \funcand(\avec{v})$.
%			\begin{itemize}
%				\item $\funcand(\avec{w})=(x_{1},\ldots,x_{k})$;
%				\item $\funcand(\avec{v})=(y_{1},\ldots,y_{k})$;
%				\item $x_{i}\leq y_{i}$ for all $i\leq k$.
%			\end{itemize}
			Now construct an updated $\Qmf'= (\mathfrak{F}',\funcand',\runs')$ by defining 
			\begin{itemize}
				\item $\mathfrak{F}'$ is the frame determined by $W'$ with 
				$$
				W'= (W\setminus W_{\avec{w}})\cup \{ \avec{w}'i\avec{r} \mid \avec{v}'j\avec{r}\in W_{\avec{v}}\}.
				$$
				\item Set $\funcand'(\avec{u}) = \funcand(\avec{u})$ for all $\avec{u}\in W\setminus W_{\avec{w}}$ and $\funcand'(\avec{w}'i\avec{r})=\funcand( \avec{v}'j\avec{r})$ for all $\avec{v}'j\avec{r}\in W_{\avec{v}}$.
				\item Take, for any $\rho\in \runs$ with $\avec{w}'\in \text{dom}(\rho)$, an
				$f(\rho)\in \runs$ with $\avec{v}\in \text{dom}(f(\rho))$ such that
				$f(\rho)(\avec{v})=\rho(\avec{w})$. We can ensure that all $f(\rho)$ are distinct since $\funcand(\avec{w}) \leq \funcand(\avec{v})$. Then let $\rho'\in \runs'$ with $\rho'$ be defined by setting $\rho'(\avec{u})= \rho(\avec{u})$ for $\avec{u}\in W\setminus W_{\avec{w}}$ and $\rho'(\avec{w}'i\avec{r})=f(\rho)(\avec{v}'j\avec{r})$ for all $\avec{v}'j\avec{r}\in W_{\avec{v}}$.
				
				Also add all $\rho\in \runs$ with $\text{dom}(\rho)\cap W_{\avec{w}}=\emptyset$ to $\runs'$. Finally, for all $\rho$ with $\text{dom}(\rho)\subseteq W_{\avec{w}}$ which were not selected
				as an $f(\rho)$, add the $\rho'$ to $\runs'$ defined by setting $\rho'(\avec{w}'i\avec{r})=\rho(\avec{v}'j\avec{r})$ for all $\avec{v}'j\avec{r}\in W_{\avec{v}}$. 
			\end{itemize}
			It is easy to see that $\Qmf'$ is a quasimodel for $C_{0}$.
			
			\medskip
			\noindent
			\emph{Drop interval 2}: Consider the root $w_{0}$ and $\avec{v}\in W$ with $\funcand(w_{0})\leq \funcand(\avec{v})$.
%			\begin{itemize}
%				\item $\funcand(w_{0})=(x_{1},\ldots,x_{k})$;
%				\item $\funcand(v)=(y_{1},\ldots,y_{k})$;
%				\item $x_{i}\leq y_{i}$ for all $i\leq k$.
%			\end{itemize}
			Then construct an updated $\Qmf'= (\mathfrak{F}',\funcand',\runs')$
			by restricting $\Qmf$ to $W_{\avec{v}}$ in the obvious way.
			
			\medskip
			\noindent
			\emph{Restrict outdegree}. Consider $\avec{w}\in W$ and assume $\funcand(\avec{w})=(x_{1},\ldots,x_{k})$. To bound the number of successors of $\avec{w}$ to $|\funcand(\avec{w})| \times |C_{0}|$ pick, for every $j\leq k$ with $x_{j}>0$,  every $\Diamond_{i}C\in \contp_{j}$ and every run $\rho\in \runs$ with $\rho(\avec{w})=j$, a successor $\avec{w}iw\in W$ with $C\in \contp_{\rho(\avec{w}iw)}$ or
			$\Diamond_{i}C\in \contp_{\rho(\avec{w}iw)}$. Let $V$ be the set of successors of $W$ picked in this way. Then construct an updated $\Qmf'= (\mathfrak{F}',\funcand',\runs')$
			by restricting $\Qmf$ to 
			$$
			W'= (W\setminus W_{\avec{w}}) \cup \{ \avec{w}\} \cup \bigcup_{\avec{v}\in V} W_{\avec{v}}.
			$$
			It is easy to see that $\Qmf'$ is a quasimodel for $C_{0}$ such that the outdegree of $\avec{w}$ is bounded by  $|\funcand(\avec{w})| \times |C_{0}|$.
			
			By applying the rules above exhaustively, we obtain a quasimodel satisfying the conditions of the lemma.
		\end{proof}
As argued above, it follows from Lemma~\ref{lem:recbound} that we have a recursive bound on the size of a quasimodel for $C_{0}$. By Lemma~\ref{th-GLnsat}, we then also have a recursive bound on the size of a finite tree-shaped model satisfying $C_{0}$. This shows the decidability of concept satisfiability for $\Kfn_{\ALCOu}$ in expanding domain models.

\subsection{Noetherian Frames}

We now discuss concept satisfiability for
$\GL_{\ALCOud}$ and $\Grz_{\ALCOud}$ in models with expanding domains. 
We focus on $\GL_{\ALCOud}$; the case of  $\Grz_{\ALCOud}$ can be treated similarly. Note that a straightforward selective filtration argument applied
to appropriate quasimodels shows that  every  $\GL_{\ALCOud}$-concept satisfiable in a model with expanding domains  is satisfied in a finite model with expanding domains. This observation implies that a $\MLALCOud$-concept $C$ with a single modal operator is satisfiable in an expanding domain model for $\GL_{\ALCOud}$ iff it is  satisfiable in a model for $\Kfn_{\ALCOud}$ with expanding domains, where the modal operator in $C$ is interpreted by the transitive closure of $R_{1}\cup \cdots \cup R_{n}$. Hence the decidability 
of $\GL_{\ALCOud}$-concept satisfiability in expanding domain models is a consequence of the decidability of $\Kfn_{\ALCOud}$-concept satisfiability in expanding domain models. 

The decidability of concept satisfiability for $\GL_{\ALCOud}$
in expanding domain models can also be derived from decidability results for expanding domain product modal logics.
We define the relevant modal logics using DL syntax. Consider the modal DL with a single role, $o$, interpreted as a Noetherian strict linear order. In DL syntax, concepts in $\mathcal{ML}^{n}_{lin}$ are of the form 
\begin{equation*}
	C ::= A \mid \lnot
	C \mid (C \sqcap C)
	\mid \exists o.C
	\mid \Diamond_i C,
\end{equation*}
where $i\in I$.
\begin{restatable}{theorem}{noether}
	\label{thm:noether}
	Let $\mathcal{C}$ be a class of frames.
	Then $\Cmc$-satisfiability of $\MLALCOu$-concepts can be reduced in double exponential time to $\Cmc$-satisfiability of $\mathcal{ML}^{n}_{lin}$-concepts, both with constant and expanding domains. 	
\end{restatable}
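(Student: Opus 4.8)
The plan is to adapt the quasimodel-based reduction used in the proof of Theorem~\ref{th:diff}(1), replacing the target logic $\MLDiff$ by $\mathcal{ML}^n_{lin}$ and letting the Noetherian strict linear order $o$ play, over the first-order domain of each world, the combined role that the universal modality $\exists u$ and the counting modality $\exists^{\ne}u$ played in $\MLDiff$. First I would put the input $\MLALCOu$-concept (and ontology) into normal form via Lemmas~\ref{lemma:ontology:normal-form} and~\ref{lem:normalformconc}, so that nominals, existential restrictions $\exists r.C$ and the universal role occur only in shallow, named positions. As in Theorem~\ref{th:diff}, I would then abstract away the binary relations by the mapping $\cdot^\sharp$ that replaces every outermost $\{a\}$ and $\exists r.C$ with a fresh concept name, exploiting that the interpretations of roles in distinct worlds are mutually independent and can be recovered, in the backward direction, as the maximal relation compatible with the relevant types. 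Computing the set $\Smc$ of $\ALCOu$-satisfiable quasistates (through the satisfiability of the modal abstractions $\Xi_{\settp}^{\ast}$ of the quasistate descriptions $\Xi_{\settp}$) takes double exponential time, because $\ALCOu$-satisfiability is in $\ExpTime$ and there are doubly exponentially many candidate quasistates; this is the source of the claimed $2$-exponential bound.

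The distinctive step is to express the constraints that previously relied on $u$ and $\exists^{=1}u$ using only $o$ and the $\Diamond_i$. The singleton condition for nominal surrogates is cheap: over a strict linear order, a concept $X$ is realised by at most one element of a world iff $X \sqsubseteq \neg\exists o.X$ holds there, since any two distinct $X$-points are $o$-comparable and the smaller one witnesses $\exists o.X$. This replaces the constraints $\top\sqsubseteq \exists^{=1}u.\{a\}^\sharp$, modulo passing to partial interpretations (Propositions~\ref{lemma:redpartialtototal} and~\ref{lemma:redtotaltopartial}), where a nominal surrogate need only be forbidden from holding twice. For the $\exists r.C$-witnessing and the quasistate-realisability requirements carried by $\Xi_{\settp}$, the key is that I am free, in the direction from an $\MLALCOu$-model to an $\mathcal{ML}^n_{lin}$-model, to choose any Noetherian strict linear order on the domain (e.g.\ the reverse of a well-ordering), and in the converse direction to read the domain of the constructed model off the points of $o$; so I do not need $o$ to define the universal modality outright, only to preserve satisfiability. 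The Noetherian (well-foundedness) hypothesis is what makes this work: it bounds the length of the witnessing chains laid along $o$, so that realised types can be certified by finite, $o$-propagated markers rather than by a genuine global quantifier, in the spirit of the decidable products of transitive Noetherian frames of~\cite{DBLP:journals/apal/GabelaiaKWZ06}.

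I expect the main obstacle to be precisely this last point. A single forward modality $\exists o$ over a Noetherian strict linear order cannot, on its own, define the universal modality: the only $o$-definable extremal point, characterised by $\neg\exists o.\top$, has empty forward cone, so the universal-role and role-witnessing conditions cannot be transcribed as $\mathcal{ML}^n_{lin}$-concepts in a truth-preserving, concept-by-concept way. The reduction must therefore be satisfiability-preserving rather than faithful, arranging the realised types of each world along the well-founded order so that the $\cdot^\sharp$-abstracted witnessing obligations are met by chaining, with the Noetherian condition guaranteeing termination; verifying the two directions of this correspondence is where the real work lies. Finally, the expanding- versus constant-domain distinction is handled uniformly, as in Theorem~\ref{th:diff}: constant domains correspond to $o$ being interpreted by the same order in all worlds, while expanding domains correspond to the domains (and hence the initial segments of $o$) growing along the accessibility relations, and the global-ontology variant is obtained by the same extension with boxed, $o$-relativised conjuncts used throughout Section~\ref{sec:reduction}.
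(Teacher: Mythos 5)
Your skeleton is essentially the paper's: reduce to normal form, abstract nominals and existential restrictions by the $\cdot^\sharp$ map, compute the set $\Smc_C$ of $\ALCOu$-satisfiable quasistates in double exponential time (the sole source of the 2-exponential bound), enforce at-most-one realisation of each nominal surrogate via the strict order, and accept that the translation is satisfiability- rather than truth-preserving, with the order chosen freely (a reverse well-order) in the forward direction. Your test $X \sqsubseteq \neg\exists o.X$ is exactly the paper's conjunct $\forall o^{+}.(\{a\}^{\sharp} \Rightarrow \forall o.\neg\{a\}^{\sharp})$, where $\exists o^{+}.D = D \sqcup \exists o.D$ and $\forall o^{+}.D = D \sqcap \forall o.D$.

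There is, however, a genuine gap at the crux, which you flag but then resolve with the wrong mechanism. The paper's device is an \emph{anchor}: also give every $\exists u.D \in \con{C}$ a surrogate $A_{\exists u.D}$, and take $C^{\ast}$ to be the conjunction of $C^{\sharp}$ with $\Box^{\leq m}$-prefixed conjuncts ($m$ the modal depth of $C$) asserting, at the element where $C^{\ast}$ is to hold: (i) $\bigsqcup_{\settp\in \Smc_C}\bigl(\forall o^{+}.(\bigsqcup_{\contp\in \settp}\contp^{\sharp}) \sqcap \bigsqcap_{\contp\in \settp}\exists o^{+}.\contp^{\sharp}\bigr)$, i.e.\ the reflexive $o$-cone above the anchor realises exactly the types of some satisfiable quasistate; (ii) $A_{\exists u.D} \Leftrightarrow \exists o^{+}.D^{\sharp}$; and (iii) $\exists o^{+}.\{a\}^{\sharp}$ together with your uniqueness constraint. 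In the forward direction one interprets $o$ in every world as a Noetherian strict linear order with the anchor as \emph{least} element (order type $1+\kappa^{\ast}$), so that $\exists o^{+}$ at the anchor genuinely ranges over the whole domain and simulates $\exists u$; in the backward direction the $\MLALCOu$-model is read off the anchor's cone, with roles recovered as maximal type-compatible relations exactly as in Theorem~\ref{th:diff}. In particular, your explanation of what Noetherianness buys --- ``bounding witnessing chains'' so realised types are ``certified by finite, $o$-propagated markers,'' with the well-foundedness ``guaranteeing termination of chaining'' --- is not how the proof works and is not needed: every set carries a Noetherian strict linear order with a least element, so Noetherianness plays no role in the correctness of the reduction; it appears only because the target must be the logic whose satisfiability over (expanding-domain) products with transitive Noetherian frames is known to be decidable~\cite{DBLP:journals/apal/GabelaiaKWZ06}. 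A further minor slip: constant domains do \emph{not} correspond to $o$ being the same order in all worlds --- $o$ remains world-dependent throughout; constancy refers only to the first-order domains $\Delta^{w}$.
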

\begin{proof}    
	The proof is similar to the proof of Theorem~\ref{th:diff} (1).
	Assume an $\MLALCOu$-concept $C$ is given. Assume for simplicity that $C$ contains a single modal operator,  denoted by~$\Box$.
	Let $m$ be the modal depth of $C$.
	Let $\con{C}$ be the closure under single negation 
	of the set of concepts occurring in $C$. A \emph{type} for $C$ is a subset $\contp$ of $\con{C}$
	such that $\neg C \in \contp$ iff $C\not \in \contp$, for all $\neg C \in \con{\Omc}$. A \emph{quasistate for $C$} is a non-empty set $\settp$ of types for $C$. The \emph{description of $\settp$} is the $\MLALCOu$-concept 
	\begin{equation*}
		\Xi_{\settp} = \forall u.(\bigsqcup_{\contp\in \settp}\contp) \sqcap \bigsqcap_{\contp\in \settp} \exists u.\contp. 
	\end{equation*}
	Let $\Smc_{C}$ denote the set of all quasistates $\settp$ for $C$ that are
	\emph{$\ALCOu$-satisfiable}, that is, such that $\Xi_{\settp}^{t}$ is satisfiable, where $\Xi_{\settp}^{t}$ denotes the result of replacing 
	every outermost occurrence of $\Diamond_{i}D$ by $A_{\Diamond_{i}D}$, for a fresh concept name $A_{\Diamond_{i}D}$. We know that $\Smc_{C}$ can be computed in double exponential time.
	
	Let $\exists o^{+}.D=D \sqcup \exists o.D$ and $\forall o^{+}.D=D \sqcap \forall o.D$.
	We reserve for any $D\in \con{C}$
	of the form $\{a\}$ or $\exists r.D'$ or $\exists u.D'$ a fresh concept name $A_{D}$. Define a mapping $\cdot^{\sharp}$ that associates with every $\MLALCOu$-concept an $\mathcal{ML}^{n}_{lin}$-concept by replacing outermost occurrences of
	concepts of the form $\{a\}$ or $\exists r.D'$ or $\exists u. D'$ by the respective fresh concept name.
	
	Let $C^{\ast}$ denote the conjunction of:
	\begin{enumerate}
		\item $C^{\sharp}$;
		\item $\Box^{\leq m}\bigsqcup_{\settp\in \Smc_C}(\forall o^{+}.(\bigsqcup_{\contp\in \settp}\contp^{\sharp}) \sqcap \bigsqcap_{\contp\in \settp} \exists o^{+}.\contp^{\sharp})$;
		\item $\Box^{\leq m}\bigsqcap_{\exists u.D\in\con{C}}(A_{\exists u.D} \Leftrightarrow \exists o^{+}.D^{\sharp})$;
		\item $\Box^{\leq m} \exists o^{+}.\{a\}^{\sharp}$, for every $a$ in $C$;
		\item $\Box^{\leq m}\forall o^{+}.(\{a\}^{\sharp} \Rightarrow \forall o. \neg \{a\}^{\sharp})$, for every $a$ in $C$.
	\end{enumerate}
	The following lemma completes the proof of the reduction.
	\begin{lemma}
		$C$ is $\Cmc$-satisfiable iff $C^{\ast}$ 
		is $\Cmc$-satisfiable in a $\mathcal{ML}^{n}_{lin}$-model, with constant and expanding domains.
	\end{lemma}	\qedhere
\end{proof}
Now decidability of $\GL_{\ALCOud}$-concept satisfiability 
in expanding domain models follows from decidability of satisfiability of $\mathcal{ML}^{1}_{lin}$-concepts in models with expanding domains and transitive Noetherian frames~\cite{DBLP:journals/apal/GabelaiaKWZ06}.

%
%

% !TEX root =  ../rda.tex

\section{Proofs for Section~\ref{sec:reasontfdl}}

The
%\nb{added}
following two theorems share the same proof below.

\tempone*

\temponeexp*
\begin{proof}
\newcommand{\TLDiff}{\mathcal{TL}^{\Diamond}_{\text{Diff}}}
The proofs of these two theorems rely on the following
results: the problem of satisfiability of $\TLDiff$ concepts is
known~\cite{HamKur15} to be
\begin{enumerate}
\item\label{hk-1} $\Sigma_1^1$-complete in constant-domain interpretations over $(\mathbb{N},<)$;
\item\label{hk-2} recursively enumerable, but undecidable in constant-domain interpretations  over finite flows of time;
\item\label{hk-3} undecidable (co-r.e.) in expanding-domains interpretations  over $(\mathbb{N},<)$;
\item\label{hk-4} decidable, but Ackermann-hard in expanding-domains interpretations  over finite flows of time.
\end{enumerate}
As a consequence of Theorem~\ref{th:diff}~{(2)} and results~{1--3}  above, we obtain the following: in
constant-domain interpretations, concept satisfiability is
$\Sigma_1^1$-hard for $\LTLd_{\ALCOu}$ and undecidable for
$\LTLfd_{\ALCOu}$, while in expanding-domains interpretations, the problem is
undecidable for $\LTLd_{\ALCOu}$. As a consequence of result~{4},
concept
satisfiability in expanding-domains interpretations is Ackermann-hard for $\LTLfd_{\ALCOu}$.

To prove the results for the languages without the universal modality, we first replace the satisfiability problem for a given concept, say, $C$, with the satisfiability problem for a concept name $A$ under the global ontology $\Omc = \{ A \equiv C\}$. Next, we transform global ontology $\Omc$ into normal form; see Lemma~\ref{lemma:ontology:normal-form}.
Next, the reduction in Lemma~\ref{lemma:spy-point-reduction} allows us
to obtain the undecidability (including $\Sigma_1^1$-hardness) and
Ackermann-hardness results for the problem of concept satisfiability
under global ontology in languages without the universal role.

The membership in $\Sigma^1_1$ follows from the fact that the problem can be encoded in second-order arithmetic with existential quantifiers over binary predicate symbols for concept names and ternary predicate symbols for role names.

To show decidability of concept satisfiability
  \textup{(}under global ontology\textup{)} for $\LTLf_{\ALCOud}$, we first eliminate definite descriptions; see
  Proposition~\ref{lemma:redmludtomlu}. The result then is a 
  consequence of Theorem~\ref{thm:noether} and  the decidability results proved for concept satisfiability in the language with both `next' and `eventually'~\cite[Theorem 3]{DBLP:journals/apal/GabelaiaKWZ06}; note that for $\LTLfd_{\ALCOu}$ this follows more directly from Theorem~\ref{th:diff}~{(1)} and the result~{4} above.
\end{proof}

\temptwo*
\begin{proof}
We prove these results by reduction from the satisfiability under global ontology in the languages based on $\ALC$. 
Let $A$ be a concept name and $\Omc$ a  $\TLALCOud$ ontology.  By Lemma~\ref{lemma:ontology:normal-form}, we assume that $\Omc$ is in normal form. First, we eliminate occurrences of $\neg$: every CI of the form $B_1 \sqsubseteq \neg B_2$ is replaced by $B_1 \sqcap B_2 \sqsubseteq \bot$ and every CI of the form $\neg B_1 \sqsubseteq B_2$ is first replaced by $\top \sqsubseteq B_1 \sqcup B_2$. Then, we eliminate the disjunction from the C by using the method suggested in~\cite{AKLWZ07}: each CI of the form $\top \sqsubseteq B_1 \sqcup B_2$ is
replaced with the following:
\begin{align*}
\top & \sqsubseteq\exists q.(\Diamond X_1 \sqcap \Diamond X_2),\\
\exists q.\Diamond (X_1 \sqcap \Diamond X_2) & \sqsubseteq B_1,\\
\exists q.\Diamond (X_1 \sqcap X_2) & \sqsubseteq B_1,\\
\exists q.\Diamond (X_2 \sqcap \Diamond X_1) & \sqsubseteq B_2,
\end{align*}
where $q$ is a fresh role name, and $X_1$ and $X_2$ are fresh concept
names.
%
% \begin{align*}
% A & \sqst\exists q.(\{ f\} \sqcap \Diamond X \sqcap \Diamond Y),\\
% \exists q.(\{ f \} \sqcap \Diamond ( X \sqcap \Diamond Y)) & \sqst A,\\
% \exists q.(\{ f \} \sqcap \Diamond (X \sqcap Y)) & \sqst A,\\
% \exists q.(\{ f \} \sqcap \Diamond (Y \sqcap \Diamond X)) & \sqst B,
% \end{align*}
% %
% where $q$ is a fresh role, $X$ and $Y$ are fresh concepts and $f$ a fresh individual.
One can clearly see that every interpretation $\Mmf'$ satisfying the four
CI above also satisfies $\top \sqsubseteq B_1 \sqcup B_2$. Indeed, for every $t\in\mathbb{N}$ and every $d\in\Delta^t$, there is a $d'\in\Delta^t$ such that $d'\in(\Diamond X_i)^{\Imc_t}$, for $i = 1,2$. Thus, $d'\in X_i^{\Imc_{t_i}}$, for $t_i > t$ and $i = 1,2$. Depending on whether $t_1 \geq t_2$ or not, we have $d\in B_i^{\Imc_t}$ for either $i = 1$ or $i = 2$, thus satisfying    $\top \sqsubseteq B_1 \sqcup B_2$.

Conversely, if $\top \sqsubseteq B_1 \sqcup B_2$ is satisfied at all $w\in W$ in some $\Mmf$ over $(\mathbb{N}, <)$ such that the $\Delta^t$ is countably infinite, for each $t\in\mathbb{N}$, then $\Mmf$ can be extended by interpreting the fresh symbols to satisfy the 
four CIs above: for every $t\in\mathbb{N}$ and every $d\in\Delta^t$, we pick a unique element $d'\in\Delta^t$ and make it the $q^{\Imc_t}$-successor of $d$; since the domains are countably infinite, we can choose these $q^{\Imc_t}$-successor in such a way that no $d'$ is the $q^{\Imc_{t_1}}$-successor of $d_1$  and the  $q^{\Imc_{t_2}}$-successor of $d_2$ for distinct pairs $(t_1,d_1)$ and $(t_2, d_2)$.  It now remains to define the interpretation of $X_1$ and $X_2$: for every $t\in\mathbb{N}$ and every $d\in\Delta^t$, let $d'$ be the $q^{\Imc_t}$-successor of $d$; if $d\in B_1^{\Imc_t}$, then we include $d'$ in $X_1^{\Imc_{t+1}}$ and in $X_2^{\Imc_{t+2}}$; otherwise, we include $d'$ in $X_1^{\Imc_{t+2}}$ and in $X_2^{\Imc_{t+1}}$. The interpretation of $X_1$ and $X_2$ is well-defined due to our assumption  on the interpretation of $q$.

A similar transformation works for the finite flows, but we need to reserve two last instants for the interpretation of $X_1$ and $X_2$. To achieve this, we `relativise' all CIs in a given ontology $\Omc$: every CI of the form $C_1 \sqsubseteq C_2$ is replaced by $\Diamond\Diamond\top \sqcap C_1 \sqsubseteq C_2$ (alternatively, $\Next\Next \top$ can be used). It should be clear that $\Omc$ is satisfied in an interpretation based on a flow $([0,n], <)$ iff  the relativised ontology $\Omc'$  is satisfied in an interpretation based on a flow $([0, n+2], <)$; the two interpretations coincide on the first $n$ instants, but the additional two instants are simply ignored by the relativised CIs and so, all concept and role names can be assumed to be empty there. Now, we can apply the procedure for eliminating the negation described above (note that the two additional time instants can now be used to interpret $X_1$ and $X_2$ in any suitable way).

So, we now have a  $\TLALCOud$ ontology $\Omc'$ in normal form without any occurrences of the $\neg$ operator. We have also shown that $A$ is satisfied under $\Omc$ in an interpretation based on an infinite (respectively, finite) flow and having countably infinite domains iff $A$ is satisfied under $\Omc'$  in an interpretation based on an infinite (respectively, finite) flow and having countably infinite domains. It remains to eliminate $\bot$. To this end, we again use the method suggested in~\cite{AKLWZ07} and take a fresh concept name $L$ and replace all occurrences of $\bot$ with $L$. In addition, we extend $\Omc'$ with the following CIs:
\begin{align*}
\exists s.L & \sqsubseteq L, \text{ for all role names } s \text{ in } \Omc',\\
\Diamond L & \sqsubseteq L.
\end{align*}
Denote the resulting ontology by $\Omc''$. We can easily show that $A$ is satisfied under $\Omc'$ iff $A\sqsubseteq L$ is \emph{not} entailed by~$\Omc''$. 
\end{proof}

%%% PREVIOUS VERSION - ALT x Kn
%\tempthree*
%%
%\begin{proof}
%The upper bound can be shown similarly to the case of $\mathbf{Alt} \times \K_n$~\cite[Theorem 6.6]{GabEtAl03}, as $\LTLo_{\ALC}$-concepts are a notational variant of this logic.
%The lower bound for  $\LTLo_{\ALCOu}$ and $\LTLfo_{\ALCOu}$ follows
%from the complexity of the satisfiability for $\ALC_u$ concepts.
%\end{proof}

\tempthree*

\newcommand{\run}{\rho}

\begin{proof}
Let $C_0$ be an $\TL^{\textstyle\circ}_{\ALCOu}$ concept.  By Proposition~\ref{lemma:redpartialtototal}, it is sufficient to consider total satisfiability.
	Set $d=\md(C_{0})$. Observe that $C_0$ is satisfiable in an interpretation of the form $(\Delta,(\Imc_{t})_{t\in [0,n]})$ with $n\in \Nbl$ iff
	it is satisfiable in an interpretation of the form  $(\Delta,(\Imc_{t})_{t\in [0,n]})$ with $n\leq d$. Fix $m_{0}\leq d$. Below, we give an exponential-time subroutine that checks satisfiability in an interpretation of the form
	$(\Delta,(\mathcal{I}_{t})_{t\in [0,m_{0}]})$.
	By going through all $m_{0}\leq d$, we obtain an exponential-time algorithm that checks satisfiability for $\LTLfo_{\ALCOu}$ in constant domains, which, by Proposition~\ref{prop:redexptoconst}, also gives the same upper bound for the problem in expanding domains. An algorithm for  $\LTLo_{\ALCOu}$ in constant domains simply checks satisfiability in an interpretation of the form $(\Delta,(\mathcal{I}_{t})_{t\in [0,m_{0}]})$ for $m_0 = d$; the case of expanding domains is again due to  Proposition~\ref{prop:redexptoconst}.
	
	As before, a \emph{type} for $C_{0}$ is a subset $\contp$ of $\con{C_{0}}$
such that
\begin{enumerate}
  [label=\textbf{C\arabic*},leftmargin=*,series=run]
\item
  $\neg C \in \contp$ iff $C\not \in \contp$, for all
  $\neg C \in \con{C_{0}}$;
  \label{ct:neg}
\item
  $C \sqcap D \in \contp$ iff $C, D \in \contp$, for all
  $C \sqcap D \in \con{C_{0}}$.
  \label{ct:con}
\end{enumerate}
Denote by $\boldsymbol{T}$ the set of types for $C_0$. We clearly have $|\boldsymbol{T}| \leq 2^{|\con{C_0}|}$.
Given an interpretation $\Mmf=(\Delta,(\Imc_{t})_{t\in [0,m_{0}]})$, denote by $\contp^{\Imc_i}(d)$ the type realised by $d\in \Delta$ in $\Imc_i$, for $i \leq m_0$.

	The subroutine is an elimination procedure on the set  of all $(m_0 + 1)$-tuples $\run$ of types $(\contp_{0},\cdots,\contp_{m_{0}})$, called \emph{runs}. Runs can be thought of as elements in $\boldsymbol{T}^{m_0 + 1}$ and so the number of runs is bounded by $2^{|\con{C_0}| (m_0 + 1)}$. The components of a run $\run$ are denoted $\run(0), \dots, \run(m_0)$.	We write 
	$$
	\run \rightarrow_{r,i} \run'
	$$
	if $C\in \run'(i)$ implies $\exists r.C\in \run(i)$
	for all $\exists r.C\in \con{C_{0}}$.
Given an interpretation $\Mmf=(\Delta,(\Imc_{t})_{t\in [0,m_{0}]})$, denote by $\run^{\Mmf}_d=(\contp^{\Imc_0}(d),\ldots,\contp^{\Imc_{m_0}}(d))$ the run realised by $d\in\Delta$ in~$\Mmf$.

Let $a_{1},\ldots,a_{n}$ be the nominals in $C_{0}$. To construct the initial set $\runs_0$ of runs, we fix two parameters:
\begin{itemize}
\item an $(m_0+1)$-tuple $(U^{0},\ldots,U^{m_{0}})$ of
  \emph{$u$-types}, where each $U^i$ is a set containing either
  $\exists u.C$ or $\neg\exists u.C$, for every
  $\exists u.C\in \con{C_{0}}$;
%\item	 an $(m_0+1)$-tuple $(N^{0},\ldots,N^{m_{0}})$ of \emph{nominal profiles}, where each $N^i$ is an $n$-sequence of types $(\contp_{1}^i,\ldots,\contp_{n}^i)$ with $\{a_{j}\}\in \contp^i_{j}$, for all $1 \leq j \leq n$.
\item a set $\mathfrak{N}$ of at most $n (m_0 + 1)$ \emph{nominal runs} $\run\in \boldsymbol{T}^{m_0 + 1}$ such that
\begin{itemize}
\item $U^i \subseteq \run(i)$, for all $\run\in\mathfrak{N}$ and $i \leq m_0$, and 
%
  % \item $|\{\run \in \mathfrak{N} \mid \{a_j\} \in \run(i) \}| = 1$,
  % for all $i \leq m_0$ and $1 \leq j \leq n$. 
\item for every $\run\in\mathfrak{N}$, there is $i \leq m_0$ such that $\{a_j\}\in\run(i)$, for some $1 \leq j \leq n$;
\item for every $i \leq m_0$ and every $1 \leq j \leq n$, there is
  exactly one nominal run $\run\in\mathfrak{N}$ such that $\{a_j\}\in\run(i)$.
\end{itemize}
\end{itemize}	
%	
%	The $u$-fragment $\contp^{u}$ of a type $\contp$ is defined as
%	%
%	$$
%	\contp^{u}= \bigl\{ \exists u.D \mid \exists u.D \in \sub{C_{0}}\bigr\}.
%	$$
	%
%	 A \emph{nominal profile} 
%	
%	
%	Fix  and . Denote $N_{i}=(\contp_{1}^{i},\ldots,\contp_{n}^{i})$ for $i\leq m_{0}$.
%
Then we take $\runs_0 = \mathfrak{N} \cup \mathfrak{O}$, where
$\mathfrak{O}$ is the set of \emph{all} runs without nominals
consistent with the fixed $u$-types, that is, all runs
$\run\in \boldsymbol{T}^{m_0 + 1}$ such that
%, for each $i\leq m_{0}$, the set $\settp^i$
%consists of \emph{all} types $\contp$ for $C_0$ such that
%
\begin{itemize}
\item $U^{i} \subseteq \run(i)$, for all $i \leq m_0$, and
\item $\{ a_j \}\notin \run(i)$, for all $i \leq m_0$ and $1 \leq j \leq n$.
%\item  if $\{a_{j}\}\in \contp$, then $\contp=\contp_{j}^{i}$, for every $1 \leq j \leq n$.  
\end{itemize}
We now exhaustively apply the run elimination procedure and to remove
each run $\run$ from $\runs_0$ that satisfies one of the following:
\begin{itemize}
\item there is
  % $\Next C\in \con{C_{0}}$ and
  $i\leq m_{0}$ and $\Next C\in \run(i)$ and either $i=m_{0}$ or
  $i<m_{0}$ but $C\notin \run(i+1)$;
\item there is
  % $\Next C\in \sub{C_{0}}$ and
  $i<m_{0}$ and $C\in \run(i+1)$ but $\Next C\notin \run(i)$, for
  $\Next C\in \con{C_{0}}$;
\item there is $i\leq m_{0}$ and $\exists u.C\in \run(i)$ but there is
  no $\run'$ with $C\in \run'(i)$;
\item there is $i\leq m_{0}$ and $\exists r.C\in \run(i)$ but there is
  no $\run'$ with $C\in \run'(i)$ and $\run\rightarrow_{r,i}\run'$.
\end{itemize}
Assume that $\runs$ is the remaining set of runs.
\begin{lemma}
  The following conditions are equivalent:
	
  (1) $C_{0}$ is satisfiable in a total interpretation
  $\Mmf=(\Delta,(\Imc_{t})_{t\in [0,m_{0}]})$ such that
  \begin{itemize}
	\item $U^i = \{ \exists u.C\in \con{C_{0}} \mid (\exists
          u.C)^{\Imc_i}=\Delta\} \cup$\\
          $\hspace*{2.5em}\{ \neg\exists u.C\in \con{C_{0}} \mid (\exists
          u.C)^{\Imc_i}=\emptyset\}$, for $i\leq m_{0}$\textup{;}
	\item $\mathfrak{N} = \{\run_{a_j^{\Imc_i}}^\Mmf \mid i \leq m_0, 1 \leq j\leq n\}$\textup{.}
%	\item $\{C \in \sub{C_{0}}\mid a_{j}^{\Imc_i}\in C^{\Imc_i}\}=\contp_{j}^i$, for all $1 \leq j\leq n$\textup{;}
\end{itemize}
(2) the set $\runs$ of runs satisfies the following conditions\textup{:}
\begin{itemize}
%\item for each $1\leq j \leq n$ and $i \leq m_0$, there is some $\run\in\runs$ such that $\run(i) = \contp^i_j$\textup{;} 
\item $\mathfrak{N} \subseteq \runs$\textup{;} 
\item $C_{0}\in \run(0)$ for some $\run\in \runs$.
\end{itemize}
\end{lemma}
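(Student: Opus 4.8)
The plan is to prove the two directions of the equivalence separately, reading the subroutine as a decision procedure for satisfaction of $C_{0}$ at the initial instant~$0$. By the observation preceding the lemma, together with the fact that no operator of the $\Next$-fragment looks into the past and that $\exists u$ is evaluated instant-by-instant, $C_{0}$ is satisfied somewhere in a finite-flow interpretation iff it is satisfied at instant~$0$ of such an interpretation of length at most $\md(C_{0})$; so condition~(1) may be read as $C_{0}^{\Imc_{0}}\neq\emptyset$ with the displayed choices of $U^{i}$ and $\mathfrak{N}$. Throughout I rely on Proposition~\ref{lemma:redpartialtototal} to restrict to total interpretations and, implicitly, on the reduction to $m_{0}\leq\md(C_{0})$ already spelled out before the statement.

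For $(1)\Rightarrow(2)$, I would associate with every $d\in\Delta$ the realised run $\run^{\Mmf}_{d}=(\contp^{\Imc_{0}}(d),\dots,\contp^{\Imc_{m_{0}}}(d))$ and first check that each such run lies in $\runs_{0}$: it is a tuple of types by \ref{ct:neg}--\ref{ct:con}, it contains the fixed $u$-type $U^{i}$ at each $i$ because the truth of every $\exists u.C$ is uniform across the domain at a given instant, and it belongs to $\mathfrak{N}$ or to $\mathfrak{O}$ according to whether $d$ is the value of some nominal. The core step is an induction on the elimination procedure showing that no realised run is ever deleted: assuming all realised runs are still present, I verify that $\run^{\Mmf}_{d}$ fails each of the four deletion conditions by unwinding the semantics --- if $\Next C\in\run^{\Mmf}_{d}(i)$ then $i<m_{0}$ and $C\in\run^{\Mmf}_{d}(i+1)$ and conversely, and if $\exists u.C$ or $\exists r.C\in\run^{\Mmf}_{d}(i)$ then a witness $e$ exists, so that $\run^{\Mmf}_{e}$ is again a realised run, present by the induction hypothesis, with $\run^{\Mmf}_{d}\rightarrow_{r,i}\run^{\Mmf}_{e}$ in the role case. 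Hence all realised runs survive, which gives $\mathfrak{N}\subseteq\runs$ and, from a witness $d\in C_{0}^{\Imc_{0}}$, a surviving run with $C_{0}\in\run^{\Mmf}_{d}(0)$.

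For $(2)\Rightarrow(1)$, I would build the total constant-domain interpretation $\Mmf$ with $\Delta=\runs$ (non-empty since $C_{0}\in\run(0)$ for some $\run$), setting $A^{\Imc_{i}}=\{\run\mid A\in\run(i)\}$, $u^{\Imc_{i}}=\Delta\times\Delta$, the maximal role relation $r^{\Imc_{i}}=\{(\run,\run')\mid\run\rightarrow_{r,i}\run'\}$, and $a_{j}^{\Imc_{i}}$ as the unique run carrying $\{a_{j}\}$ at instant~$i$. The latter is well defined because runs of $\mathfrak{O}$ carry no nominal while $\mathfrak{N}\subseteq\runs$ supplies exactly one nominal run for each pair $(a_{j},i)$. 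Everything then rests on a truth lemma, proved by induction on $C\in\con{C_{0}}$, stating $\run\in C^{\Imc_{i}}$ iff $C\in\run(i)$: the Boolean cases use \ref{ct:neg}--\ref{ct:con}; the nominal case uses the uniqueness just noted; the $\Next$ case uses the two $\Next$-deletion conditions and the boundary $i=m_{0}$; the $\exists r$ case uses deletion condition~4 forward and maximality of $r^{\Imc_{i}}$ backward; and the $\exists u$ case uses that $\exists u.C\in U^{i}$ precisely when some surviving run realises $C$ at~$i$. Applying the truth lemma at instant~$0$ to the run witnessing $C_{0}\in\run(0)$ yields $C_{0}^{\Imc_{0}}\neq\emptyset$, and by construction $U^{i}$ and $\mathfrak{N}$ are exactly those read off from $\Mmf$, establishing~(1).

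The main obstacle I anticipate is the truth lemma in $(2)\Rightarrow(1)$, and within it the interaction between the global modality $\exists u$ and the per-instant, domain-uniform guesses $U^{i}$: one must ensure that all surviving runs agree on the $\exists u$-content at each instant and that this content matches the types actually realised, so that neither a spurious nor a missing witness breaks the equivalence. The nominal bookkeeping --- checking that exactly one surviving run carries each $\{a_{j}\}$ at each instant, so that $a_{j}^{\Imc_{i}}$ is a genuine element and $\{a_{j}\}^{\Imc_{i}}$ is the corresponding singleton --- is the second point demanding care, whereas the $\Next$ and $\exists r$ cases are routine once the four deletion conditions are available.
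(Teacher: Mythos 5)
Your proposal reconstructs the paper's argument in both directions: the paper's (1)$\Rightarrow$(2) is exactly your observation that the realised runs $\rho^{\Mmf}_{d}$ all survive the elimination (whence $\runs\supseteq\mathfrak{N}$, with the witness run supplied by some $d_{0}\in C_{0}^{\Imc_{0}}$), and its (2)$\Rightarrow$(1) builds the same canonical interpretation with $\Delta=\runs$, nominals interpreted by the unique surviving run carrying $\{a_{j}\}$ at each instant, and the maximal role relations $\rho\rightarrow_{r,i}\rho'$, with the truth lemma left implicit (``it is now easy to show''). Your treatment of the nominal bookkeeping, the $\Next$ cases, and the $\exists r$ case (rule~4 forward, maximality backward) matches what the paper compresses into that sentence and is sound.

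The genuine problem is the step you yourself single out as the crux and then assert without a mechanism: ``$\exists u.C\in U^{i}$ precisely when some surviving run realises $C$ at~$i$.'' The four deletion rules secure only the left-to-right direction: the third rule fires only on runs that \emph{contain} $\exists u.C$, so nothing ever eliminates a run $\rho'$ with $C\in\rho'(i)$ while $\neg\exists u.C\in U^{i}$, since types are only required to satisfy the closure conditions (\textbf{C1})--(\textbf{C2}) and carry no saturation constraint linking $C$ to $\exists u.C$. Concretely, take $C_{0}=A\sqcap\neg\exists u.A$ (unsatisfiable), $m_{0}=0$, no nominals, and $U^{0}=\{\neg\exists u.A\}$: the run consisting of the single type $\{C_{0},\,A,\,\neg\exists u.A\}$ lies in $\runs_{0}$, triggers none of the four rules, and survives, so condition~(2) holds while condition~(1) fails; correspondingly, in your canonical model the backward $\exists u$ case of the truth lemma breaks, since $(\exists u.A)^{\Imc_{0}}=\Delta$ while every type contains $\neg\exists u.A$. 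The repair is small --- require types to be saturated, i.e.\ $C\in\contp$ and $\exists u.C\in\con{C_{0}}$ imply $\exists u.C\in\contp$, or equivalently add a fifth deletion rule removing any $\rho$ with $C\in\rho(i)$ but $\exists u.C\notin\rho(i)$ --- after which all surviving runs agree with $U^{i}$ on the realised $\exists u$-content and your argument goes through verbatim. In fairness, the paper's own proof silently assumes the same saturation, so you located precisely the delicate point; you just stopped short of closing it, and as the rules stand it cannot be closed without amending the construction.
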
	
\begin{proof}	 
	If (1) holds, then take a witness total interpretation $\Mmf$. Let $\runs'$ be the set of runs $\run_d^\Mmf$
	for $d\in \Delta$. It is easy to see that 
$\runs\supseteq \runs'\supseteq \mathfrak{N}$. Moreover, $\run^{\Mmf}_{d_{0}}$,  for some $d_{0}\in C_{0}^{\Imc_{0}}$, provides the requited witness run for~(2).

Conversely, assume that (2) holds. Construct a total interpretation $\Mmf$ by setting:
\begin{itemize}
	\item $\Delta = \runs$;
	\item $a^{\Imc_i}=\run$ iff $\{a\} \in \run(i)$, for $i\leq m_{0}$;
	\item $\run \in A^{\Imc_{i}}$ iff $A\in \run(i)$, for $i\leq m_{0}$;
	\item $(\run,\run')\in r^{\Imc_i}$ iff $\run\rightarrow_{r,i}\run'$, for $i\leq m_{0}$.
\end{itemize}
It is now easy to show that $\Mmf$ is well-defined and as required for (1). 
\end{proof}
The exponential-time subroutine  for satisfiability in interpretations of length $m_{0}+1$ is obtained by going through all $u$-types and sets of nominal runs (of which there are only exponentially many).   
\end{proof}
%\textcolor{red}{say that even consistency of the global ontology alone is undecidable}

\tempfour*
\begin{proof}
  The result
  is obtained by a reduction of the (undecidable) halting problem for
  (two-counter) Minsky machines starting with $0$ as initial counter
  values~\cite{DegEtAl02,BaaEtAl17}.

  A \emph{(two-counter) Minsky machine} is a pair $M = (Q, P)$, where
  $Q = \{ q_{0}, \ldots, q_{L} \}$ is a set of \emph{states} and
  $P = ( I_{0}, \ldots, I_{L\mbox{-}1} )$ is a sequence of
  \emph{instructions}.  We assume $q_{0}$ to be the \emph{initial
    state}, and $q_{L}$ to be the \emph{halting state}.  Moreover, the
  instruction $I_{i}$ is executed at state $q_{i}$, for
  $0 \leq i < L$.  Each instruction $I$ can have one of the following
  forms, where $r \in \{ r_{1}, r_{2} \}$ is a \emph{register} that
  stores non-negative integers as values,
%and has initial value set to $0$,
and $q,q'$ are states:
\begin{enumerate}
	\item $I = +(r,q)$ -- add $1$ to the value of register $r$ and go to state $q$;
	\item $I = -(r,q,q')$ -- if the value of register $r$ is (strictly) greater than $0$, subtract $1$ from the value of $r$ and go to state $q$; otherwise, go to state $q'$.
\end{enumerate}
Given a Minsky machine $M$,
a \emph{configuration of $M$} is a triple $(q, v_1, v_2)$, where $q$ is a state of $M$ and $v_1, v_2 \in \mathbb{N}$ are the \emph{values} of registers $r_1, r_2$, respectively.
In the following, we denote $\overline{k} = {3 - k}$, for
$k \in \{ 1, 2 \}$.  Given $i, j \geq 0$, we write
$(q_{i}, v_1, v_2) \Rightarrow_{M} (q_{j}, v'_1,
v'_2)$ iff one of the following conditions hold:
\begin{itemize}
\item $I_{i} = +(r_k, q_{j})$, $v'_k = v_k + 1$ and
  $v'_{\overline{k}} = v_{\overline{k}}$;
\item $I_{i} = -(r_k, q_{j}, q')$, $v_k > 0$, $v'_k = v_k - 1$,
  and $v'_{\overline{k}} = v_{\overline{k}}$;
\item $I_{i} = -(r_k, q', q_{j})$, $v_k = v'_k = 0$, and
  $v'_{\overline{k}} = v_{\overline{k}}$.
\end{itemize}
Given an \emph{input} $(v_1, v_2) \in \mathbb{N} \times \mathbb{N}$, the
\emph{computation of $M$ on input $(v_1,v_2)$} is the (unique) longest
sequence of configurations
$(q^{0}, v^{0}_{1}, v^{0}_{2}) \Rightarrow_{M} (q^{1},
v^{1}_{1}, v^{1}_{2}) \Rightarrow_{M} \ldots$, such that
$q^{0} = q_{0}$, $v^{0}_{1} = v_1$ and $v^{0}_{2} = v_2$.
We say that $M$ \emph{halts} \emph{on input} $(0,0)$ if the computation of $M$ on input $(0,0)$ is finite: thus, its initial configuration takes the form $(q_{0}, 0, 0)$, while its last configuration has the form $(q_{L}, v_{1}, v_{2})$, for some $v_1,v_2\in\mathbb{N}$. 
The \emph{halting problem for Minsky machines} is the problem of
deciding, given a Minsky machine $M$, whether $M$ halts on input
$(0,0)$.
This problem is known to be undecidable; see, e.g.,~\cite{DegEtAl02,BaaEtAl17}.

To represent the computation of a Minsky machine $M$ on input $(0,0)$, we use the
temporal dimension to model successive configurations in the
computation. We introduce a concept name~$Q_{i}$, for each state $q_{i}$ of $M$. Concept names $R_{1}, R_{2}$
are used to represent registers $r_{1}, r_{2}$, respectively: the value of register $r_k$ at 
any computation step will coincide with the cardinality of the extension of $R_k$ at the corresponding moment of time.  
The incrementation of the
value of register $r_{k}$ is modelled by requiring that the extension of
concept $\lnot R_{k} \sqcap \Next R_{k}$ is included in that of nominal
$\{ a_k \}$;  dually, the decrementation of register $r_k$ is modelled by including the extension of
$R_{k} \sqcap \Next \lnot R_{k}$ in the nominal $\{ b_{k} \}$. 
Since the individual names $a_k$ and~$b_k$ are interpreted
\emph{non-rigidly}, the (unique) element added or subtracted from the
extension of $R_{k}$ will vary over time.  

We first construct a $\TLnALCOu$ ontology $\Omc_M$ such that 
\begin{itemize}
\item $Q_0$ is satisfiable under $\Omc_M$ over $(\mathbb{N},<)$ iff machine $M$ never halts on $(0,0)$; 
\item $Q_0$ is satisfiable under $\Omc_M$ over finite flows of time iff machine $M$ halts on $(0,0)$; 
\end{itemize}
Then, Lemmas~\ref{lemma:ontology:normal-form} and~\ref{lemma:spy-point-reduction} can be used to replace the ontology $\Omc_M$ with a $\TLnALCO$ ontology $\Omc_M$ with the same property, thus establishing the required results.

We use the following CIs to ensure that concepts $Q_i$ representing states are either empty or coincide with the whole domain at any moment of time:
  \begin{enumerate}
    [label=$(\textbf{S\arabic*})$,
    align=left,leftmargin=*,series=run]
    %1  
  \item\label{undec-next-state-cover}
   $\exists u.Q_i \sqsubseteq Q_i$, for all $0 \leq i \leq L$;
    %4
  \item\label{undec-next-state-disjoint}
   $Q_i \sqcap Q_j \sqsubseteq \bot$, for all $0 \leq i<  j \leq L$.  
  \end{enumerate}

Next, we encode instructions. We begin with incrementation. An instruction of the form
$I_{i} = +(r_{k}, q_{j})$ is represented by the  following CIs:
\begin{enumerate}
  [label=$(\textbf{I\arabic*})$, align=left, leftmargin=*,series=run]
  %1
\item\label{undec-inc-nominal} $Q_i \sqsubseteq \exists u. (\{ a_k \} \sqcap \neg R_k)$;
%2	
\item\label{undec-inc-counter}
  $Q_i \sqsubseteq \Next R_k \Leftrightarrow (R_k \sqcup \{ a_k \})$;
  %4
\item\label{undec-inc-other}
  $Q_{i} \sqsubseteq \Next R_{\overline{k}} \Leftrightarrow  R_{\overline{k}}$;
  %5
\item\label{undec-inc-state} $Q_{i} \sqsubseteq \Next Q_{j}$.
\end{enumerate}
An instructions of the form $I_{i} = -(r_{k}, q_{j}, q_{h})$ is
represented by the following CIs:
\begin{enumerate}
  [label=$(\textbf{D\arabic*})$, align=left, leftmargin=*,series=run]
  %1
\item\label{undec-dec-nominal}
  $Q_i \sqcap \exists u.R_k \sqsubseteq \exists u. (\{b_k\}\sqcap R_k)$;
\item\label{undec-dec-counter}
  $Q_i \sqsubseteq \Next R_k \Leftrightarrow (R_k \sqcap \neg \{ b_k \})$;
  %4
\item\label{undec-dec-other}
 $Q_{i} \sqsubseteq  \Next R_{\overline{k}} \Leftrightarrow R_{\overline{k}}$;
 %6
\item\label{undec-dec-state}
  $Q_{i} \sqcap \exists u.R_k \sqsubseteq \Next Q_{j}$;
  %7
\item\label{undec-dec-state-empty}
 $Q_i \sqcap \neg \exists u.R_k \sqsubseteq \Next Q_{h}$.
\end{enumerate}

Given $\Mmf = (\Delta, (\Imc_{t})_{t \in \Tmf})$, we say that an instant $t\in\Tmf$ \emph{encodes} a configuration $(q_i,v_1,v_2)$ whenever $Q_i^{\Imc_t} = \Delta$ and $|R_k^{\Imc_t}| = v_k$, for $k = 1,2$. We show the following:
\begin{claim}\label{claim:undec:subseq:states}
For any $\Mmf = (\Delta, (\Imc_{t})_{t \in \Tmf})$ that satisfies~\ref{undec-next-state-cover}--\ref{undec-dec-state-empty}, if $t\in\Tmf$ encodes $(q_i,v_1,v_2)$ and $(q_i, v_{1}, v_{2}) \Rightarrow_{M} (q', v'_{1}, v'_{2})$, then $t+1$ encodes $(q', v'_{1}, v'_{2})$.
\end{claim}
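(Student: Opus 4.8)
The plan is to prove the claim by a case analysis on the form of the instruction $I_i$ executed at state $q_i$, reading off at the instant $t$ the consequences of the relevant concept inclusions, whose left-hand sides are all satisfied by the \emph{whole} domain precisely because $Q_i^{\Imc_t} = \Delta$. Before the case distinction, I would first observe that $t+1 \in \Tmf$ is forced: since $Q_i^{\Imc_t} = \Delta \neq \emptyset$, the applicable state-transition CI (one of \ref{undec-inc-state}, \ref{undec-dec-state}, \ref{undec-dec-state-empty}, depending on $I_i$ and on whether $R_k^{\Imc_t}$ is empty) yields $(\Next Q')^{\Imc_t} \neq \emptyset$; were $t$ the last instant of a finite flow, $(\Next Q')^{\Imc_t}$ would be empty, a contradiction. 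Hence $t+1$ exists and $(\Next D)^{\Imc_t} = D^{\Imc_{t+1}}$ throughout the argument.

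For the incrementation case $I_i = +(r_k, q_j)$, I would argue as follows. By \ref{undec-inc-nominal} and $Q_i^{\Imc_t} = \Delta$, the concept $\{a_k\} \sqcap \neg R_k$ is non-empty at $t$, so $a_k$ designates at $t$ with $a_k^{\Imc_t} \notin R_k^{\Imc_t}$ and, crucially, $\{a_k\}^{\Imc_t} = \{a_k^{\Imc_t}\}$ is a singleton. Applying \ref{undec-inc-counter} to every element of $\Delta = Q_i^{\Imc_t}$ gives $R_k^{\Imc_{t+1}} = R_k^{\Imc_t} \cup \{a_k^{\Imc_t}\}$; since $a_k^{\Imc_t} \notin R_k^{\Imc_t}$, this is a disjoint union adding one fresh element, so $|R_k^{\Imc_{t+1}}| = v_k + 1$. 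Likewise \ref{undec-inc-other} gives $R_{\overline{k}}^{\Imc_{t+1}} = R_{\overline{k}}^{\Imc_t}$, leaving its cardinality at $v_{\overline{k}}$, and \ref{undec-inc-state} gives $Q_j^{\Imc_{t+1}} = \Delta$. Thus $t+1$ encodes $(q_j, v'_1, v'_2)$ with $v'_k = v_k + 1$ and $v'_{\overline{k}} = v_{\overline{k}}$, exactly matching the definition of $\Rightarrow_M$.

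For the decrementation case $I_i = -(r_k, q_j, q_h)$ I would split on the value of $v_k$. If $v_k > 0$, then $R_k^{\Imc_t} \neq \emptyset$, so $(\exists u. R_k)^{\Imc_t} = \Delta$; \ref{undec-dec-nominal} then forces $b_k$ to designate at $t$ with $b_k^{\Imc_t} \in R_k^{\Imc_t}$, and \ref{undec-dec-counter} gives $R_k^{\Imc_{t+1}} = R_k^{\Imc_t} \setminus \{b_k^{\Imc_t}\}$, whence $|R_k^{\Imc_{t+1}}| = v_k - 1$ because exactly one element is removed; \ref{undec-dec-other} and \ref{undec-dec-state} then yield $|R_{\overline{k}}^{\Imc_{t+1}}| = v_{\overline{k}}$ and $Q_j^{\Imc_{t+1}} = \Delta$, giving the correct successor. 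If instead $v_k = 0$, then $R_k^{\Imc_t} = \emptyset$, so $(\neg \exists u. R_k)^{\Imc_t} = \Delta$; now \ref{undec-dec-counter} gives $R_k^{\Imc_{t+1}} = \emptyset$ (the subtraction of a possible $b_k^{\Imc_t}$ is vacuous on the empty set, so designation of $b_k$ is irrelevant in this branch, where \ref{undec-dec-nominal} does not fire), \ref{undec-dec-other} keeps $R_{\overline{k}}$ fixed, and \ref{undec-dec-state-empty} gives $Q_h^{\Imc_{t+1}} = \Delta$, again matching $\Rightarrow_M$.

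The only genuinely delicate point, and the one I would be most careful about, is the cardinality bookkeeping in the register updates, which hinges entirely on the singleton semantics of the (non-rigid) nominals $\{a_k\}$ and $\{b_k\}$: the rewrite rules \ref{undec-inc-counter} and \ref{undec-dec-counter} alter $R_k$ by adding or deleting the \emph{single} element denoted by the nominal, and it is precisely the facts $a_k^{\Imc_t} \notin R_k^{\Imc_t}$ (from \ref{undec-inc-nominal}) and $b_k^{\Imc_t} \in R_k^{\Imc_t}$ (from \ref{undec-dec-nominal}) that turn these set operations into an exact change of $\pm 1$ in cardinality. Everything else is routine unfolding of the semantics of $\Next$, $\exists u$, $\sqcap$, $\sqcup$, and $\neg$ together with the hypothesis $Q_i^{\Imc_t} = \Delta$.
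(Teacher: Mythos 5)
Your proof is correct and takes essentially the same route as the paper's own proof: a case analysis on $I_i$ that unfolds (\textbf{I1})--(\textbf{I4}) and (\textbf{D1})--(\textbf{D5}) at $t$ using $Q_i^{\Imc_t} = \Delta$, with the singleton semantics of the non-rigid nominals $\{a_k\}$ and $\{b_k\}$ (together with $a_k^{\Imc_t} \notin R_k^{\Imc_t}$, resp.\ $b_k^{\Imc_t} \in R_k^{\Imc_t}$) yielding the exact $\pm 1$ change in $|R_k|$. Your preliminary observation that $t+1 \in \Tmf$ is forced --- since otherwise $(\Next Q')^{\Imc_t} = \emptyset$ would contradict the applicable transition CI --- is a point the paper leaves implicit and is a welcome tightening for the finite-flow case.
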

\begin{proof}[Proof of Claim]
We consider the following cases.
\begin{itemize}
\item Suppose $I_{i} = +(r_{k}, q_{j})$. By~\ref{undec-inc-state} with~\ref{undec-next-state-cover}, we have $Q_j^{\Imc_{t+1}}= \Delta$, and thus
$q' = q_{j}$. Note that, by~\ref{undec-next-state-disjoint}, $Q_{j'}^{\Imc_{t+1}} = \emptyset$ for any $j' \ne j$. Then,
by~\ref{undec-inc-nominal}, $a_k^{\Imc_{t}}$ is defined and $a_k^{\Imc_{t}}\notin R_k^{\Imc_t}$. By~\ref{undec-inc-counter},  $R_k^{\Imc_{t+1}} = R_k^{\Imc_t} \cup \{a_k^{\Imc_{t}}\}$, whence $|R_k^{\Imc_{t+1}}| = |R_k^{\Imc_t}| + 1$, and so $v'_k = v_k + 1$. For the other counter, 
by~\ref{undec-inc-other}, $R_{\overline{k}}^{\Imc_{t + 1}} = R_{\overline{k}}^{\Imc_{t}}$, whence $v'_{\overline{k}} = v_{\overline{k}}$.
\item Suppose $I_{i} = -(r_{k}, q_{j}, q_{h})$. By~\ref{undec-dec-state} with~\ref{undec-next-state-cover}, if $R_k^{\Imc_t}\ne\emptyset$, then $Q_j^{\Imc_{t+1}}= \Delta$, and thus
$q' = q_{j}$. Otherwise, if $R_k^{\Imc_t}=\emptyset$, then, by~\ref{undec-dec-state-empty} with~\ref{undec-next-state-cover}, $Q_h^{\Imc_{t+1}}= \Delta$, and thus
$q' = q_h$. Note that in either case, by~\ref{undec-next-state-disjoint}, $Q_{j'}^{\Imc_{t+1}} = \emptyset$ for any other $j'$. We next deal with the values of counters. If $R_k^{\Imc_t}\ne\emptyset$, then, by~\ref{undec-dec-nominal}, $b_k^{\Imc_t}$ is defined and $b_k^{\Imc_t} \in R_k^{\Imc_t}$. Then, by~\ref{undec-dec-counter}, we have $R_k^{\Imc_{t+1}} = R_k^{\Imc_t} \setminus \{ b_k^{\Imc_t}\}$. It follows that $|R_k^{\Imc_{t+1}}| = |R_k^{\Imc_t}| - 1$ and so $v'_k = v_k - 1$. On the other hand, if $R_k^{\Imc_t}=\emptyset$, then, by~\ref{undec-dec-counter}, $R_k^{\Imc_{t+1}} = \emptyset$ and so $v'_k = v_k = 0$. For the other counter, 
by~\ref{undec-dec-other}, $R_{\overline{k}}^{\Imc_{t + 1}} = R_{\overline{k}}^{\Imc_{t}}$, whence $v'_{\overline{k}} = v_{\overline{k}}$.
\end{itemize}
So, in each case, $t+1$ encodes $(q', v_1', v_2')$.
\end{proof}

It remains to encode the initial configuration and the acceptance
condition. For the initial configuration, we two CIs
  \begin{enumerate}
    [label=$(\textbf{O\arabic*})$,
    align=left,leftmargin=*,series=run]
    %1
  \item\label{undec-initial-count} 
  $Q_0 \sqcap R_k \sqsubseteq \bot$, for $k = 1,2$;
  \end{enumerate}
Clearly, these two CIs ensure that the value of both counters is
$0$ whenever the state is the initial state $q_0$.
% \textcolor{red}{need this assumption; another assumption is that $q_L$ occurs only once in every computation}
% Next, for the case of finite traces, we add the following:
% %
%   \begin{enumerate}
%     [label=$(\textbf{S\arabic*}')$,
%     align=left,leftmargin=*,series=run]
%   \item $Q_0$;
%   \item $Q_L \sqst \Box\bot$.
%     \label{item:last}
% \end{enumerate}
% %
% So, the formula is satisfiable iff there is a finite computation
% that reaches $q_L$.
% For the case of infinite traces, we  use the following:
For the non-termination condition over $(\mathbb{N}, <)$ and the termination condition over finite flows of time, we use the following CIs, respectively:
  \begin{enumerate}[label=$(\textbf{A\arabic*})$,
    align=left,leftmargin=*,series=run]
\item\label{undec-does-not-halt}
    $Q_{L}\sqsubseteq \bot$.
\item\label{undec-halts}
    $\neg \Next \top \sqsubseteq Q_{L}$.
\end{enumerate}
We claim that $Q_0$ is satisfiable under $\Omc_M$ over $(\mathbb{N},<)$ iff machine $M$ never halts on $(0,0)$. Indeed, if $Q_0$ is satisfiable under $\Omc_M$, then by~\ref{undec-initial-count}, moment 0 encodes configuration~$(q_0, 0,0)$. By Claim~\ref{claim:undec:subseq:states}, moments 0, 1, 2, etc. encode configurations $(q^t,v^t_1,v_2^t)$ such that $(q^t, v_{1}^t, v_{2}^t) \Rightarrow_{M} (q^{t+1}, v^{t+1}_{1}, v^{t+1}_{2})$, for each $t$. By~\ref{undec-does-not-halt}, the halting state $q_L$ is never reached, and so machine~$M$ does not halt on $(0,0)$.

Conversely,  $(q^0, v^0_{1}, v^0_{2}) \Rightarrow_{M} \ldots \Rightarrow_{M} (q^t,
  v^t_{1}, v^t_{2})  \Rightarrow_{M} \ldots $ be the computation of $M$ on input
  $(0,0)$, where $q^{0} = q_{0}$ and $v^0_k = 0$, for $k = 1,2$.  We
  construct an interpretation 
$\Mmf = (\Delta, (\Imc_{t})_{t \in \mathbb{N}})$ that satisfies $Q_0$ under $\Omc_M$. 
%
%Let $\Tmf = [0, n]$, and 
Let $\Delta$ be a fixed countable set.  Given
$t\in \mathbb{N}$ and $q^t = q_j$, for some $0 \leq i \leq L$,
we set $Q_{i}^{\Imc_{t}} = \Delta$ and $Q_{j}^{\Imc_{t}} = \emptyset$, for all $j\neq i$.
In addition, we set $R_{1}^{\Imc_{0}} = R_{2}^{\Imc_{0}} = \emptyset$.
%while $a_k^{\Imc_{0}} = d \in \Delta$, and
%$b_k^{\Imc_{0}} = e \in \Delta$, for some $d \neq e$.  Moreover,
Then, for every instant $t\in\mathbb{N}$, we define the interpretations
$R_{1}^{\Imc_{t+1}}$, $R_{2}^{\Imc_{t+1}}$, $a_k^{\Imc_{t}}$,
and $b_k^{\Imc_{t}}$ inductively as follows.
\begin{itemize}
\item If $q^t = q_i$ and $I_{i} = +(r_{k}, q_{j})$, then we choose some $d\notin R_{k}^{\Imc_{t}}$ and set $a_k^{\Imc_{t}} = d$ and 
   $R_{k}^{\Imc_{t + 1}} = R_{k}^{\Imc_{t}} \cup \{ d\}$;
  we also choose an arbitrary $b_{k}^{\Imc_{t}}$ and set
  $R_{\overline{k}}^{\Imc_{t + 1}} = R_{\overline{k}}^{\Imc_{t}}$.
\item If $q^t = q_{i}$ and $I_{i} = -(r_{k},q_{j},q_{h})$, then we have two options:
  \begin{itemize}
  \item if $R_{k}^{\Imc_{t}} \neq \emptyset$, then we choose an arbitrary $e\in R_k^{\Imc_t}$ and set $b_k^{\Imc_{t}} = e$ and
    $R_{k}^{\Imc_{t + 1}} = R_{k}^{\Imc_{t}} \setminus \{ e \}$;
      \item if $R_{k}^{\Imc_{t}} = \emptyset$, then we choose an arbitrary $b_k^{\Imc_{t}}$ and set
        $R_k^{\Imc_{t + 1}} = R_{1}^{\Imc_{t}}$.
      \end{itemize}
      In either case, we also choose an arbitrary $a_k^{\Imc_{t}}$ and set 
	$R_{\overline{k}}^{\Imc_{t + 1}} = R_{\overline{k}}^{\Imc_{t}}$.
    \end{itemize}
It can be seen that $\Mmf$ satisfies $Q_0$ under $\Omc_M$.

\smallskip

Next, we claim that $Q_0$ is satisfiable under $\Omc_M$ over finite flows of time iff machine $M$ halts on $(0,0)$.  Indeed, if $Q_0$ is satisfiable under $\Omc_M$ over some $([0,n], <)$, then by~\ref{undec-initial-count}, moment 0 encodes configuration~$(q_0, 0,0)$. By Claim~\ref{claim:undec:subseq:states}, moments $0, 1, 2, \dots, n$ etc. encode configurations $(q^t,v^t_1,v_2^t)$ such that $(q^t, v_{1}^t, v_{2}^t) \Rightarrow_{M} (q^{t+1}, v^{t+1}_{1}, v^{t+1}_{2})$, for each $0 \leq t < n$. By~\ref{undec-halts}, the state $q^n$ is the halting state, $q_L$, as required. The converse direction is similar to the case of $(\mathbb{N}, <)$.
\end{proof}

\end{document}